\documentclass[10pt]{article}
\usepackage[margin=.8in]{geometry}
\usepackage[T1]{fontenc}
\usepackage[utf8]{inputenc}
\usepackage{lmodern}
\usepackage{microtype}
\usepackage{amsmath,amssymb,amsthm,mathtools}
\usepackage{aliascnt}
\usepackage{enumitem}
\usepackage{booktabs,array}
\usepackage{tikz}
\usepackage{float}
\usetikzlibrary{arrows.meta,calc,positioning}
\usepackage[hidelinks]{hyperref}
\hypersetup{
  pdftitle={The Complexity of Kemeny Aggregation with Three Rankings},
  pdfauthor={Péter Madarasi},
  pdfsubject={Computational complexity of Kemeny aggregation, score, winner, precedence, and recognition problems, support thresholds, and center objectives},
  pdfkeywords={Kemeny rank aggregation, three rankings, Kendall--tau distance, NP-completeness, Theta-2-p completeness, coNP-completeness, Max-Cut, feedback arc set in tournaments, pairwise support threshold, permutation median, permutation center, Slater order}
}
\usepackage[nameinlink,noabbrev,capitalise,sort&compress]{cleveref}
\makeatletter
\edef\ReferenceTool@CleverefVersion{\csname ver@cleveref.sty\endcsname}%
\def\ReferenceTool@BrokenCleverefVersion{2018/03/27 v0.21.4 Intelligent cross-referencing}%
\ifx\ReferenceTool@CleverefVersion\ReferenceTool@BrokenCleverefVersion
  \let\cpageref\relax
  \DeclareRobustCommand{\cpageref}{%
    \@ifstar{\@crefstar{cpageref}}{\@cref{cpageref}}}%
  \let\Cpageref\relax
  \DeclareRobustCommand{\Cpageref}{%
    \@ifstar{\@crefstar{Cpageref}}{\@cref{Cpageref}}}%
  \providecommand*{\@setcpagerefrange}[3]{%
    \@@setcpagerefrange{#1}{#2}{cref}{#3}}%
  \providecommand*{\@setCpagerefrange}[3]{%
    \@@setcpagerefrange{#1}{#2}{Cref}{#3}}%
  \providecommand*{\@setlabelcpagerefrange}[3]{%
    \@@setcpagerefrange{#1}{#2}{labelcref}{#3}}%
\fi
\makeatother

\AtBeginDocument{%
}

\newcommand{\DeclareEquationCrefFormat}[1]{%
  \crefformat{#1}{##2\textup{(##1)}##3}%
  \Crefformat{#1}{##2\textup{(##1)}##3}%
  \crefrangeformat{#1}{##3\textup{(##1)}##4\nobreakdash--##5\textup{(##2)}##6}%
  \Crefrangeformat{#1}{##3\textup{(##1)}##4\nobreakdash--##5\textup{(##2)}##6}%
  \crefmultiformat{#1}%
    {##2\textup{(##1)}##3}%
    { and~##2\textup{(##1)}##3}%
    {, ##2\textup{(##1)}##3}%
    {, and~##2\textup{(##1)}##3}%
  \Crefmultiformat{#1}%
    {##2\textup{(##1)}##3}%
    { and~##2\textup{(##1)}##3}%
    {, ##2\textup{(##1)}##3}%
    {, and~##2\textup{(##1)}##3}%
  \crefrangemultiformat{#1}%
    {##3\textup{(##1)}##4\nobreakdash--##5\textup{(##2)}##6}%
    { and~##3\textup{(##1)}##4\nobreakdash--##5\textup{(##2)}##6}%
    {, ##3\textup{(##1)}##4\nobreakdash--##5\textup{(##2)}##6}%
    {, and~##3\textup{(##1)}##4\nobreakdash--##5\textup{(##2)}##6}%
  \Crefrangemultiformat{#1}%
    {##3\textup{(##1)}##4\nobreakdash--##5\textup{(##2)}##6}%
    { and~##3\textup{(##1)}##4\nobreakdash--##5\textup{(##2)}##6}%
    {, ##3\textup{(##1)}##4\nobreakdash--##5\textup{(##2)}##6}%
    {, and~##3\textup{(##1)}##4\nobreakdash--##5\textup{(##2)}##6}%
  \labelcrefformat{#1}{##2\textup{(##1)}##3}%
  \labelcrefrangeformat{#1}{##3\textup{(##1)}##4\nobreakdash--##5\textup{(##2)}##6}%
  \labelcrefmultiformat{#1}%
    {##2\textup{(##1)}##3}%
    { and~##2\textup{(##1)}##3}%
    {, ##2\textup{(##1)}##3}%
    {, and~##2\textup{(##1)}##3}%
  \labelcrefrangemultiformat{#1}%
    {##3\textup{(##1)}##4\nobreakdash--##5\textup{(##2)}##6}%
    { and~##3\textup{(##1)}##4\nobreakdash--##5\textup{(##2)}##6}%
    {, ##3\textup{(##1)}##4\nobreakdash--##5\textup{(##2)}##6}%
    {, and~##3\textup{(##1)}##4\nobreakdash--##5\textup{(##2)}##6}%
}
\DeclareEquationCrefFormat{equation}
\DeclareEquationCrefFormat{subequation}

\newcommand{\DeclareItemCrefFormat}[1]{%
  \crefformat{#1}{##2\textup{##1}##3}%
  \Crefformat{#1}{##2\textup{##1}##3}%
  \crefrangeformat{#1}{##3\textup{##1}##4\nobreakdash--##5\textup{##2}##6}%
  \Crefrangeformat{#1}{##3\textup{##1}##4\nobreakdash--##5\textup{##2}##6}%
  \crefmultiformat{#1}%
    {##2\textup{##1}##3}{ and~##2\textup{##1}##3}%
    {, ##2\textup{##1}##3}{, and~##2\textup{##1}##3}%
  \Crefmultiformat{#1}%
    {##2\textup{##1}##3}{ and~##2\textup{##1}##3}%
    {, ##2\textup{##1}##3}{, and~##2\textup{##1}##3}%
  \crefrangemultiformat{#1}%
    {##3\textup{##1}##4\nobreakdash--##5\textup{##2}##6}%
    { and~##3\textup{##1}##4\nobreakdash--##5\textup{##2}##6}%
    {, ##3\textup{##1}##4\nobreakdash--##5\textup{##2}##6}%
    {, and~##3\textup{##1}##4\nobreakdash--##5\textup{##2}##6}%
  \Crefrangemultiformat{#1}%
    {##3\textup{##1}##4\nobreakdash--##5\textup{##2}##6}%
    { and~##3\textup{##1}##4\nobreakdash--##5\textup{##2}##6}%
    {, ##3\textup{##1}##4\nobreakdash--##5\textup{##2}##6}%
    {, and~##3\textup{##1}##4\nobreakdash--##5\textup{##2}##6}%
  \labelcrefformat{#1}{##2\textup{##1}##3}%
  \labelcrefrangeformat{#1}{##3\textup{##1}##4\nobreakdash--##5\textup{##2}##6}%
  \labelcrefmultiformat{#1}%
    {##2\textup{##1}##3}{ and~##2\textup{##1}##3}%
    {, ##2\textup{##1}##3}{, and~##2\textup{##1}##3}%
  \labelcrefrangemultiformat{#1}%
    {##3\textup{##1}##4\nobreakdash--##5\textup{##2}##6}%
    { and~##3\textup{##1}##4\nobreakdash--##5\textup{##2}##6}%
    {, ##3\textup{##1}##4\nobreakdash--##5\textup{##2}##6}%
    {, and~##3\textup{##1}##4\nobreakdash--##5\textup{##2}##6}%
}
\DeclareItemCrefFormat{enumi}
\DeclareItemCrefFormat{enumii}
\DeclareItemCrefFormat{enumiii}
\DeclareItemCrefFormat{enumiv}

\theoremstyle{plain}
\newtheorem{theorem}{Theorem}

\newaliascnt{lemma}{theorem}
\newtheorem{lemma}[lemma]{Lemma}
\aliascntresetthe{lemma}

\newaliascnt{proposition}{theorem}
\newtheorem{proposition}[proposition]{Proposition}
\aliascntresetthe{proposition}

\newaliascnt{corollary}{theorem}
\newtheorem{corollary}[corollary]{Corollary}
\aliascntresetthe{corollary}

\newaliascnt{claim}{theorem}
\newtheorem{claim}[claim]{Claim}
\aliascntresetthe{claim}

\theoremstyle{definition}
\newaliascnt{definition}{theorem}

\aliascntresetthe{definition}

\theoremstyle{remark}
\newaliascnt{remark}{theorem}
\newtheorem{remark}[remark]{Remark}
\aliascntresetthe{remark}

\newcommand{\DeclareNamedCrefType}[3]{%
  \crefname{#1}{#2}{#3}%
  \Crefname{#1}{#2}{#3}%
  \crefrangelabelformat{#1}{##3##1##4\nobreakdash--##5##2##6}%
}

\DeclareNamedCrefType{section}{Section}{Sections}
\DeclareNamedCrefType{subsection}{Section}{Sections}
\DeclareNamedCrefType{subsubsection}{Section}{Sections}
\DeclareNamedCrefType{appendix}{Appendix}{Appendices}
\DeclareNamedCrefType{figure}{Figure}{Figures}
\DeclareNamedCrefType{table}{Table}{Tables}
\DeclareNamedCrefType{footnote}{Footnote}{Footnotes}
\DeclareNamedCrefType{page}{Page}{Pages}

\DeclareNamedCrefType{theorem}{Theorem}{Theorems}
\DeclareNamedCrefType{lemma}{Lemma}{Lemmas}
\DeclareNamedCrefType{proposition}{Proposition}{Propositions}
\DeclareNamedCrefType{corollary}{Corollary}{Corollaries}
\DeclareNamedCrefType{claim}{Claim}{Claims}
\DeclareNamedCrefType{definition}{Definition}{Definitions}
\DeclareNamedCrefType{remark}{Remark}{Remarks}

\makeatletter
\@onlypreamble\DeclareEquationCrefFormat
\@onlypreamble\DeclareItemCrefFormat
\@onlypreamble\DeclareNamedCrefType
\makeatother

\newcommand{\KTdist}{d_{\mathrm{KT}}}
\newcommand{\Kemeny}{\mathcal K}
\newcommand{\MaxCut}{\textnormal{\textsc{Max-Cut}}}
\newcommand{\cC}{\mathcal C}
\newcommand{\before}{\prec}
\newcommand{\maxcutvalue}{\operatorname{maxcut}}
\newcommand{\cut}{\operatorname{cut}}
\newcommand{\fas}{\operatorname{fas}}
\newcommand{\support}{\operatorname{supp}}
\newcommand{\ThetaTwoP}{\Theta_2^p}
\newcommand{\KemenyScore}{\textnormal{\textsc{Kemeny Score}}}
\newcommand{\KemenyWinner}{\textnormal{\textsc{Kemeny Winner}}}
\newcommand{\KemenyUniqueWinner}{\textnormal{\textsc{Kemeny Unique Winner}}}
\newcommand{\KemenyPossiblePrecedence}{\textnormal{\textsc{Kemeny Possible Precedence}}}
\newcommand{\KemenyNecessaryPrecedence}{\textnormal{\textsc{Kemeny Necessary Precedence}}}
\newcommand{\KemenyConsensusRecognition}{\textnormal{\textsc{Kemeny Consensus Recognition}}}
\newcommand{\UniqueKemenyConsensusRecognition}{\textnormal{\textsc{Unique Kemeny Consensus Recognition}}}
\newcommand{\SlaterWinner}{\textnormal{\textsc{Slater Winner}}}
\newcommand{\SlaterUniqueWinner}{\textnormal{\textsc{Slater Unique Winner}}}
\newcommand{\SlaterPossiblePrecedence}{\textnormal{\textsc{Slater Possible Precedence}}}
\newcommand{\SlaterNecessaryPrecedence}{\textnormal{\textsc{Slater Necessary Precedence}}}
\newcommand{\SlaterConsensusRecognition}{\textnormal{\textsc{Slater Consensus Recognition}}}
\newcommand{\UniqueSlaterConsensusRecognition}{\textnormal{\textsc{Unique Slater Consensus Recognition}}}
\newcommand{\TournamentFAS}{\textnormal{\textsc{Feedback Arc Set in Tournaments}}}
\newcommand{\KendallCenter}{\textnormal{\textsc{Kendall--Tau Center}}}
\newcommand{\KOpt}{\operatorname{Opt}_{\mathrm K}}

\title{The Complexity of Kemeny Aggregation with Three Rankings}
\author{P\'eter Madarasi\thanks{HUN-REN Alfr\'ed R\'enyi Institute of Mathematics, Re\'altanoda u.\ 13--15., Budapest H-1053, Hungary; and Department of Operations Research, ELTE E\"otv\"os Lor\'and University, P\'azm\'any P.\ s.\ 1/c, Budapest H-1117, Hungary. E-mail: \texttt{madarasi@renyi.hu}}}
\date{}

\begin{document}
\maketitle

\begin{abstract}
The Kemeny rule aggregates rankings by choosing an order that minimizes the sum of its Kendall--tau distances to the input rankings.
We prove that \KemenyScore{} is NP-complete for exactly three unweighted rankings, even when every candidate pair is split $2$\nobreakdash-to\nobreakdash-$1$.
On the same restricted profiles, \KemenyWinner{}, \KemenyUniqueWinner{}, \KemenyPossiblePrecedence{}, and \KemenyNecessaryPrecedence{} are $\ThetaTwoP$-complete, while \KemenyConsensusRecognition{} and \UniqueKemenyConsensusRecognition{} are coNP-complete; the hard instances induce tournaments of majority dimension exactly $3$.
The reduction also determines the exact maximum-cut value from the optimal Kemeny score and recovers a maximum cut from any Kemeny-optimal aggregate in polynomial time.
For every fixed $q\geq3$ and $\lceil q/2\rceil\leq s\leq q$, requiring pairwise support at least $s$ yields a sharp dichotomy: if $3s\leq2q$, \KemenyScore{} is NP-complete, the four winner and precedence problems are $\ThetaTwoP$-complete, and the two recognition problems are coNP-complete; if $3s>2q$, the majority tournament is transitive and its unique topological order is the unique Kemeny-optimal aggregate.
In the hard case, exact support $s$ suffices when $s>q/2$, while supports in $\{s,s+1\}$ suffice when $s=q/2$.
These results give complete fixed-profile-size classifications and transfer to Slater orders, permutation medians, and maximum-likelihood central rankings in the Mallows model with fixed dispersion.
Finally, a six-copy construction proves NP-completeness of both \KemenyScore{} and \KendallCenter{} for three pairwise-equidistant rankings that still split every candidate pair $2$\nobreakdash-to\nobreakdash-$1$.
For outputs on $N$ candidates, the common pairwise distance attains the maximum possible value $\frac23\binom N2$.
The construction also gives affine formulas for the optimal Kemeny score and center radius, describes all Kemeny-optimal output orders, and shows that the output has a unique Kemeny-optimal order and a unique center exactly when the input has a unique Kemeny-optimal order.

\medskip
\noindent\textbf{Keywords.}
Kemeny rank aggregation; computational complexity; three rankings; Kendall--tau distance; pairwise support thresholds; majority dimension; pairwise-equidistant rankings; Kendall--tau center; feedback arc set in tournaments; permutation median; Slater order; Mallows model.
\end{abstract}

\section{Introduction}\label{sec:introduction}

The Kemeny rule aggregates full rankings on a set of candidates.
A full ranking is a strict total order, and the Kendall--tau distance between two orders is the number of candidate pairs on which they disagree.
Given a list of input rankings, the Kemeny rule selects an aggregate order that minimizes the sum of its Kendall--tau distances to the inputs.
This sum is the Kemeny objective of the aggregate.
This paper studies the computational complexity of minimizing the Kemeny objective and of the associated winner, precedence, and recognition problems when the number of input rankings is fixed.

With one input ranking, the unique Kemeny-optimal aggregate is that ranking.
With two input rankings, each pair on which the rankings disagree contributes one unit to every aggregate, while the comparisons on which they agree form a partial order; the Kemeny-optimal aggregates are exactly its linear extensions.
Before the present work, among fixed profile sizes $q\geq3$, only $q=3$ and $q=5$ remained unresolved for the Kemeny objective; hardness was known for every other such $q$~\cite{BachmeierEtAl2019,BiedlBrandenburgDeng2009}.
The three-ranking case had also remained open in the permutation-median literature~\cite{BlinEtAl2011,JainThakur2026}.
This paper resolves the three-ranking case.\footnote{Peters independently proved the same result using a reduction found by GPT~5.6 Sol Ultra and subsequently simplified~\cite{Peters2026}. The results in the present paper were obtained independently, and this manuscript was essentially complete when his preprint appeared.}

The threshold problem \KemenyScore{} asks whether some aggregate has total Kendall--tau distance at most a given nonnegative integer.
The \KemenyWinner{} and \KemenyUniqueWinner{} problems ask whether a designated candidate is first in some Kemeny-optimal aggregate or is the only candidate that can be first in a Kemeny-optimal aggregate, respectively.
The \KemenyPossiblePrecedence{} and \KemenyNecessaryPrecedence{} problems ask whether some or every Kemeny-optimal aggregate places one designated candidate before another.
The \KemenyConsensusRecognition{} and \UniqueKemenyConsensusRecognition{} problems ask whether a supplied aggregate is Kemeny-optimal or uniquely Kemeny-optimal.
We write $\ThetaTwoP$ for polynomial time with nonadaptive access to an NP oracle.

A three-ranking profile is \emph{all\nobreakdash-$2$\nobreakdash-to\nobreakdash-$1$} if every candidate pair is ordered one way by exactly two rankings and the other way by the third.
Its strict majority relation is a tournament, and the \emph{majority dimension} of a tournament is the minimum number of rankings whose strict majority relation realizes it.
\begin{theorem}\label{thm:main}
\KemenyScore{} is NP-complete for exactly three unweighted full rankings, even when the profile is all\nobreakdash-$2$\nobreakdash-to\nobreakdash-$1$.
\end{theorem}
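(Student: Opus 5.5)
Membership in NP is immediate, since an aggregate order is a certificate whose Kemeny objective can be computed in polynomial time. For hardness I would first reduce the question to a feedback-arc-set question. In an all\nobreakdash-$2$\nobreakdash-to\nobreakdash-$1$ profile on $N$ candidates, each pair costs $1$ if the aggregate follows the majority and $2$ otherwise. Hence the Kemeny objective of an order $\sigma$ is $\binom N2 + |B(\sigma)|$, where $B(\sigma)$ is the set of backward arcs of the majority tournament $T$ under $\sigma$. So \KemenyScore{} on such profiles is exactly \TournamentFAS{} restricted to tournaments of majority dimension at most $3$ that are realized with every pair split $2$\nobreakdash-to\nobreakdash-$1$. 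The task therefore becomes encoding a known NP-hard problem into such tournaments. Following the abstract, I would reduce from \MaxCut{}, possibly restricted to cubic graphs so that gadget sizes are uniform.

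The construction I would aim for has three parts.
\begin{enumerate}
\item Each vertex $v$ of the graph $G$ is replaced by a block of candidates. The internal pairs of a block are ordered cyclically by the three rankings, for instance the three rankings restricted to a triple $x,y,z$ read $xyz$, $yzx$, $zxy$. Then every internal pair is automatically split $2$\nobreakdash-to\nobreakdash-$1$, and the block admits exactly two cheap internal orders, encoding the side of $v$ in the cut.
\item The blocks are laid out along a backbone: $\pi_1$ lists them in a fixed order, $\pi_2$ in the reverse order, and $\pi_3$ is a carefully interleaved order.
\item For each edge $uv$, $\pi_3$ places some candidates of the two blocks so that the majority arcs between the blocks form a small pattern. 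That pattern should cost one fewer backward arc when the two blocks take opposite orientations than when they take equal orientations.
\end{enumerate}
With these gadgets, the optimal feedback arc set should equal a fixed base cost plus $|E|$ minus the maximum cut value, and the threshold in the \KemenyScore{} instance is set accordingly.

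The proof then has two directions. For completeness, given a cut I would write down the order explicitly (blocks in backbone order, each block oriented according to its side) and count backward arcs block by block and edge by edge. For soundness, I would show that some optimal order has a canonical form, in which each block is contiguous, the blocks appear in backbone order, and each block uses one of its two intended internal orders. From such an order a cut is read off directly. To enforce the canonical form, I would amplify the weights of the backbone and internal arcs by replacing single candidates with many near-clones. Clones cannot be exact, because exact clones create unanimous pairs. Instead each group of clones would be arranged cyclically as in part 1, which keeps every pair $2$\nobreakdash-to\nobreakdash-$1$ while any violation of the canonical form costs polynomially many backward arcs.

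I expect the main obstacle to be the tension between the all\nobreakdash-$2$\nobreakdash-to\nobreakdash-$1$ requirement and rigidity. With two rankings reversed, $\pi_3$ alone decides every pair, so the majority tournament is simply the transitive order $\pi_3$. Hence $\pi_1$ and $\pi_2$ must be only locally reversed, and every cross pair, including pairs between far-apart blocks that interact with no edge, must still be split $2$\nobreakdash-to\nobreakdash-$1$ in a consistent, acyclic-between-blocks way. I would first try to make inter-block majorities agree with the backbone whenever the blocks are nonadjacent in $G$, and confine all non-transitivity to the edge gadgets. The exchange argument then proceeds in three steps: moving a stray candidate back into its block never increases the cost, reordering blocks to backbone order never increases the cost, and a block in a non-intended internal order can be switched to an intended one. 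Together these establish the canonical form, and this is the step I expect to require the most careful counting.
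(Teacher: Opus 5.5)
\textbf{Gap: both gadgets are missing or wrong.} Your overall framework matches the paper's. On all-$2$-to-$1$ profiles $\Kemeny(\sigma)=\binom N2+b(\sigma)$, so the task is a feedback-arc-set reduction from \MaxCut{} with amplified blocks and a normalization to canonical orders. But the two gadgets that carry the reduction are wrong or missing, so there is no proof yet.

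First, the vertex gadget fails. The triple $xyz,yzx,zxy$ induces the directed triangle $x\to y\to z\to x$. That triangle has \emph{three} optimal orders (each rotation, with one backward arc), not two, so it does not encode a binary choice. You need a local profile whose majority tournament has exactly two optimal orders, together with a proof of this. The paper uses $123456$, $632541$, $452163$: two edge-disjoint directed triangles force at least two backward arcs, and two further triangles pin the optimal orders to $234516$ and $245163$ (\cref{lem:local-states}).

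Your amplification device is also wrong. Unless you add a further device, cyclically arranged clone groups leave some pairs unanimous (pairs in different triples listed in a common order, or nearby clones under rotations), and they make the interiors non-transitive. What works is to keep each block consecutive in every ranking, ordered forward in $\pi_1,\pi_2$ and reversed in $\pi_3$. Then every internal pair is split $2$-to-$1$ with a transitive interior. Moreover, every member of a block compares identically with every outside candidate, which is exactly what makes your step ``move a stray candidate back into its block'' cost-nonincreasing. With blocks as modules, the normalization you expect to be hardest is routine.

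Second, and decisively, the edge gadget is never constructed. The obstacle you name in your last paragraph is left open; ``confine all non-transitivity to the edge gadgets'' restates the problem rather than solving it. Your suggestion that $\pi_3$ relocate some candidates of the two endpoint blocks would also destroy the module property above.

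The paper resolves this as follows. $\pi_1$ and $\pi_2$ list the vertex groups in the same order and $\pi_3$ in the reverse order, so all cross-group pairs are split $2$-to-$1$ along the backbone. The coupling is carried by one \emph{extra} candidate $p_e$ per edge $\{v_i,v_j\}$ with $i<j$:
\begin{itemize}
\item $p_e$ is placed after $B_{i,3}$ in $\pi_1$;
\item after $B_{j,2}$ in $\pi_2$;
\item and in the midpoint gap between the two groups in $\pi_3$, where padding blocks make this gap lie strictly between them even when $j=i+1$.
\end{itemize}
Then $\pi_1$ decides $p_e$ against the blocks of $v_i$, $\pi_2$ against those of $v_j$, and $\pi_3$ against everything in between. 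The cheapest insertion of $p_e$ lies inside an endpoint group, and its cost lies between $c_e+Mx_ix_j$ and $c_e+Mx_ix_j+2$ for a constant $c_e$ (\cref{lem:insertion-contribution}).

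This is an AND term, not an equal/opposite term. It is combined with the vertex cost $-2Mx_i$ (block $B_{i,1}$ has size $M+2$) and with $4$-regularity through $\cut_G(X)=4\sum_i x_i-2\sum_{\{v_i,v_j\}\in E}x_ix_j$.

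Finally, the exact identity ``base $+|E|-\maxcutvalue(G)$'' you aim for is too optimistic. Arcs among edge candidates and the $+2$ slack give only bounds. The paper therefore scales the cut term to $\frac M2\cut_G(X)$ and keeps the total error $2m+\binom m2$ below $M/2$.
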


\begin{theorem}\label{thm:three-ranking-optimality}
On all\nobreakdash-$2$\nobreakdash-to\nobreakdash-$1$ profiles consisting of exactly three unweighted full rankings, the following statements hold.
\begin{enumerate}[label=\textup{(\roman*)}]
\item \KemenyWinner{}, \KemenyUniqueWinner{}, \KemenyPossiblePrecedence{}, and \KemenyNecessaryPrecedence{} are $\ThetaTwoP$-complete.
\item \KemenyConsensusRecognition{} and \UniqueKemenyConsensusRecognition{} are coNP-complete.
\end{enumerate}
\end{theorem}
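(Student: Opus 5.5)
The plan is to derive \cref{thm:three-ranking-optimality} from the reduction behind \cref{thm:main} together with standard machinery for $\ThetaTwoP$-hardness. I will follow the approach of Hemaspaandra, Spakowski and Vogel for Kemeny winner problems, and its refinements for precedence and recognition.

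\textbf{Membership.} For $\ThetaTwoP$ membership, the optimal score $\KOpt$ is at most $3\binom{N}{2}$. It can therefore be found with polynomially many nonadaptive queries of the form ``is $\KOpt\le k$?''. One further parallel round of queries then decides the winner question (``is there an order with $c$ first and score $\le k$?'') and the precedence question, simultaneously for every $k$. Recognition of a supplied order $\pi$ is in coNP: $\pi$ is not optimal exactly when some order has a strictly smaller score. Unique optimality is likewise in coNP: it fails exactly when some order $\sigma\ne\pi$ has score at most that of $\pi$.

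\textbf{Hardness.} I will start from a $\ThetaTwoP$-complete comparison problem for \MaxCut{}: given graphs $G_1,G_2$, decide whether $\maxcutvalue(G_1)\ge\maxcutvalue(G_2)$. Its $\ThetaTwoP$-completeness follows by Wagner's technique, because the reduction to \MaxCut{} is parsimonious enough in cut values.

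The key property I need from the reduction proving \cref{thm:main} is an exact affine relation $\KOpt(P_G)=\alpha_G-\beta\cdot\maxcutvalue(G)$ with a common slope $\beta$. The abstract says the reduction determines the max-cut value exactly from the optimal score, so this should be available.

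I then combine $P_{G_1}$ and $P_{G_2}$ with two designated candidates $a,b$. The aim is an all-$2$-to-$1$ gadget in which $a$ heads some optimal order exactly when the first block's optimum beats the second's, and $b$ heads it in the complementary case. Concretely:
\begin{enumerate}
\item Pad the graphs, for example with disjoint edges or isolated vertices, so that $\alpha_{G_1}=\alpha_{G_2}$ after normalization.
\item Place the two blocks in consecutive positions, in the same relative order in all three rankings, so that inter-block pairs are unanimous.
\item Design a small attached gadget in which putting $a$ first costs a quantity tied to block one and putting $b$ first costs a quantity tied to block two.
\end{enumerate}

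\textbf{Preserving the all-$2$-to-$1$ property.} Unanimous pairs are forbidden, so step 2 fails as stated. I will replace each unanimous relation by a $2$-to-$1$ split. Such a split contributes a fixed cost of $1$ per pair whenever the aggregate follows the majority. I then need a lemma stating that some Kemeny-optimal order respects the block structure, i.e.\ a majority-respecting modular decomposition. I will try to prove this by an exchange argument: blocks that are majority-dominated are placed contiguously, and moving a block to the front never increases the cost.

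\textbf{Winner, precedence, uniqueness, and recognition.} Unique winner and necessary precedence follow by adding tie-breaking weight via duplicated candidates, which makes the inequality strict; necessary precedence is then the complement-style variant. For coNP-hardness of recognition, I will reduce from the complement of \MaxCut{}. Take $P_G$ and append a candidate arrangement whose forced cost equals exactly the threshold value corresponding to cut size $k$, then supply the order $\pi$ realizing it. $\pi$ is optimal iff $\maxcutvalue(G)<k+1$. Adding a slight perturbation forces uniqueness.

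\textbf{Main obstacle.} I expect the main obstacle to be building these comparison and threshold gadgets while keeping every pair split exactly $2$-to-$1$ and retaining exact control of the optimal score. I will first try to realize the gadgets using the same edge and vertex gadget types as the \cref{thm:main} reduction itself, for example by encoding the comparison as a single graph ($G_1\sqcup G_2$ plus a controlled bridge edge), rather than composing profiles.
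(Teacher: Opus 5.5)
Your membership arguments are correct and match the paper. The hardness part has a genuine gap, and it is exactly where the work lies.

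\textbf{The affine relation is not available.} The reduction behind \cref{thm:main} does not give an exact affine relation between $\Kemeny^\star_{\Pi_G}$ and $\maxcutvalue(G)$. It only gives $\beta-\frac M2\cut_G(X)\le b^\star(X)\le\beta-\frac M2\cut_G(X)+\eta$. The error depends on the particular set $X$ and on uncontrolled interactions among edge candidates, and \cref{prop:maxcut-recovery} recovers $\maxcutvalue(G)$ only through a ceiling. So you cannot equalize intercepts exactly. When $\maxcutvalue(G_1)=\maxcutvalue(G_2)$, the error terms decide the outcome arbitrarily. (The $\ThetaTwoP$-completeness of your \MaxCut{} comparison is also only asserted.)

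\textbf{The composition cannot produce a comparison.} Suppose $P_{G_1}$, $P_{G_2}$ and the gadget sit as modules whose mutual pairs follow one consistent majority direction, which is what your step 2 and your modular-decomposition lemma aim for. Then the objective splits additively, and the optimal aggregates are exactly concatenations of optimal aggregates of the parts. Nothing about which candidate is first can then depend on which part has the better optimum. The gadget in step 3, where all the difficulty lies, is left unspecified. What is needed is an exclusive-or structure that forces the aggregate to commit to one side.

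\textbf{How the paper gets the exclusive-or.} It builds the structure into the source graph, not the profile.
\begin{itemize}
\item It switches from \MaxCut{} to independent sets, deciding $\alpha(G)>\alpha(H)$.
\item It takes the join $J=A\vee B$ with $A=G^1\mathbin{\dot\cup}G^2\mathbin{\dot\cup}\{a\}$ and $B=H^1\mathbin{\dot\cup}H^2\mathbin{\dot\cup}\{v,b\}$. Every maximum independent set then lies entirely in one side, and $\alpha(A)=2\alpha(G)+1\neq2\alpha(H)+2=\alpha(B)$ rules out ties.
\item It turns the block construction into the objective $f_J(X)=-2M|X|+x_1+3Me_J(X)$ on a non-regular graph by making each edge candidate a block of three. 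An induced edge then costs more than a vertex saves.
\item It attaches $c,d$ to the vertex blocks of $v_1=v$, with $|B_{1,1}|=M+1$. Then $c$ can be first, or precede $d$, in an optimal aggregate only if $v\notin X$. One further candidate $z$ makes $c$ the unique winner whenever it can be first.
\end{itemize}
\MaxCut{} has no comparable behaviour under joins. Your fallback, $G_1\sqcup G_2$ plus a bridge edge, has maximum cut $\maxcutvalue(G_1)+\maxcutvalue(G_2)+1$, which is a sum, not a maximum.

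\textbf{Recognition.} Appending an arrangement with a forced cost has the same additivity problem: it is optimized independently of $P_G$. There are also three missing ingredients.
\begin{itemize}
\item You must output an order whose score equals $b^\star(X)$ exactly, which means placing all edge candidates optimally. The paper does this with a minimum $s$--$t$ cut, using that the pairwise costs between edge blocks satisfy $C(0,0)+C(1,1)\le C(0,1)+C(1,0)$.
\item You need a supplied solution. The complement of \MaxCut{} gives you no cut of size $k$; the paper starts from maximum-independent-set recognition, so $S$ is part of the input.
\item The supplied solution must beat every other solution of the same value, or another set of the same size can win through the error terms. The paper relabels so that $S$ comes first and sets $|B_{i,1}|=M+a_i$ with $a_i=A+\chi_S(v_i)$, so $S$ wins by at least $M$ when it is maximum.
\end{itemize}
For uniqueness, a ``slight perturbation'' is not enough. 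The paper enlarges every $B_{i,2}$ by $\zeta$ so that each edge block has a unique optimal gap, and fixes the order of edge blocks sharing a gap. It then proves \cref{lem:unique-block-normalization} to lift uniqueness from block-contiguous orders to all orders, while keeping every pair split $2$-to-$1$.
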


The majority tournament in every hard instance constructed for these two theorems contains a directed triangle.
Since the supplied three rankings realize the tournament and every tournament realizable by at most two rankings is transitive, these tournaments have majority dimension exactly $3$.
This property is used in the later support and Slater consequences.

The proof of \cref{thm:main} reduces from \MaxCut{} on simple $4$-regular graphs.
For every source graph, the optimal Kemeny score determines the maximum-cut value, and any Kemeny-optimal aggregate can be transformed into a maximum cut in polynomial time; see \cref{prop:maxcut-recovery}.
The reductions proving \cref{thm:three-ranking-optimality} adapt the construction used to prove \cref{thm:main}.
The winner and precedence reduction compares the maximum independent-set sizes of two graphs, while the recognition reductions test whether a supplied independent set is maximum.

The second classification is stated in terms of pairwise support.
For a profile of $q$ rankings, the support of a candidate pair is the larger of the two numbers of rankings that order the pair in each direction.
For every fixed $q\geq3$ and minimum pairwise support $s$ with $\lceil q/2\rceil\leq s\leq q$, \KemenyScore{} is NP-complete, the winner and precedence problems are $\ThetaTwoP$-complete, and the two recognition problems are coNP-complete exactly when $3s\leq2q$.
When $3s>2q$, the majority tournament is transitive, and its unique topological order is the unique Kemeny-optimal aggregate, computable in $O(qN^2)$ time on $N$ candidates.
In the case $3s\leq2q$, exact support $s$ suffices whenever $s>q/2$.
In the boundary case $s=q/2$, supports in $\{s,s+1\}$ suffice.
For four rankings, \KemenyScore{} is NP-complete, the four winner and precedence problems are $\ThetaTwoP$-complete, and the two recognition problems are coNP-complete when pairwise supports belong to $\{2,3\}$; all seven problems are polynomial-time solvable when every pair has support at least $3$.
The classification also has fractional-support and uniform-margin forms and yields the fixed-profile-size classification.

A further contribution concerns the Kendall--tau minimax (center) objective, also called maximum rank aggregation or the permutation-center problem~\cite{AlvinChakraborty2023,BachmaierEtAl2015,BiedlBrandenburgDeng2009}.
The threshold problem \KendallCenter{} asks whether some aggregate lies within a prescribed radius of every input ranking.
A six-copy construction ensures that no candidate pair is ordered unanimously while making the three output rankings pairwise equidistant.

\begin{theorem}\label{thm:center}
Both \KemenyScore{} and \KendallCenter{} are NP-complete for profiles consisting of exactly three pairwise-equidistant rankings, even when the profile is all\nobreakdash-$2$\nobreakdash-to\nobreakdash-$1$.
On the hard instances with $N$ candidates, the common pairwise distance is $\frac23\binom N2$; no three pairwise-equidistant rankings on $N$ candidates can have a larger common distance.
On these instances, the optimal Kemeny score is three times the center radius.
\end{theorem}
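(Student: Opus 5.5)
We reduce from \KemenyScore{} on all\nobreakdash-$2$\nobreakdash-to\nobreakdash-$1$ profiles, which is NP-complete by \cref{thm:main}. Membership in NP is immediate for both problems: guess the aggregate and compute Kendall--tau distances. The upper bound on the common distance is a counting argument. For any candidate pair, three rankings produce at most two disagreeing pairs of rankings among the three pairs of rankings. Hence $d_{12}+d_{13}+d_{23}\le 2\binom N2$, so a common distance is at most $\frac23\binom N2$. Equality holds exactly when every pair is split $2$\nobreakdash-to\nobreakdash-$1$ and each ranking is the minority on exactly $\frac13\binom N2$ pairs. So the construction has to produce precisely this balance.

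\textbf{Construction.} Let $(\pi_1,\pi_2,\pi_3)$ on a candidate set $C$ with $|C|=n$ be a hard all\nobreakdash-$2$\nobreakdash-to\nobreakdash-$1$ instance. For each of the six permutations $\sigma$ of $\{1,2,3\}$, take a disjoint copy $C_\sigma$ of $C$. Output ranking $i$ orders $C_\sigma$ internally as $\pi_{\sigma(i)}$. Inside the copies, every input ranking then plays each of the three output roles in exactly two copies. Consequently, the number of within-copy pairs on which output ranking $i$ is the minority is the same for every $i$.

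\textbf{Between-block pairs.} For pairs in different blocks, each output ranking lists the blocks contiguously, following three orders $\tau_1,\tau_2,\tau_3$ of the six blocks. These orders must satisfy three conditions:
\begin{enumerate}[label=\textup{(\alph*)}]
\item every block pair is split $2$\nobreakdash-to\nobreakdash-$1$;
\item the block majority relation is transitive, say $B_1\to\dots\to B_6$;
\item each $\tau_i$ inverts exactly $5$ of the $15$ block pairs.
\end{enumerate}
Equivalently, the inversion set of the reversal of six items must be partitioned into three inversion sets of permutations, each of size $5$. I expect this to be the main concrete obstacle, and I would settle it by an explicit small search or hand construction. Note that all between-block pairs carry the same weight $n^2$, so balance in the block counts transfers directly. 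If no exact partition exists, the fallback is to adjust block multiplicities; the cyclic structure of the six permutations suggests the search will succeed.

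Conditions (a)--(c) give the all\nobreakdash-$2$\nobreakdash-to\nobreakdash-$1$ property and pairwise equidistance at $\frac23\binom N2$ with $N=6n$. They also give majority dimension $3$, since the triangles of the original instance survive inside each copy.

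\textbf{Kemeny score.} Every between-block pair costs at least $1$ in any aggregate. The aggregate attains exactly $1$ on all of them if it lists the blocks contiguously in the transitive order $B_1,\dots,B_6$. Each copy $C_\sigma$ is a permuted copy of the original profile, so its cost is at least $\operatorname{Opt}$. Hence
\[
\operatorname{Opt}_{\mathrm{out}}=6\operatorname{Opt}+15n^2 ,
\]
and the optimal output orders are exactly the block-contiguous concatenations of optimal orders of the copies. To justify this, an aggregate that is not block-contiguous still pays at least $1$ per between-block pair and at least $\operatorname{Opt}$ inside each copy, so it cannot do better. This affine formula reduces \KemenyScore{} to the equidistant case.

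\textbf{Center radius.} Any aggregate $\rho$ has maximum distance at least the average of its three distances, which is at least $\operatorname{Opt}_{\mathrm{out}}/3$. For the matching upper bound, place one fixed optimal order $\rho^*$ of $C$ in every copy, with the blocks in majority order. The distance from this aggregate to output ranking $i$ has two parts:
\begin{itemize}
\item the within-copy part is $\sum_\sigma \KTdist(\pi_{\sigma(i)},\rho^*)=2\sum_j\KTdist(\pi_j,\rho^*)=2\operatorname{Opt}$, independent of $i$;
\item the between-block part is $5n^2$ by (c).
\end{itemize}
So all three distances equal $\operatorname{Opt}_{\mathrm{out}}/3$. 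The center radius is therefore exactly $\operatorname{Opt}_{\mathrm{out}}/3=2\operatorname{Opt}+5n^2$, which is affine in $\operatorname{Opt}$ and yields NP-hardness of \KendallCenter{} together with the stated three-to-one relation.
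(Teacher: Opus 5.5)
Your approach is the paper's: six copies in which each output ranking uses every input ranking in exactly two copies, and three block orders whose inversion sets partition the $15$ block pairs into three sets of size $5$. The rest also matches: the sum bound $15n^2+6\operatorname{Opt}$, averaging for the radius, and the order with one fixed optimal $\rho^*$ in every copy attaining $5n^2+2\operatorname{Opt}$ at each output ranking. Indexing the copies by permutations of $\{1,2,3\}$ instead of repeating the cyclic shifts twice makes no difference.

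The genuine gap is that you never exhibit $\tau_1,\tau_2,\tau_3$ satisfying (a)--(c). Every later step uses them, so as written the proof depends on an unverified combinatorial claim. They do exist, and the paper writes them down. With block majority order $B_1\to\cdots\to B_6$, take
\[
\tau_1=B_1B_2B_5B_6B_4B_3,\qquad
\tau_2=B_1B_3B_4B_6B_5B_2,\qquad
\tau_3=B_2B_3B_4B_5B_6B_1 .
\]
Their inversion sets relative to $B_1\cdots B_6$ are $\{(3,4),(3,5),(3,6),(4,5),(4,6)\}$, $\{(2,3),(2,4),(2,5),(2,6),(5,6)\}$, and $\{(1,r):2\le r\le 6\}$. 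These partition all $15$ pairs into three sets of size $5$. Your within-copy assignment does not interact with block positions, so any labeling of your six copies as $B_1,\dots,B_6$ works. No change of block multiplicities is needed; that fallback would actually destroy the uniform weight $n^2$ that your balance count relies on.

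A smaller point: your claim that the optimal output orders are exactly the block-contiguous concatenations is not needed for this theorem. Your justification also only shows that other orders cannot do strictly better. To rule out ties, note that a pair split $2$-to-$1$ costs exactly $1$ only when the aggregate follows its majority. This forces every candidate of $B_r$ to precede every candidate of $B_s$ for $r<s$.
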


The six-copy construction gives explicit affine formulas for the optimal Kemeny score and center radius of the output profile and characterizes every Kemeny-optimal output order through its restriction to each copy.
Every optimal center is Kemeny-optimal, and the original profile has a unique Kemeny-optimal order exactly when the output profile has a unique Kemeny-optimal order and a unique center.

\paragraph{Relation to prior work.}
Kemeny introduced the aggregation rule~\cite{Kemeny1959}; it uses Kendall's rank distance~\cite{Kendall1938} and has a classical axiomatic characterization due to Young and Levenglick~\cite{YoungLevenglick1978}.
Unrestricted NP-hardness was established by Bartholdi, Tovey, and Trick~\cite{BartholdiToveyTrick1989}.
Hemaspaandra, Spakowski, and Vogel proved $\ThetaTwoP$-completeness of \KemenyWinner{}~\cite{HemaspaandraSpakowskiVogel2005}, while Fitzsimmons and Hemaspaandra proved coNP-completeness of recognizing a supplied Kemeny consensus~\cite{FitzsimmonsHemaspaandra2021}.
For the related Slater rule, constant-voter realizations and winner complexity have been studied by Bachmeier et al.\ and Lampis~\cite{BachmeierEtAl2019,Lampis2022}.
The support result uses the classical fact that if every pairwise majority is supported by more than two thirds of the rankings, then the majority tournament is transitive~\cite{Vidu2000}.
Fixed-parameter algorithms for Kemeny aggregation and approximation algorithms for weighted feedback arc set in tournaments provide complementary positive results~\cite{BetzlerEtAl2008,BetzlerEtAl2009,KenyonMathieuSchudy2007}.

\paragraph{Organization.}
\Cref{sec:preliminaries} gives the problem definitions and pairwise score identities.
\Cref{sec:score-reduction} proves \cref{thm:main}, including exact value and solution recovery.
\Cref{sec:other-base-reductions} proves the winner, precedence, and recognition results.
\Cref{sec:support} gives the transformations used to prescribe the number of rankings and pairwise supports, the constructions for four rankings, and the support-threshold classifications.
\Cref{sec:consequences} gives equivalent formulations, the six-copy construction for pairwise-equidistant profiles, consequences for the center and other distance objectives, and the fixed-profile-size classifications.

\section{Preliminaries}\label{sec:preliminaries}

Throughout, an \emph{order} means a strict total order.
A \emph{ranking} is an order used as an input, and an \emph{aggregate order}, shortened to \emph{aggregate} when no confusion is possible, is a candidate solution.
We use \emph{permutation} only when stating formulations from the permutation literature.
Let $\cC$ be a finite candidate set and let $\pi_1,\dots,\pi_q$ be rankings on $\cC$.
Their list $\Pi=(\pi_1,\dots,\pi_q)$ is a \emph{profile}; repetitions are allowed.
Write $x\before_\pi y$ when the order $\pi$ places $x$ before $y$.
Profiles are explicitly represented, and every input ranking has unit weight.
All decision-problem hardness reductions are polynomial-time many-one reductions.
All undirected graphs are finite and simple.
When a theorem fixes the profile size $q$ or support threshold $s$, these quantities are constants, whereas the fractional-support-threshold theorem allows $q$ to vary with the input.
In every tournament or Slater formulation, a realizing profile is supplied as part of the instance.
The paper proves the majority dimension of its constructed tournaments but does not address recognition of that dimension when no realization is given.
For two orders $\pi$ and $\sigma$, their Kendall--tau distance is
\[
\KTdist(\pi,\sigma)
=
\#\bigl\{\{x,y\}\subseteq\cC:\pi\text{ and }\sigma\text{ order }x,y\text{ oppositely}\bigr\}.
\]
The Kemeny objective and optimal Kemeny score are
\[
\Kemeny_\Pi(\sigma)=\sum_{r=1}^{q}\KTdist(\pi_r,\sigma),
\qquad
\Kemeny_\Pi^{\star}=\min_{\sigma}\Kemeny_\Pi(\sigma).
\]
The set of Kemeny-optimal orders is $\KOpt(\Pi)=\arg\min_\sigma\Kemeny_\Pi(\sigma)$.
When the profile is unambiguous, we write $\Kemeny(\sigma)$ and $\Kemeny^{\star}$.
The \KemenyScore{} decision problem takes a profile and a nonnegative integer threshold $\kappa$ and asks whether $\Kemeny_\Pi^{\star}\leq\kappa$.
Allowing negative thresholds would add only trivial no-instances.
A proposed aggregate is a polynomial-size certificate, and its Kemeny objective value can be computed in $O(q|\cC|^2)$ time.
Thus \KemenyScore{} belongs to NP\@.

The Kendall--tau center radius of $\Pi$ is
\[
\mathcal R_\Pi^\star
=
\min_\sigma\max_{r\in\{1,\dots,q\}}\KTdist(\pi_r,\sigma).
\]
The decision problem \KendallCenter{} asks, given $\Pi$ and a nonnegative integer $\rho$, whether $\mathcal R_\Pi^\star\leq\rho$.
It belongs to NP, and the optimizing center may be any aggregate order, not only one of the input rankings.

The \KemenyWinner{} problem takes a profile and a candidate $c$ and asks whether some Kemeny-optimal aggregate places $c$ first.
The \KemenyUniqueWinner{} problem asks whether $c$ is the only Kemeny winner.
Since every aggregate has exactly one first candidate, this is equivalent to asking whether every Kemeny-optimal aggregate places $c$ first.
The \KemenyPossiblePrecedence{} problem takes a profile and two distinct candidates $c,d$ and asks whether some Kemeny-optimal aggregate places $c\before d$.
The \KemenyNecessaryPrecedence{} problem asks whether every Kemeny-optimal aggregate places $c\before d$.
The term ``precedence'' distinguishes these decision problems from the search task of computing an entire Kemeny-optimal aggregate.
The \KemenyConsensusRecognition{} problem takes a profile and an aggregate order $\tau$ and asks whether $\tau$ is Kemeny-optimal.
The \UniqueKemenyConsensusRecognition{} problem asks whether $\tau$ is the unique Kemeny-optimal aggregate.
A strictly better aggregate certifies a no-instance of \KemenyConsensusRecognition{}.
For \UniqueKemenyConsensusRecognition{}, a no-certificate is an order $\sigma\neq\tau$ satisfying $\Kemeny_\Pi(\sigma)\leq\Kemeny_\Pi(\tau)$.
Thus both recognition problems belong to coNP\@.

For a profile $\Pi=(\pi_1,\dots,\pi_q)$ and distinct candidates $x,y$, let
\[
N_\Pi(x,y)=|\{r\in\{1,\dots,q\}:x\before_{\pi_r}y\}|.
\]
The \emph{pairwise support} of $x$ and $y$ is
\[
\support_\Pi(x,y)
=
\max\{N_\Pi(x,y),N_\Pi(y,x)\}.
\]
This quantity is symmetric in $x$ and $y$.
The pair is \emph{tied} when $N_\Pi(x,y)=N_\Pi(y,x)$, and a profile is \emph{tie-free} if none of its candidate pairs is tied.
When $N_\Pi(x,y)>N_\Pi(y,x)$, write $x\to y$ and call this the \emph{majority arc} of the pair.
Its margin is
\[
w_\Pi(x,y)=N_\Pi(x,y)-N_\Pi(y,x)>0.
\]
The arc $x\to y$ is \emph{backward} in $\sigma$ when $y\before_\sigma x$.
Let
\[
B_\Pi=\sum_{\{x,y\}\subseteq \cC}\min\{N_\Pi(x,y),N_\Pi(y,x)\}.
\]

\begin{lemma}\label{lem:tournament-identity}
For every aggregate order $\sigma$,
\[
\Kemeny_\Pi(\sigma)
=
B_\Pi+
\sum_{\substack{x\to y\\ y\before_\sigma x}}w_\Pi(x,y).
\]
\end{lemma}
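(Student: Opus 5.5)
The identity follows by writing the Kemeny objective as a sum over unordered candidate pairs and evaluating the contribution of each pair separately. Exchanging the order of summation in the definition of $\Kemeny_\Pi(\sigma)$ gives
\[
\Kemeny_\Pi(\sigma)=\sum_{\{x,y\}\subseteq\cC}\#\{r:\pi_r\text{ and }\sigma\text{ order }x,y\text{ oppositely}\}.
\]
If $\sigma$ places $x\before_\sigma y$, the summand for $\{x,y\}$ is exactly $N_\Pi(y,x)$, the number of rankings placing $y$ first. So each pair contributes $N_\Pi(y,x)$ when $x\before_\sigma y$, and $N_\Pi(x,y)$ otherwise.

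Next I compare each pair's contribution with $\min\{N_\Pi(x,y),N_\Pi(y,x)\}$, using three cases.
\begin{enumerate}
\item \emph{Tied pair.} Both counts equal the minimum, so the pair contributes exactly its share of $B_\Pi$, and no majority arc exists for it.
\item \emph{Arc $x\to y$ with $x\before_\sigma y$.} The contribution is $N_\Pi(y,x)$, the minimum.
\item \emph{Arc $x\to y$ with $y\before_\sigma x$ (backward arc).} The contribution is $N_\Pi(x,y)=N_\Pi(y,x)+w_\Pi(x,y)$, which is the minimum plus the margin.
\end{enumerate}

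Summing over all pairs gives $B_\Pi$ plus the margins of the backward arcs, which is the claimed identity. Every step is a direct unfolding of the definitions, so there is no real obstacle. The only care needed is to confirm that tied pairs carry no arc and are therefore absent from the second sum.
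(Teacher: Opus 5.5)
Your proposal is correct and follows the paper's proof essentially verbatim: both decompose the objective by unordered pairs, observe that a pair with majority arc $x\to y$ contributes $N_\Pi(y,x)$ when forward and $N_\Pi(y,x)+w_\Pi(x,y)$ when backward, and note that a tied pair contributes its common count regardless of $\sigma$. Summing over pairs then gives $B_\Pi$ plus the margins of the backward arcs, exactly as in the paper.
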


\begin{proof}
Fix a pair with majority arc $x\to y$, and put $a=N_\Pi(x,y)$ and $b=N_\Pi(y,x)$.
If $\sigma$ places $x$ before $y$, the pair contributes $b$.
If it places $y$ before $x$, the pair contributes $a=b+(a-b)$.
Thus the pair always contributes the smaller directional count $b$ and contributes the additional margin $a-b$ exactly when its majority arc is backward.
A tied pair contributes its common directional count in either direction.
Summing over all pairs proves the identity.
\end{proof}

For distinct candidates $x,y$, define their absolute majority margin by $\delta_\Pi(x,y)=|N_\Pi(x,y)-N_\Pi(y,x)|$.
This quantity is symmetric in $x$ and $y$.
A common pairwise majority margin $\Delta$ for a profile of $q$ rankings must satisfy $\Delta\in\{0,\dots,q\}$ and $\Delta\equiv q\pmod 2$.
We say that a profile has \emph{uniform pairwise majority margin $\Delta$} when $\delta_\Pi(x,y)=\Delta$ for all distinct candidates $x,y$.
If $\Delta>0$, the majority relation is a tournament, denoted by $T_\Pi$.
The \emph{majority dimension} of a tournament is the minimum number of strict total orders whose strict majority relation is that tournament.
For a tournament $T$ and an order $\sigma$, let $b_T(\sigma)$ be the number of arcs of $T$ that are backward in $\sigma$.
For a profile $\Pi$ whose majority relation is a tournament, write $b_\Pi(\sigma)=b_{T_\Pi}(\sigma)$.

\begin{lemma}\label{lem:uniform-score}
Let $\Pi$ be a profile of $q$ rankings with common positive pairwise majority margin $\Delta$.
For every aggregate order $\sigma$,
\begin{equation}\label{eq:uniform-score}
\Kemeny_\Pi(\sigma)
=
\frac{q-\Delta}{2}\binom{|\cC|}{2}
+
\Delta b_\Pi(\sigma).
\end{equation}
\end{lemma}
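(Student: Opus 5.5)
The plan is to specialize \cref{lem:tournament-identity} to the uniform-margin setting. Both quantities in that identity collapse once every pair has the same margin $\Delta$.

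First I would evaluate $B_\Pi$. Fix a pair $\{x,y\}$ and put $a=\max\{N_\Pi(x,y),N_\Pi(y,x)\}$ and $b=\min\{N_\Pi(x,y),N_\Pi(y,x)\}$. Every ranking orders the pair one way or the other, so $a+b=q$. By hypothesis $a-b=\delta_\Pi(x,y)=\Delta$, hence $b=(q-\Delta)/2$. This is an integer because $\Delta\equiv q\pmod 2$, as noted before the lemma. Summing over all $\binom{|\cC|}{2}$ pairs gives
\[
B_\Pi=\frac{q-\Delta}{2}\binom{|\cC|}{2}.
\]

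Next I would handle the backward-arc sum. Since $\Delta>0$, no pair is tied, so every pair carries exactly one majority arc. These arcs are precisely the arcs of $T_\Pi$, and each has margin $w_\Pi(x,y)=\Delta$. The sum in \cref{lem:tournament-identity} therefore equals $\Delta$ times the number of arcs of $T_\Pi$ that are backward in $\sigma$, which is $\Delta\, b_\Pi(\sigma)$ by definition. Substituting both evaluations into \cref{lem:tournament-identity} yields \cref{eq:uniform-score}.

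There is no real obstacle here. The only point needing care is that positivity of $\Delta$ guarantees the majority relation is a tournament. That ensures $b_\Pi$ is well defined and that no tied pairs contribute anything beyond $B_\Pi$.
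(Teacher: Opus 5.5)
Your proposal is correct and follows the paper's proof. Both arguments specialize \cref{lem:tournament-identity}: every pair contributes its smaller directional count $(q-\Delta)/2$, plus an additional $\Delta$ exactly when its majority arc is backward in $\sigma$.
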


\begin{proof}
Every pair contributes the smaller directional count $(q-\Delta)/2$ and an additional $\Delta$ exactly when its majority arc is backward.
The formula follows from \cref{lem:tournament-identity}.
\end{proof}

For any order, its backward arcs form a feedback arc set of the tournament.
Conversely, if deleting a set of arcs makes the tournament acyclic, every backward arc in a topological order of the remaining digraph belongs to the deleted set.
Thus, for positive uniform margins, Kemeny optimization is exactly unweighted Minimum Feedback Arc Set in the majority tournament, up to an additive constant and a positive scale.
The minimizing orders are the Slater orders of $T_\Pi$~\cite{Slater1961,Alon2006}.
The \SlaterWinner{} problem asks whether some Slater order places a designated vertex first.
The \SlaterUniqueWinner{} problem asks whether this vertex is the only Slater winner; equivalently, every Slater order places it first.
The \SlaterPossiblePrecedence{} and \SlaterNecessaryPrecedence{} problems ask whether some or every Slater order, respectively, places one designated vertex before another.
The \SlaterConsensusRecognition{} problem asks whether a supplied order is a Slater order, and the \UniqueSlaterConsensusRecognition{} problem asks whether it is the unique Slater order.
For a tournament $T$, write $\fas(T)=\min_\sigma b_T(\sigma)$.
The threshold problem \TournamentFAS{} asks whether $T$ has an order with at most a supplied number of backward arcs.

For three rankings, an all\nobreakdash-$2$\nobreakdash-to\nobreakdash-$1$ profile has uniform pairwise majority margin $1$.
The specialization of \cref{eq:uniform-score} is
\begin{equation}\label{eq:all-two-to-one-score}
\Kemeny_\Pi(\sigma)=\binom{|\cC|}{2}+b_\Pi(\sigma).
\end{equation}
It therefore suffices to control the number of backward majority arcs in the main reduction.

\section{The three-ranking Kemeny reduction}\label{sec:score-reduction}

This section proves \cref{thm:main} by reducing from \MaxCut{} on simple $4$-regular graphs.
The reduction translates a cut of the source graph into an aggregate order of three rankings.
For each graph vertex, the construction creates six vertex blocks and six padding blocks, each appearing consecutively in every input ranking.
Two specified relative orders of the six vertex blocks represent the two possible sides of the cut, while the padding blocks do not encode the cut.
For each graph edge, one additional candidate makes the score depend on the choices made at both endpoints.
The use of a $4$-regular source graph lets these two kinds of score contributions reproduce the cut size.

Two additional facts are needed for the reduction to be correct.
First, every aggregate order can be transformed, without increasing its Kemeny score, so that each block is contiguous, the vertex groups occur in index order, and the six vertex blocks for each graph vertex use one of the two encoding orders.
The transformed aggregate therefore encodes a cut.
Second, for each encoded cut, the minimum score over all positions and relative orders of the edge candidates lies in an interval whose width is smaller than the score decrease caused by increasing the cut size by one.
A single threshold can therefore distinguish cuts meeting the threshold of the Max-Cut instance from all smaller cuts.

We first establish the restriction of \MaxCut{} used in the reduction and derive the cut-size formula needed by the construction.
We then define the three input rankings and prove that every candidate pair is split exactly $2$\nobreakdash-to\nobreakdash-$1$.
The next two subsections normalize arbitrary aggregates and determine the best position of each edge candidate.
The resulting bounds prove \cref{thm:main}; the final subsection strengthens them to exact value and solution recovery.

\subsection{Source problem}\label{sec:source}

We prove that \MaxCut{} restricted to simple $4$-regular graphs is NP-complete.
For a graph $H$ and a subset $X\subseteq V(H)$, let $\cut_H(X)$ be the number of edges with exactly one endpoint in $X$.
The sets $X$ and $V(H)\setminus X$ are the two sides of the same cut.
Let $\maxcutvalue(H)=\max_{X\subseteq V(H)}\cut_H(X)$.
The decision problem \MaxCut{} takes a graph $H$ and a positive integer $t$ and asks whether $\maxcutvalue(H)\geq t$.

\begin{lemma}\label{lem:four-regular}
\MaxCut{} is NP-complete on simple $4$-regular graphs.
\end{lemma}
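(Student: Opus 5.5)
Membership in NP is immediate, since a subset $X$ is a certificate and $\cut_H(X)$ is computable in linear time. For hardness I will reduce from \MaxCut{} on simple cubic ($3$-regular) graphs, which is known to be NP-complete (Yannakakis; also Berman--Karpinski for bounded degree). An alternative is to start from \MaxCut{} on general graphs and first regularize, but the cubic case keeps the gadget small. So the plan is: given a simple cubic graph $G$ on $n$ vertices (with $n$ even) and a threshold $t$, build in polynomial time a simple $4$-regular graph $H$ and a threshold $t'$ with $\maxcutvalue(H)=\maxcutvalue(G)+c$ for an explicitly computable constant $c$. Then set $t'=t+c$.

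The construction adds one unit of degree to every vertex. I will pair up the vertices of $G$ arbitrarily into $n/2$ pairs $\{u,v\}$, and for each pair attach a private gadget $D_{uv}$ joined to $G$ by exactly one edge at $u$ and one at $v$. Inside $D_{uv}$ there are two \emph{port} vertices $u',v'$, with new edges $uu'$ and $vv'$. Every gadget vertex must have degree $4$, counting these connecting edges for $u',v'$. A concrete choice is to take $K_5$, delete one edge $u'v'$, and use its endpoints as ports. In $K_5$ minus an edge, $u'$ and $v'$ have degree $3$ and the other three vertices have degree $4$. After adding $uu'$ and $vv'$, every vertex of $H$ has degree $4$, and $H$ is simple. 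The size of $H$ is linear in $n$.

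The key claim is that $\maxcutvalue(H)=\maxcutvalue(G)+(n/2)(M+2)$, where $M$ is the maximum cut of $K_5$ minus an edge in which $u'$ and $v'$ may be placed arbitrarily. I will show this by \emph{local optimality}: for any fixed sides of $u$ and $v$, one can choose sides inside $D_{uv}$ so that both connecting edges are cut and the gadget's internal cut is still $M$. This works because the gadget admits maximum internal cuts with $u',v'$ on equal sides and also ones with them on opposite sides. Complementing a gadget cut preserves it, so any required pair of port sides can then be realized. For $K_5-e$ the maximum cut is $6$, attained by a $2|3$ split. The split $\{u',v'\}\mid\{a,b,c\}$ cuts exactly $6$ edges, and $\{u',a\}\mid\{v',b,c\}$ cuts $6-0=6$ edges, since the deleted edge $u'v'$ is not counted. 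So both parities are available with $M=6$.

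The upper bound follows because the connecting edges contribute at most $2$ and the gadget internals at most $M$ per gadget, independently of the cut restricted to $V(G)$. Hence $\maxcutvalue(H)=\maxcutvalue(G)+4n$. The step I expect to need care is this gadget verification, namely the existence of maximum internal cuts with both port parities. If $K_5-e$ failed, I would substitute another small graph with two degree-$3$ ports, checked by direct enumeration. The parity assumption on $n$ is free, because cubic graphs always have an even number of vertices.
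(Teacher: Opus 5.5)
There is a genuine gap in the gadget claim, and it cannot be repaired by choosing a different gadget. In $K_5-e$, the split $\{u',a\}\mid\{v',b,c\}$ cuts only $5$ edges. The deleted edge $u'v'$ is one of the six edges of $K_5$ crossing this split, so you must subtract $1$, not $0$. In fact every cut separating $u'$ from $v'$ has at most $5$ internal edges.

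The obstruction is parity, so it holds for every gadget of your shape. For any cut $(S,T)$, the number of cut edges is congruent modulo $2$ to $\sum_{w\in S}\deg(w)$. In $D_{uv}$ the only odd-degree vertices are $u'$ and $v'$. Hence internal cuts are even when $u',v'$ lie on the same side and odd otherwise. So maximum internal cuts can never occur with both port parities, and your fallback of substituting another two-port gadget cannot succeed. Apply the same count to $D_{uv}$ together with $uu'$ and $vv'$; there $u$ and $v$ have degree $1$. It shows that the best gadget contribution has different parity according to whether $u$ and $v$ are separated.

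For $K_5-e$ the contribution is $8$ when $u,v$ are on the same side and $7$ otherwise. Writing $s(X)$ for the number of pairs not separated by $X$, this gives
\[
\maxcutvalue(H)=\frac{7n}{2}+\max_{X}\bigl(\cut_G(X)+s(X)\bigr).
\]
This is not $\maxcutvalue(G)+4n$ for an arbitrary pairing. Take $G=Q_3$ with the pairs forming a perfect matching $P$ of its edges. Then $\cut_G(X)+s(X)=4+\cut_{Q_3-P}(X)\leq 12$, so $\maxcutvalue(H)=40$, while your formula gives $44$. Thus the yes-instance $(Q_3,12)$ is mapped to a no-instance.

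The missing idea is to pair vertices so that the coupling, which cannot be avoided, is satisfied by all pairs at once without losing optimality elsewhere. The paper does this by doubling the graph. It takes two copies of the cubic graph and joins each vertex to its twin by an edge. Placing the second copy on the complement of a maximum cut of the first cuts every matching edge. Each copy contributes at most the maximum cut, so the value is exactly twice the cubic maximum cut plus the number of cubic vertices.

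Your $K_5-e$ gadget would also work if you attached it between twins $u_1,u_2$ in two copies of $G$, rather than between arbitrary pairs. Putting both copies on the same maximum cut leaves every pair unseparated, and each gadget contributes at most $8$, so the value is exactly $2\maxcutvalue(G)+8n$.
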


\begin{proof}
Yannakakis~\cite[Theorem~13]{Yannakakis1978} proves NP-completeness of the following edge-deletion problem: given a cubic graph $H$ and an integer $d\geq0$, decide whether deleting at most $d$ edges can make $H$ bipartite.
For every $X\subseteq V(H)$, the edges counted by $\cut_H(X)$ form a bipartite spanning subgraph with bipartition $X,V(H)\setminus X$.
Conversely, if a spanning subgraph of $H$ is bipartite and $X$ is one side of a bipartition, then all of its edges are counted by $\cut_H(X)$.
Hence the largest number of edges in a bipartite spanning subgraph is $\maxcutvalue(H)$, and the minimum number of edges whose deletion makes $H$ bipartite is $|E(H)|-\maxcutvalue(H)$.

If $d\geq |E(H)|$, then the deletion instance is a yes-instance; map it to the fixed cubic yes-instance $(K_4,1)$ of \MaxCut{}.
If $d<|E(H)|$, map it to $(H,|E(H)|-d)$.
In both cases, the two instances have the same answer, and the output threshold is positive and at most the number of edges in the output graph.
Thus \MaxCut{} is NP-hard on cubic instances with $1\leq t\leq |E(H)|$.

Let $(H,t)$ be such a cubic \MaxCut{} instance, with $H=(U,F)$.
Take two disjoint copies $H_1$ and $H_2$ of $H$, and write $u_r$ for the copy of $u\in U$ in $H_r$, where $r\in\{1,2\}$.
Start with their disjoint union and add the edge $\{u_1,u_2\}$ for every $u\in U$.
These edges form a perfect matching between the two copies.
Every vertex has its three neighbors inside its copy and one neighbor in the other copy, so the resulting graph $G$ is simple and $4$-regular.

For $X\subseteq U$, set $Y_X=\{u_1:u\in X\}\cup\{u_2:u\notin X\}$.
Within $H_1$, the edges with exactly one endpoint in $Y_X$ correspond exactly to the edges counted by $\cut_H(X)$; within $H_2$, they correspond to the same cut with its sides exchanged.
Every matching edge has exactly one endpoint in $Y_X$.
Taking $X$ to represent a maximum cut of $H$ therefore gives a cut of $G$ of size $2\maxcutvalue(H)+|U|$.
Conversely, for any $Y\subseteq V(G)$, define $X_r=\{u\in U:u_r\in Y\}$ for $r\in\{1,2\}$.
The edges of $H_r$ with exactly one endpoint in $Y$ contribute $\cut_H(X_r)\leq\maxcutvalue(H)$, and at most $|U|$ matching edges have exactly one endpoint in $Y$.
Therefore $\maxcutvalue(G)=2\maxcutvalue(H)+|U|$.
It follows that $\maxcutvalue(H)\geq t$ exactly when $\maxcutvalue(G)\geq2t+|U|$.
Moreover, $1\leq2t+|U|\leq2|F|+|U|=|E(G)|$.
The map $(H,t)\mapsto(G,2t+|U|)$ is polynomial-time computable.
Since \MaxCut{} is in NP, \MaxCut{} on simple $4$-regular graphs is NP-complete.
\end{proof}

\Cref{lem:four-regular} shows that \MaxCut{} remains NP-complete on instances $(G,t)$ in which $G$ is simple and $4$-regular and $1\leq t\leq |E(G)|$.
Fix such an instance, write $G=(V,E)$, put $n=|V|$ and $m=|E|$, and denote the vertices by $v_1,\dots,v_n$.
Then $1\leq t\leq m$, and $4$-regularity gives $m=2n$.

For $X\subseteq V$, let $\chi_X:V\to\{0,1\}$ be its characteristic function and write $x_i=\chi_X(v_i)$.
Counting the four edges incident with each vertex in $X$, with an edge whose two endpoints lie in $X$ counted twice, gives
\begin{equation}\label{eq:cut-identity}
\cut_G(X)=4\sum_{i=1}^n x_i-2\sum_{\{v_i,v_j\}\in E}x_ix_j.
\end{equation}
The construction will make the objective depend on $X$ through the two sums in \cref{eq:cut-identity}.
The first depends on the choices made separately at the vertices, while the second couples the choices made at the two endpoints of each edge.
Their precise contributions are computed in the next subsection.

\subsection{Construction of the three rankings}\label{sec:construction}

We first analyze a profile of three rankings on six local candidates.
Its majority tournament has exactly two optimal orders.
After each local candidate is replaced by a block of candidates, these two orders will encode whether a graph vertex lies in the chosen side $X$ of the cut.
We then add padding blocks and edge candidates and define the three input rankings.

Consider the six-candidate local profile
\begin{equation}\label{eq:local-votes}
\nu_1=123456,\qquad \nu_2=632541,\qquad \nu_3=452163.
\end{equation}
Here, for example, $123456$ denotes the order $1\before2\before3\before4\before5\before6$.
Every pair of local candidates is split $2$\nobreakdash-to\nobreakdash-$1$.
\Cref{fig:local-tournament} shows the majority tournament twice, with different arcs highlighted for the proof of the next lemma.

\begin{figure}[H]
\centering
\begin{tikzpicture}[
  vertex/.style={circle,draw,fill=white,inner sep=0pt,minimum size=5.5mm,font=\small},
  basearc/.style={-{Stealth[length=1.6mm]},thin,black!75},
  lowerarc/.style={preaction={draw=white,line width=2pt},-{Stealth[length=1.8mm]},thick,green!60!black},
  forcingarc/.style={preaction={draw=white,line width=2pt},-{Stealth[length=1.8mm]},thick,red},
  lowerarcplain/.style={-{Stealth[length=1.8mm]},thick,green!60!black},
  forcingarcplain/.style={-{Stealth[length=1.8mm]},thick,red}
]
\begin{scope}[xshift=-4.55cm]
\coordinate (two) at (-3.40,.75);
\coordinate (four) at (-1.70,.75);
\coordinate (five) at (0,.75);
\coordinate (one) at (1.70,.75);
\coordinate (six) at (3.40,.75);
\coordinate (three) at (.85,-.72224);
\foreach \name/\coord in {2/two,4/four,5/five,1/one,6/six,3/three}
  \node[vertex] (L\coord) at (\coord) {\name};
\draw[basearc] (Ltwo) -- (Lfour);
\draw[basearc] (Lfour) -- (Lfive);
\draw[basearc] (Lfive) -- (Lone);
\draw[basearc] (Lone) -- (Lsix);
\draw[basearc] (Ltwo) to[bend left=20] (Lfive);
\draw[basearc] (Ltwo) to[bend left=30] (Lone);
\draw[basearc] (Lfour) to[bend left=30] (Lsix);
\draw[basearc] (Ltwo) to[bend left=36] (Lsix);
\draw[basearc] (Ltwo) -- (Lthree);
\draw[lowerarcplain] (Lone) -- (Lthree);
\draw[lowerarcplain] (Lthree) -- (Lfour);
\draw[lowerarc] (Lfour) to[bend left=20] (Lone);
\draw[lowerarcplain] (Lthree) -- (Lfive);
\draw[lowerarc] (Lfive) to[bend left=20] (Lsix);
\draw[lowerarcplain] (Lsix) -- (Lthree);
\node[align=center,font=\small] at (0,-1.72) {(a) edge-disjoint triangles};
\end{scope}
\begin{scope}[xshift=4.55cm]
\coordinate (two) at (-3.40,.75);
\coordinate (four) at (-1.70,.75);
\coordinate (five) at (0,.75);
\coordinate (one) at (1.70,.75);
\coordinate (six) at (3.40,.75);
\coordinate (three) at (.85,-.72224);
\foreach \name/\coord in {2/two,4/four,5/five,1/one,6/six,3/three}
  \node[vertex] (R\coord) at (\coord) {\name};
\draw[basearc] (Rtwo) -- (Rfour);
\draw[basearc] (Rfour) -- (Rfive);
\draw[basearc] (Rone) -- (Rsix);
\draw[basearc] (Rtwo) to[bend left=20] (Rfive);
\draw[basearc] (Rfour) to[bend left=20] (Rone);
\draw[basearc] (Rfive) to[bend left=20] (Rsix);
\draw[basearc] (Rtwo) to[bend left=30] (Rone);
\draw[basearc] (Rtwo) to[bend left=36] (Rsix);
\draw[basearc] (Rtwo) -- (Rthree);
\draw[forcingarcplain] (Rone) -- (Rthree);
\draw[forcingarcplain] (Rthree) -- (Rfive);
\draw[forcingarcplain] (Rfive) -- (Rone);
\draw[forcingarcplain] (Rthree) -- (Rfour);
\draw[forcingarc] (Rfour) to[bend left=30] (Rsix);
\draw[forcingarcplain] (Rsix) -- (Rthree);
\node[align=center,font=\small] at (0,-1.72) {(b) triangles identifying the optimal orders};
\end{scope}
\end{tikzpicture}
\caption{Two drawings of the same majority tournament for the profile in \cref{eq:local-votes}.
The green triangles in (a) give the lower bound in \cref{lem:local-states}; the red triangles in (b) identify the possible backward-arc sets when that bound is attained.}
\label{fig:local-tournament}
\end{figure}

\begin{lemma}\label{lem:local-states}
The minimum number of backward majority arcs for the profile in \cref{eq:local-votes} is $2$, and the only two orders attaining this minimum are $\omega_0=234516$ and $\omega_1=245163$.
\end{lemma}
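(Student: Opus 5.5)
The plan is to work directly with the $6$-vertex majority tournament $T$ of the local profile and use \cref{eq:all-two-to-one-score}, so that only backward arcs matter. First I will write out $T$ from the three rankings in \cref{eq:local-votes}; every pair is split $2$-to-$1$. The $15$ arcs should be
\[
2\to1,\ 2\to3,\ 2\to4,\ 2\to5,\ 2\to6,\ 1\to3,\ 1\to6,\ 3\to4,\ 3\to5,\ 4\to1,\ 4\to5,\ 4\to6,\ 5\to1,\ 5\to6,\ 6\to3,
\]
matching \cref{fig:local-tournament}. The candidate $2$ is a source, so it plays no role in cycles.

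For the lower bound I will use the two green, arc-disjoint directed triangles $T_1=\{1\to3,\,3\to4,\,4\to1\}$ and $T_2=\{3\to5,\,5\to6,\,6\to3\}$. Every order makes at least one arc of each directed triangle backward. Since $T_1$ and $T_2$ share no arc, every order has at least $2$ backward arcs. For the matching upper bound I will check $\omega_0$ and $\omega_1$ directly. In $\omega_0=234516$ the backward arcs are exactly $1\to3$ and $6\to3$. In $\omega_1=245163$ they are exactly $3\to4$ and $3\to5$. Hence the minimum is $2$.

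For uniqueness, let $\sigma$ be an order with exactly $2$ backward arcs, and let $S$ be its set of backward arcs. Then $S=\{a,b\}$ with $a\in T_1$ and $b\in T_2$. The set $S$ must also meet the two red directed triangles $R_1=\{1\to3,\,3\to5,\,5\to1\}$ and $R_2=\{3\to4,\,4\to6,\,6\to3\}$.
\begin{itemize}
\item We have $R_1\cap T_1=\{1\to3\}$ and $R_1\cap T_2=\{3\to5\}$.
\item We have $R_2\cap T_1=\{3\to4\}$ and $R_2\cap T_2=\{6\to3\}$.
\end{itemize}
A short case check on $a$ then finishes the enumeration. If $a=4\to1$, then $a$ hits neither red triangle, so $b$ would have to lie in both $R_1$ and $R_2$, which is impossible. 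If $a=1\to3$, then $R_1$ is hit, and $b$ must hit $R_2$, which forces $b=6\to3$. If $a=3\to4$, then $R_2$ is hit, and $b$ must hit $R_1$, which forces $b=3\to5$. So $S$ is either $\{1\to3,6\to3\}$ or $\{3\to4,3\to5\}$.

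Finally, reversing the arcs of $S$ in $T$ gives the tournament of $\sigma$ itself, which is transitive. A transitive tournament has exactly one topological order, so $S$ determines $\sigma$ uniquely. By the upper-bound computation, $\omega_0$ and $\omega_1$ realize these two backward sets. Therefore they are the only optimal orders.

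The main obstacle is the bookkeeping: computing $T$ correctly and confirming that the four triangles are genuine directed triangles. The structural argument itself is immediate once those are verified.
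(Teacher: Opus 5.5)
Your proof is correct and follows the paper's argument essentially step for step. The same arc-disjoint green triangles give the lower bound, and the same red triangles drive the three-case analysis that forces the backward set to be $\{1\to3,6\to3\}$ or $\{3\to4,3\to5\}$. The only cosmetic difference is in recovering the order: you take the unique topological order of the transitive tournament obtained by reversing the backward set, whereas the paper reads the order off the Hamiltonian path formed by the remaining forward arcs.
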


\begin{proof}
The two green triangles $1\to3\to4\to1$ and $3\to5\to6\to3$ in \cref{fig:local-tournament}(a) are edge-disjoint, so every order has at least two backward arcs.
Suppose that an order has exactly two.
It then has one backward arc in each green triangle and no backward arc outside them.

The red triangles $1\to3\to5\to1$ and $3\to4\to6\to3$ in \cref{fig:local-tournament}(b) imply that at least one of $1\to3,3\to5$ is backward and at least one of $3\to4,6\to3$ is backward.
If $4\to1$ were the backward arc in the first green triangle, then $1\to3$ and $3\to4$ would be forward, so the two red triangles would require both $3\to5$ and $6\to3$ to be backward in the second green triangle, a contradiction.
Hence the backward arc in the first green triangle is either $1\to3$ or $3\to4$.
In the first case, the second red triangle can contain a backward arc only if $6\to3$ is backward; in the second case, the first red triangle can contain a backward arc only if $3\to5$ is backward.
Thus the complete backward-arc set is either $\{1\to3,6\to3\}$ or $\{3\to4,3\to5\}$.
If the backward arcs are $1\to3$ and $6\to3$, the forward arcs $2\to3\to4\to5\to1\to6$ force the unique order $234516=\omega_0$.
If they are $3\to4$ and $3\to5$, the forward arcs $2\to4\to5\to1\to6\to3$ force the unique order $245163=\omega_1$.
Both orders have exactly two backward arcs.
Conversely, the preceding argument shows that every order with exactly two backward arcs has one of these two complete backward-arc sets, and the remaining forward arcs determine the order uniquely.
Thus no other order attains the lower bound.
\end{proof}

Set $M=4(n+m+1)^2$.
For each $v_i\in V$, create six mutually disjoint sets of new candidates $B_{i,1},\dots,B_{i,6}$ with $|B_{i,1}|=M+2$ and $|B_{i,2}|=\cdots=|B_{i,6}|=M$.
Call these sets the \emph{vertex blocks} for $v_i$, and write $\mathcal B_i=(B_{i,1},\dots,B_{i,6})$ for the resulting ordered list of six vertex blocks.
In input ranking $\pi_r$, the relative order of these six blocks will be $\nu_r$ from \cref{eq:local-votes}; hence the majority direction between every pair of vertex blocks is the same as that between the corresponding local candidates in the profile.
The two orders $\omega_0$ and $\omega_1$ from \cref{lem:local-states} will represent $x_i=0$ and $x_i=1$, respectively.

When the six vertex blocks are ordered by $\omega_0$, the backward majority arcs between them correspond to $1\to3$ and $6\to3$; under $\omega_1$, they correspond to $3\to4$ and $3\to5$.
For each such arc $a\to b$, all $|B_{i,a}|\cdot|B_{i,b}|$ candidate pairs with one member in each block are backward.
Consequently,
\begin{equation}\label{eq:switch-costs}
\begin{array}{c|c}
\text{order of the six blocks}&\text{number of backward arcs with endpoints in different blocks}\\ \hline
\omega_0&(M+2)M+M^2=2M^2+2M\\
\omega_1&M^2+M^2=2M^2
\end{array}
\end{equation}
Thus the six vertex blocks for $v_i$ contribute $2M^2+2M-2Mx_i$ when ordered by $\omega_{x_i}$, and their contribution depending on $x_i$ is exactly $-2Mx_i$.

Fix $X\subseteq V$ and suppose that, for every $v_i$, the six vertex blocks are ordered as $\omega_{x_i}$.
By \cref{eq:switch-costs}, the part of their total backward-arc contribution that depends on $X$ is $-2M\sum_{i=1}^n x_i$.
The main $X$-dependent contribution of the edge candidates will be $M\sum_{\{v_i,v_j\}\in E}x_ix_j$; the possible additional contribution is bounded later.
Together, these two displayed terms equal $-M\cut_G(X)/2$ by \cref{eq:cut-identity}.
For every edge $e=\{v_i,v_j\}$, with $i<j$, create one additional candidate $p_e$, called the \emph{edge candidate} for $e$.
We also add padding blocks and then complete the definitions of the three rankings.

For each $v_i$, also create six \emph{padding blocks} $D_{i,1},\dots,D_{i,6}$, each of size $M$.
The six vertex blocks and six padding blocks form the \emph{vertex group} for $v_i$.
Candidates in vertex blocks or padding blocks are called \emph{block candidates}.
The padding blocks do not encode $x_i$.
They will ensure that the gap chosen for an edge candidate in the third ranking lies strictly between the six vertex blocks of its two endpoints.

For finite sequences of blocks or candidates, juxtaposition denotes concatenation.
To number the possible positions of the edge candidates in the third ranking, define a block sequence $\mathcal Q$ as follows.
For $i=1,\dots,n$ in increasing order, append
\[
B_{i,3}B_{i,6}B_{i,1}B_{i,2}B_{i,5}B_{i,4}D_{i,1}\cdots D_{i,6}.
\]
Write the resulting sequence as $\mathcal Q=Q_1Q_2\cdots Q_{12n}$.
For each $i$, the first six blocks listed for $v_i$ are the blocks of $\mathcal B_i$ in the reverse of $\nu_3$, and the next six are $D_{i,1},\dots,D_{i,6}$.
Thus the vertex groups occur in increasing index order in $\mathcal Q$.

A \emph{block gap} of $\mathcal Q$ is a position before the first block, between consecutive blocks, or after the last block.
Number the block gaps by the number of blocks preceding them: block gap $0$ is before $Q_1$, block gap $r$ is after $Q_r$, and block gap $12n$ is after $Q_{12n}$.

For $e=\{v_i,v_j\}$ with $i<j$, the block gap immediately after the six blocks of $\mathcal B_i$ has index $12i-6$, and the block gap immediately before the six blocks of $\mathcal B_j$ has index $12j-12$.
Their average is
\begin{equation}\label{eq:midpoint}
\mu_e=\frac{(12i-6)+(12j-12)}{2}=6(i+j)-9.
\end{equation}
The formula shows that $\mu_e$ is an integer, and its distance from either endpoint-gap index is $6(j-i)-3>0$.
Thus $\mu_e$ is the index of the midpoint block gap of $\mathcal Q$ between the two endpoint block gaps.
When $j=i+1$, the six padding blocks following $\mathcal B_i$ make the endpoint gaps six positions apart; without those padding blocks, the two gaps would coincide.

Fix an arbitrary internal order of the candidates in every block and call it the \emph{forward order} of that block.
We now define the three input rankings $\pi_1$, $\pi_2$, and $\pi_3$.

For each $i$, let $P_i^{\mathrm L}$ denote the list of edge candidates $p_e$ for edges $e=\{v_i,v_j\}$ with $i<j$, ordered by increasing $j$.
Let $P_i^{\mathrm R}$ denote the list of edge candidates $p_e$ for edges $e=\{v_h,v_i\}$ with $h<i$, ordered by increasing $h$.
Either list may be empty; an empty list contributes the empty sequence.
Define $\pi_1$ by concatenating the following sequence for each $i=1,\dots,n$:
\[
B_{i,1}B_{i,2}B_{i,3}P_i^{\mathrm L}B_{i,4}B_{i,5}B_{i,6}D_{i,1}\cdots D_{i,6}.
\]
Define $\pi_2$ by concatenating the following sequence for each $i=1,\dots,n$:
\[
B_{i,6}B_{i,3}B_{i,2}P_i^{\mathrm R}B_{i,5}B_{i,4}B_{i,1}D_{i,1}\cdots D_{i,6}.
\]
Within every vertex block and padding block, both rankings use the forward order.

The third ranking reverses the block sequence $\mathcal Q$.
Consequently, it lists the vertex groups in decreasing index order and places the blocks of each $\mathcal B_i$ in the order $\nu_3$.
For each $r\in\{0,\dots,12n\}$, let $P_r^{\mathrm{mid}}$ denote the edge candidates $p_e$ for which $e=\{v_i,v_j\}$, $i<j$, and $\mu_e=r$. Order this list lexicographically by the index pair $(i,j)$; the list may be empty.
For $r\geq1$, place $P_r^{\mathrm{mid}}$ immediately before $Q_r$ in the reversed block sequence, and place $P_0^{\mathrm{mid}}$ at the end.
Thus
\[
\pi_3=P_{12n}^{\mathrm{mid}}Q_{12n}P_{12n-1}^{\mathrm{mid}}Q_{12n-1}\cdots P_1^{\mathrm{mid}}Q_1P_0^{\mathrm{mid}}.
\]
Within each vertex block and padding block in this definition, $\pi_3$ uses the reverse of the forward order.
The blocks lying after the block gap with index $\mu_e$ in $\mathcal Q$ precede $p_e$ in $\pi_3$, while the blocks lying before that block gap follow $p_e$.
This completes the definitions of the three rankings.
For every graph vertex $v_i$, deleting the edge candidates leaves the blocks of $\mathcal B_i$ in the orders $\nu_1$, $\nu_2$, and $\nu_3$ under $\pi_1$, $\pi_2$, and $\pi_3$, respectively.

The next subsection verifies that every candidate pair is split $2$\nobreakdash-to\nobreakdash-$1$.
The following subsection then defines the block orders that encode subsets $X\subseteq V$ and proves that every aggregate can be transformed to one of them without increasing the objective.
After this normalization, we determine the minimum contribution of each edge candidate.

\subsection[Pairwise relations and the all-2-to-1 property]{Pairwise relations and the all\nobreakdash-$2$\nobreakdash-to\nobreakdash-$1$ property}\label{sec:pairwise-relations}

Before analyzing aggregate orders, we prove the all\nobreakdash-$2$\nobreakdash-to\nobreakdash-$1$ condition stated in \cref{thm:main}.
Once this is established, \cref{eq:all-two-to-one-score} allows the rest of the reduction to use backward majority arcs in place of Kemeny scores.

\begin{lemma}\label{lem:all-two-to-one-profile}
Every unordered pair of candidates in the constructed profile is split exactly $2$\nobreakdash-to\nobreakdash-$1$.
\end{lemma}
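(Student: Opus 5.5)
The plan is a direct case analysis over the types of candidate pairs. Since there are exactly three rankings, a pair is split $2$\nobreakdash-to\nobreakdash-$1$ exactly when it is not ordered the same way by all three. So for each type of pair it suffices to exhibit two rankings that disagree on it, or to show that $\pi_3$ disagrees with $\pi_1$ and $\pi_2$ when these two agree. Throughout I will use one reading of $\pi_3$: it reverses $\mathcal Q$. Hence a block lying before block gap $\mu_e$ in $\mathcal Q$ comes after $p_e$ in $\pi_3$, and a block lying after that gap comes before $p_e$. By the remark following \cref{eq:midpoint}, $\mathcal B_i$ lies entirely before gap $\mu_e$, and the whole vertex group of $v_j$ lies entirely after it.

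\emph{Block candidates.}
\begin{enumerate}[label=\textup{(\alph*)}]
\item Two candidates in the same block: $\pi_1,\pi_2$ use the forward order and $\pi_3$ reverses it.
\item Candidates in different vertex groups $i<k$: $\pi_1$ and $\pi_2$ place group $i$ first, while $\pi_3$ lists the groups in decreasing index order.
\item Two vertex blocks of the same group: deleting edge candidates leaves $\nu_1,\nu_2,\nu_3$, and the local profile \cref{eq:local-votes} is all\nobreakdash-$2$\nobreakdash-to\nobreakdash-$1$ (checked directly on the $15$ local pairs).
\item A vertex block versus a padding block of the same group: $\pi_1,\pi_2$ put all of $\mathcal B_i$ before $D_{i,1}\cdots D_{i,6}$, whereas $\pi_3$ reverses $\mathcal Q$ and so puts the padding blocks first.
\item Two padding blocks of the same group: they appear as $D_{i,1}\cdots D_{i,6}$ in $\pi_1,\pi_2$ and reversed in $\pi_3$.
\end{enumerate}

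\emph{An edge candidate $p_e$, $e=\{v_i,v_j\}$ with $i<j$, versus a block candidate in group $k$.}
\begin{enumerate}[label=\textup{(\alph*)}]
\item $k<i$ or $k>j$: $\pi_1$ and $\pi_2$ agree, and $\pi_3$ reverses this because of the side of gap $\mu_e$ on which the group lies.
\item $i<k<j$, or a padding block of group $i$: $\pi_1$ places it after $p_e$ and $\pi_2$ before.
\item $B_{i,1},B_{i,2},B_{i,3}$ and $B_{i,4},B_{i,5},B_{i,6}$: the pattern is before/before/after and after/before/after respectively. Here I use that in $\pi_2$ all of group $i$ precedes $p_e$, and in $\pi_3$ all of $\mathcal B_i$ follows it.
\item $B_{j,6},B_{j,3},B_{j,2}$ and $B_{j,5},B_{j,4},B_{j,1}$: the pattern is after/before/before and after/after/before.
\item Padding blocks of group $j$: after, after, and before. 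They lie after the gap in $\mathcal Q$, following $\mathcal B_j$.
\end{enumerate}

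\emph{Two edge candidates $p_e,p_f$, with $e=\{v_i,v_j\}$, $f=\{v_k,v_l\}$, $i<j$, $k<l$.} Here $\pi_1$ orders by $(i,j)$ lexicographically and $\pi_2$ by $(j,i)$. If $i<k$ and $j>l$, or symmetrically, $\pi_1$ and $\pi_2$ already disagree. Otherwise, up to swapping, $i\le k$ and $j\le l$ with $(i,j)\neq(k,l)$. Then $\pi_1,\pi_2$ both put $p_e$ first, while $\mu_e<\mu_f$ by \cref{eq:midpoint}, so $\pi_3$, which lists the $P^{\mathrm{mid}}_r$ in decreasing $r$, puts $p_f$ first.

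The main obstacle I expect is this last case, specifically ties $\mu_e=\mu_f$, where the lexicographic tie-break inside $P^{\mathrm{mid}}_r$ might seem to matter. I would resolve it by noting that $i+j=k+l$ with $(i,j)\neq(k,l)$ forces $i<k$ and $j>l$ (or the reverse), which is the case already split by $\pi_1,\pi_2$. The rest is bookkeeping, organized by the position of $p_e$ relative to each group in each ranking.
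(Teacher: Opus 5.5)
Your proposal is correct and follows the paper's proof essentially step for step. It uses the same three-way split into block--block, edge--block, and edge--edge pairs, and the same consequence of $12i-6<\mu_e<12j-12$ that $\mathcal B_i$ and the whole group of $v_j$ lie on opposite sides of the midpoint gap. For edge candidates you use the same lexicographic $(i,j)$ versus $(j,i)$ comparison together with the ordering of midpoints, including the observation that $\mu_e=\mu_f$ forces $\pi_1$ and $\pi_2$ to disagree.
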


\begin{proof}
There are three kinds of candidate pairs: two block candidates, one block candidate and one edge candidate, or two edge candidates.

Pairs inside one block are ordered forward in $\pi_1,\pi_2$ and backward in $\pi_3$.
For two different vertex blocks in one vertex group, the assertion follows from \cref{eq:local-votes}.
For every other pair of blocks, the first two rankings agree on their order and the third ranking reverses it.
This covers two blocks from different vertex groups, a vertex block and a padding block in the same group, and two padding blocks in the same group.

We next consider a pair consisting of one edge candidate and one block candidate.
Fix an edge $e=\{v_i,v_j\}$ with $i<j$. We consider the block candidates in order of their vertex-group indices, from groups preceding $v_i$ through groups following $v_j$.
By \cref{eq:midpoint}, $12i-6<\mu_e<12j-12$.
For a block in a vertex group with index smaller than $i$, both $\pi_1$ and $\pi_2$ place the block before $p_e$, whereas $\pi_3$ places $p_e$ before the block.
For each of the six vertex blocks in $\mathcal B_i$, the ranking $\pi_2$ places the block before $p_e$, whereas $\pi_3$ places $p_e$ before the block. Hence $\pi_1$ determines the majority orientation of each such pair.
For every padding block in the group for $v_i$ and every block in a group indexed strictly between $i$ and $j$, the ranking $\pi_1$ places $p_e$ before the block, while $\pi_2$ places the block before $p_e$; the position in $\pi_3$ determines the majority orientation.
For each of the six vertex blocks in $\mathcal B_j$, the ranking $\pi_1$ places $p_e$ before the block, whereas $\pi_3$ places the block before $p_e$. Hence $\pi_2$ determines the majority orientation of each such pair.
For every padding block in the group for $v_j$ and every block in a group with index larger than $j$, both $\pi_1$ and $\pi_2$ place $p_e$ before the block, whereas $\pi_3$ places the block before $p_e$.
These cases prove that every pair consisting of an edge candidate and a block candidate is split $2$\nobreakdash-to\nobreakdash-$1$.

It remains to consider pairs of edge candidates. Let $e=\{v_i,v_j\}$ and $f=\{v_k,v_\ell\}$ be distinct edges, written with $i<j$ and $k<\ell$.
The first ranking orders the two edge candidates lexicographically by $(i,j)$, while the second orders them lexicographically by $(j,i)$.
If these two rankings disagree, the third ranking makes one orientation occur twice and the other once, so the pair is split $2$\nobreakdash-to\nobreakdash-$1$.
Suppose both place $p_e$ before $p_f$.
Then $i\leq k$ and $j\leq\ell$, with at least one strict inequality, so $\mu_e<\mu_f$.
In the third ranking, every candidate in $P_r^{\mathrm{mid}}$ precedes every candidate in $P_s^{\mathrm{mid}}$ when $r>s$; hence $p_f$ precedes $p_e$.
The case in which the first two rankings both place $p_f$ before $p_e$ is symmetric.
If $\mu_e=\mu_f$, the equality $i+j=k+\ell$ for distinct endpoint pairs forces the first two lexicographic orders to disagree.
Thus every pair of edge candidates is also split $2$\nobreakdash-to\nobreakdash-$1$.
\end{proof}

\subsection{Normalization of aggregate orders}\label{sec:fixed-normalization}

To associate an aggregate with a subset $X\subseteq V$, we require its blocks to have the following form: every block is contiguous, the vertex groups occur in increasing index order, the six vertex blocks precede the padding blocks within each group, and those six vertex blocks use either $\omega_0$ or $\omega_1$. Under these conditions, define $X=\{v_i:\text{the six vertex blocks for }v_i\text{ use }\omega_1\}$.
An arbitrary aggregate need not have this form: it may split a block, interleave vertex groups, place padding blocks before vertex blocks, or order the six vertex blocks differently from both $\omega_0$ and $\omega_1$. We now give a polynomial-time transformation that does not increase the number of backward majority arcs and produces an order satisfying these four conditions. It first makes every block contiguous and forward-ordered, and then changes the relative order of the blocks.

Making a block contiguous uses two properties of the input rankings.
Each block is consecutive in every ranking, so every candidate outside it lies entirely before or entirely after it in that ranking; all members of the block therefore have the same majority comparison with the outside candidate.
Inside a block, the first two rankings use the forward order and the third uses the reverse order, so every majority arc between two members of the block points forward in the forward order.
These two properties allow each block to be made contiguous and internally forward-ordered without increasing the number of backward arcs.

\begin{lemma}\label{lem:block-normalization}
Every aggregate order can be changed in polynomial time, without increasing the number of backward majority arcs, so that each block is contiguous and uses its forward order.
\end{lemma}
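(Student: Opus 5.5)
We handle the blocks one at a time; each step is a local move that does not increase the number of backward majority arcs. By \cref{lem:all-two-to-one-profile} and \cref{eq:all-two-to-one-score}, counting backward majority arcs is the same as measuring the Kemeny score, so we may work entirely with majority arcs.

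Fix a block $Z$ and an aggregate $\sigma$. The paragraph preceding the statement gives two facts. First, every outside candidate $y\notin Z$ has the same majority comparison with all members of $Z$, since $Z$ is consecutive in each of $\pi_1,\pi_2,\pi_3$. Second, within $Z$ every majority arc points forward in the forward order.

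\emph{Step one (contiguity).} The members of $Z$ split the candidates outside $Z$ that lie between them in $\sigma$ into intervals. For a target position $g$ among the $|Z|+1$ possible slots relative to the members of $Z$, form the order $\sigma_g$ by moving all members of $Z$ to slot $g$. This keeps the relative order of the members of $Z$ and the relative order of the outside candidates. Only pairs $(z,y)$ with $z\in Z$, $y\notin Z$ change status. By the first fact, the number of backward arcs contributed by such pairs is a function $F(g)$ that depends only on where each outside candidate falls relative to the chosen slot.

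More concretely, let $y$ range over the outside candidates lying between the first and the last member of $Z$. Each such $y$ contributes $a_y$ backward arcs if placed after the block and $b_y$ if placed before it, where $a_y+b_y=|Z|$. The current order realizes a mixture of these contributions: a candidate $y$ with $k$ members of $Z$ before it contributes $k\cdot[\text{arc }y\to Z]+(|Z|-k)\cdot[\text{arc }Z\to y]$. This is a linear interpolation between the extremes in which all of $Z$ is before $y$ and all of $Z$ is after $y$.

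The main point to verify is an averaging argument. Let $z_1,\dots,z_{|Z|}$ be the members of $Z$ in $\sigma$-order. Moving the whole block to the position of $z_k$, for $k$ chosen uniformly at random, yields an expected cross-arc count equal to the current cross-arc count. This holds because each outside $y$ between $z_k$ and $z_{k+1}$ has, in expectation, exactly $k$ members of $Z$ before it. Hence some choice of $k$ does not increase the count, and it can be found in polynomial time by trying all $k$. This step is where care is needed, but it reduces to a one-line expectation computation per candidate $y$.

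\emph{Step two (internal order).} Once $Z$ is contiguous, reorder its members into the forward order. This does not affect any pair with an outside candidate. By the second fact it removes every backward arc inside $Z$, so the count does not increase.

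Processing the blocks in turn keeps earlier blocks contiguous. Moving a contiguous block $Z$ as a unit cannot split another contiguous block $Z'$, except by inserting $Z$ inside $Z'$. We avoid this by applying the averaging argument only to slots between consecutive members of $Z$. If another block $Z'$ lies entirely inside such a slot, the insertion position for $Z$ can be chosen at a boundary of $Z'$. This is possible because all members of $Z'$ compare with $Z$ identically, so $Z'$ behaves like a single heavy outside candidate. There are polynomially many blocks and polynomially many trial positions, which gives the polynomial running time.
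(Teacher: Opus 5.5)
Your proof is correct and follows essentially the paper's route. The cross-arc count is a sum over the members of $Z$ of a common per-gap cost in the fixed outside order, so relocating the whole block to one good gap and then sorting it forward cannot increase the count, and processing the blocks one at a time keeps the procedure polynomial.

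Your choice to average over the gaps currently occupied by $z_1,\dots,z_{|Z|}$, rather than taking the paper's global minimizing gap, is a harmless variant. It actually makes the non-splitting step automatic: a gap occupied by a member of $Z$ can never lie strictly inside an already contiguous block $Z'$. The paper instead argues that the per-gap cost is monotone across the interior of $Z'$, which is what your somewhat muddled final paragraph gestures at.
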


\begin{proof}
Fix one block $Q$ and temporarily fix the order of all candidates outside $Q$.
Every member of $Q$ has the same majority comparison with every outside candidate.
For each gap $\gamma$ of the fixed outside order, let $f(\gamma)$ be the number of backward arcs between the outside candidates and one member of $Q$ inserted at $\gamma$.
If the members of $Q$ occupy gaps $\gamma_1,\dots,\gamma_{|Q|}$, the total number of such backward arcs is $\sum_{r=1}^{|Q|}f(\gamma_r)$.
Moving all members to a gap minimizing $f$ cannot increase that number, and ordering them in the forward order makes every majority arc inside $Q$ forward.

Apply this operation successively to all blocks.
A minimizing gap can be chosen not to split a block $R$ treated in an earlier iteration.
While one member of $Q$ is moved past the consecutive members of $R$, its contribution changes by the same signed amount at every step because all members of $R$ have the same majority comparison with it.
Thus $f$ is monotone across the internal gaps of $R$.
If $f$ attains its minimum at a gap inside $R$, then one of the two gaps immediately outside $R$ has the same minimum value.
Repeating this adjustment for every block treated in an earlier iteration yields a minimizing gap that splits none of them. Those blocks can therefore remain contiguous throughout the procedure.
All scans, moves, and internal sorts are polynomial in the number of candidates.
\end{proof}

Once the blocks are contiguous, placing two blocks opposite their common majority direction creates at least $M^2$ backward arcs.
To see this, suppose every candidate in a block $Q$ has a majority arc to every candidate in a block $R$, but an aggregate places $R$ before $Q$. Then all $|Q|\cdot|R|$ arcs with one endpoint in each block are backward.
Since every block has at least $M$ candidates, this number is at least $M^2$.

We next define the block order $\Gamma_X$ for each subset $X\subseteq V$ and then show that every aggregate can be transformed to one of these block orders.
After applying \cref{lem:block-normalization}, every vertex block and padding block is contiguous and internally forward-ordered.
Define $\Gamma_X$ as follows.
Process $i=1,\dots,n$ in increasing order; within the vertex group for $v_i$, arrange the six vertex blocks $B_{i,1},\dots,B_{i,6}$ according to $\omega_{x_i}$ and then list $D_{i,1},\dots,D_{i,6}$, where $x_i=\chi_X(v_i)$.
The definitions of the three input rankings imply the following candidate-level majority relations. Every candidate in a smaller-index vertex group has a majority arc to every candidate in a larger-index vertex group. Within one vertex group, every candidate in a vertex block has a majority arc to every candidate in a padding block, and, for $r<s$, every candidate in $D_{i,r}$ has a majority arc to every candidate in $D_{i,s}$. All of these arcs are forward in $\Gamma_X$.
By \cref{eq:switch-costs}, the number of backward arcs between block candidates in $\Gamma_X$ is
\begin{equation}\label{eq:fixed-cost}
n(2M^2+2M)-2M\sum_{i=1}^{n}x_i.
\end{equation}
The first term is the total contribution from the six vertex blocks when every vertex group uses $\omega_0$; every other arc between block candidates is forward in $\Gamma_X$.

\begin{lemma}\label{lem:fixed-normalization}
Every aggregate order can be changed in polynomial time, without increasing the number of backward majority arcs, so that every vertex block and padding block is contiguous and forward-ordered and deleting the edge candidates leaves the block order $\Gamma_X$ for some $X\subseteq V$.
\end{lemma}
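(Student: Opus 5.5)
We begin by applying \cref{lem:block-normalization}, so that every vertex block and padding block is contiguous and forward-ordered. We then fix the positions of the edge candidates and treat the blocks as units, rearranging only the relative order of the blocks. The key observation is that each block $Q$ has size at least $M$. Also, all members of $Q$ share the same majority comparison with every other candidate. Hence the cost of a block order is a sum of two parts. The first is the contribution from pairs of blocks, in which each backward block pair costs $|Q|\cdot|R|\geq M^2$. The second is the contribution from block--edge pairs, which is at most $m\cdot(12n)(M+2)$. The number of edge--edge backward arcs is unchanged because the edge candidates stay put. Since $M=4(n+m+1)^2$, any block order with a backward block pair other than the two forced ones within each $\mathcal B_i$ should lose against a suitably chosen $\Gamma_X$.

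The concrete plan is to compare the normalized aggregate $\sigma$ with a modified order $\sigma'$, constructed as follows. For each $i$, choose $x_i$ so that $\omega_{x_i}$ is a locally optimal order of the six vertex blocks, breaking ties sensibly. Then replace the block order in $\sigma$ by $\Gamma_X$, and reinsert each edge candidate $p_e$ into the gap of $\Gamma_X$ that minimizes its backward arcs with blocks. We keep the relative order of the edge candidates themselves as in $\sigma$. This is legitimate because edge--edge arcs depend only on their mutual order, and inserting edge candidates one at a time into block gaps can preserve that mutual order. To make this work we need the best gap positions to be monotone, or else we handle the issue by a separate exchange argument.

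The comparison then proceeds block-arc by block-arc. In $\Gamma_X$ all inter-group and padding arcs are forward, and within $\mathcal B_i$ the cost is $2M^2+2M-2Mx_i$, which by \cref{lem:local-states} is the minimum in block-weighted terms. Here \cref{lem:local-states} must be rechecked with weights: $B_{i,1}$ has size $M+2$, so an order that is suboptimal in unweighted terms costs at least $3M^2$, compared with $2M^2+2M+O(1)\cdot 0$. Consequently, if $\sigma$'s block order differs from every $\Gamma_X$, then $\sigma$ has at least $M^2-2M$ more backward block--block arcs than the best $\Gamma_X$. Meanwhile the block--edge arcs of $\sigma'$ exceed those of $\sigma$ by at most $m\cdot 12n(M+2)$. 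That bound is too large compared to $M^2$ unless it is refined.

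This refinement is the step we expect to be the main obstacle. For a fixed $p_e$, the majority relations with blocks are determined by the explicit case analysis in \cref{lem:all-two-to-one-profile}. We would show that $p_e$ can be placed in $\Gamma_X$ so that it has at most $O(M)$ backward arcs with blocks, namely at most a few blocks' worth near $\mathcal B_i$ and $\mathcal B_j$. We place $p_e$ at the midpoint gap $\mu_e$ in the padding/intermediate region, where it agrees with $\pi_3$. It then agrees with all groups outside, but may disagree with up to about $3$ vertex blocks in each of $\mathcal B_i$ and $\mathcal B_j$. The edge contribution is therefore at most $m\cdot 6(M+2)$, which is $O(nM)$ and much less than $M^2/2$. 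So whenever the block order of $\sigma$ is not of the form $\Gamma_X$, the order $\sigma'$ is strictly better. Otherwise the blocks already form $\Gamma_X$ and we are done. Every step is a polynomial scan.
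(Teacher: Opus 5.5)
Your framework (normalize the blocks, compare with some $\Gamma_X$, gain at least $M^2$ among block pairs whenever the block order is not of that form, then pay for the edge candidates) matches the paper's. The step you single out as the crux, however, goes wrong in two ways.

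First, the crude bound you discard already suffices. With $m=2n$ and $M=4(3n+1)^2$ one has $mn=2n^2<2(3n+1)^2/9=M/18$. So even if every one of the $mn(12M+2)$ edge--block pairs turned backward, the increase would be less than $\tfrac23M^2+\tfrac19M$. This is below $M^2$, and also below your weaker gain $M^2-2M$, because $M\geq64$. Your slightly looser count also works, since $m\cdot12n(M+2)<\tfrac23M^2+\tfrac43M$. This is exactly how the paper concludes: $M$ is taken quadratic in $n$ precisely so that no finer insertion analysis is needed at this stage.

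Second, the refinement you propose is false. Because $\pi_3$ lists $\mathcal Q$ in reverse, a block $Q$ strictly between $\mathcal B_i$ and $\mathcal B_j$ that lies before gap $\mu_e$ of $\mathcal Q$ has majority arc $p_e\to Q$. One lying after that gap has $Q\to p_e$. Since $\Gamma_X$ lists the groups in increasing index order, putting $p_e$ in the middle of that stretch makes essentially all $12k_e-6$ intervening blocks backward, where $k_e=j-i$. That is where the insertion cost peaks in the proof of \cref{lem:insertion-contribution}, not a good position.

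Worse, at every gap of $\Gamma_X$, $p_e$ is backward against at least $6k_e-3$ intervening blocks and three vertex blocks of $\mathcal B_i$ or $\mathcal B_j$. So its minimum insertion cost is at least $6k_eM$. No per-edge $O(M)$ bound such as $m\cdot6(M+2)$ can hold once $G$ has edges with $j-i$ of order $n$.

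Finally, as you suspected, you cannot in general place each $p_e$ at its own best gap while keeping the relative order of the edge candidates from $\sigma$. There are two ways around this:
\begin{enumerate}
\item Charge at most $\binom m2$ for edge--edge arcs, which is harmless against $M^2$.
\item Do what the paper does: put all edge candidates, in their previous relative order, into the gap before the first block. Then edge--edge comparisons are unchanged and only the crude count above is paid.
\end{enumerate}
With these corrections your argument is complete. Jumping directly to a blockwise-optimal $\Gamma_X$, instead of the paper's stepwise repairs, is harmless for the lemma as stated.
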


\begin{proof}
First apply \cref{lem:block-normalization}.
There are $n(12M+2)$ block candidates.
Temporarily delete the edge candidates.
We first change the remaining block sequence into the form $\Gamma_X$ and count the decrease in backward arcs whose endpoints lie in different blocks. We then reinsert the edge candidates and bound how many arcs incident with an edge candidate can change from forward to backward.

For every $i<j$, every candidate in the vertex group for $v_i$ has a majority arc to every candidate in the vertex group for $v_j$.
Reorder the blocks by nondecreasing vertex-group index while preserving the relative order of blocks that belong to the same group. This makes every majority arc between different vertex groups forward.
Within each group, move the six vertex blocks before the padding blocks without changing their internal order, and then sort the padding blocks as $D_{i,1},\dots,D_{i,6}$.
These operations also make all affected majority arcs forward.
If any of these operations changes the block sequence, then it changes at least one pair of blocks from the reverse of their common majority direction to that majority direction. It therefore decreases the number of backward arcs between block candidates by at least $M^2$.

Only the order of the six vertex blocks in each group remains to be fixed.
Retain a group whose six vertex blocks are already ordered as $\omega_0$ or $\omega_1$; otherwise reorder them as $\omega_1$.
By \cref{lem:local-states}, the previous order has at least three backward majority arcs among the six labels and therefore contributes at least $3M^2$ backward arcs between the corresponding blocks, whereas $\omega_1$ contributes exactly $2M^2$.
This replacement changes no comparison with a block outside the group and decreases the number of backward arcs within the group by at least $M^2$.
Define $X=\{v_i:\text{the six vertex blocks for }v_i\text{ use }\omega_1\}$.
The resulting block sequence is then $\Gamma_X$.

If the block sequence obtained after \cref{lem:block-normalization} already equals this $\Gamma_X$, leave the current aggregate unchanged.
Otherwise, the preceding operations have decreased the number of backward arcs between block candidates by at least $M^2$; use the new block sequence $\Gamma_X$ and reinsert all edge candidates, in their previous relative order, in the gap before the first block.
The comparisons between edge candidates are then unchanged.
There are $m$ edge candidates and $n(12M+2)$ block candidates, so at most $mn(12M+2)$ arcs between an edge candidate and a block candidate can change from forward to backward.
Since $m=2n$, we have $M=4(3n+1)^2$ and $mn=2n^2<2(3n+1)^2/9=M/18$.
Therefore $mn(12M+2)<\frac{M}{18}(12M+2)=\frac23M^2+\frac19M<M^2$.
Thus the possible increase in backward arcs incident with edge candidates is smaller than the decrease of at least $M^2$ among block candidates.
The total number of backward majority arcs does not increase.
All block reorderings and edge-candidate reinsertions are polynomial-time computable.
\end{proof}

\subsection{Insertion cost of an edge candidate}\label{sec:insertion-position}

After applying \cref{lem:fixed-normalization}, deleting the edge candidates leaves the block order $\Gamma_X$, but the edge candidates themselves may still occur anywhere in the aggregate.
For each edge candidate, we now determine the minimum number of backward arcs between it and the block candidates.
A \emph{block gap} is a position before the first block of $\Gamma_X$, between two consecutive blocks, or after the last block.
For an edge $e$ and a block gap $\gamma$, let $\Phi_e^X(\gamma)$ be the number of backward arcs between $p_e$ and the block candidates when $p_e$ is inserted at $\gamma$, and put $\phi_e(X)=\min_\gamma\Phi_e^X(\gamma)$.
The proof analyzes the block gaps from left to right.
Arcs between two edge candidates are bounded separately in \cref{sec:score}.

\begin{lemma}\label{lem:insertion-contribution}
For every edge $e=\{v_i,v_j\}$ there is a polynomial-time computable integer $c_e$ such that, for every $X\subseteq V$, $c_e+Mx_ix_j\leq\phi_e(X)\leq c_e+Mx_ix_j+2$.
\end{lemma}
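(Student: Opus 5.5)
The plan is to compute $\Phi_e^X(\gamma)$ explicitly as a function of the block gap $\gamma$ of $\Gamma_X$, using the majority orientations between $p_e$ and each block found in the proof of \cref{lem:all-two-to-one-profile}. Each block is consecutive in every ranking, so all its members have the same majority comparison with $p_e$. Hence $\Phi_e^X(\gamma)$ is a sum over blocks $Q$ of either $0$ or $|Q|\in\{M,M+2\}$. The block contributes $|Q|$ exactly when $Q$ lies on the wrong side of $\gamma$ relative to its arc with $p_e$.

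I would record the orientations region by region.
\begin{itemize}
\item Every block in a group with index smaller than $i$ has an arc to $p_e$.
\item Every block in a group with index larger than $j$ (and every padding block of $v_j$) receives an arc from $p_e$.
\item In $\mathcal B_i$, decided by $\pi_1$: $B_{i,1},B_{i,2},B_{i,3}\to p_e$ and $p_e\to B_{i,4},B_{i,5},B_{i,6}$.
\item In $\mathcal B_j$, decided by $\pi_2$: $B_{j,6},B_{j,3},B_{j,2}\to p_e$ and $p_e\to B_{j,5},B_{j,4},B_{j,1}$.
\item Every block strictly between $\mathcal B_i$ and $\mathcal B_j$ is decided by $\pi_3$. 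Blocks before block gap $\mu_e$ of $\mathcal Q$ receive an arc from $p_e$, and blocks after it send an arc to $p_e$. By \cref{eq:midpoint}, each of these two sets has exactly $6(j-i)-3$ blocks, all of size $M$.
\end{itemize}

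First I would show that an optimal $\gamma$ never lies outside the span from $\mathcal B_i$ to $\mathcal B_j$. Moving $p_e$ inward from such a gap only makes more outside blocks forward. Next I would treat gaps strictly inside the middle region. Because $\Gamma_X$ and $\mathcal Q$ order the groups identically, every ``$p_e\to$'' middle block precedes every ``$\to p_e$'' middle block in $\Gamma_X$. The only possible exception is one intermediate group $v_k$ in which $\mu_e$ falls among its vertex blocks, namely when $\mu_e\equiv3\pmod{12}$, with split $\{3,6,1\}$ versus $\{2,5,4\}$.

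Consequently, for gaps inside the middle region the middle blocks cost at least $(6(j-i)-3)M$. This is exactly what placing $p_e$ at either end of the middle region costs. The minimum is therefore attained at a gap within or adjacent to $\mathcal B_i$ or $\mathcal B_j$, and the middle region contributes the $X$-independent constant $(6(j-i)-3)M$. For the exceptional group $v_k$ I still need to check that no gap inside its vertex blocks beats this, for either state $\omega_{x_k}$. I expect a local count over the six positions of $\omega_0=234516$ and $\omega_1=245163$ to settle this.

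It then remains to minimize, over gaps within or next to $\mathcal B_i$, the local cost of $\mathcal B_i$ plus $M$ times the number of $\mathcal B_j$ blocks $\{6,3,2\}$ (that is, $3M$). Symmetrically, over gaps within or next to $\mathcal B_j$, I minimize the local cost of $\mathcal B_j$ plus $3M$ for the $\mathcal B_i$ blocks $\{4,5,6\}$. Each local cost is a minimum over the seven gaps of $\omega_{x}$ of the weighted number of misplaced blocks, with weight $M+2$ on block $1$. I would tabulate the four values, for $\mathcal B_i$ with $x_i\in\{0,1\}$ and for $\mathcal B_j$ with $x_j\in\{0,1\}$.

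Then $\phi_e(X)$ is the smaller of the two resulting sums, and I would define $c_e$ as the common base constant. The claim reduces to a finite check, which is the main obstacle. The minimum of the two endpoint options must equal $c_e+Mx_ix_j$ up to an additive error in $\{0,1,2\}$. This means that when at least one endpoint is in state $\omega_0$, that endpoint admits a placement cheaper by exactly $M$ (up to the $+2$ coming from $|B_{\cdot,1}|=M+2$), and when both endpoints use $\omega_1$ no such saving exists.

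I would verify this $2\times2$ table directly from the $\omega$-orders. If the numbers do not line up, I would re-examine the gaps at the boundary between $\mathcal B_i$ and its padding blocks, since those padding blocks are on the ``$p_e\to$'' side. The value $c_e$ is a linear expression in $i$, $j$ and $M$, so it is computable in polynomial time.
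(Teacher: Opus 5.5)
Your plan follows the paper's proof: orientations of $p_e$ against each block, monotonicity outside the span from $\mathcal B_i$ to $\mathcal B_j$, no dip in the middle region, and a comparison of the best gaps inside the two endpoint groups. Your orientation list is correct. However, the claim that the intervening blocks all have size $M$ is false, and the lemma's tight upper slack depends on exactly this point. Every group $v_h$ with $i<h<j$ contributes a block $B_{h,1}$ of size $M+2$. Let $S_{\mathrm L}$ and $S_{\mathrm R}$ be the total sizes of the intervening blocks before and after block gap $\mu_e$ of $\mathcal Q$. These are the middle-region costs when $p_e$ sits immediately before $\mathcal B_j$ and immediately after $\mathcal B_i$, respectively. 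Write $k=j-i$. For $k=2r+1$, $\mu_e$ lies after $D_{i+r,3}$, each side contains $r$ blocks $B_{h,1}$, and $S_{\mathrm L}=S_{\mathrm R}$. For $k=2r$, $\mu_e$ lies after $B_{i+r,3}B_{i+r,6}B_{i+r,1}$, the ``$p_e\to$'' side contains $r$ such blocks and the other side $r-1$, so $S_{\mathrm L}=S_{\mathrm R}+2$. Thus the middle region does not contribute one $X$-independent constant $(6k-3)M$; it contributes $S_{\mathrm R}$ or $S_{\mathrm L}$ depending on which endpoint $p_e$ is near. Your local minima are $M+2+Mx_i$ near $\mathcal B_i$ and $M+Mx_j$ near $\mathcal B_j$, and the other endpoint group adds $3M$. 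Combining these gives
\[
\phi_e(X)=4M+S_{\mathrm R}+\min\{2+Mx_i,\;(S_{\mathrm L}-S_{\mathrm R})+Mx_j\}.
\]
The difference tables over $(x_i,x_j)=(0,0),(0,1),(1,0),(1,1)$ are $(0,2,0,M)$ for odd $k$ and $(0,0,0,M)$ for even $k$; this is the paper's $\varepsilon_e$. Your equal-size assumption silently produces the odd formula for every $k$. This is not cosmetic. If the asymmetry had gone the other way ($S_{\mathrm R}=S_{\mathrm L}+2$), the $(0,1)$ value would exceed the $(0,0)$ value by $4$. Since both have $x_ix_j=0$, no single $c_e$ could then satisfy both inequalities. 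So where the extra candidates fall relative to $\mu_e$ must be computed, not ignored.

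Two further steps also need repair. First, do not argue that interior middle gaps are no better than the ends via the lower bound $(6k-3)M$, because the end costs exceed that bound. Instead, at a gap on the increasing stretch every ``$\to p_e$'' intervening block still lies after $p_e$, so the middle cost is at least $S_{\mathrm R}$; symmetrically it is at least $S_{\mathrm L}$ on the decreasing stretch. Second, consider the exceptional group $v_{i+r}$ in the even case. At a gap inside its six vertex blocks, the $12r-6$ ``$p_e\to$'' blocks before that group and the $12r-6$ ``$\to p_e$'' blocks after it are all backward. The middle cost is therefore at least $(24r-12)M>S_{\mathrm L}$. Your proposed local count settles this only once these blocks are included, since within the group the cost can drop by $3M$ under $\omega_1$. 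Finally, $c_e$ depends on the parity of $j-i$, so it is not a single linear expression. Take $c_e=\phi_e(\emptyset)$, computed by scanning the gaps of $\Gamma_\emptyset$, as the paper does.
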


\begin{proof}
Fix an edge $e=\{v_i,v_j\}$ with $i<j$, put $k_e=j-i$, and keep the block order $\Gamma_X$ fixed.
Every member of a block $Q$ has the same majority relation with $p_e$.
Thus, if $p_e$ is moved one position at a time through $Q$, the number of backward arcs between $p_e$ and the members of $Q$ changes in the same direction at every step.
Consequently, allowing positions inside blocks cannot improve on the minimum over block gaps.
Moving $p_e$ from immediately before a block $Q$ to immediately after it changes $\Phi_e^X$ by $-|Q|$ if the majority arc is $Q\to p_e$ and by $+|Q|$ if the majority arc is $p_e\to Q$.

We consider five consecutive ranges of block gaps: before $\mathcal B_i$, while $p_e$ is moved past the six blocks of $\mathcal B_i$, between $\mathcal B_i$ and $\mathcal B_j$, while $p_e$ is moved past the six blocks of $\mathcal B_j$, and after $\mathcal B_j$.
We compute the minimum insertion cost attained while $p_e$ passes the blocks of $\mathcal B_i$ or $\mathcal B_j$ and show that every other block gap has larger cost.

Let $A_e$ be the insertion cost immediately before the first block of $\mathcal B_i$.
At this gap, every block in a vertex group with index smaller than $i$ precedes $p_e$, and every block in a vertex group with index at least $i$ follows $p_e$.
This partition of the block candidates does not depend on the internal orders $\omega_{x_h}$, so $A_e$ is independent of $X$.
For every block in a group with index smaller than $i$, both $\pi_1$ and $\pi_2$ place the block before $p_e$, whereas $\pi_3$ places $p_e$ before the block.
The majority arc therefore points from the block to $p_e$, so the insertion cost decreases as $p_e$ is moved from left to right through these groups and reaches $A_e$ immediately before $\mathcal B_i$.

At $v_i$, the rankings $\pi_2$ and $\pi_3$ disagree on every pair consisting of $p_e$ and a candidate in $\mathcal B_i$, so $\pi_1$ determines the majority relation.
Hence moving past $B_{i,1},B_{i,2},B_{i,3}$ decreases the insertion cost, whereas moving past $B_{i,4},B_{i,5},B_{i,6}$ increases it.
Measured from the gap immediately before $\mathcal B_i$, the minimum cumulative change and the total change after all six blocks are
\[
\begin{array}{c|cc}
\text{block order}&\text{minimum cumulative change}&\text{total change after six blocks}\\ \hline
\omega_0&-2M&-2\\
\omega_1&-M&-2
\end{array}
\]
These entries are obtained by reading the chosen block order from left to right and recording the cumulative sum of the signed block sizes.
For example, under $\omega_0$ at $v_i$, the cumulative changes are $-M,-2M,-M,0,-M-2,-2$.
Thus the minimum attained while $p_e$ is moved past $\mathcal B_i$ is $A_e-2M+Mx_i$, and the insertion cost immediately after $\mathcal B_i$ is $A_e-2$.

We next analyze the block gaps strictly between $\mathcal B_i$ and $\mathcal B_j$.
For each intervening block, $\pi_1$ places $p_e$ before the block and $\pi_2$ places the block before $p_e$, so $\pi_3$ determines the majority relation.
The midpoint index $\mu_e$ is used here only to determine the orientation in $\pi_3$.
Because $\pi_3$ reads $\mathcal Q$ from right to left, a block before the block gap with index $\mu_e$ follows $p_e$ in $\pi_3$, so moving $p_e$ past that block in $\Gamma_X$ increases the insertion cost.
A block after that block gap precedes $p_e$ in $\pi_3$, so moving past it decreases the cost.
There are $12k_e-6$ intervening blocks, and the definition of $\mu_e$ divides them into two sets of $6k_e-3$ blocks.

First treat every intervening block as having size $M$, temporarily ignoring the two extra candidates in blocks $B_{h,1}$.
The total increase contributed by the blocks before the block gap with index $\mu_e$ then equals the total decrease contributed by the blocks after that gap.

Suppose first that $k_e=2r+1$ for some $r\geq0$.
Then $\mu_e=12(i+r)-3$, which is the gap after $D_{i+r,3}$ in $\mathcal Q$.
The order $\Gamma_X$ lists the vertex groups by increasing index and places the six vertex blocks before the padding blocks in each group.
Therefore every intervening block whose passage increases the cost occurs before every intervening block whose passage decreases it.
Starting from $A_e-2$ immediately after $\mathcal B_i$, the insertion cost therefore first increases and then decreases back to $A_e-2$; it never falls below $A_e-2$.

Now suppose that $k_e=2r$ for some $r\geq1$.
Then $\mu_e=12(i+r)-9$, the gap after $B_{i+r,3},B_{i+r,6},B_{i+r,1}$ in the corresponding twelve-block sequence of $\mathcal Q$.
Before $p_e$ reaches the six vertex blocks of this intermediate group in $\Gamma_X$, it has passed $12r-6$ intervening blocks, all of which increase the cost.
Under this temporary assumption that every intervening block has size $M$, the insertion cost has therefore increased by $(12r-6)M\geq6M$.
Within this group, moving past $B_{i+r,3},B_{i+r,6},B_{i+r,1}$ increases the cost, whereas moving past $B_{i+r,2},B_{i+r,5},B_{i+r,4}$ decreases it.
In the order $\omega_0=234516$, the six successive changes have signs $-,+,-,-,+,+$ and minimum cumulative change $-2M$.
In the order $\omega_1=245163$, they have signs $-,-,-,+,+,+$ and minimum cumulative change $-3M$.
Thus throughout these six blocks the insertion cost remains at least $3M$ above $A_e-2$.
Every remaining intervening block decreases the cost.
Because the total increase and total decrease are equal under this assumption, the cost then returns to $A_e-2$ immediately before $\mathcal B_j$ and never falls below that value.

It remains to restore the two extra candidates in every intervening block $B_{h,1}$.
When $k_e=2r+1$, there are $r$ such blocks before the block gap with index $\mu_e$ and $r$ after it.
Their extra contributions cancel, and the cost-increasing extra contributions occur before the cost-decreasing ones in $\Gamma_X$.
When $k_e=2r$, there are $r$ such blocks before the block gap with index $\mu_e$ and $r-1$ after it; the unmatched block is $B_{i+r,1}$ before the gap.
Again, every cost-increasing extra contribution is encountered before every cost-decreasing one.
Consequently, the extra candidates cannot create an intervening value below $A_e-2$.
The insertion cost immediately before $\mathcal B_j$ is $A_e-2$ when $k_e$ is odd and $A_e$ when $k_e$ is even.
Set $\varepsilon_e=-2$ in the odd case and $\varepsilon_e=0$ in the even case, so this value is $A_e+\varepsilon_e$.

At $v_j$, the rankings $\pi_1$ and $\pi_3$ disagree on every pair consisting of $p_e$ and a candidate in $\mathcal B_j$, so $\pi_2$ determines the majority relation.
Hence moving past $B_{j,6},B_{j,3},B_{j,2}$ decreases the insertion cost, whereas moving past $B_{j,5},B_{j,4},B_{j,1}$ increases it.
Measured from the gap immediately before $\mathcal B_j$, the corresponding values are
\begin{equation}\label{eq:right-endpoint-costs}
\begin{array}{c|cc}
\text{block order}&\text{minimum cumulative change}&\text{total change after six blocks}\\ \hline
\omega_0&-2M&+2\\
\omega_1&-M&+2
\end{array}
\end{equation}
By \cref{eq:right-endpoint-costs}, the minimum attained while $p_e$ is moved past $\mathcal B_j$ is $A_e+\varepsilon_e-2M+Mx_j$.
Immediately after $\mathcal B_j$, the insertion cost is $A_e+\varepsilon_e+2$.
For every remaining block, both $\pi_1$ and $\pi_2$ place $p_e$ before the block, whereas $\pi_3$ places the block before $p_e$.
The majority arc points from $p_e$ to the block, so the insertion cost increases after $\mathcal B_j$.

The minimum attained while $p_e$ is moved past $\mathcal B_i$ is at most $A_e-M$.
Since $M>2$, this is smaller than every value before $\mathcal B_i$, between $\mathcal B_i$ and $\mathcal B_j$, or after $\mathcal B_j$.
Hence no block gap before $\mathcal B_i$, strictly between $\mathcal B_i$ and $\mathcal B_j$, or after $\mathcal B_j$ is globally minimizing.
A global minimum is attained at a block gap within $\mathcal B_i$ or $\mathcal B_j$.
Therefore
\begin{equation}\label{eq:global-insertion-minimum}
\phi_e(X)=A_e-2M+\min\{Mx_i,\varepsilon_e+Mx_j\}.
\end{equation}

Set $c_e=\phi_e(\emptyset)$, which can be computed by scanning the block gaps of $\Gamma_{\emptyset}$.
Subtracting $c_e$ from \cref{eq:global-insertion-minimum} gives
\[
\begin{array}{c|cccc}
(x_i,x_j)&(0,0)&(0,1)&(1,0)&(1,1)\\ \cline{2-5}
k_e\text{ even}&0&0&0&M\\
k_e\text{ odd}&0&2&0&M
\end{array}
\]
In the first three columns, $x_ix_j=0$ and the displayed difference is either $0$ or $2$; in the last column, $x_ix_j=1$ and the difference is $M$.
This proves the inequalities in the lemma.
\end{proof}

\subsection{Score bounds and proof of the main theorem}\label{sec:score}

We now combine the three disjoint parts of the backward-arc count.
The contribution from candidate pairs belonging to two blocks is given exactly by \cref{eq:fixed-cost}.
For each edge candidate, \cref{lem:insertion-contribution} places its minimum contribution against the block candidates between $c_e+Mx_ix_j$ and $c_e+Mx_ix_j+2$.
The only remaining arcs are those between two edge candidates; there are $\binom m2$ such pairs, so their total contribution lies between $0$ and $\binom m2$.
Define
\[
\beta=n(2M^2+2M)+\sum_{e\in E}c_e,
\qquad
\eta=2m+\binom m2.
\]
The constant term $\beta$ consists of the exact block-candidate contribution at $X=\emptyset$ and the constants $c_e$ from the edge-candidate bounds; it does not include contributions between edge candidates.
In the aggregate constructed for the upper bound, placing each edge candidate in a block gap minimizing its contribution against the block candidates makes the total insertion-cost excess at most $2m$, while the contribution between edge candidates is at most $\binom m2$; their sum is bounded by $\eta$.

For $X\subseteq V$, let $b^\star(X)$ be the minimum of $b_\Pi(\sigma)$ over aggregates $\sigma$ in which every vertex block and padding block is contiguous and forward-ordered and for which deleting the edge candidates leaves $\Gamma_X$.
By \cref{eq:fixed-cost} and the lower bound in \cref{lem:insertion-contribution}, every such aggregate $\sigma$ satisfies
\begin{equation}\label{eq:score-lower}
b_\Pi(\sigma)
\geq
\beta-2M\sum_{i=1}^n x_i+M\sum_{\{v_i,v_j\}\in E}x_ix_j.
\end{equation}
This is a lower bound because each edge candidate contributes at least its minimum against the block candidates, and the number of backward arcs between edge candidates is nonnegative.

Conversely, place each edge candidate in a block gap minimizing its contribution against the block candidates.
Several edge candidates may occupy the same gap and may then be ordered arbitrarily.
The upper bound in \cref{lem:insertion-contribution} contributes at most $2m$ above $\sum_{e\in E}(c_e+Mx_ix_j)$, and at most $\binom m2$ arcs between edge candidates are backward.
Thus there is an aggregate whose block order is $\Gamma_X$ and
\begin{equation}\label{eq:score-upper}
b_\Pi(\sigma)
\leq
\beta-2M\sum_{i=1}^n x_i+M\sum_{\{v_i,v_j\}\in E}x_ix_j+\eta.
\end{equation}

Combining \cref{eq:score-lower,eq:score-upper,eq:cut-identity} gives
\begin{equation}\label{eq:cut-score-bounds}
\beta-\frac{M}{2}\cut_G(X)
\leq
b^\star(X)
\leq
\beta-\frac{M}{2}\cut_G(X)+\eta.
\end{equation}
The choice of $M$ gives $\eta<M/2$:
\begin{equation}\label{eq:error-bound}
\eta=2m+\binom m2=2n^2+3n<2(3n+1)^2=\frac{M}{2}.
\end{equation}
Consequently, for every integer $k\geq1$,
\[
\beta-\frac M2k+\eta<\beta-\frac M2(k-1).
\]
Together with \cref{eq:cut-score-bounds}, this implies that $b^\star(X)<b^\star(Y)$ whenever $\cut_G(X)=k$ and $\cut_G(Y)\leq k-1$.
In particular, every cut of size at least $t$ yields an aggregate with at most $\beta-Mt/2+\eta$ backward arcs, whereas the minimum for every cut of size at most $t-1$ is strictly larger.

Let $N=n(12M+2)+m$ be the total number of candidates and define the backward-arc threshold by $\theta=\beta-Mt/2+\eta$.
Because $M$ is even, $\theta$ is an integer.
Set the Kemeny threshold to $\kappa=\binom N2+\theta$.
Both thresholds are nonnegative.
This follows because $c_e\geq0$ for every edge, so $\beta\geq n(2M^2+2M)$; because $t\leq m=2n$,
\[
\theta
\geq n(2M^2+2M)-Mn+\eta
=n(2M^2+M)+\eta
>0.
\]

\begin{lemma}\label{lem:correctness}
The graph $G$ has a cut of size at least $t$ if and only if the constructed all\nobreakdash-$2$\nobreakdash-to\nobreakdash-$1$ profile has an aggregate order of Kemeny score at most $\kappa=\binom N2+\beta-Mt/2+\eta$.
\end{lemma}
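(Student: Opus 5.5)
The plan is to prove both directions through the identity \cref{eq:all-two-to-one-score}, which holds by \cref{lem:all-two-to-one-profile}. It gives $\Kemeny_\Pi(\sigma)=\binom N2+b_\Pi(\sigma)$, so the statement is equivalent to the following: $G$ has a cut of size at least $t$ if and only if some aggregate has at most $\theta=\beta-Mt/2+\eta$ backward majority arcs. The argument then reduces to the sandwich bound \cref{eq:cut-score-bounds} on $b^\star(X)$ together with the normalization \cref{lem:fixed-normalization}.

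\emph{Forward direction.} Suppose $X\subseteq V$ satisfies $\cut_G(X)\geq t$. I would take the aggregate from \cref{eq:score-upper}: block order $\Gamma_X$, with each edge candidate placed in a block gap minimizing its contribution against the block candidates. By \cref{eq:score-upper} and \cref{eq:cut-identity}, its number of backward arcs is at most $\beta-\frac M2\cut_G(X)+\eta$. Since $\cut_G(X)\geq t$, this is at most $\beta-\frac M2t+\eta=\theta$, so the Kemeny score is at most $\kappa$.

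\emph{Reverse direction.} Suppose some aggregate $\sigma$ has $\Kemeny_\Pi(\sigma)\leq\kappa$, i.e.\ $b_\Pi(\sigma)\leq\theta$. By \cref{lem:fixed-normalization}, $\sigma$ can be replaced by an aggregate $\sigma'$ with $b_\Pi(\sigma')\leq b_\Pi(\sigma)$ whose blocks are contiguous and forward-ordered and which leaves $\Gamma_X$ after the edge candidates are deleted, for some $X\subseteq V$. The lower bound \cref{eq:score-lower}, rewritten through \cref{eq:cut-identity}, then gives
\[
\beta-\tfrac M2\cut_G(X)\leq b_\Pi(\sigma')\leq\theta=\beta-\tfrac M2t+\eta .
\]
Hence $\frac M2(t-\cut_G(X))\leq\eta<\frac M2$ by \cref{eq:error-bound}, which forces $t-\cut_G(X)<1$. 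Because both quantities are integers, $\cut_G(X)\geq t$.

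The only delicate points are checking that \cref{eq:score-lower} really applies to $\sigma'$, and this is exactly the form guaranteed by \cref{lem:fixed-normalization}, and using integrality correctly in the final step. Both are routine once the earlier lemmas are in place. I do not expect a genuine obstacle; the main care is to translate between Kemeny scores and backward-arc counts consistently through \cref{eq:all-two-to-one-score}. Combined with the facts that the construction is polynomial (block sizes $O(n^2)$, $N$ polynomial) and that \KemenyScore{} is in NP, this then yields \cref{thm:main}.
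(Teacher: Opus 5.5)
Your proposal is correct and follows the paper's proof essentially step for step. The forward direction uses the upper-bound aggregate of \cref{eq:score-upper}. The reverse direction normalizes via \cref{lem:fixed-normalization} and applies the lower bound together with $\eta<M/2$. Your integrality argument is just the direct form of the paper's contrapositive ``if $\cut_G(X)\leq t-1$ then $b_\Pi(\sigma')>\theta$.''
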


\begin{proof}
If $G$ has a cut of size at least $t$, choose $X\subseteq V$ with $\cut_G(X)\geq t$.
The upper bound in \cref{eq:cut-score-bounds} gives an aggregate with
\[
b_\Pi(\sigma)
\leq
\beta-\frac{M}{2}\cut_G(X)+\eta
\leq
\theta.
\]
By \cref{eq:all-two-to-one-score}, its Kemeny score is at most $\kappa$.

Conversely, suppose an aggregate $\sigma$ has Kemeny score at most $\kappa$.
By \cref{eq:all-two-to-one-score}, it has at most $\theta$ backward majority arcs.
Apply \cref{lem:fixed-normalization} to obtain an aggregate $\sigma'$ with $b_\Pi(\sigma')\leq b_\Pi(\sigma)$ such that deleting the edge candidates leaves $\Gamma_X$ for some $X\subseteq V$.
The lower bound in \cref{eq:cut-score-bounds} gives
\[
b_\Pi(\sigma')\geq\beta-\frac{M}{2}\cut_G(X).
\]
If $\cut_G(X)\leq t-1$, then
\[
b_\Pi(\sigma')
\geq
\beta-\frac{M}{2}t+\frac{M}{2}
>
\beta-\frac{M}{2}t+\eta
=
\theta,
\]
where the strict inequality follows from \cref{eq:error-bound}.
This contradicts $b_\Pi(\sigma')\leq b_\Pi(\sigma)\leq\theta$.
Therefore $\cut_G(X)\geq t$.
\end{proof}

It remains to verify that the construction is polynomial and to assemble the preceding lemmas.
The number of candidates is
\begin{equation}\label{eq:size}
N=n(12M+2)+m=O(n^3),
\end{equation}
because $m=2n$ and $M=4(n+m+1)^2=4(3n+1)^2=O(n^2)$.
Each of the three rankings is given explicitly as a list of the $N$ candidates and can be generated in polynomial time.
Each constant $c_e$ can be computed by scanning the block gaps of $\Gamma_{\emptyset}$.
Thus $\beta$, $\theta$, and $\kappa$ are computable in polynomial time.

\begin{proof}[Proof of \cref{thm:main}]
\Cref{lem:four-regular} supplies the NP-complete source problem.
The construction is polynomial by \cref{eq:size}; \cref{lem:all-two-to-one-profile} proves the all\nobreakdash-$2$\nobreakdash-to\nobreakdash-$1$ condition, and \cref{lem:correctness} proves that the source instance and the constructed instance have the same answer.
The local tournament in \cref{fig:local-tournament} contains a directed triangle.
Since the supplied three rankings realize the majority tournament and every tournament realizable by at most two rankings is transitive, the constructed tournament has majority dimension exactly $3$.
The restricted \KemenyScore{} problem is in NP: given an aggregate order, its Kemeny score and the all\nobreakdash-$2$\nobreakdash-to\nobreakdash-$1$ condition can be checked in polynomial time.
Therefore it is NP-complete.
\end{proof}

\subsection{Exact value and solution recovery}\label{sec:exact-recovery}

Let $\Pi_G=(\pi_1,\pi_2,\pi_3)$ denote the profile constructed from $G$.
The preceding argument proves that $\maxcutvalue(G)\geq t$ exactly when $\Kemeny_{\Pi_G}^{\star}\leq\kappa$, which is the decision equivalence in \cref{thm:main}.
The inequalities in \cref{eq:cut-score-bounds,eq:error-bound} separate the minimum backward-arc counts associated with different cut sizes.
Because the three constructed rankings depend only on $G$, while $t$ enters only through the Kemeny threshold, the optimal Kemeny score determines the exact maximum-cut value, and any Kemeny-optimal aggregate can be normalized to recover a maximum cut.
We record these stronger conclusions separately for use in the consequences derived later.
By \cref{eq:size}, the profile has polynomial size. The construction computes every constant $c_e$, and hence every quantity in the recovery formula below, in polynomial time.

\begin{proposition}\label{prop:maxcut-recovery}
Let $N$ be the number of candidates of $\Pi_G$ and let $\Kemeny_{\Pi_G}^{\star}$ be its optimal Kemeny score.
Then
\begin{equation}
\label{eq:maxcut-recovery}
\maxcutvalue(G)
=
\left\lceil
\frac{2}{M}
\left(
\beta+\binom N2-\Kemeny_{\Pi_G}^{\star}
\right)
\right\rceil.
\end{equation}
Moreover, a maximum cut of $G$ can be recovered in polynomial time from any Kemeny-optimal aggregate of $\Pi_G$.
\end{proposition}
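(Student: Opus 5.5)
The plan is to show that the minimum backward-arc count $b^\star:=\min_\sigma b_{\Pi_G}(\sigma)$ lies in a half-open window of width less than $M/2$ that pins down $\maxcutvalue(G)$, and then convert this window into the ceiling formula via \cref{eq:all-two-to-one-score}. Write $C=\maxcutvalue(G)$. By \cref{lem:fixed-normalization}, every aggregate can be normalized without increasing $b_{\Pi_G}$, so $b^\star=\min_{X\subseteq V}b^\star(X)$. By \cref{eq:cut-score-bounds}, $b^\star(X)\ge\beta-\frac M2\cut_G(X)\ge\beta-\frac M2C$ for every $X$. Choosing $X$ with $\cut_G(X)=C$ gives $b^\star\le\beta-\frac M2C+\eta$. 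Hence
\[
\beta-\tfrac M2 C\le b^\star\le \beta-\tfrac M2 C+\eta .
\]

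Next I will use \cref{eq:all-two-to-one-score}, $\Kemeny^\star_{\Pi_G}=\binom N2+b^\star$, and set $Z=\frac2M\bigl(\beta+\binom N2-\Kemeny^\star_{\Pi_G}\bigr)=\frac2M(\beta-b^\star)$. Then the window becomes $C-\frac{2\eta}{M}\le Z\le C$. By \cref{eq:error-bound}, $\frac{2\eta}{M}<1$, so $Z\in(C-1,C]$, and therefore $\lceil Z\rceil=C$. This proves \cref{eq:maxcut-recovery}. All constants are polynomial-time computable, as noted before the proposition.

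For recovery, take any Kemeny-optimal $\sigma$, so $b_{\Pi_G}(\sigma)=b^\star$. Apply the polynomial-time transformation of \cref{lem:fixed-normalization} to obtain $\sigma'$ with $b_{\Pi_G}(\sigma')\le b^\star$; hence $\sigma'$ is also optimal, and its block order is $\Gamma_X$ for some $X$ that can be read off directly. The lower bound of \cref{eq:cut-score-bounds} gives $\beta-\frac M2\cut_G(X)\le b^\star\le\beta-\frac M2C+\eta$, so $\frac M2(C-\cut_G(X))\le\eta<\frac M2$. Thus $C-\cut_G(X)<1$, and integrality gives $\cut_G(X)=C$. Output $X$.

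The only delicate point is making sure the normalization step preserves optimality rather than merely the threshold. That holds because \cref{lem:fixed-normalization} never increases $b_{\Pi_G}$, and $b^\star$ is already the minimum, so the normalized order stays optimal. Everything else is direct bookkeeping from \cref{eq:cut-score-bounds,eq:error-bound}.
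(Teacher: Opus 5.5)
Your proof is correct and follows the paper's argument essentially step by step. You sandwich the minimum backward-arc count between $\beta-\frac M2\maxcutvalue(G)$ and $\beta-\frac M2\maxcutvalue(G)+\eta$ using \cref{lem:fixed-normalization} and \cref{eq:cut-score-bounds}, convert this window to the ceiling formula via \cref{eq:all-two-to-one-score} and $2\eta/M<1$, and recover a maximum cut by normalizing an optimal aggregate (which stays optimal) and invoking integrality, exactly as the paper does.
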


\begin{proof}
Put $b^{\star}=\Kemeny_{\Pi_G}^{\star}-\binom N2$.
By \cref{eq:all-two-to-one-score}, $b^{\star}$ is the minimum number of backward majority arcs.
Choose $X^{\star}$ with $\cut_G(X^{\star})=\maxcutvalue(G)$.
The upper bound in \cref{eq:cut-score-bounds} gives
\[
b^{\star}
\leq
\beta-\frac M2\maxcutvalue(G)+\eta.
\]
Conversely, apply the transformation in \cref{lem:fixed-normalization} to a Kemeny-optimal aggregate.
The resulting aggregate remains optimal, and deleting its edge candidates leaves $\Gamma_X$ for some $X\subseteq V$.
The lower bound in \cref{eq:cut-score-bounds} gives
\[
b^{\star}
\geq
\beta-\frac M2\cut_G(X)
\geq
\beta-\frac M2\maxcutvalue(G).
\]
Consequently,
\[
\maxcutvalue(G)-\frac{2\eta}{M}
\leq
\frac{2(\beta-b^{\star})}{M}
\leq
\maxcutvalue(G).
\]
Since $2\eta/M<1$ by \cref{eq:error-bound}, taking the ceiling and substituting the definition of $b^{\star}$ proves \cref{eq:maxcut-recovery}.

It remains to recover a maximum cut.
Apply the polynomial-time transformation in \cref{lem:fixed-normalization} to the supplied Kemeny-optimal aggregate, and define
$X=\{v_i:\text{the six vertex blocks for }v_i\text{ use }\omega_1\}$
from the resulting block order.
If $\cut_G(X)\leq\maxcutvalue(G)-1$, then the lower bound gives
\[
b^{\star}
\geq
\beta-\frac M2\maxcutvalue(G)+\frac M2,
\]
whereas $X^{\star}$ gives
\[
b^{\star}
\leq
\beta-\frac M2\maxcutvalue(G)+\eta.
\]
This contradicts $\eta<M/2$.
Thus $\cut_G(X)=\maxcutvalue(G)$, so $X$ determines a maximum cut of $G$.
\end{proof}

\section{Three-ranking winner, precedence, and recognition problems}\label{sec:other-base-reductions}

This section proves the winner, precedence, and recognition results in \cref{thm:three-ranking-optimality}.
The reductions apply the construction from \cref{sec:construction} to simple graphs that need not be $4$-regular.
We first state the construction for an arbitrary simple graph and record the properties used later.

\subsection{The construction for an arbitrary simple graph}\label{sec:general-block-construction}

Let $F=(V,E)$ be a nonempty simple graph with $V=\{v_1,\dots,v_n\}$ and $m=|E|$.
For each vertex $v_i$, create six vertex blocks $B_{i,1},\dots,B_{i,6}$ and six padding blocks $D_{i,1},\dots,D_{i,6}$.
The sizes of these blocks will be specified in each reduction.
Write $\mathcal B_i=(B_{i,1},\dots,B_{i,6})$ for the ordered list of these six vertex blocks.
For each edge $e=\{v_i,v_j\}$ with $i<j$, create a nonempty edge-candidate block $P_e$.
A block with one member is identified with that member.
Fix a forward order of every block.
The first two rankings use the forward order inside every block, and the third ranking uses its reverse.

For each $i$, let $P_i^{\mathrm L}$ be the concatenation of the blocks $P_{\{v_i,v_j\}}$ with $i<j$, ordered by increasing $j$.
Let $P_i^{\mathrm R}$ be the concatenation of the blocks $P_{\{v_h,v_i\}}$ with $h<i$, ordered by increasing $h$.
Either sequence may be empty.
Define $\pi_1^F$ by concatenating, for $i=1,\dots,n$, the sequence
\[
B_{i,1}B_{i,2}B_{i,3}P_i^{\mathrm L}
B_{i,4}B_{i,5}B_{i,6}D_{i,1}\cdots D_{i,6}.
\]
Define $\pi_2^F$ by concatenating, for $i=1,\dots,n$, the sequence
\[
B_{i,6}B_{i,3}B_{i,2}P_i^{\mathrm R}
B_{i,5}B_{i,4}B_{i,1}D_{i,1}\cdots D_{i,6}.
\]

To define $\pi_3^F$, form the block sequence $\mathcal Q$ by concatenating, for $i=1,\dots,n$, the sequence
\[
B_{i,3}B_{i,6}B_{i,1}B_{i,2}B_{i,5}B_{i,4}D_{i,1}\cdots D_{i,6}.
\]
Write $\mathcal Q=Q_1\cdots Q_{12n}$ and number its block gaps by the number of preceding blocks, as in \cref{sec:construction}.
For $e=\{v_i,v_j\}$ with $i<j$, put
\begin{equation}\label{eq:general-midpoint}
\mu_e=6(i+j)-9.
\end{equation}
For each $r\in\{0,\dots,12n\}$, let $P_r^{\mathrm{mid}}$ be the concatenation of the edge-candidate blocks $P_e$ with $\mu_e=r$, ordered lexicographically by the endpoint pair $(i,j)$.
Define
\[
\pi_3^F
=
P_{12n}^{\mathrm{mid}}Q_{12n}
P_{12n-1}^{\mathrm{mid}}Q_{12n-1}
\cdots
P_1^{\mathrm{mid}}Q_1P_0^{\mathrm{mid}}.
\]
Let $\Pi_F=(\pi_1^F,\pi_2^F,\pi_3^F)$.
This is the construction in \cref{sec:construction}, except that the graph need not be regular and an edge candidate may be replaced by a block.

For $r\in\{0,1\}$, let $\omega_r(\mathcal B_i)$ denote the block sequence obtained by replacing each label $h$ in the order $\omega_r$ by the block $B_{i,h}$.
Thus
\[
\omega_0(\mathcal B_i)=B_{i,2}B_{i,3}B_{i,4}B_{i,5}B_{i,1}B_{i,6},
\qquad
\omega_1(\mathcal B_i)=B_{i,2}B_{i,4}B_{i,5}B_{i,1}B_{i,6}B_{i,3}.
\]
For $X\subseteq V$, write $x_i=\chi_X(v_i)$.
Define $\Gamma_X$ by concatenating, for $i=1,\dots,n$, the sequence
\begin{equation}\label{eq:general-block-order}
\omega_{x_i}(\mathcal B_i)D_{i,1}\cdots D_{i,6}.
\end{equation}

\begin{lemma}\label{lem:general-block-construction}
The profile $\Pi_F$ has the following properties.
\begin{enumerate}[label=\textup{(\roman*)}]
\item Every candidate pair is split $2$\nobreakdash-to\nobreakdash-$1$.
\item Its majority tournament contains a directed triangle and has majority dimension exactly $3$.
\item Each vertex block, padding block, and edge-candidate block can be made contiguous and internally forward-ordered without increasing the number of backward majority arcs.
\item Suppose each edge-candidate block is a singleton $P_e=\{p_e\}$, let every block $B_{i,1}$ have size $M+2$, and let every other vertex or padding block have size $M$, where $M$ is even and satisfies
\[
mn(12M+2)<M^2.
\]
Every aggregate order can then be transformed in polynomial time, without increasing the number of backward majority arcs, so that every vertex and padding block is contiguous and internally forward-ordered and deleting the edge candidates leaves $\Gamma_X$ for some $X\subseteq V$.
Moreover, for an edge $e$ and a block gap $\gamma$ of $\Gamma_X$, let $\Phi_e^X(\gamma)$ denote the number of backward majority arcs between $p_e$ and the block candidates when $p_e$ is inserted at $\gamma$, and put $\phi_e(X)=\min_\gamma\Phi_e^X(\gamma)$.
There is a polynomial-time computable integer $c_e$ such that
\[
c_e+Mx_ix_j
\leq
\phi_e(X)
\leq
c_e+Mx_ix_j+2
\]
for every $e=\{v_i,v_j\}\in E$ and every $X\subseteq V$.
\end{enumerate}
\end{lemma}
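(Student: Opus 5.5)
The plan is to derive each item by rereading the corresponding argument in \cref{sec:score-reduction}. At each step I will check which hypotheses that argument actually used (4-regularity, singleton edge candidates, the specific value of $M$) and replace them by the weaker assumptions stated in \cref{lem:general-block-construction}.

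For (i), I will repeat the case analysis of \cref{lem:all-two-to-one-profile}. Pairs inside one block, including a nonsingleton $P_e$, are forward in $\pi_1^F,\pi_2^F$ and reversed in $\pi_3^F$. Pairs between two vertex blocks of one group follow \cref{eq:local-votes}. All other block–block pairs agree in $\pi_1^F,\pi_2^F$ and are reversed in $\pi_3^F$.

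For a member of $P_e$ against a block candidate, the argument uses only the inequalities $12i-6<\mu_e<12j-12$ from \cref{eq:general-midpoint}, which hold for any $i<j$. The five cases then go through verbatim.

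For two edge-candidate blocks $P_e\neq P_f$, every member of $P_e$ compares with every member of $P_f$ exactly as the singletons did. The lexicographic $(i,j)$ versus $(j,i)$ argument uses only that distinct edges have distinct endpoint pairs, so it applies unchanged.

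For (ii), the six vertex blocks of any one vertex contain, by \cref{lem:local-states} and \cref{fig:local-tournament}, candidates forming a directed triangle $1\to3\to4\to1$. A tournament realized by at most two rankings is transitive: with one ranking this is immediate, and with two rankings the strict majority consists only of pairs on which both agree, so it is a tournament only if the two rankings coincide. Since $\Pi_F$ realizes the tournament with three rankings, its majority dimension is exactly $3$. The graph is assumed nonempty, so a vertex exists.

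For (iii), I will copy \cref{lem:block-normalization}, which only used two facts. First, each block, now including every $P_e$, is consecutive in all three rankings, so every outside candidate has the same majority comparison with all members of the block. Second, the internal majority is the forward order. Both facts hold for $\Pi_F$, and the gap-monotonicity argument that keeps earlier blocks unsplit is unchanged.

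For (iv), I will redo \cref{lem:fixed-normalization} and \cref{lem:insertion-contribution}, tracking which parameters enter.

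In the normalization, the only quantitative step compared the gain of at least $M^2$ with the possible loss from reinserting the edge candidates. That loss is at most $m$ times the number of block candidates, which is $n(12M+2)$, so the hypothesis $mn(12M+2)<M^2$ replaces the computation that used $m=2n$. Regularity was not used anywhere else; \cref{eq:switch-costs} and \cref{lem:local-states} are local to one vertex.

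In the insertion lemma, the left-to-right scan uses only the following ingredients: the block sizes; the midpoint $\mu_e$ splitting the $12k_e-6$ intervening blocks evenly; the parity cases for $k_e$; the placement of the extra two candidates of $B_{h,1}$; and $M>2$ so that gaps outside $\mathcal B_i$ and $\mathcal B_j$ are never minimizing. None of these depends on the degree. This yields \cref{eq:global-insertion-minimum} and the same table with $c_e=\phi_e(\emptyset)$. Evenness of $M$ is not needed here; it is only carried along for later thresholds.

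The main obstacle is confirming that no step of \cref{lem:insertion-contribution} silently used $M=4(n+m+1)^2$ or regularity, for instance in the claim that the even-$k_e$ intermediate group stays at least $3M$ above $A_e-2$. I would recheck that bound: the lift of $(12r-6)M\geq 6M$ exceeds the dip of $-3M$ for all $M$, so the claim holds whenever $M>2$. Since $mn(12M+2)<M^2$ forces $M\geq 12$ (for $m,n\geq 1$), this is satisfied.
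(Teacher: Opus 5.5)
Your proposal is correct and follows essentially the same route as the paper: you reread the Section~3 arguments item by item, note that regularity enters only through the reinsertion bound, which is now supplied by the hypothesis $mn(12M+2)<M^2$, and observe that the insertion-cost scan depends only on relative block positions and block sizes. Your explicit checks go slightly beyond what the paper writes: that two rankings can realize only a transitive tournament, and that the hypothesis forces $M>2$ whenever $m\geq1$ (the insertion claim is vacuous when $m=0$).
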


\begin{proof}
The proof of \cref{lem:all-two-to-one-profile} uses only the relative positions of the blocks and edge candidates, not their sizes or the degrees of the graph.
Replacing an edge candidate by a block preserves its comparison with every outside candidate, and the prescribed internal orders split every pair inside the block $2$\nobreakdash-to\nobreakdash-$1$.
This proves (i).

The three rankings realize the majority tournament.
For any $i$, choose $x_r\in B_{i,r}$ for $r\in\{1,3,4\}$.
Their majority arcs are
\[
x_1\to x_3\to x_4\to x_1,
\]
so the tournament contains a directed triangle.
The supplied three rankings realize the tournament, while every tournament realizable by at most two rankings is transitive.
Hence the majority dimension is exactly $3$, proving (ii).

The proof of \cref{lem:block-normalization} applies to every block because its members have identical comparisons with every outside candidate and the majority relation inside the block is the transitive order given by its forward order.
This proves (iii).
Finally, the insertion-cost proof in \cref{lem:insertion-contribution} uses only the relative block positions and the displayed block sizes, so it gives the bounds in (iv).
The normalization proof in \cref{lem:fixed-normalization} uses regularity only to bound the possible increase involving the edge candidates by a quantity smaller than $M^2$.
The inequality in (iv) gives the same bound for $F$ and therefore gives the stated transformation.
The $4$-regularity assumption is used later only in the cut identity \cref{eq:cut-identity}.
This proves (iv).
\end{proof}

\subsection{Kemeny winner, unique winner, and precedence}\label{sec:winner-ranking}

\begin{lemma}\label{lem:strict-alpha-compare}
Given two graphs $G$ and $H$, deciding whether $\alpha(G)>\alpha(H)$ is $\ThetaTwoP$-complete under polynomial-time many-one reductions, where $\alpha$ denotes the maximum independent-set size.
\end{lemma}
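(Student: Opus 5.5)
Membership and hardness are handled separately; hardness is derived from Wagner's characterization of $\ThetaTwoP$.

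\emph{Membership.} Let $n=|V(G)|+|V(H)|$. For every $k\le n$, we ask the NP oracle, in parallel, whether $\alpha(G)\ge k$ and whether $\alpha(H)\ge k$. These $2(n+1)$ queries are nonadaptive. From the answers we read off $\alpha(G)$ and $\alpha(H)$ exactly and compare them. Hence the problem lies in $\mathrm{P}^{\mathrm{NP}}_{\|}=\ThetaTwoP$.

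\emph{Hardness.} We would use Wagner's theorem: if an NP-complete set $A$ admits a polynomial-time function that maps every tuple $x_1,\dots,x_{2k}$ with $\chi_A(x_1)\ge\cdots\ge\chi_A(x_{2k})$ to an instance $y$, such that $y$ is a yes-instance exactly when $|\{r:x_r\in A\}|$ is odd, then the target problem is $\ThetaTwoP$-hard.

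Take $A$ to be the Independent Set problem, with instances $(G_r,k_r)$. As a first normalization, we pad each instance so that all thresholds and all vertex counts agree. Then $\alpha(G_r)\ge k$ if and only if $x_r\in A$.

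The key gadget compares two sums of independence numbers. Let $\alpha$ be additive over disjoint unions. For integers $a_r\ge1$, use the lexicographic product $G[\overline{K_{a}}]$, which multiplies $\alpha$ by $a$; alternatively, replace each vertex by an independent set of size $a$. This realizes weighted sums of the values $\alpha(G_r)$.

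Consider the number of yes-instances $\ell$. Because of the monotone promise, it is determined by the first index that fails. Since $\alpha(G_r)\ge k$ exactly when $x_r\in A$, each indicator $[\alpha(G_r)\ge k]$ has to be converted into a quantity whose sum can be compared. The standard approach, following Hemaspaandra, Spakowski, and Vogel and also Wagner's $\alpha$-comparison results, first builds a graph $G'_r$ satisfying $\alpha(G'_r)=\alpha(G_r)$ if $\alpha(G_r)<k$, and otherwise equal to $k$ or a fixed larger value. Joining $G_r$ with a fixed graph accomplishes this: take $\alpha(G_r\vee K)$, or take the disjoint union with an independent set of size $k-1$ joined appropriately. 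The goal is a gap property: $\alpha(G'_r)\in\{c,c+1\}$, with $c+1$ occurring exactly when $x_r\in A$.

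Then set $G=\bigsqcup_{r\text{ odd}}G'_r$ and $H=\bigsqcup_{r\text{ even}}G'_r$, and shift one side by an independent set of size $1/2$; scaling everything by $2$ avoids the fraction. Under the promise, $\alpha(G)-\alpha(H)$ counts the odd-indexed yes-instances minus the even-indexed ones. This difference is $1$ when $\ell$ is odd and $0$ when $\ell$ is even. Comparing with a strict inequality, after adding the offset correctly, then captures parity.

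The main obstacle is the gap gadget, namely forcing $\alpha(G'_r)\in\{c,c+1\}$ with $c+1$ exactly on yes-instances. If a clean construction is hard to find, I would instead cite the known $\ThetaTwoP$-completeness of the comparison of independence numbers (Wagner; Spakowski–Vogel) directly.
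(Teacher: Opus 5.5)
Your membership argument is correct and matches the paper's. The hardness argument, however, has a genuine gap exactly where you flag it. The two-value gadget $\alpha(G'_r)\in\{c,c+1\}$ is never constructed, and the construction you sketch does not produce it. A join satisfies $\alpha(G_r\vee\overline{K_{k-1}})=\max\{\alpha(G_r),k-1\}$. This separates no-instances (value $k-1$) from yes-instances (value $\geq k$), but it does not cap the yes-values: for a general Independent Set instance, $\alpha(G_r)$ may be far larger than $k$. Then, with $\ell$ even, a single large odd-indexed yes-instance already gives $\alpha(G)>\alpha(H)$, so parity is not captured.

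The missing idea is to choose the NP-complete set $A$ so that the threshold is also an upper bound. For example, take $A=\textsc{3SAT}$ with the standard reduction that uses one triangle per clause and edges between complementary literal occurrences. There $\alpha(G_\phi)\leq m$, with equality exactly when $\phi$ is satisfiable. Then $G_\phi\vee\overline{K_{m-1}}$ has independence number $m-1$ or $m$, and adding isolated vertices equalizes the base value $c$ across the $2k$ instances. With this gadget no offset is needed: $\alpha(G)-\alpha(H)=\lceil\ell/2\rceil-\lfloor\ell/2\rfloor\in\{0,1\}$, so $\alpha(G)>\alpha(H)$ holds exactly when $\ell$ is odd. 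By contrast, a ``half'' offset placed on the side of $G$ would make every instance a yes-instance.

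Your fallback of citing a known complete problem is what the paper actually does, and it avoids Wagner's lemma altogether. It still needs a short conversion step that you do not give. The paper starts from the $\ThetaTwoP$-complete problem \textsc{Min-Card-Vertex-Cover-Compare} of Hemaspaandra, Spakowski, and Vogel, which asks whether $\tau(G)\leq\tau(H)$. After adding an isolated vertex to each graph, it sets $g=|V(G)|$, $h=|V(H)|$, $G'=G\mathbin{\dot\cup}I_h$, and $H'=H\mathbin{\dot\cup}I_{g-1}$. Since $\alpha=|V|-\tau$, this gives $\alpha(G')=g+h-\tau(G)$ and $\alpha(H')=g+h-1-\tau(H)$. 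Hence $\alpha(G')>\alpha(H')$ if and only if $\tau(G)\leq\tau(H)$. The padding both removes the dependence on the vertex counts and turns the non-strict vertex-cover comparison into the strict independence-number comparison that the lemma requires.
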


\begin{proof}
Membership follows by asking in parallel, for every integer $r$ from $0$ to $\max\{|V(G)|,|V(H)|\}$, whether $G$ or $H$ has an independent set of size at least $r$.
The answers determine both independence numbers and hence their strict comparison.

For hardness, we use the $\ThetaTwoP$-complete problem \textsc{Min-Card-Vertex-Cover-Compare} used in the \KemenyWinner{} reduction of Hemaspaandra, Spakowski, and Vogel~\cite[Section~4]{HemaspaandraSpakowskiVogel2005}.
Its instances ask whether $\tau(G)\leq\tau(H)$, where $\tau$ is the minimum vertex-cover size.
We may assume that both graphs are nonempty by adding one isolated vertex to each.
Put $g=|V(G)|$ and $h=|V(H)|$, and let
\[
G'=G\mathbin{\dot\cup} I_h,
\qquad
H'=H\mathbin{\dot\cup} I_{g-1},
\]
where $I_r$ is an independent set of size $r$.
Since $\alpha(F)=|V(F)|-\tau(F)$,
\begin{align*}
\alpha(G')&=g-\tau(G)+h,\\
\alpha(H')&=h-\tau(H)+g-1.
\end{align*}
Consequently, $\alpha(G')>\alpha(H')$ if and only if $\tau(G)\leq\tau(H)$.
\end{proof}

\begin{proof}[Proof of \cref{thm:three-ranking-optimality}(i)]
We first prove membership.
The all\nobreakdash-$2$\nobreakdash-to\nobreakdash-$1$ condition is checkable in polynomial time.
For a profile on $N$ candidates, every Kemeny score lies between $0$ and $3\binom N2$.
Ask in parallel, for every integer $t$ in this range, whether there exists an aggregate of score at most $t$.
For \KemenyWinner{}, also ask whether there exists such an aggregate with $c$ first.
For \KemenyUniqueWinner{}, ask whether there exists such an aggregate with $c$ not first.
For \KemenyPossiblePrecedence{}, ask whether there exists such an aggregate with $c\before d$, and for \KemenyNecessaryPrecedence{} ask whether one exists with $d\before c$.
Every query belongs to NP because an aggregate order is a polynomial-size certificate.
The parallel answers determine the unconstrained and constrained minimum Kemeny scores.
The possible conditions hold when the corresponding constrained minimum score equals the unrestricted minimum score, while the necessary conditions hold when the minimum score under the opposite constraint is strictly larger.
All four problems therefore belong to $\ThetaTwoP$.

We prove hardness using \cref{lem:strict-alpha-compare}.
Let $(G,H)$ be an instance.
Take disjoint graphs
\[
A=G^1\mathbin{\dot\cup}G^2\mathbin{\dot\cup}\{a\},
\qquad
B=H^1\mathbin{\dot\cup}H^2\mathbin{\dot\cup}\{v,b\},
\]
where $G^1,G^2$ are copies of $G$, $H^1,H^2$ are copies of $H$, and $a,v,b$ are isolated inside their displayed graphs.
Let $J=A\vee B$ be their join, obtained by adding every edge with one endpoint in $A$ and the other in $B$.
An independent set of $J$ is contained entirely in $A$ or entirely in $B$, and
\begin{equation}\label{eq:branch-alpha}
\alpha(A)=2\alpha(G)+1,
\qquad
\alpha(B)=2\alpha(H)+2.
\end{equation}
If $\alpha(G)>\alpha(H)$, then $\alpha(A)>\alpha(B)$, so every maximum independent set of $J$ is contained in $A$ and excludes $v$.
If $\alpha(G)\leq\alpha(H)$, then $\alpha(B)>\alpha(A)$, so every maximum independent set is contained in $B$ and contains the isolated vertex $v$.
Relabel the vertices of $J$ as $v_1,\dots,v_n$ with $v=v_1$, and write $m=|E(J)|$.

Use the construction in \cref{sec:general-block-construction} for the graph $J$ with the following block sizes.
Put
\begin{equation}\label{eq:winner-scale}
M=10(n+m+1)^2.
\end{equation}
For $i=2,\dots,n$, set
\[
|B_{i,1}|=M+2,
\qquad
|B_{i,r}|=M\quad(r=2,\dots,6).
\]
For the distinguished vertex $v_1$, set
\[
|B_{1,1}|=M+1,
\qquad
|B_{1,r}|=M\quad(r=2,\dots,6).
\]
Every padding block has size $M$.
For each edge $e\in E(J)$, let $P_e$ have three candidates.
Let $\pi_3^J$ be the third ranking produced by this construction.

Add two candidates $c,d$.
Define $\widehat\pi_1$ by listing $c\,d$ first and then concatenating, for $i=1,\dots,n$, the sequence
\[
B_{i,1}B_{i,2}B_{i,3}P_i^{\mathrm L}
B_{i,4}B_{i,5}B_{i,6}D_{i,1}\cdots D_{i,6}.
\]
Define $\widehat\pi_2$ by listing $d$ first, then listing
\[
B_{1,6}B_{1,3}B_{1,2}P_1^{\mathrm R}
B_{1,5}B_{1,4}B_{1,1}\,c\,D_{1,1}\cdots D_{1,6},
\]
and then concatenating, for $i=2,\dots,n$, the sequence
\[
B_{i,6}B_{i,3}B_{i,2}P_i^{\mathrm R}
B_{i,5}B_{i,4}B_{i,1}D_{i,1}\cdots D_{i,6}.
\]
The six vertex blocks for $v_1$ occur in $\pi_3^J$ in the order
\[
B_{1,4}B_{1,5}B_{1,2}B_{1,1}B_{1,6}B_{1,3}.
\]
Define $\widehat\pi_3$ by replacing the consecutive blocks $B_{1,5}B_{1,2}$ in $\pi_3^J$ by $B_{1,5}\,c\,B_{1,2}$ and appending $d$.
Let $\widehat\Pi=(\widehat\pi_1,\widehat\pi_2,\widehat\pi_3)$.

\begin{claim}\label{claim:winner-all-two-to-one}
Every candidate pair in $\widehat\Pi$ is split $2$\nobreakdash-to\nobreakdash-$1$.
Moreover, both $c$ and $d$ have a majority arc to every block candidate outside the six vertex blocks for $v_1$ and to every candidate in an edge-candidate block.
\end{claim}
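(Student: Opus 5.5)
Deleting $c$ and $d$ from $\widehat\Pi$ leaves exactly $\Pi_J$: $\widehat\pi_1$ and $\widehat\pi_2$ restrict to $\pi_1^J$ and $\pi_2^J$, and $\widehat\pi_3$ restricts to $\pi_3^J$. By \cref{lem:general-block-construction}(i), every pair avoiding $c$ and $d$ is therefore split $2$\nobreakdash-to\nobreakdash-$1$, and we only have to check pairs containing $c$ or $d$. The pair $\{c,d\}$ is immediate: $\widehat\pi_1$ gives $c\before d$, while $\widehat\pi_2$ and $\widehat\pi_3$ give $d\before c$, since $d$ is first in $\widehat\pi_2$ and last in $\widehat\pi_3$.

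For $d$ against any other candidate $y$, $d$ precedes $y$ in $\widehat\pi_1$ and $\widehat\pi_2$ and follows it in $\widehat\pi_3$. So the split is $2$\nobreakdash-to\nobreakdash-$1$ with majority arc $d\to y$. This covers every block and edge candidate, including those in $\mathcal B_1$.

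For $c$ against a candidate $y$, note that $c$ is first in $\widehat\pi_1$, so $c\before y$ there. It then suffices that $\widehat\pi_2$ and $\widehat\pi_3$ disagree on $y$ whenever $y$ lies in $\mathcal B_1$, and that at least one of them puts $c$ first whenever the majority is claimed to be $c\to y$. In $\widehat\pi_2$, $c$ comes right after the six vertex blocks of $v_1$ and the edge blocks in $P_1^{\mathrm R}$, which is empty because no edge has both endpoints of index below $1$. Thus $c$ follows exactly the six blocks of $\mathcal B_1$ and precedes every other candidate. Every other candidate $y$ then has $c\before y$ in both $\widehat\pi_1$ and $\widehat\pi_2$, giving a $2$\nobreakdash-to\nobreakdash-$1$ split with majority arc $c\to y$, whatever $\widehat\pi_3$ does. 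This includes all padding blocks of $v_1$, all other vertex groups, and every edge-candidate block, among them $P_1^{\mathrm L}$, whose blocks lie before $c$ in $\widehat\pi_1$ only in the sense that $c$ is listed first.

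For $y\in\mathcal B_1$, $\widehat\pi_2$ places $y$ before $c$. Since $\pi_3^J$ lists these blocks as $B_{1,4}B_{1,5}B_{1,2}B_{1,1}B_{1,6}B_{1,3}$, the ranking $\widehat\pi_3$ places $B_{1,4},B_{1,5}$ before $c$ and $B_{1,2},B_{1,1},B_{1,6},B_{1,3}$ after $c$. In either case one of $\widehat\pi_2,\widehat\pi_3$ sides with $\widehat\pi_1$ and the other opposes it, so the pair is split $2$\nobreakdash-to\nobreakdash-$1$; its orientation is whatever these positions give, and the claim does not constrain it.

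The only delicate point is locating $c$ within $\pi_3^J$ among the edge candidates. Inserting $c$ inside the $\mathcal B_1$ segment does not change its relative order with any edge block, because no $P^{\mathrm{mid}}_r$ sits between $B_{1,5}$ and $B_{1,2}$. Even so, the argument above never needs $\widehat\pi_3$ for candidates outside $\mathcal B_1$, so this point does not affect the conclusion.
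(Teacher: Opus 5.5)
Your proof has a genuine gap in the pairs $\{c,y\}$ with $y\notin\mathcal B_1\cup\{d\}$. From $c\before y$ in both $\widehat\pi_1$ and $\widehat\pi_2$ you correctly obtain the majority arc $c\to y$. You do not obtain a $2$\nobreakdash-to\nobreakdash-$1$ split: if $\widehat\pi_3$ also placed $c$ before $y$, the pair would be unanimous. So the phrase ``whatever $\widehat\pi_3$ does'' is wrong, and so is your closing remark that $\widehat\pi_3$ is never needed outside $\mathcal B_1$. This is exactly where the third ranking must be checked.

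The missing step is the following. In $\widehat\pi_3$, the candidate $c$ lies in the final segment $B_{1,4}B_{1,5}\,c\,B_{1,2}B_{1,1}B_{1,6}B_{1,3}\,d$. All other vertex groups and $D_{1,1},\dots,D_{1,6}$ come earlier, because $\pi_3^J$ reads $\mathcal Q$ backwards. Every edge midpoint satisfies $\mu_e=6(i+j)-9\geq9$ by \cref{eq:general-midpoint}. Hence each edge-candidate block is inserted before $Q_9=D_{1,3}$, and $P_0^{\mathrm{mid}},\dots,P_8^{\mathrm{mid}}$ are empty. Thus every such $y$ precedes $c$ in $\widehat\pi_3$, which gives the split. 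The paper's proof uses the bound $\mu_e\geq9$ for exactly this purpose. Your remark that no $P_r^{\mathrm{mid}}$ sits between $B_{1,5}$ and $B_{1,2}$ touches this fact but draws the wrong conclusion from it.

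Two smaller slips also need fixing. First, since $d$ is last in $\widehat\pi_3$, that ranking gives $c\before d$, not $d\before c$. The pair is therefore ordered $c\before d$, $d\before c$, $c\before d$, with majority arc $c\to d$; this orientation is used later in \cref{claim:cd-block-contribution}. The $2$\nobreakdash-to\nobreakdash-$1$ conclusion survives either way. Second, for $y\in B_{1,4}\cup B_{1,5}$ it is false that one of $\widehat\pi_2,\widehat\pi_3$ sides with $\widehat\pi_1$: both place $y$ before $c$. The split is still $2$\nobreakdash-to\nobreakdash-$1$ simply because $\widehat\pi_1$ and $\widehat\pi_2$ already disagree on every pair $\{c,y\}$ with $y\in\mathcal B_1$. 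No condition on $\widehat\pi_3$ is needed there; your stated sufficient condition that $\widehat\pi_2$ and $\widehat\pi_3$ disagree is both unnecessary and untrue for these blocks.
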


\begin{proof}
The relations among the vertex blocks, padding blocks, and edge-candidate blocks are those in \cref{lem:general-block-construction}.
Replacing one edge candidate by a block preserves its comparison with every outside candidate.
Pairs inside any block are ordered forward in the first two rankings and backward in the third.
Pairs belonging to two different edge-candidate blocks inherit the $2$\nobreakdash-to\nobreakdash-$1$ relation between the corresponding singleton edge candidates in \cref{sec:construction}.

For a block $B$, write $c\to B$ when $c\to x$ for every $x\in B$, and use the analogous notation for arcs directed from a block to a candidate.
The pairs involving $c,d$ and the vertex blocks for $v_1$ have majority arcs
\begin{equation}\label{eq:cd-majority-arcs}
\begin{gathered}
c\to B_{1,r}\quad(r\in\{1,2,3,6\}),
\qquad
B_{1,r}\to c\quad(r\in\{4,5\}),\\
d\to B_{1,r}\quad(r=1,\dots,6),
\qquad
c\to d.
\end{gathered}
\end{equation}
Each relation follows from the specified positions of $c$ and $d$ in the three rankings.

The candidates $c,d$ precede every block candidate outside the six vertex blocks for $v_1$ in the first two rankings and follow every such candidate in the third.
For $c$ in the second ranking, this uses that the vertex group for $v_1$ occurs first and that $c$ precedes $D_{1,1}$.
Thus both candidates have a majority arc to every block candidate outside the six vertex blocks for $v_1$.

Since $v_1$ has the smallest index, it is the smaller-index endpoint of every incident edge.
No edge-candidate block incident with $v_1$ therefore occurs among the blocks $P_1^{\mathrm R}$ in the second ranking.
Hence $c,d$ precede every edge-candidate block in the first two rankings.
By \cref{eq:general-midpoint}, every edge midpoint satisfies $\mu_e\geq9$.
The third ranking therefore places every edge-candidate block before $B_{1,4}$ and hence before $c$, while $d$ is last.
Thus both $c$ and $d$ have a majority arc to every candidate in an edge-candidate block.
Finally, the pair $c,d$ is ordered as $c\before d$, $d\before c$, and $c\before d$.
This proves the claim.
\end{proof}

By \cref{eq:all-two-to-one-score}, minimizing the Kemeny objective in the constructed profile is equivalent to minimizing the number of backward majority arcs.
We use backward-arc counts for the remainder of the reduction.

We next determine the contribution of $c,d$ and the six vertex blocks for $v_1$.
For an order obtained by interleaving $c,d$ with these six blocks, count the backward arcs involving $c$ or $d$ and one of the blocks, together with the pair $c,d$.

\begin{claim}\label{claim:cd-block-contribution}
The relevant contribution values are
\[
\begin{array}{c|cc}
&\omega_0&\omega_1\\ \hline
\text{minimum}&2M&M+1\\
\text{next value}&2M+1&2M\\
\text{constrained minimum}&2M&2M
\end{array}
\]
The first two rows concern the unconstrained problem.
In the last row, the two constraints are that $c$ is first and that $c\before d$.
Under $\omega_0$, an unconstrained minimum is attained by the order $c\,d\,\omega_0(\mathcal B_1)$.
Under $\omega_1$, an unconstrained minimum is attained by
\[
d\,B_{1,2}B_{1,4}B_{1,5}\,c\,B_{1,1}B_{1,6}B_{1,3}.
\]
Under either constraint, the minimum in both columns is attained by placing $c,d$ before the six vertex blocks in the order $c\before d$.
\end{claim}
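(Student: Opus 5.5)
The plan is a direct case computation. It uses only the majority arcs listed in \cref{eq:cd-majority-arcs} and the block sizes $|B_{1,1}|=M+1$, $|B_{1,r}|=M$ for $r\geq2$. Fix the order of the six vertex blocks as $\omega_0(\mathcal B_1)$ or $\omega_1(\mathcal B_1)$. By the same monotonicity argument as in \cref{lem:insertion-contribution}, $c$ and $d$ may be assumed to sit in block gaps, since positions inside a block cannot beat the better of its two bounding gaps. The relevant contribution then splits into three parts:
\begin{enumerate}[label=\textup{(\alph*)}]
\item the backward arcs between $d$ and the blocks, which number the total size of the blocks placed before $d$, because $d\to B_{1,r}$ for every $r$;
\item the backward arcs between $c$ and the blocks, which number the sizes of the blocks $B_{1,r}$ with $r\in\{1,2,3,6\}$ placed before $c$ plus the sizes of $B_{1,4},B_{1,5}$ placed after $c$;
\item one further unit when $d\before c$, since $c\to d$.
\end{enumerate}

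First I tabulate the cost of $c$ at each of the seven gaps, scanning left to right. The start value is $2M$, from $B_{1,4},B_{1,5}$ lying after $c$. Passing $B_{1,4}$ or $B_{1,5}$ subtracts $M$; passing $B_{1,1}$ adds $M+1$; passing any other block adds $M$.

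For $\omega_0=234516$ this gives the sequence $2M,3M,4M,3M,2M,3M+1,4M+1$. For $\omega_1=245163$ it gives $2M,3M,2M,M,2M+1,3M+1,4M+1$.

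The cost of $d$ is $0$ at the front and at least $M$ anywhere else. Since every nonzero $d$-cost is at least $M$ while the $c$–$d$ term is at most $1$, an optimal or near-optimal choice puts $d$ at the front gap. The only remaining decision is whether $d\before c$, which costs $1$ unless $c$ also sits at gap $0$ with $c\before d$.

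Next I read off the three rows.
\begin{itemize}
\item \textbf{Row 1 (minimum).} Under $\omega_0$ the minimum $2M$ comes from $c\,d$ at the front. Placing $c$ after $B_{1,5}$ gives $2M+1$, because $d$ then precedes $c$. Under $\omega_1$ the minimum is $M+1$, with $d$ first and $c$ after $B_{1,5}$, matching the displayed order.
\item \textbf{Row 2 (next value).} Under $\omega_0$ the next value is $2M+1$, and I must check that no other configuration lands strictly between $2M$ and $2M+1$. Under $\omega_1$ the next value is $2M$, realized by $c\,d$ at the front. I must also check that every other option is at least $2M$: $c$ after $B_{1,4}$ gives $2M+1$, and moving $d$ away from the front adds at least $M$.
\item \textbf{Row 3 (constrained minimum).} If $c$ is first, its cost is the gap-$0$ value $2M$, and $d$ placed right after it adds nothing, giving $2M$ in both columns. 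If instead only $c\before d$ is required, then $d$ costs at least the size of the blocks before $c$. Adding this to the $c$-cost sequence yields a total of at least $2M$ in every gap, with equality at gap $0$.
\end{itemize}

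The main obstacle is the bookkeeping in the second row and in the $c\before d$ constraint. There one must exclude every configuration with $d$ not first and confirm that each such configuration pays at least $M$ extra. The $+1$ from $|B_{1,1}|=M+1$ never matters near the minima, because $B_{1,1}$ lies after the minimizing gaps in both orders. I would make both checks precise with the lower bound ``cost of $d\geq$ total size of blocks before $d$'' combined with the $c$-sequences above.
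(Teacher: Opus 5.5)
Your proposal is correct and follows essentially the same route as the paper: the same seven-gap cost sequences for $c$ under $\omega_0$ and $\omega_1$, prefix weights for $d$, and one unit for the pair $c,d$, combined row by row. Your bound that $d$ pays at least the total size of the blocks before $c$ under $c\before d$ is, if anything, a cleaner version of the paper's corresponding step.

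One small caution: restricting $c,d$ to block gaps is part of the claim's setting, since the six blocks are contiguous. Monotonicity does not justify it for the ``next value'' row: under $\omega_1$, placing $c$ inside $B_{1,5}$ with $d$ first would give values strictly between $M+1$ and $2M$.
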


\begin{proof}
When $c$ precedes all six vertex blocks, the arcs $B_{1,4}\to c$ and $B_{1,5}\to c$ are backward, so its contribution is initially $2M$.
As $c$ is moved through the blocks in the order $\omega_0=234516$, its contributions in the seven successive gaps are
\[
2M,
3M,
4M,
3M,
2M,
3M+1,
4M+1.
\]
For the order $\omega_1=245163$, the corresponding values are
\[
2M,
3M,
2M,
M,
2M+1,
3M+1,
4M+1.
\]
Candidate $d$ has a majority arc to every candidate in each vertex block.
Its contributions in the seven gaps are the corresponding prefix weights
\[
0,
M,
2M,
3M,
4M,
5M+1,
6M+1
\]
under $\omega_0$, and
\[
0,
M,
2M,
3M,
4M+1,
5M+1,
6M+1
\]
under $\omega_1$.
The pair $c,d$ contributes $0$ when $c\before d$ and $1$ otherwise.

Under $\omega_0$, placing both candidates before the vertex blocks in the order $c\before d$ gives contribution $2M$.
If $d\before c$, the contribution of $c$ is at least $2M$, the contribution of $d$ is nonnegative, and the pair $c,d$ contributes $1$.
Thus the unconstrained minimum is $2M$, and the next-smallest unconstrained value is $2M+1$.

Under $\omega_1$, placing $d$ before the vertex blocks and $c$ after the first three blocks gives contribution $M+1$.
This is the only placement that combines the unique value $M$ for $c$ with the value $0$ for $d$, and it has $d\before c$.
Every other placement with $d\before c$ has contribution at least $2M+1$.
If $c\before d$, then placing $d$ before all six blocks forces the contribution of $c$ to be at least $2M$, while moving $d$ later only increases the total.
Hence the minimum under either constraint and the next-smallest unconstrained value are both $2M$.
\end{proof}

Let $X\subseteq V(J)$ and write $x_i=\chi_X(v_i)$.
The contribution among the six vertex blocks for $v_1$ is $2M^2+M$ under $\omega_0$ and $2M^2$ under $\omega_1$.
Together with \cref{claim:cd-block-contribution}, the contribution involving only these six blocks and $c,d$ is
\begin{equation}\label{eq:distinguished-fixed-cost}
\begin{cases}
2M^2+3M  &\text{ if }v_1\notin X,\\
2M^2+M+1 &\text{ if }v_1\in X.
\end{cases}
\end{equation}
For every $i\geq2$, \cref{eq:switch-costs} gives contributions $2M^2+2M$ under $\omega_0$ and $2M^2$ under $\omega_1$.

For the unconstrained problem, put
\[
W_0=c\,d\,\omega_0(\mathcal B_1),
\qquad
W_1=d\,B_{1,2}B_{1,4}B_{1,5}\,c\,B_{1,1}B_{1,6}B_{1,3},
\]
and define
\begin{equation}\label{eq:winner-unconstrained-block-order}
\widehat\Gamma_X
=
W_{x_1}D_{1,1}\cdots D_{1,6}
\,\omega_{x_2}(\mathcal B_2)D_{2,1}\cdots D_{2,6}
\cdots
\omega_{x_n}(\mathcal B_n)D_{n,1}\cdots D_{n,6}.
\end{equation}
For either constraint that $c$ is first or that $c\before d$, define
\begin{equation}\label{eq:winner-constrained-block-order}
\widehat\Gamma_X^{c,d}
=
c\,d\,\omega_{x_1}(\mathcal B_1)D_{1,1}\cdots D_{1,6}
\,\omega_{x_2}(\mathcal B_2)D_{2,1}\cdots D_{2,6}
\cdots
\omega_{x_n}(\mathcal B_n)D_{n,1}\cdots D_{n,6}.
\end{equation}
In $\widehat\Gamma_X^{c,d}$, the candidates $c$ and $d$ occupy the first two positions of the complete aggregate, in that order, not only the first two positions after the edge-candidate blocks are deleted.

Put
\[
\beta_{\mathrm{fix}}=2nM^2+(2n+1)M.
\]
By \cref{eq:distinguished-fixed-cost,eq:switch-costs}, the contribution among $c,d$ and the block candidates in $\widehat\Gamma_X$ is
\begin{equation}\label{eq:winner-fixed-cost}
\beta_{\mathrm{fix}}
-2M\sum_{i=1}^{n}x_i
+x_1.
\end{equation}
Under either constraint, the corresponding contribution in $\widehat\Gamma_X^{c,d}$ is
\begin{equation}\label{eq:winner-constrained-fixed-cost}
\beta_{\mathrm{fix}}
-2M\sum_{i=1}^{n}x_i
+x_1
+(M-1)x_1.
\end{equation}

\begin{claim}\label{claim:winner-normalization}
For the unconstrained problem and for either constraint that $c$ is first or that $c\before d$, every aggregate can be changed without increasing its number of backward majority arcs so that every vertex block, padding block, and edge-candidate block is contiguous and internally forward-ordered.
For some $X\subseteq V(J)$, deleting the edge-candidate blocks then leaves the order $\widehat\Gamma_X$ in the unconstrained problem and the order $\widehat\Gamma_X^{c,d}$ under either constraint.
Under either constraint, $c,d$ may be assumed to be the first two candidates of the complete aggregate.
In the unconstrained problem, the same is true whenever $v_1\notin X$.
\end{claim}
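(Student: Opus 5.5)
The plan follows the proof of \cref{lem:fixed-normalization}, with $c,d$ handled as part of the distinguished vertex group for $v_1$.

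\textbf{Making blocks contiguous.} First apply \cref{lem:general-block-construction}(iii) to make every vertex block, padding block, and edge-candidate block contiguous and forward-ordered. This uses the proof of \cref{lem:block-normalization}. That argument only needs every outside candidate, now including $c$ and $d$, to have the same majority comparison with all members of a block. This holds because $c$ and $d$ occupy single positions in each ranking, and every block is consecutive in $\widehat\pi_1,\widehat\pi_2,\widehat\pi_3$. The argument survives either constraint ($c$ first, or $c\before d$), because moving block members never changes the relative order of $c$ and $d$ or which candidate is first, unless a block member was first. In that case, moving the block to a minimizing gap that is not the first position keeps $c$ first. More simply, I would treat $c$ as fixed and let blocks move only into gaps after $c$; the monotonicity argument still yields a minimizing gap among the allowed ones.

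\textbf{Fixing the block sequence and reinserting.} Temporarily delete the edge-candidate blocks, and also $c,d$. Reorder the block sequence into $\Gamma_X$ as in \cref{lem:fixed-normalization}. Any change saves at least $M^2$ backward arcs among block candidates. Then reinsert $c,d$ optimally relative to the block candidates. By \cref{claim:winner-all-two-to-one}, both $c$ and $d$ point to everything outside $\mathcal B_1$, so they belong before $D_{1,1}$. Their best placement among the six blocks of $\mathcal B_1$ is given by \cref{claim:cd-block-contribution}: $W_{x_1}$ in the unconstrained case, and $c\,d$ first under either constraint. This placement is forward for all arcs from $c,d$ to other groups. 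Finally, reinsert the edge-candidate blocks in their previous relative order at the front of the aggregate, but after $c,d$. Every edge-candidate block receives arcs from $c$ and $d$ (\cref{claim:winner-all-two-to-one}), so placing it after $c,d$ creates no backward arcs with $c$ or $d$.

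\textbf{Main obstacle: the error budget.} The increase from reinserting edge candidates is now at most $3m\cdot(n(12M+2)+2)$, since each $P_e$ has three candidates. This must stay below $M^2$, and this is exactly where $M=10(n+m+1)^2$ enters; the check is routine. A separate issue is when the block sequence is already $\Gamma_X$ but $c,d$ are badly placed. Then I would instead fix the blocks and improve only $c,d$, together with the edge-candidate blocks that lie before them. Moving $c,d$ toward the front is never worse against edge candidates, because $c,d\to P_e$. Combined with \cref{claim:cd-block-contribution}, and with the edge candidates left untouched relative to the blocks, this gives no increase.

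\textbf{The final assertion.} For "$c,d$ first", note that in $\widehat\Gamma^{c,d}_X$, and in $W_0$ when $v_1\notin X$, the candidates $c$ and $d$ precede all block candidates. Any edge-candidate block placed before $c$ or $d$ can be moved just after them. Each such move turns backward arcs $c,d\to P_e$ forward and changes no other comparison, so the count does not increase. When $v_1\in X$, $d$ is still before $c$ in $W_1$, so no such claim is made there.
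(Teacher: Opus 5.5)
Your outline is the paper's: contiguity via \cref{lem:block-normalization} (restricting to gaps after $c$ under the first-position constraint), normalization of the block sequence to $\Gamma_X$ with a saving of at least $M^2$, reinsertion paid for by that saving, and placement of $c,d$ via \cref{claim:cd-block-contribution}. However, two accounting steps fail as written.

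\textbf{The reinsertion budget omits $c,d$.} You charge only the edge candidates and treat the optimal reinsertion of $c,d$ as free. It is not free, because their old contribution was measured against the old block sequence, which may have suited them much better. Suppose the vertex blocks of $v_1$ were ordered $B_{1,4}B_{1,5}B_{1,2}B_{1,1}B_{1,6}B_{1,3}$, with $d$ first and $c$ immediately after $B_{1,5}$. Then $c,d$ contributed only $1$, namely the pair $c,d$. This order is neither $\omega_0$ nor $\omega_1$, so it is replaced by $\omega_1$, and afterwards their best possible contribution is $M+1$. The comparisons between $c,d$ and the block candidates must therefore also be paid from the $M^2$ saving. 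The paper bounds all changed comparisons between block candidates and the $3m+2$ reinserted candidates by $(3m+2)N_{\mathrm{block}}<M^2$. This bound, plus $1$ for the pair $c,d$, still fits your choice of $M$.

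\textbf{Moving $c$ later in the unchanged-block case.} When the block sequence is already $\Gamma_X$, your argument that moving $c,d$ toward the front is never worse does not cover $v_1\in X$ without a constraint. Reaching $W_1$ may require moving $c$ \emph{later}, past $B_{1,2}B_{1,4}B_{1,5}$. In an arbitrary aggregate, edge-candidate blocks may lie between these blocks, so up to $3m$ comparisons $c\to p$ can become backward. The missing ingredient is the ``next value'' row of \cref{claim:cd-block-contribution}. If deleting the edge-candidate blocks does not already leave $c,d$ in the position of $W_1$, their contribution is at least $2M$ rather than $M+1$. The move therefore saves at least $M-1>3m$, exactly as in the paper.

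A smaller point: moving edge-candidate blocks one at a time to ``just after $c,d$'' can change comparisons among edge-candidate blocks unless you keep their relative order. Moving $c$ and $d$ to the front, as the paper does, changes only comparisons involving $c$ or $d$.
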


\begin{proof}
First make every vertex block, padding block, and edge-candidate block contiguous and internally forward-ordered.
For the unconstrained problem and for the constraint $c\before d$, use the argument in \cref{lem:block-normalization} while keeping all outside candidates fixed.
Under the constraint that $c$ is first, restrict the permissible insertion gaps to those after $c$.
Every member of the block already occupies such a gap, so moving all members to a permissible gap minimizing their common external contribution cannot increase the score.
The argument that avoids splitting a previously normalized block remains within this permissible interval.

There are
\[
N_{\mathrm{block}}=n(12M+2)-1<13nM
\]
block candidates.
Delete $c,d$ and the edge-candidate blocks temporarily.
Order the remaining blocks by increasing vertex-group index, place the six vertex blocks before the padding blocks in each group, and order the padding blocks by their second index.
If the six vertex blocks for some $v_i$ are ordered neither as $\omega_0$ nor as $\omega_1$, replace their order by $\omega_1$.
These operations do not increase the contribution among block candidates.
If they change the block sequence, they decrease that contribution by at least $M^2$.
A violation involving two vertex groups, a vertex block and a padding block, or two padding blocks reverses every pair between two blocks of size at least $M$.
For a vertex-block order other than $\omega_0$ and $\omega_1$, \cref{lem:local-states} gives at least three backward block arcs, whereas $\omega_1$ has two.
The resulting block sequence is $\Gamma_X$ for some $X\subseteq V(J)$.

Reinsert $c,d$ and the edge-candidate blocks while preserving their relative order.
This preserves either imposed constraint.
Each pair consisting of a block candidate and one reinserted candidate changes at most once.
Since $4n(m+1)\leq(n+m+1)^2$,
\[
13n(3m+2)
<39n(m+1)
\leq\frac{39}{4}(n+m+1)^2
<M.
\]
The possible increase is therefore at most
\[
(3m+2)N_{\mathrm{block}}<M^2.
\]
Thus a Kemeny-optimal aggregate may be assumed, after deleting $c,d$ and all edge-candidate blocks, to have block order $\Gamma_X$.

Suppose first that $v_1\notin X$.
Move $c,d$ to the first two positions in the order $c\before d$.
By \cref{claim:cd-block-contribution}, this does not increase their contribution with the six vertex blocks for $v_1$.
By \cref{claim:winner-all-two-to-one}, both candidates have a majority arc to every other block candidate and to every candidate in an edge-candidate block, so their remaining contribution cannot increase.
This move preserves either constraint.

Now suppose that $v_1\in X$ and no constraint is imposed.
If deleting the edge-candidate blocks does not leave the order $\widehat\Gamma_X$ defined in \cref{eq:winner-unconstrained-block-order}, then \cref{claim:cd-block-contribution} shows that moving $d$ to the first position and $c$ between $B_{1,5}$ and $B_{1,1}$ in the order $\omega_1(\mathcal B_1)$ decreases the contribution involving block candidates by at least $M-1$.
Moving $d$ forward cannot worsen a comparison with an edge-candidate block.
Only the comparisons between $c$ and the $3m$ candidates in those blocks can worsen.
Since $M-1>3m$, the total number of backward arcs decreases.
Thus an unconstrained Kemeny-optimal aggregate may be assumed to use $\widehat\Gamma_X$.

Finally, suppose that $v_1\in X$ and either constraint is imposed.
Move $c,d$ to the first two positions in the order $c\before d$.
By \cref{claim:cd-block-contribution}, this minimizes their contribution with the six vertex blocks for $v_1$ subject to either constraint.
Their remaining contribution cannot increase by \cref{claim:winner-all-two-to-one}.
The resulting order is $\widehat\Gamma_X^{c,d}$.
\end{proof}

We next compare the contribution of one member of an edge-candidate block with the corresponding contribution in the construction of \cref{sec:construction}.
All minima below are over gaps between the candidates $c,d$ and the contiguous vertex and padding blocks.
The monotonicity argument at the start of the proof of \cref{lem:insertion-contribution} shows that allowing positions inside a block cannot lower the minimum.

For $X\subseteq V(J)$, let $\Gamma_X^{\mathrm{ref}}$ be the block order obtained by deleting $c,d$, giving $B_{1,1}$ one additional candidate, and retaining the order $\omega_{x_i}$ of the six vertex blocks for every $v_i$.
For an edge $e$, let $\phi_e^{\mathrm{ref}}(X)$ be the minimum contribution of the corresponding single edge candidate against the candidates of $\Gamma_X^{\mathrm{ref}}$.
Let $\widehat\phi_e(X)$ be the corresponding minimum for one member of $P_e$ against $\widehat\Gamma_X$.
Let $\widehat\phi_e^{c,d}(X)$ be its minimum against $\widehat\Gamma_X^{c,d}$ when only block gaps after $d$ are allowed.

\begin{claim}\label{claim:winner-edge-block-contribution}
For every edge $e=\{v_i,v_j\}$ and every $X\subseteq V(J)$,
\begin{equation}\label{eq:winner-gap-projection}
|\widehat\phi_e(X)-\phi_e^{\mathrm{ref}}(X)|\leq3.
\end{equation}
If
\[
a_e=\widehat\phi_e(\emptyset),
\]
then
\begin{equation}\label{eq:winner-edge-block-bound}
a_e+Mx_ix_j-6
\leq
\widehat\phi_e(X)
\leq
a_e+Mx_ix_j+8.
\end{equation}
Moreover,
\begin{equation}\label{eq:winner-constrained-edge-block-bound}
\widehat\phi_e^{c,d}(X)
\geq
\widehat\phi_e(X)-1.
\end{equation}
\end{claim}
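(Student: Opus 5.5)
The plan is to compare the three quantities gap by gap. Every order involved is obtained from a common block skeleton, so two insertion costs can differ only through a few explicitly identifiable candidates. By the proof of \cref{lem:general-block-construction}(i), a single member $p$ of $P_e$ has the same majority relation to every vertex or padding block candidate as the singleton edge candidate $p_e$ of \cref{sec:construction}. Replacing an edge candidate by a block changes no outside comparison. By \cref{claim:winner-all-two-to-one}, the only extra relations are $c\to p$ and $d\to p$. Thus inserting $p$ at a gap of $\widehat\Gamma_X$ costs the block part plus one unit for each of $c,d$ placed after $p$.

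\emph{Step 1: the projection bound \cref{eq:winner-gap-projection}.} I delete $c,d$ from $\widehat\Gamma_X$ and add one candidate to $B_{1,1}$, which yields $\Gamma_X^{\mathrm{ref}}$. Every gap of $\widehat\Gamma_X$ then maps to a gap of $\Gamma_X^{\mathrm{ref}}$. Conversely, every gap of $\Gamma_X^{\mathrm{ref}}$ lifts to at least one gap of $\widehat\Gamma_X$, choosing the lift after $c,d$ whenever several lifts exist. Under this correspondence the costs differ by at most $2$ from the arcs to $c,d$. They differ by at most $1$ more from the single candidate by which $B_{1,1}$ differs, whose comparison with $p$ is identical to that of the other members of $B_{1,1}$. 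Since both directions of the correspondence change the cost by at most $3$, taking minima gives $|\widehat\phi_e(X)-\phi_e^{\mathrm{ref}}(X)|\le 3$. By the monotonicity remark at the start of the proof of \cref{lem:insertion-contribution}, positions inside blocks never matter, so this correspondence between block gaps is enough.

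\emph{Step 2: the bound \cref{eq:winner-edge-block-bound}.} The reference order has $|B_{i,1}|=M+2$, all other blocks of size $M$, and $M$ even. Moreover, $mn(12M+2)<M^2$ holds for the choice \cref{eq:winner-scale}. Hence \cref{lem:general-block-construction}(iv) applies to it. Its proof is that of \cref{lem:insertion-contribution}, so the constant may be taken as $c_e=\phi_e^{\mathrm{ref}}(\emptyset)$, and $c_e+Mx_ix_j\le\phi_e^{\mathrm{ref}}(X)\le c_e+Mx_ix_j+2$. Applying Step 1 at $X=\emptyset$ gives $|a_e-c_e|\le3$. Chaining the inequalities gives
\[
\widehat\phi_e(X)\ge c_e+Mx_ix_j-3\ge a_e+Mx_ix_j-6,
\]
and
\[
\widehat\phi_e(X)\le c_e+Mx_ix_j+2+3\le a_e+Mx_ix_j+8.
\]

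\emph{Step 3: the constrained bound \cref{eq:winner-constrained-edge-block-bound}.} In $\widehat\Gamma_X^{c,d}$, every allowed gap lies after $d$, so its cost is purely the block part. I would exhibit, for each such gap, a gap of $\widehat\Gamma_X$ with the same set of preceding block candidates whose cost exceeds it by at most $1$. If $x_1=0$, the two orders coincide and the same gap works with no excess. If $x_1=1$, the block skeleton is identical, because $c,d$ are interleaved into $\omega_1(\mathcal B_1)$ in both orders. Every block prefix is realized in $W_1$ by a gap after $d$. That gap is after $c$ as well when it lies past $B_{1,5}$, and otherwise it has only $c$ following $p$, costing at most one extra unit. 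Taking the minimum gives $\widehat\phi_e(X)\le\widehat\phi_e^{c,d}(X)+1$.

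I expect the main obstacle to be the bookkeeping in Step 1. One must check that the correspondence really runs in both directions, so that the bound holds as an absolute value and not just one-sidedly. One must also check that shrinking $B_{1,1}$ by one candidate affects any gap's cost by at most one unit, uniformly for both $\omega_0$ and $\omega_1$ at $v_1$.
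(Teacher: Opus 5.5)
Your proposal is correct and follows essentially the same route as the paper. It uses a two-way gap correspondence between $\widehat\Gamma_X$ and $\Gamma_X^{\mathrm{ref}}$ with pointwise error at most $3$ (two from $c,d$, one from the extra member of $B_{1,1}$), chains with \cref{lem:insertion-contribution} via $|a_e-c_e|\le 3$, and compares gaps with identical block prefixes, where in $W_1$ only $c$ can follow $p$.

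The one step you leave unverified is $mn(12M+2)<M^2$. It follows from $4mn\le(n+m+1)^2$ and $M=10(n+m+1)^2$, which give $mn(12M+2)<13mnM<M^2$.
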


\begin{proof}
Delete $c,d$ from $\widehat\Gamma_X$ and add one candidate to $B_{1,1}$.
The resulting order is $\Gamma_X^{\mathrm{ref}}$.
Deleting $c,d$ maps every permissible gap of $\widehat\Gamma_X$ to a block gap of $\Gamma_X^{\mathrm{ref}}$, and every block gap of $\Gamma_X^{\mathrm{ref}}$ has a permissible preimage.
Adjoining the extra member inside $B_{1,1}$ does not change the block gaps.
At corresponding gaps, the contribution of one edge candidate differs only on its pairs with $c,d$ and with this extra member of $B_{1,1}$.
The pointwise difference is therefore at most $3$, and taking minima in both directions proves \cref{eq:winner-gap-projection}.

The integer $M$ is even.
Since $4mn\leq(n+m)^2<(n+m+1)^2$, \cref{eq:winner-scale} gives
\[
mn(12M+2)<13mnM<M^2.
\]
Thus \cref{lem:general-block-construction}(iv) applies to $\Gamma_X^{\mathrm{ref}}$, and \cref{lem:insertion-contribution} gives
\[
Mx_ix_j
\leq
\phi_e^{\mathrm{ref}}(X)-\phi_e^{\mathrm{ref}}(\emptyset)
\leq
Mx_ix_j+2.
\]
Write
\[
\widehat\phi_e(X)=\phi_e^{\mathrm{ref}}(X)+\varepsilon_X,
\qquad
\widehat\phi_e(\emptyset)=\phi_e^{\mathrm{ref}}(\emptyset)+\varepsilon_{\emptyset},
\]
where \cref{eq:winner-gap-projection} gives
\[
|\varepsilon_X|,|\varepsilon_{\emptyset}|\leq3.
\]
Subtracting proves \cref{eq:winner-edge-block-bound}.

When $v_1\notin X$, the orders $\widehat\Gamma_X$ and $\widehat\Gamma_X^{c,d}$ agree on $c,d$ and the block candidates, and restricting the allowed gaps cannot decrease the minimum.
Suppose that $v_1\in X$.
After deleting $c$, the two orders agree on $d$ and all block candidates.
Every gap allowed under the constraint therefore corresponds to a gap in the unconstrained order with the same relation to every candidate except $c$.
The two contributions differ by at most $1$ at corresponding gaps, which proves \cref{eq:winner-constrained-edge-block-bound}.
\end{proof}

For $X\subseteq V(J)$, let $e_J(X)=|E(J[X])|$ be the number of edges induced by $X$.
Define
\begin{equation}\label{eq:winner-objective-function}
f_J(X)
=
-2M|X|+x_1+3Me_J(X).
\end{equation}
Put
\[
\beta=\beta_{\mathrm{fix}}+3\sum_{e\in E(J)}a_e,
\qquad
L=18m,
\qquad
U=24m+9\binom m2.
\]
Let $b^\star(X)$ be the minimum backward-arc count among aggregates whose order after deleting the edge-candidate blocks is $\widehat\Gamma_X$.
Let $b_{c,d}^\star(X)$ be the corresponding minimum under either constraint, so deleting the edge-candidate blocks leaves $\widehat\Gamma_X^{c,d}$.

The three members of each $P_e$ have identical comparisons with every outside candidate and contribute three times the one-candidate quantity in \cref{eq:winner-edge-block-bound} against $c,d$ and the block candidates.
The lower bound in that equation and the nonnegativity of the contribution between distinct edge-candidate blocks give
\begin{equation}\label{eq:winner-score-lower}
b^\star(X)
\geq
\beta+f_J(X)-L.
\end{equation}
For the upper bound, place each edge-candidate block in a gap minimizing its contribution against $c,d$ and the block candidates, and order it internally forward.
The error from \cref{eq:winner-edge-block-bound} is at most $24m$, and at most $9\binom m2$ candidate pairs belonging to different edge-candidate blocks are backward.
Thus
\begin{equation}\label{eq:winner-score-upper}
b^\star(X)
\leq
\beta+f_J(X)+U.
\end{equation}
By \cref{eq:winner-constrained-fixed-cost,eq:winner-constrained-edge-block-bound},
\begin{equation}\label{eq:winner-constrained-lower}
b_{c,d}^\star(X)
\geq
\beta+f_J(X)+(M-1)x_1-L-3m.
\end{equation}
Since $n\geq1$, \cref{eq:winner-scale} gives
\begin{equation}\label{eq:winner-error-separation}
L+U+3m
=
45m+9\binom m2
<10(m+2)^2-2
\leq M-2.
\end{equation}

We now determine which subsets can occur in a Kemeny-optimal aggregate.
Suppose that $X$ is not independent, and choose $v_i\in X$ with $r\geq1$ neighbors in $X$.
Removing $v_i$ changes $f_J$ by
\[
2M-3Mr
\]
when $i\neq1$, and by
\[
2M-1-3Mr
\]
when $i=1$.
In either case the value decreases by at least $M$.
Repeatedly removing a vertex from an induced edge therefore produces an independent set whose value under $f_J$ is at least $M$ smaller.
For an independent set $S$,
\[
f_J(S)=-2M|S|+\chi_S(v_1).
\]
An independent set of cardinality less than $\alpha(J)$ has value under $f_J$ at least $2M-1$ larger.
Let $f_J^\star=\min_{X\subseteq V(J)}f_J(X)$.
Consequently, every subset that is not a maximum independent set has value under $f_J$ at least $M$ above $f_J^\star$.
By \cref{eq:winner-score-lower,eq:winner-score-upper,eq:winner-error-separation,claim:winner-normalization}, the set $X$ obtained from every Kemeny-optimal aggregate is a maximum independent set of $J$.

Let $\widehat b^{\star}$ be the minimum backward-arc count in $\widehat\Pi$.
If $\alpha(G)>\alpha(H)$, every maximum independent set of $J$ excludes $v_1$, and the order $\widehat\Gamma_X$ defined in \cref{eq:winner-unconstrained-block-order} places $c$ first and hence places $c\before d$.
If $\alpha(G)\leq\alpha(H)$, every maximum independent set contains $v_1$, and the order $\widehat\Gamma_X$ defined in \cref{eq:winner-unconstrained-block-order} places $d\before c$ with $c$ not first.
Moreover, every aggregate with $c$ first or $c\before d$ has at least $\widehat b^{\star}+2$ backward arcs.
To verify this, let $X$ be the subset obtained after applying \cref{claim:winner-normalization} to such an aggregate.
If $v_1\in X$, then $f_J(X)\geq f_J^\star$.
If $v_1\notin X$, then $X$ is not a maximum independent set because every maximum independent set of $J$ contains $v_1$ in the present case, and hence $f_J(X)\geq f_J^\star+M$.
Therefore \cref{eq:winner-constrained-lower} gives
\[
b_{c,d}^\star(X)
\geq
\begin{cases}
\beta+f_J^\star+M-1-L-3m &\text{ if }v_1\in X,\\
\beta+f_J^\star+M-L-3m   &\text{ if }v_1\notin X.
\end{cases}
\]
In either case, the constrained backward-arc count is at least $\beta+f_J^\star+M-1-L-3m$.
On the other hand, \cref{eq:winner-score-upper} gives $\widehat b^\star\leq\beta+f_J^\star+U$.
The difference is at least $2$ by \cref{eq:winner-error-separation}.

Add one candidate $z$.
Define $\widetilde\pi_1$ by inserting $z$ immediately after $c$ in $\widehat\pi_1$.
Define $\widetilde\pi_2$ by placing $z$ before $\widehat\pi_2$, and define $\widetilde\pi_3$ by placing $z$ after $\widehat\pi_3$:
\[
\widetilde\pi_2=z\,\widehat\pi_2,
\qquad
\widetilde\pi_3=\widehat\pi_3\,z.
\]
Let $\widetilde\Pi=(\widetilde\pi_1,\widetilde\pi_2,\widetilde\pi_3)$.
The pair $c,z$ has majority arc $c\to z$.
For every $x\in\widehat{\cC}\setminus\{c\}$, where $\widehat{\cC}$ is the candidate set of $\widehat\Pi$, the pair $z,x$ has majority arc $z\to x$.
Every pair remains split $2$\nobreakdash-to\nobreakdash-$1$, so $\widetilde\Pi$ is all\nobreakdash-$2$\nobreakdash-to\nobreakdash-$1$.

Write $b_{\widehat\Pi}$ and $b_{\widetilde\Pi}$ for the backward-arc counts in the two majority tournaments.
For an aggregate $\sigma$ of $\widehat{\cC}$, the minimum contribution of the new candidate is
\begin{equation}
\label{eq:z-first-position-contribution}
\min_{\tau:\,\tau|_{\widehat{\cC}}=\sigma}
\bigl(b_{\widetilde\Pi}(\tau)-b_{\widehat\Pi}(\sigma)\bigr)
=
\begin{cases}
0 &\text{ if }c\text{ is first in }\sigma,\\
1 &\text{ if }c\text{ is not first in }\sigma.
\end{cases}
\end{equation}
If $c$ is first, placing $z$ immediately after $c$ makes every new majority arc forward.
If $c$ is not first, choose $x\in\widehat{\cC}$ preceding $c$.
Placing $z$ before $c$ reverses $c\to z$, whereas placing $z$ after $c$ reverses $z\to x$.
Thus every insertion has contribution at least $1$, and placing $z$ first attains this value.

Suppose first that $\alpha(G)>\alpha(H)$.
A Kemeny-optimal aggregate of $\widehat\Pi$ with $c$ first extends by placing $z$ immediately after $c$ and has $\widehat b^{\star}$ backward arcs in $\widetilde\Pi$.
Every aggregate whose restriction to $\widehat{\cC}$ does not place $c$ first has at least $\widehat b^{\star}+1$ backward arcs by \cref{eq:z-first-position-contribution}.
Hence every Kemeny-optimal aggregate of $\widetilde\Pi$ places $c$ first.
Candidate $c$ is therefore the unique Kemeny winner, and every Kemeny-optimal aggregate places $c\before d$.
All four winner and precedence questions are yes-instances.

Now suppose that $\alpha(G)\leq\alpha(H)$.
A Kemeny-optimal aggregate of $\widehat\Pi$ with $c$ not first and $d\before c$ extends by placing $z$ first and has $\widehat b^{\star}+1$ backward arcs.
Every aggregate with $c$ first or $c\before d$ restricts to an aggregate of $\widehat\Pi$ with at least $\widehat b^{\star}+2$ backward arcs, and the contribution of $z$ is nonnegative.
No Kemeny-optimal aggregate of $\widetilde\Pi$ can therefore place $c$ first or place $c\before d$.
Thus $c$ is not a Kemeny winner and every Kemeny-optimal aggregate places $d\before c$.
All four winner and precedence questions are no-instances.

By \cref{claim:winner-all-two-to-one}, every pair is split $2$\nobreakdash-to\nobreakdash-$1$.
For every $i$, one candidate from each of $B_{i,1},B_{i,3},B_{i,4}$ forms a directed triangle.
Since the supplied three rankings realize the majority tournament and every tournament realizable by at most two rankings is transitive, its majority dimension is exactly $3$.

The construction creates
\[
n(12M+2)+2+3m=O(nM+m)
\]
candidates.
Since $M$ is polynomially bounded, the reduction is polynomial.
This proves $\ThetaTwoP$-hardness for \KemenyWinner{}, \KemenyUniqueWinner{}, \KemenyPossiblePrecedence{}, and \KemenyNecessaryPrecedence{} and completes the proof.
\end{proof}

\subsection{Kemeny consensus recognition and uniqueness}\label{sec:recognition-reductions}

Both reductions start from Maximum Independent-Set Recognition and use the construction in \cref{sec:general-block-construction}.
The first reduction constructs an aggregate that is Kemeny-optimal exactly when the supplied independent set is maximum.
The second changes the block sizes and the positions of the edge-candidate blocks so that the supplied aggregate is uniquely Kemeny-optimal exactly under the same condition.

\begin{lemma}
\label{lem:maximum-independent-set-recognition}
The following problem is coNP-complete under polynomial-time many-one reductions: given a nonempty finite simple graph $H$ and a set $S\subseteq V(H)$, decide whether $S$ is a maximum independent set of $H$.
An instance with a nonindependent set $S$ is a no-instance.
\end{lemma}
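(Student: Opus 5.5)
Membership in coNP is the easy half. We check in polynomial time that $S$ is independent; if it is not, we answer no, and this is consistent with the final sentence of the statement. If $S$ is independent, then $S$ fails to be maximum exactly when some independent set $T$ with $|T|>|S|$ exists. Such a $T$ is a polynomial-size no-certificate that can be verified in polynomial time, so the complement lies in NP.

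For coNP-hardness we reduce from the complement of \textsc{Independent Set}: given a graph $G$ and an integer $k$, decide whether $\alpha(G)<k$. This complement is coNP-complete because \textsc{Independent Set} is NP-complete. The source instance compares $\alpha(G)$ with a threshold, while the target compares it with a supplied set, so we embed $G$ in a larger graph that contains a canonical independent set of size exactly $k-1$ competing with $G$.

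The construction is as follows. We may assume $1\leq k\leq |V(G)|$; other values give trivial instances, which we map to fixed instances with the correct answer. For $k\le 0$ the answer is no, and we map to a nonindependent $S$ in $K_2$. For $k>|V(G)|$ the answer is yes, and we map to $(K_1,V(K_1))$. Let $I_{k-1}$ be an independent set of $k-1$ new vertices, and let $H=G\vee I_{k-1}$ be the join. Put $S=V(I_{k-1})$. Then $S$ is independent and $|S|=k-1$. As in the proof of \cref{thm:three-ranking-optimality}(i), every independent set of a join lies entirely on one side, so $\alpha(H)=\max\{\alpha(G),k-1\}$. Hence $S$ is maximum exactly when $\alpha(G)\le k-1$, that is, when $\alpha(G)<k$.

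The only delicate point is the edge case $k=1$. There $S$ is empty and $H=G$, which is nonempty. The empty set is maximum exactly when $G$ has no vertices, which is impossible. This matches $\alpha(G)<1$ being false for nonempty $G$. The reduction is clearly polynomial, which completes the argument.
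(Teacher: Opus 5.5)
Your proof is correct, but you take a different route to hardness than the paper does.

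The paper reduces from Minimum Vertex Cover Recognition, which Fitzsimmons and Hemaspaandra already proved coNP-complete. It first checks that $C$ is a vertex cover and outputs a fixed no-instance otherwise. It then takes $H=G\mathbin{\dot\cup}\{u\}$ and $S=(V(G)\setminus C)\cup\{u\}$, so $S$ is maximum in $H$ exactly when $C$ is minimum in $G$. The isolated vertex only serves to make $H$ nonempty.

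You start instead from the complement of \textsc{Independent Set}. The join $G\vee I_{k-1}$ turns the numerical threshold $k$ into an explicitly supplied competing set $S=V(I_{k-1})$, and $\alpha(G\vee I_{k-1})=\max\{\alpha(G),k-1\}$ gives the equivalence. You dispatch the trivial ranges $k\le 0$ and $k>|V(G)|$ to fixed instances with the correct answers, and these ranges also absorb the case of an empty $G$. In the remaining range $1\le k\le |V(G)|$, $H$ is nonempty and has polynomial size.

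The paper's argument is a one-step translation that relies on an existing recognition-hardness result. Yours is self-contained and uses only the textbook NP-completeness of \textsc{Independent Set}, at the cost of the threshold case analysis. In effect, you re-derive the threshold-to-witness conversion that the cited result supplies.
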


\begin{proof}
The problem belongs to coNP\@.
If $S$ is not independent, an edge with both endpoints in $S$ certifies a no-instance.
If $S$ is independent but not maximum, a larger independent set certifies a no-instance.

For hardness, reduce from Minimum Vertex Cover Recognition, which is coNP-complete~\cite[Theorem~2]{FitzsimmonsHemaspaandra2021}.
Write $\alpha(F)$ for the maximum independent-set size and $\tau(F)$ for the minimum vertex-cover size of a graph $F$.
Given a graph $G$ and a set $C\subseteq V(G)$, first check whether $C$ is a vertex cover.
If it is not, output the fixed no-instance consisting of a one-vertex graph and the empty set.
Otherwise, add one isolated vertex $u$ and put
\[
H=G\mathbin{\dot\cup}\{u\},
\qquad
S=(V(G)\setminus C)\cup\{u\}.
\]
Then $H$ is nonempty, $S$ is independent, and
\[
\alpha(H)=\alpha(G)+1,
\qquad
|S|=|V(G)|-|C|+1.
\]
Since $\alpha(G)=|V(G)|-\tau(G)$, the cover $C$ is minimum if and only if $S$ is a maximum independent set of $H$.
\end{proof}

Both recognition reductions use the following preprocessing of an instance $(H,S)$ from \cref{lem:maximum-independent-set-recognition}.
Write $H=(V,E)$, put
\[
n=|V|,
\qquad
m=|E|,
\]
and relabel the vertices as $v_1,\dots,v_n$ so that every vertex of $S$ has a smaller index than every vertex of $V\setminus S$.
If $S$ is not independent, use the cyclic profile
\[
a\before b\before c,
\qquad
b\before c\before a,
\qquad
c\before a\before b
\]
together with the order $a\before c\before b$.
Its majority tournament is a directed triangle of majority dimension $3$, and the supplied order has two backward arcs although an order with one backward arc exists.
Thus it is a fixed no-instance for both recognition problems, and every pair is split $2$\nobreakdash-to\nobreakdash-$1$.
For the remainder of this subsection, assume that $S$ is independent; since $H$ is nonempty, $n\geq1$.
Both reductions use
\begin{equation}
\label{eq:verification-scales}
A=n+1,
\qquad
P=A+2,
\qquad
a_i=A+\chi_S(v_i)\quad(i=1,\dots,n),
\qquad
A_{\mathrm{tot}}=\sum_{i=1}^{n}a_i,
\qquad
M=20(n+m+1)^5.
\end{equation}
The integer $M$ is even, and all these quantities are polynomially bounded.
Since $P=n+3$ and $A_{\mathrm{tot}}=n(n+1)+|S|\leq n(n+2)$,
\begin{equation}
\label{eq:verification-parameter-bounds}
P\leq2(n+m+1),
\qquad
A_{\mathrm{tot}}<(n+m+1)^2,
\qquad
m<n+m+1.
\end{equation}
For $X\subseteq V$, write $x_i=\chi_X(v_i)$ and let
\[
e_H(X)=|E(H[X])|
\]
be the number of edges induced by $X$.
Define
\begin{equation}
\label{eq:recognition-objective-function}
f_H(X)
=
-M\sum_{i=1}^{n}a_ix_i
+PMe_H(X).
\end{equation}

\begin{lemma}\label{lem:recognition-objective-gap}
The function $f_H$ has the following properties.
\begin{enumerate}[label=\textup{(\roman*)}]
\item If $X$ is not independent, then some $v_i\in X$ satisfies
\[
f_H(X\setminus\{v_i\})\leq f_H(X)-M.
\]
\item If $S$ is a maximum independent set, then
\begin{equation}\label{eq:recognition-objective-gap-yes}
f_H(X)\geq f_H(S)+M
\qquad
\text{for every }X\neq S.
\end{equation}
\item If $S$ is not a maximum independent set, then there is an independent set $I$ such that
\begin{equation}\label{eq:recognition-objective-gap-no}
f_H(I)\leq f_H(S)-M.
\end{equation}
\end{enumerate}
\end{lemma}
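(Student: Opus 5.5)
Since $f_H$ is $-M\sum_i a_ix_i$ plus $PM$ times the number of induced edges, I will compare values through exact changes when one vertex is removed or when one independent set replaces another. Throughout I use $A=n+1$, $P=A+2$, and $a_i\in\{A,A+1\}$ from \cref{eq:verification-scales}.

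For (i), choose $v_i\in X$ with $r\geq1$ neighbors in $X$. Removing $v_i$ adds $Ma_i$ and removes $r$ induced edges, so the change is $Ma_i-PMr\leq M(A+1)-M(A+2)=-M$.

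For (ii) and (iii), first handle independent sets. For independent $I$ we have $f_H(I)=-M\sum_{v_i\in I}a_i=-M(A|I|+|I\cap S|)$. Since $0\leq|I\cap S|\leq n<A$, the quantity $A|I|+|I\cap S|$ orders independent sets first by cardinality and then by overlap with $S$. Suppose $S$ is maximum and $I\neq S$ is independent. If $|I|<|S|$, the difference $f_H(I)-f_H(S)=M\bigl(A(|S|-|I|)+|S|-|I\cap S|\bigr)$ is at least $M(A-n)\geq M$. If $|I|=|S|$, then $I\neq S$ forces $|I\cap S|\leq|S|-1$, so the difference is at least $M$.

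Non-independent $X$ need one more step, and this is the step I expect to be the real obstacle: a naive bound only gives $f_H(X)\geq f_H(S)$ after reduction. To fix this, apply (i) repeatedly until reaching an independent $I\subseteq X$, which gives $f_H(X)\geq f_H(I)+M$. If $I=S$ this is already the claim; otherwise $f_H(I)\geq f_H(S)+M$, which is even stronger. This proves (ii).

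For (iii), let $S$ be independent but not maximum, and take a maximum independent set $I$ with $|I|\geq|S|+1$. Then $f_H(S)-f_H(I)=M\bigl(A(|I|-|S|)+|I\cap S|-|S|\bigr)\geq M(A-n)=M$, since $|S|-|I\cap S|\leq n$. This proves \cref{eq:recognition-objective-gap-no}. The only care needed anywhere is the choice $A=n+1>n$, which lets the $S$-bonus break ties among same-size sets without outweighing a cardinality difference.
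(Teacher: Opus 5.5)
Your proof is correct and follows the paper's argument essentially step for step. You use the same one-vertex removal bound $Ma_i-PMr\leq-M$ for (i), the same formula $f_H(I)=-M(A|I|+|I\cap S|)$ with the split into $|I|<|S|$ and $|I|=|S|$ for (ii), and the same cardinality comparison using $A=n+1>|S|$ for (iii), with the only difference being that you spell out the reduction of a nonindependent $X$ to an independent subset that the paper compresses into ``together with (i)''.
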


\begin{proof}
Suppose that $X$ is not independent, and choose $v_i\in X$ with $r\geq1$ neighbors in $X$.
Then
\[
f_H(X\setminus\{v_i\})-f_H(X)
=
Ma_i-PMr
\leq
M(A+1-P)
=-M,
\]
which proves (i).
Repeatedly applying (i) produces an independent subset whose value under $f_H$ is at least $M$ smaller whenever the original set is not independent.

For an independent set $I$, \cref{eq:recognition-objective-function} gives
\begin{equation}\label{eq:recognition-independent-value}
f_H(I)
=
-M\bigl(A|I|+|I\cap S|\bigr).
\end{equation}
Suppose that $S$ is maximum.
If $I\neq S$ is independent and $|I|<|S|$, then
\[
A|S|+|S|-A|I|-|I\cap S|
\geq A.
\]
If $|I|=|S|$, then $|I\cap S|\leq|S|-1$, so the same difference is at least $1$.
Together with (i), this proves (ii).

If $S$ is not maximum, choose an independent set $I$ with $|I|\geq|S|+1$.
Since $A=n+1$ and $|S|\leq n$,
\[
A|I|+|I\cap S|-A|S|-|S|
\geq A-|S|
\geq1.
\]
By \cref{eq:recognition-independent-value}, this proves (iii).
\end{proof}

\subsubsection{Kemeny consensus recognition}\label{sec:consensus-verification}

\begin{proof}[Proof of \cref{thm:three-ranking-optimality}(ii)]
A strictly better aggregate certifies a no-instance, so the problem belongs to coNP\@.
The all\nobreakdash-$2$\nobreakdash-to\nobreakdash-$1$ condition is checkable directly from the three rankings.

For coNP-hardness, retain the labeling $v_1,\dots,v_n$, the quantities $n,m$, and the parameters in \cref{eq:verification-scales}.
Apply the construction in \cref{sec:general-block-construction} with the following block sizes.
For every $i=1,\dots,n$, let
\[
|B_{i,1}|=M+a_i,
\qquad
|B_{i,r}|=M\quad(r=2,\dots,6),
\]
and let every padding block have size $M$.
For every edge $e\in E$, let the edge-candidate block $P_e$ have size $P$.
Denote the resulting profile by $\Pi_H$.
By \cref{lem:general-block-construction}, every pair is split $2$\nobreakdash-to\nobreakdash-$1$, and the majority tournament contains a directed triangle and has majority dimension exactly $3$.

By \cref{eq:all-two-to-one-score}, minimizing the Kemeny objective is equivalent to minimizing the number of backward majority arcs.
For $X\subseteq V$, let $\Gamma_X$ be the block order defined in \cref{eq:general-block-order}.
Put
\[
\beta_{\mathrm{fix}}
=
2nM^2+M\sum_{i=1}^{n}a_i.
\]
The contribution among the block candidates in $\Gamma_X$ is
\begin{equation}
\label{eq:verification-fixed-cost}
\beta_{\mathrm{fix}}
-M\sum_{i=1}^{n}a_ix_i.
\end{equation}
This follows because the six vertex blocks for $v_i$ contribute $2M^2+a_iM$ when ordered as $\omega_0$ and $2M^2$ when ordered as $\omega_1$.

We next bound the contribution of one member of an edge-candidate block against the block candidates.
Let $\phi_e(X)$ be the minimum of this contribution over the gaps of $\Gamma_X$.
Let $\phi_e^{\mathrm{ref}}(X)$ be the corresponding minimum when every vertex block, including $B_{i,1}$, has size exactly $M$.

The insertion-cost calculation in the proof of \cref{lem:insertion-contribution} simplifies in this equal-size construction.
For $e=\{v_i,v_j\}$ with $i<j$, the minimum attained while one member of $P_e$ passes the six vertex blocks for either endpoint is $-2M+Mx_i$ or $-2M+Mx_j$, relative to a common value $\eta_e$ immediately before the first endpoint group.
The insertion cost does not fall below this common value between the endpoint groups, and it changes monotonically before the first and after the second endpoint group.
Consequently,
\[
\phi_e^{\mathrm{ref}}(X)
=
\eta_e-2M+\min\{Mx_i,Mx_j\}
=
\eta_e-2M+Mx_ix_j.
\]
Subtracting the case $X=\emptyset$ gives
\begin{equation}
\label{eq:verification-equal-size-edge-contribution}
\phi_e^{\mathrm{ref}}(X)-\phi_e^{\mathrm{ref}}(\emptyset)
=
Mx_ix_j.
\end{equation}
At corresponding gaps, the actual and equal-size contributions differ only on the $A_{\mathrm{tot}}$ additional candidates in the blocks $B_{i,1}$.
Therefore
\[
|\phi_e(X)-\phi_e^{\mathrm{ref}}(X)|\leq A_{\mathrm{tot}}.
\]
Writing $c_e=\phi_e(\emptyset)$ and using \cref{eq:verification-equal-size-edge-contribution}, we obtain
\begin{equation}
\label{eq:verification-edge-contribution-bound}
c_e+Mx_ix_j-2A_{\mathrm{tot}}
\leq
\phi_e(X)
\leq
c_e+Mx_ix_j+2A_{\mathrm{tot}}.
\end{equation}

First make every vertex block, padding block, and edge-candidate block contiguous and internally forward-ordered by \cref{lem:block-normalization}.
There are
\[
N_{\mathrm{block}}
=12nM+A_{\mathrm{tot}}
<13(n+m+1)M
\]
block candidates, and \cref{eq:verification-parameter-bounds} gives
\[
Pm<2(n+m+1)^2.
\]
Temporarily delete the edge-candidate blocks.
Order the remaining blocks by increasing vertex-group index, place the six vertex blocks before the padding blocks in each group, and order the padding blocks by their second index.
If the six vertex blocks for some $v_i$ are ordered neither as $\omega_0$ nor as $\omega_1$, replace their order by $\omega_1$.
If any of these operations changes the block sequence, the contribution among block candidates decreases by at least $M^2$, by the same argument as in \cref{lem:fixed-normalization}.
The resulting sequence is $\Gamma_X$ for some $X\subseteq V$.

Reinsert the edge-candidate blocks while preserving their relative order.
At most
\[
PmN_{\mathrm{block}}
<26(n+m+1)^3M
<M^2
\]
comparisons between block candidates and edge candidates can change from forward to backward, where the final inequality follows from $n+m+1\geq2$ and \cref{eq:verification-scales}.
Thus every aggregate can be changed without increasing its backward-arc count so that deleting the edge-candidate blocks leaves the block order $\Gamma_X$ for some $X\subseteq V$.

Put
\[
\beta
=
\beta_{\mathrm{fix}}+P\sum_{e\in E}c_e,
\qquad
R
=
2PmA_{\mathrm{tot}}+P^2\binom m2.
\]
Let $b^\star(X)$ be the minimum backward-arc count among aggregates for which deleting the edge-candidate blocks leaves the block order $\Gamma_X$.
The lower bound in \cref{eq:verification-edge-contribution-bound} and the nonnegativity of the contribution between distinct edge-candidate blocks give
\begin{equation}
\label{eq:verification-score-lower}
b^\star(X)
\geq
\beta+f_H(X)-R.
\end{equation}
Conversely, place every edge-candidate block in a gap minimizing its contribution against the block candidates, order each block internally forward, and order the blocks arbitrarily when they occupy the same gap.
The upper bound in \cref{eq:verification-edge-contribution-bound}, together with the bound $P^2\binom m2$ on candidate pairs belonging to different edge-candidate blocks, gives
\begin{equation}
\label{eq:verification-score-upper}
b^\star(X)
\leq
\beta+f_H(X)+R.
\end{equation}

By \cref{eq:verification-parameter-bounds},
\[
R
<
2\bigl(2(n+m+1)\bigr)(n+m+1)(n+m+1)^2
+
\bigl(2(n+m+1)\bigr)^2\frac{(n+m+1)^2}{2}
=6(n+m+1)^4.
\]
Since $n+m+1\geq2$, \cref{eq:verification-scales} gives
\begin{equation}
\label{eq:verification-separation}
M>4R.
\end{equation}
It remains to construct in polynomial time an aggregate attaining $b^\star(S)$.
The bounds above do not by themselves determine the positions of the edge-candidate blocks when the block order is $\Gamma_S$.

For $v_j\notin S$, let $\gamma_j$ be the gap immediately after $B_{j,2},B_{j,3}$ when the six vertex blocks for $v_j$ are ordered as $\omega_0$.
In the equal-size construction, this gap gives the unique minimum contribution $-2M$, relative to the value before the six vertex blocks.
When the blocks are ordered as $\omega_1$, the corresponding minimum is $-M$.
The insertion cost between the two endpoint vertex groups never falls below its value at the beginning of that interval, and it changes monotonically before the first endpoint and after the second.
Consequently, for an edge $e=\{v_i,v_j\}$ with $i<j$, the following statements hold.
If $v_i\in S$ and $v_j\notin S$, then $\gamma_j$ is the unique minimum in the equal-size construction.
If $v_i,v_j\notin S$, then $\gamma_i$ and $\gamma_j$ are the only minima.
Every other gap has contribution at least $M$ larger for one member of $P_e$.
The first case has this orientation because the vertices of $S$ were assigned the smaller indices.

Changing from equal block sizes to the actual sizes can change the difference between two gaps by at most $2A_{\mathrm{tot}}$.
Moving the entire block $P_e$ can increase its contribution with all other edge-candidate blocks by at most $P^2(m-1)$.
The bounds in \cref{eq:verification-parameter-bounds} give
\[
2A_{\mathrm{tot}}+P(m-1)
<4(n+m+1)^2
<M.
\]
Consequently,
\[
M-2A_{\mathrm{tot}}
>
P(m-1),
\]
and hence
\begin{equation}
\label{eq:verification-endpoint-dominance}
P(M-2A_{\mathrm{tot}})>P^2(m-1).
\end{equation}
If $v_i\in S$ and $v_j\notin S$, moving $P_e$ from any other gap to $\gamma_j$ strictly decreases the total number of backward arcs.
If $v_i,v_j\notin S$, moving $P_e$ from any other gap to either $\gamma_i$ or $\gamma_j$ strictly decreases the total.
Thus, in a Kemeny-optimal aggregate with block order $\Gamma_S$, the block for an edge with one endpoint in $S$ occupies the gap of its endpoint outside $S$, while the block for an edge with both endpoints outside $S$ occupies one of its two endpoint gaps.

Only edges with both endpoints in $V\setminus S$ have two possible endpoint gaps.
For such an edge $e=\{v_i,v_j\}$ with $i<j$, let $y_e=0$ denote placement at $\gamma_i$ and let $y_e=1$ denote placement at $\gamma_j$.
The contribution of $P_e$ against the block candidates gives an explicitly computable unary cost for each value of $y_e$.
The contribution between $P_e$ and every edge-candidate block whose position is fixed is included in this unary cost.

After the endpoint gaps have been chosen, order the edge-candidate blocks at a gap $\gamma_j$ as follows.
First list the blocks $P_{\{v_i,v_j\}}$ assigned to their larger-index endpoint, in increasing order of $i$.
Then list the blocks $P_{\{v_j,v_k\}}$ assigned to their smaller-index endpoint, in increasing order of $k$.
The first two input rankings agree with this order on every pair of blocks at the same gap, so no majority arc between two such blocks is backward.

Consider two edges $e$ and $f$ with both endpoints in $V\setminus S$, and assume that $e$ precedes $f$ in the first input ranking.
For $y_e,y_f\in\{0,1\}$, let $C_{ef}(y_e,y_f)$ be the number of backward candidate pairs between $P_e$ and $P_f$.
Form the $2\times2$ matrix whose entry in row $a$ and column $b$ is $P^{-2}C_{ef}(a,b)$, where $a=y_e$ and $b=y_f$.
Write $e=\{v_i,v_j\}$ and $f=\{v_k,v_\ell\}$ with $i<j$, $k<\ell$, and $e$ preceding $f$ in the first ranking.
A case analysis of the relative order of the four endpoints gives the following matrices.
\[
\begin{array}{c|c}
\text{endpoint pattern}&\bigl(P^{-2}C_{ef}(a,b)\bigr)_{a,b\in\{0,1\}}\\ \hline
j\leq k
&\begin{pmatrix}0&0\\0&0\end{pmatrix}\\[3mm]
i=k,\ \text{or }j=\ell,\ \text{or }i<k<j<\ell
&\begin{pmatrix}0&0\\1&0\end{pmatrix}\\[3mm]
i<k<\ell<j\text{ and }P_e\before_{\pi_3^H}P_f
&\begin{pmatrix}0&0\\1&1\end{pmatrix}\\[3mm]
i<k<\ell<j\text{ and }P_f\before_{\pi_3^H}P_e
&\begin{pmatrix}1&1\\0&0\end{pmatrix}
\end{array}
\]
The first row includes $j=k$.
The second row covers a common smaller-index endpoint, a common larger-index endpoint, and $i<k<j<\ell$.
The last two rows cover $i<k<\ell<j$.
When $i+j=k+\ell$, the fixed order in the third ranking selects one of the last two rows.
Each such matrix satisfies
\[
C_{ef}(0,0)+C_{ef}(1,1)
\leq
C_{ef}(0,1)+C_{ef}(1,0).
\]
For a pairwise cost matrix $(c_{ab})$ satisfying this inequality, put
\[
w=c_{01}+c_{10}-c_{00}-c_{11}\geq0
\]
and write
\[
c_{y_ey_f}
=
c_{00}
+(c_{10}-c_{00})y_e
+(c_{11}-c_{10})y_f
+w(1-y_e)y_f.
\]
Represent $y_e=0$ by placing the variable vertex for $e$ on the source side of a cut and $y_e=1$ by placing it on the sink side.
A directed edge from the vertex for $e$ to the vertex for $f$ with capacity $w$ contributes exactly $w(1-y_e)y_f$ to the cut capacity.
The linear terms in this identity, together with the previously defined costs depending on one variable, are unary functions.
After subtracting the smaller value of each unary function, its two nonnegative values are represented by edges incident with the source and sink.
Interactions between two edge-candidate blocks whose positions are fixed contribute only an additive constant.
Summing these representations gives a network whose cut capacity differs from the contribution depending on the variables $y_e$ by a constant.
Therefore, one minimum $s$--$t$ cut yields an optimal choice of all endpoint gaps.
All capacities have polynomial bit length.

Let $\tau_S$ be the aggregate obtained from $\Gamma_S$ by placing the edge-candidate blocks according to a minimum cut, ordering the blocks at each gap by the rule in the preceding paragraph, and ordering every block internally forward.
The preceding placement argument and the minimum-cut calculation give
\begin{equation}
\label{eq:verification-exact-restricted-minimum}
b_{\Pi_H}(\tau_S)=b^\star(S).
\end{equation}

Suppose first that $S$ is a maximum independent set.
For every $X\neq S$, \cref{eq:verification-score-lower,eq:recognition-objective-gap-yes} give
\[
b^\star(X)
\geq
\beta+f_H(S)+M-R.
\]
By \cref{eq:verification-score-upper,eq:verification-separation},
\[
\beta+f_H(S)+M-R
>
\beta+f_H(S)+R
\geq
b^\star(S).
\]
Together with \cref{eq:verification-exact-restricted-minimum}, this shows that $\tau_S$ is globally Kemeny-optimal.

Conversely, suppose that $S$ is not maximum and let $I$ be as in \cref{eq:recognition-objective-gap-no}.
Then \cref{eq:verification-score-lower,eq:verification-score-upper,eq:verification-separation} give
\[
b^\star(I)
\leq
\beta+f_H(I)+R
<
\beta+f_H(S)-R
\leq
b^\star(S)
=
b_{\Pi_H}(\tau_S).
\]
Thus $\tau_S$ is not Kemeny-optimal.

The construction has $12nM+A_{\mathrm{tot}}+Pm$ candidates.
The three rankings, the minimum-cut network, and the aggregate $\tau_S$ can all be generated in polynomial time, and every network capacity has polynomial bit length.
We have therefore produced in polynomial time a three-ranking all\nobreakdash-$2$\nobreakdash-to\nobreakdash-$1$ profile $\Pi_H$ and an aggregate $\tau_S$ such that $\tau_S$ is Kemeny-optimal if and only if $S$ is a maximum independent set of $H$.
This proves coNP-hardness and completes the proof.
\end{proof}

\subsubsection{Unique Kemeny consensus recognition}\label{sec:unique-consensus-verification}

We first record when uniqueness under block-contiguity and internal-order restrictions implies unrestricted uniqueness.

\begin{lemma}
\label{lem:unique-block-normalization}
Let the candidates of a tournament be partitioned into nonempty blocks such that the members of each block have the same comparison with every candidate outside the block and the subtournament inside each block is transitive.
Suppose there is a unique order minimizing the number of backward arcs among the orders in which every block is contiguous and internally follows its transitive order.
Then this order is the unique order minimizing the number of backward arcs without the block-contiguity restriction.
\end{lemma}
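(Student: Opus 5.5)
Let $\tau$ denote the unique restricted minimizer, and let $\sigma$ be any order that minimizes the number of backward arcs without restriction. The plan is to show that $\sigma=\tau$. We do this by transforming $\sigma$ into a restricted order without increasing the backward-arc count, and then arguing that if $\sigma$ was not already restricted, some step of the transformation produces a second restricted optimum distinct from $\tau$, or else strictly decreases the count.

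First, the unrestricted and restricted optima coincide. Indeed, the procedure of \cref{lem:block-normalization} uses only the two hypotheses: identical outside comparisons and a transitive order inside each block. It turns any order into a restricted one without increasing the count. Hence $\sigma$ has the optimal count $b^\star$, and this count equals $b(\tau)$.

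Now suppose $\sigma$ is not restricted. We treat blocks one at a time, as in that lemma. Fix a block $Q$ and the order of the outside candidates. The cost of $Q$ is $\sum_r f(\gamma_r)$ plus the internal backward arcs of $Q$. Since $\sigma$ is optimal, every member of $Q$ must occupy a gap minimizing $f$, and the internal order must be forward. If $Q$ is not contiguous, two members sit in distinct minimizing gaps $\gamma\neq\gamma'$. We then form two orders, one with all of $Q$ moved to $\gamma$ and one with all of $Q$ moved to $\gamma'$. Both are optimal, and they differ in the relative position of $Q$ and the outside candidates lying between $\gamma$ and $\gamma'$. Next we normalize both of these orders completely. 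The remaining blocks are processed by the same procedure, always choosing minimizing gaps that do not split earlier blocks, as in \cref{lem:block-normalization}. This yields two restricted optimal orders.

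The main obstacle is guaranteeing that these two restricted optima remain distinct, since later moves could collapse the difference. To address this, I would choose the first block to be handled carefully and prove a cleaner local statement instead. We take any restricted optimum obtained from $\sigma$ and then perform a single-block exchange that keeps all other blocks contiguous. Concretely, first normalize every block except $Q$, and at each step ensure that no gap is chosen strictly inside $Q$'s span. This is possible because $f$ is monotone across the interior of a contiguous block, the same observation as in the lemma; the analogous fact gives that the candidates of other blocks can be moved without changing their relative order with respect to the $Q$-members. After this, $Q$ is the only non-contiguous block, and its members lie in at least two minimizing gaps between contiguous blocks. Placing $Q$ wholly at either gap gives two distinct restricted orders of optimal count, contradicting uniqueness. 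If instead $\sigma$ has contiguous blocks but some block is not forward, making it forward strictly decreases the count, which contradicts optimality. Therefore $\sigma$ is restricted, and uniqueness gives $\sigma=\tau$.
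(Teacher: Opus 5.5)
\textbf{There is a genuine gap, at exactly the step you flag as the main obstacle.}

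Your endgame is correct. Suppose every block other than $Q$ is contiguous and forward, the order is still optimal, and $Q$ is still non-contiguous. Then the members of $Q$ occupy at least two distinct minimizing gaps between blocks, and placing $Q$ wholly at either gives two distinct restricted optima.

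What is not established is how you reach that situation. Your claim that the other blocks can be normalized ``without changing their relative order with respect to the $Q$-members'' is false in general. Suppose the order contains $r_1\,q_1\,r_2\,q_2$ with $R=\{r_1,r_2\}$ a block. Any way of making $R$ contiguous while $q_1,q_2$ stay put changes the relative order of $q_1$ and some member of $R$. Monotonicity of $f$ across a contiguous block says nothing about this. Moreover, $Q$ is not contiguous, so there is no ``interior of $Q$'' across which $f_R$ is monotone.

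Your rule of never choosing a gap inside $Q$'s span actually works against you. If $r_2$ is the only candidate between $q_1$ and $q_2$, collapsing $R$ onto the gap of $r_1$ makes $Q$ contiguous. The normalization may then simply end at $\tau$ with nothing to contradict, which is exactly the collapse you wanted to avoid.

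\textbf{The missing idea is to use optimality to choose the collapsing gap.} In an optimal order, every member of every block lies in a minimizing gap of that block's $f$. A gap occupied by a member of $R$ never lies strictly inside another contiguous block. So $R$ can be collapsed, in forward order, onto the gap of \emph{any one of its own members} without changing the count and without splitting earlier blocks.

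Now fix a separator $x\notin Q$ lying between two members of $Q$. When the block containing $x$ is normalized, collapse it onto the gap of $x$. Collapse every other block (other than $Q$) onto the gap of one of its own members. Then $x$ keeps its position relative to all candidates outside its block, so $Q$ stays non-contiguous, the order stays optimal, and your endgame applies.

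The paper avoids this bookkeeping differently. It runs the full normalization, which ends at $\tau$ with every step preserving the optimum, and examines the last step that changes the order. Uniqueness of $\tau$, together with the strict $\pm1$ change of $f_Q$ across each other block, makes the gap of $Q$ in $\tau$ the unique minimizer over all gaps of $\tau\setminus Q$. Optimality then forces the order before that step to be $\tau$ already.
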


\begin{proof}
Let $\tau$ be the unique minimum order among those satisfying the contiguity and internal-order conditions in the statement, and suppose that $\sigma$ is an unrestricted minimum order.
Apply the block-normalization procedure from \cref{lem:block-normalization}, one block at a time, to $\sigma$.
The procedure never increases the backward-arc count and ends in an order satisfying the contiguity and internal-order conditions.
Since $\sigma$ already has the minimum backward-arc count, every normalization step preserves that count, and the final order must be $\tau$.

Assume that some normalization step changes the order, and let the normalization of $Q$ be the last normalization step that changes the order.
Immediately after this step the order is already $\tau$, because no later step changes it.
Immediately before the step, deleting the members of $Q$ therefore leaves the order $\tau\setminus Q$.
For a gap $\gamma$ of $\tau\setminus Q$, let $f_Q(\gamma)$ be the number of backward arcs between one member of $Q$ placed in $\gamma$ and the candidates outside $Q$.
All members of $Q$ have the same external comparisons, so the same function applies to every member.

The gap occupied by $Q$ in $\tau$ is the unique minimum of $f_Q$ among gaps between blocks.
Otherwise, placing $Q$ contiguously and internally forward in another minimizing block gap would produce a second minimum order satisfying the contiguity and internal-order conditions.
Across the interior gaps of another block, $f_Q$ changes by the same nonzero amount at every crossing.
Consequently, every interior gap has larger value than one of the two adjacent block gaps and cannot be a global minimum.
Thus the gap of $Q$ in $\tau$ is the unique minimum among all gaps of $\tau\setminus Q$.

Before the last changing step, suppose that the members of $Q$ occupy gaps $\gamma_1,\dots,\gamma_{|Q|}$ of $\tau\setminus Q$.
Their external contribution is
\[
\sum_{r=1}^{|Q|}f_Q(\gamma_r),
\]
and their internal contribution is nonnegative, with equality only in the transitive internal order.
If their arrangement differs from the arrangement in $\tau$, either one member occupies a nonminimum gap or the internal order is not transitive.
Normalizing $Q$ would then decrease the backward-arc count strictly, contradicting the optimality of $\sigma$.
Hence no normalization step changes the order, and $\sigma=\tau$.
\end{proof}

\begin{proof}[Proof of \cref{thm:three-ranking-optimality}(ii)]
A distinct aggregate with no larger objective value certifies a no-instance, so the problem belongs to coNP\@.
The all\nobreakdash-$2$\nobreakdash-to\nobreakdash-$1$ condition is checkable as in the preceding proof.

For coNP-hardness, retain the labeling $v_1,\dots,v_n$, the quantities $n,m$, and the parameters in \cref{eq:verification-scales}.
In addition, put
\begin{equation}
\label{eq:unique-verification-scales}
\zeta=(n+m+1)^2,
\qquad
N_{\mathrm{add}}=A_{\mathrm{tot}}+n\zeta.
\end{equation}
The quantity $N_{\mathrm{add}}$ is the total number of candidates added to the blocks $B_{i,1}$ and $B_{i,2}$ beyond their common size $M$.
Since $n\leq n+m$ and $A_{\mathrm{tot}}<(n+m+1)^2$,
\begin{equation}
\label{eq:unique-verification-parameter-bound}
N_{\mathrm{add}}<(n+m+1)^3.
\end{equation}
Apply the construction in \cref{sec:general-block-construction} with the following block sizes.
For every $i=1,\dots,n$, let
\begin{equation}
\label{eq:unique-verification-block-sizes}
|B_{i,1}|=M+a_i,
\qquad
|B_{i,2}|=M+\zeta,
\qquad
|B_{i,r}|=M\quad(r=3,4,5,6),
\end{equation}
and let every padding block have size $M$.
For every edge $e\in E$, let the edge-candidate block $P_e$ have size $P$.
Denote the resulting profile by $\Pi_H$.
By \cref{lem:general-block-construction}, every pair is split $2$\nobreakdash-to\nobreakdash-$1$, and the majority tournament contains a directed triangle and has majority dimension exactly $3$.
By \cref{eq:all-two-to-one-score}, minimizing the Kemeny objective is equivalent to minimizing the number of backward majority arcs.

For $X\subseteq V$, let $\Gamma_X$ be the block order defined in \cref{eq:general-block-order}.
None of the two backward block arcs under $\omega_0$ or $\omega_1$ has an endpoint with label $2$, so the additional $\zeta$ candidates in $B_{i,2}$ do not change either contribution.
Put
\[
\beta_{\mathrm{fix}}
=
2nM^2+M\sum_{i=1}^{n}a_i.
\]
The contribution among the block candidates in $\Gamma_X$ is therefore
\begin{equation}
\label{eq:unique-verification-fixed-cost}
\beta_{\mathrm{fix}}
-M\sum_{i=1}^{n}a_ix_i.
\end{equation}

We next compare the contribution of one member of an edge-candidate block with the corresponding contribution when all vertex and padding blocks have size $M$.
For a block gap $\gamma$ of $\Gamma_X$, let $\Phi_e^X(\gamma)$ be the contribution of one member of $P_e$ against the block candidates, and put $\phi_e(X)=\min_{\gamma}\Phi_e^X(\gamma)$.
Let $\Phi_e^{\mathrm{ref},X}(\gamma)$ and $\phi_e^{\mathrm{ref}}(X)$ denote the corresponding quantities in the equal-size reference construction.

\begin{claim}
\label{claim:unique-verification-equal-size}
Let $e=\{v_i,v_j\}$ with $i<j$.
The minimum insertion cost in the equal-size construction is attained only within the vertex groups for $v_i$ and $v_j$.
Relative to the common value immediately before either endpoint group, the two endpoint minima are
\[
-2M+Mx_i
\qquad\text{and}\qquad
-2M+Mx_j.
\]
Every other gap has insertion cost at least $M$ larger than the global minimum.
Consequently,
\begin{equation}
\label{eq:unique-verification-equal-size-edge-contribution}
\phi_e^{\mathrm{ref}}(X)-\phi_e^{\mathrm{ref}}(\emptyset)
=
Mx_ix_j.
\end{equation}
If $x_i=x_j=0$, the two endpoint minima are the only minima.
If $(x_i,x_j)=(1,0)$, the minimum in the vertex group for $v_j$ is unique.
\end{claim}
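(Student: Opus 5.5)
We follow the gap-by-gap scan from the proof of \cref{lem:insertion-contribution}, now with every vertex and padding block of size exactly $M$. Fix $e=\{v_i,v_j\}$ with $i<j$ and write $k_e=j-i$. Let $\eta_e$ be the insertion cost immediately before $\mathcal B_i$.

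The scan has five ranges.

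\emph{Before $\mathcal B_i$.} Every block has majority arc into $p_e$, so the cost strictly decreases block by block. Every gap there is at least $M$ above $\eta_e$.

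\emph{Passing $\mathcal B_i$.} The signs are $-$ for labels $1,2,3$ and $+$ for labels $4,5,6$, by $\pi_1$. Under $\omega_0=234516$ the cumulative values are $-M,-2M,-M,0,-M,0$. The minimum $-2M$ is attained at a single gap, and every other gap in this range is at least $M$ higher. Under $\omega_1=245163$ the values are $-M,0,M,0,M,0$. The minimum $-M$ is attained only at the gap after $B_{i,2}$.

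With equal sizes the total change over $\mathcal B_i$ is $0$, not $-2$. The cost immediately after $\mathcal B_i$ is therefore $\eta_e$.

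\emph{Between $\mathcal B_i$ and $\mathcal B_j$.} Repeat the odd/even analysis of $k_e$ from \cref{lem:insertion-contribution}, with no extra candidates. All cost-increasing blocks come before all cost-decreasing ones, except inside the middle vertex group when $k_e$ is even. Even there the cost stays at least $3M$ above $\eta_e$. So every intermediate gap has cost at least $\eta_e$, and the cost returns to exactly $\eta_e$ before $\mathcal B_j$; the parity correction $\varepsilon_e$ vanishes.

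\emph{Passing $\mathcal B_j$ and after it.} By $\pi_2$ the signs are $-$ for labels $6,3,2$ and $+$ for labels $5,4,1$. The two cases give:
\begin{itemize}
\item under $\omega_0$, the values $-M,-2M,-M,0,-M,0$, with a unique minimum $-2M$ after $B_{j,2}B_{j,3}$;
\item under $\omega_1$, the values $-M,0,M,0,-M,0$, with minimum $-M$ attained twice, after $B_{j,2}$ and after $B_{j,6}$.
\end{itemize}
After $\mathcal B_j$ the cost increases strictly.

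Combining the ranges, the global minimum is
\[
\eta_e-2M+\min\{Mx_i,Mx_j\}=\eta_e-2M+Mx_ix_j,
\]
which gives \cref{eq:unique-verification-equal-size-edge-contribution}. Every gap outside the two endpoint groups is at least $\eta_e$, hence at least $M$ above this minimum.

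For the uniqueness assertions, take first $x_i=x_j=0$. Each endpoint group contains exactly one gap of value $-2M$, and every other gap in these groups is at least $-M$. The two endpoint minima are therefore the only minima. Now take $(x_i,x_j)=(1,0)$. The $v_i$ group has minimum $-M$, while the $v_j$ group has the unique value $-2M$. So the minimum in the $v_j$ group is the unique global minimum, and every other gap is at least $M$ higher.

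The claim that every non-minimal gap is at least $M$ above the minimum needs care when $x_i=x_j=1$. There the minimum $-M$ is attained twice inside $\mathcal B_j$, and several gaps in the endpoint groups sit at $0$ or $\pm M$. We read the claim as applying to gaps outside the endpoint groups together with the non-minimizing gaps listed above. The main care point is the even-$k_e$ middle group: its minimum cumulative change is $-3M$ against a preceding rise of $(12r-6)M\geq6M$, so it stays well above $\eta_e$.
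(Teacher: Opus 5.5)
Your argument is the paper's own five-range scan. The conclusions the claim needs are correct:
\begin{itemize}
\item the endpoint minima $-2M+Mx_i$ and $-2M+Mx_j$;
\item the lower bound $\eta_e$ between the endpoint groups and outside them, including the even-$k_e$ middle group;
\item the identity $\phi_e^{\mathrm{ref}}(X)-\phi_e^{\mathrm{ref}}(\emptyset)=Mx_ix_j$;
\item both uniqueness statements.
\end{itemize}

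However, your cumulative values for the group of $v_j$ do not follow the sign rule you state there, namely $-$ for labels $6,3,2$ and $+$ for labels $5,4,1$. Both of your sequences treat label $1$ as decreasing. They also treat label $6$ (under $\omega_0$) or label $3$ (under $\omega_1$) as increasing. Reading $\omega_0=234516$ and $\omega_1=245163$ with the correct signs gives $-M,-2M,-M,0,M,0$ and $-M,0,M,2M,M,0$, respectively.

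Under $\omega_1$, the minimum $-M$ is therefore attained only after $B_{j,2}$, not twice. Each endpoint group has a unique minimizing gap under both $\omega_0$ and $\omega_1$, exactly as the paper asserts.

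The hedge in your last paragraph is therefore unfounded. For every $(x_i,x_j)$, every gap not attaining the global minimum, inside or outside the endpoint groups, lies at least $M$ above it. The claim holds as stated, with no reinterpretation needed.

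The slip is harmless to your uniqueness conclusions. Those use only the $\omega_0$ minimum $-2M$ at $v_j$, and your sequence gets that minimum and its position after $B_{j,2}B_{j,3}$ right.
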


\begin{proof}
Before the vertex group for $v_i$, moving one member of $P_e$ to the right decreases its contribution at every block crossing, and after the vertex group for $v_j$ it increases the contribution at every crossing.
Within the group for $v_i$, moving past the blocks with labels $1,2,3$ decreases the contribution, while moving past the blocks with labels $4,5,6$ increases it.
Reading the orders $\omega_0$ and $\omega_1$ gives unique minima $-2M$ and $-M$, respectively, relative to the value before the six vertex blocks, and the total change across the group is zero.
Within the group for $v_j$, moving past the blocks with labels $6,3,2$ decreases the contribution, while moving past those with labels $5,4,1$ increases it.
This gives the same two minima and again total change zero.

Between the two endpoint groups, the midpoint position divides the intervening blocks equally between those that increase the contribution and those that decrease it.
When $j-i$ is odd, every block that increases the contribution precedes every block that decreases it in $\Gamma_X$.
When $j-i$ is even, only the vertex group containing the midpoint can interleave the two types.
Before that group, the contribution is at least $6M$ above its value at the beginning of the interval, and at most three decreasing crossings can occur before the corresponding increasing crossings.
Thus the insertion cost never falls below its value at the beginning of the interval and has the same value at the end.
Every gap outside the endpoint minima is therefore at least $M$ above the global minimum.
Finally,
\[
\min\{-2M+Mx_i,-2M+Mx_j\}
=
-2M+Mx_ix_j,
\]
which proves the identity and the uniqueness statements.
\end{proof}

At corresponding gaps, the actual and equal-size contributions differ only on the $N_{\mathrm{add}}$ additional candidates in the blocks with labels $1$ and $2$.
Thus
\[
|\Phi_e^X(\gamma)-\Phi_e^{\mathrm{ref},X}(\gamma)|\leq N_{\mathrm{add}}
\]
for every gap $\gamma$, and consequently
\[
|\phi_e(X)-\phi_e^{\mathrm{ref}}(X)|\leq N_{\mathrm{add}}.
\]
Writing $c_e=\phi_e(\emptyset)$ gives
\begin{equation}
\label{eq:unique-verification-edge-contribution-bound}
c_e+Mx_ix_j-2N_{\mathrm{add}}
\leq
\phi_e(X)
\leq
c_e+Mx_ix_j+2N_{\mathrm{add}}.
\end{equation}

Make every vertex block, padding block, and edge-candidate block contiguous and internally forward-ordered by \cref{lem:block-normalization}.
Temporarily delete the edge-candidate blocks.
Order the remaining blocks by increasing vertex-group index, place the six vertex blocks before the padding blocks in each group, and order the padding blocks by their second index.
If the six vertex blocks for some $v_i$ are ordered neither as $\omega_0$ nor as $\omega_1$, replace their order by $\omega_1$.
Every changed pair of vertex groups, every changed vertex-block--padding-block pair, and every changed padding-block pair decreases the number of backward arcs by at least $M^2$.
A vertex-block order other than $\omega_0$ and $\omega_1$ has at least three backward block arcs by \cref{lem:local-states}, whereas $\omega_1$ contributes exactly $2M^2$ because its two backward arcs involve only blocks of size $M$.
Thus, if the block sequence changes, its contribution decreases by at least $M^2$, and the resulting sequence is $\Gamma_X$ for some $X\subseteq V$.

There are
\[
N_{\mathrm{block}}
=12nM+N_{\mathrm{add}}
<13(n+m+1)M
\]
block candidates, and \cref{eq:verification-parameter-bounds} gives $Pm<2(n+m+1)^2$.
Reinserting the edge-candidate blocks while preserving their relative order can increase the contribution by at most
\[
PmN_{\mathrm{block}}
<26(n+m+1)^3M
<M^2,
\]
where the final inequality follows from $n+m+1\geq2$ and \cref{eq:verification-scales}.
Consequently, the normalization strictly decreases the backward-arc count whenever deleting the edge-candidate blocks from the original aggregate does not leave $\Gamma_X$ for some $X\subseteq V$.

Put
\[
\beta
=
\beta_{\mathrm{fix}}+P\sum_{e\in E}c_e,
\qquad
R
=
2PmN_{\mathrm{add}}+P^2\binom m2.
\]
Let $b^\star(X)$ be the minimum backward-arc count among aggregates in which every vertex block, padding block, and edge-candidate block is contiguous and internally forward-ordered and deleting the edge-candidate blocks leaves $\Gamma_X$.
Then \cref{eq:unique-verification-edge-contribution-bound} gives
\begin{equation}
\label{eq:unique-verification-score-bounds}
\beta+f_H(X)-R
\leq
b^\star(X)
\leq
\beta+f_H(X)+R.
\end{equation}

By \cref{eq:verification-parameter-bounds,eq:unique-verification-parameter-bound},
\[
R
<
2\bigl(2(n+m+1)\bigr)(n+m+1)(n+m+1)^3
+
\bigl(2(n+m+1)\bigr)^2\frac{(n+m+1)^2}{2}
<5(n+m+1)^5.
\]
Thus \cref{eq:verification-scales} gives
\begin{equation}
\label{eq:unique-verification-error-separation}
M>4R.
\end{equation}

It remains to define an aggregate $\tau_S$ satisfying these contiguity and internal-order conditions and to prove that it is the unique order of minimum Kemeny score subject to those conditions when $S$ is maximum.
For each vertex $v_j\notin S$, let $\gamma_j$ be the gap immediately after $B_{j,2},B_{j,3}$ when the six vertex blocks for $v_j$ are ordered as $\omega_0$ in $\Gamma_S$.
Every edge $e=\{v_i,v_j\}$, written with $i<j$, satisfies $v_j\notin S$ because the vertices of $S$ have the smaller indices and $S$ is independent.
We will place the edge-candidate block $P_e$ in $\gamma_j$.

\begin{claim}
\label{claim:unique-verification-endpoints}
Fix $e=\{v_i,v_j\}$ with $i<j$.
Among the gaps of $\Gamma_S$, $\gamma_j$ is the unique position of $P_e$ in an order of minimum Kemeny score subject to the contiguity and internal-order conditions.
More precisely, moving $P_e$ from any other gap to $\gamma_j$ decreases its contribution with the block candidates by more than its contribution with all other edge-candidate blocks can increase.
\end{claim}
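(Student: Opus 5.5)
The plan is to combine the equal-size analysis of \cref{claim:unique-verification-equal-size} with a perturbation bound for the added candidates, and then dominate the interactions between edge-candidate blocks. Fix $e=\{v_i,v_j\}$ with $i<j$. Since $S$ is independent and the vertices of $S$ carry the smaller indices, $v_j\notin S$. Hence $x_j=0$ in $\Gamma_S$, and either $x_i=1$ (when $v_i\in S$) or $x_i=0$ (when $v_i\notin S$).

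First I locate the minima in the reference construction. In the equal-size setting, reading $\omega_0=234516$ for the group of $v_j$ with signs $-$ for labels $6,3,2$ and $+$ for $5,4,1$ gives cumulative changes $-M,-2M,-M,0,+M,0$. So the unique minimum inside that group is the gap after $B_{j,2},B_{j,3}$, which is $\gamma_j$.

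\begin{itemize}
\item If $x_i=1$, \cref{claim:unique-verification-equal-size} says the minimum in the group of $v_j$ is the unique global minimum. Every other gap is at least $M$ above it: gaps outside the endpoint groups by the claim, and gaps inside the group of $v_i$ because that group's minimum is $-M$ versus $-2M$.
\item If $x_i=0$, the two endpoint minima tie in the reference construction. The tie must then be broken by the actual block sizes.
\end{itemize}

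Next I pass to the actual sizes. Every gap's cost changes by at most $N_{\mathrm{add}}$ relative to the reference, so in the case $x_i=1$ any other gap is at least $M-2N_{\mathrm{add}}$ worse per member of $P_e$.

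In the case $x_i=0$, I would compute exactly how the extra $\zeta$ candidates in $B_{i,2}$ and $B_{j,2}$ shift the two tied minima.
\begin{itemize}
\item At $\gamma_i$ (after $B_{i,2},B_{i,3}$ in $\omega_0$), the member of $P_e$ has passed $B_{i,2}$. Label $2$ decreases cost at $v_i$, so $\gamma_i$ gains $-\zeta$.
\item At $v_j$, label $2$ also decreases cost, so $\gamma_j$ also gains $-\zeta$. However, the reference value before the group of $v_j$ includes the net effect of fully passing the group of $v_i$, and that now carries the extra $-\zeta$ from $B_{i,2}$ as well.
\end{itemize}
So relative to the reference, $\gamma_j$ is lower by $2\zeta$ while $\gamma_i$ is lower by $\zeta$. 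The $a_i$ extras in $B_{i,1}$ (label $1$ at $v_i$, which $\omega_0$ places after $\gamma_i$) contribute $-a_i$ to $\gamma_j$ only. The $a_h$ extras for intervening groups, and for $B_{j,1}$ (placed after $\gamma_j$), contribute net changes bounded by $A_{\mathrm{tot}}$. The intervening $B_{h,2}$ blocks contribute zero net by the midpoint balancing, up to at most one unmatched block, bounded by $\zeta$.

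This is the main obstacle. I have to verify carefully, as in the odd/even analysis of \cref{lem:insertion-contribution}, that the net difference $\Phi(\gamma_i)-\Phi(\gamma_j)$ is at least $\zeta-A_{\mathrm{tot}}$ or similar. That is, the $B_{i,2}$ surplus is not cancelled by an unmatched intervening $B_{h,2}$ in the even case. If it is cancelled, I would fall back on the fact that $B_{i,2}$ lies before the midpoint and recount using $\mu_e$. Every gap other than $\gamma_i,\gamma_j$ remains at least $M-2N_{\mathrm{add}}$ worse.

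Finally, I compare with the edge-block interactions. Moving the whole block $P_e$ changes its comparisons with other edge-candidate blocks by at most $P^2(m-1)$. The gain against block candidates is $P$ times the per-member gap established above, either $M-2N_{\mathrm{add}}$ or about $\zeta-A_{\mathrm{tot}}$. It then suffices to check $P(\zeta-A_{\mathrm{tot}})>P^2(m-1)$, which holds since $\zeta=(n+m+1)^2$ exceeds $A_{\mathrm{tot}}+P(m-1)$ by \cref{eq:verification-parameter-bounds}, and similarly $M-2N_{\mathrm{add}}>P(m-1)$ by \cref{eq:unique-verification-parameter-bound}. Thus $\gamma_j$ is the strict unique optimal gap for $P_e$, independently of where the other edge-candidate blocks lie.
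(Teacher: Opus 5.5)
Your outline is the same as the paper's: reduce to the endpoint gaps via \cref{claim:unique-verification-equal-size}, treat the extra candidates as a perturbation of at most $2N_{\mathrm{add}}$, break the tie between $\gamma_i$ and $\gamma_j$ with the $\zeta$ extras in the blocks $B_{h,2}$, and dominate the edge--edge interactions $P^2(m-1)$. The case $v_i\in S$ is handled correctly.

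\textbf{Main gap.} In the case $v_i\notin S$, the decisive inequality $\Phi_e^S(\gamma_j)\le\Phi_e^S(\gamma_i)-(\zeta-A_{\mathrm{tot}})$ is left open. You concede that for even $j-i$ an unmatched intervening $B_{h,2}$ might cancel the $\zeta$ advantage. If it did, the difference could be as small as $-A_{\mathrm{tot}}$, and uniqueness of $\gamma_j$ would not follow.

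Your fallback does not resolve this. $B_{i,2}$ precedes both $\gamma_i$ and $\gamma_j$ in $\Gamma_S$, so it shifts both gaps equally and never enters their difference. Moreover, its comparison with $P_e$ is fixed by $\pi_1^H$, not by $\mu_e$. The $\zeta$ advantage in your own bookkeeping actually comes from $B_{j,2}$. That block lies between the two gaps and satisfies $B_{j,2}\to P_e$, since it precedes $P_e$ in both $\pi_2^H$ and $\pi_3^H$.

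The missing computation is short:
\begin{itemize}
\item $B_{h,2}$ is $Q_{12h-8}$ in $\mathcal Q$. For $i<h<j$, crossing it toward $\gamma_j$ increases the cost exactly when $12h-8<\mu_e=6(i+j)-9$.
\item For $j-i=2r+1$, the intervening blocks split $r$ increasing against $r$ decreasing.
\item For $j-i=2r$, there are $r-1$ increasing and $r$ decreasing blocks. The unmatched one is $B_{i+r,2}=Q_{\mu_e+1}$, which lies after the midpoint gap and therefore favours $\gamma_j$.
\end{itemize}
Together with $B_{j,2}$, this shows $\gamma_j$ beats $\gamma_i$ by $\zeta$ (odd $j-i$) or $2\zeta$ (even $j-i$). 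From this you subtract at most $A_{\mathrm{tot}}$ for the extras in $B_{h,1}$ with $i\le h<j$.

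\textbf{Numerical slip.} Your final check that $\zeta>A_{\mathrm{tot}}+P(m-1)$ does not follow from \cref{eq:verification-parameter-bounds}. Those bounds give only $A_{\mathrm{tot}}<(n+m+1)^2=\zeta$ and $P\le 2(n+m+1)$, which leaves no room for the term $P(m-1)$. You need the exact values $A_{\mathrm{tot}}\le n(n+2)$ and $P=n+3$. These give $\zeta-A_{\mathrm{tot}}-P(m-1)\ge m^2+nm+n-m+4>0$. The companion bound $M-2N_{\mathrm{add}}>P(m-1)$ is fine as you state it.
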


\begin{proof}
Suppose first that $v_i\in S$.
The six vertex blocks for $v_i$ and $v_j$ are ordered as $\omega_1$ and $\omega_0$, respectively.
By \cref{claim:unique-verification-equal-size}, $\gamma_j$ is the unique minimum position in the equal-size construction, and every other gap increases the contribution of one member of $P_e$ by at least $M$.
The additional candidates in the blocks $B_{h,1}$ and $B_{h,2}$ can change the difference between two gap contributions by at most $2N_{\mathrm{add}}$.
Thus, in the present construction, placing one member of $P_e$ in $\gamma_j$ instead of any other gap decreases its contribution with the block candidates by at least $M-2N_{\mathrm{add}}$.

Now suppose that $v_i\notin S$.
The six vertex blocks for both endpoints are ordered as $\omega_0$.
By \cref{claim:unique-verification-equal-size}, $\gamma_i$ and $\gamma_j$ are the only minimum positions in the equal-size construction, and every other gap has contribution at least $M$ larger.
It remains to compare $\gamma_i$ and $\gamma_j$ after the block sizes are changed.
Write $d=j-i$.
The block $B_{h,2}$ occupies position $12h-8$ in the sequence $\mathcal Q$ from \cref{sec:general-block-construction}.
Moving one member of $P_e$ from $\gamma_i$ toward $\gamma_j$ past this block increases its contribution exactly when $12h-8<\mu_e=6(i+j)-9$ and decreases it otherwise.
Moving from $\gamma_i$ to $\gamma_j$ passes the blocks $B_{h,2}$ for $h=i+1,\dots,j$.
If $d=2r+1$, exactly $r$ of these indices satisfy the displayed inequality and $r+1$ do not.
If $d=2r$, exactly $r-1$ satisfy it and $r+1$ do not.
Consequently, the additional $\zeta$ candidates in each block $B_{h,2}$ decrease the contribution at $\gamma_j$ relative to $\gamma_i$ by $\zeta$ when $d$ is odd and by $2\zeta$ when $d$ is even.
The additional candidates in the blocks $B_{h,1}$ change this difference by at most $A_{\mathrm{tot}}$ in absolute value.
Therefore
\[
\Phi_e^S(\gamma_j)
\leq
\Phi_e^S(\gamma_i)-(\zeta-A_{\mathrm{tot}}).
\]
Every gap other than $\gamma_i$ and $\gamma_j$ remains at least $M-2N_{\mathrm{add}}$ above one of these two endpoint gaps.

Put $g=\min\{M-2N_{\mathrm{add}},\zeta-A_{\mathrm{tot}}\}$.
The definitions and $A_{\mathrm{tot}}\leq n(n+2)$ give
\[
\zeta-A_{\mathrm{tot}}-P(m-1)
\geq
(n+m+1)^2-n(n+2)-(n+3)(m-1)
=
m^2+nm+n-m+4
>0.
\]
Moreover, \cref{eq:verification-scales,eq:verification-parameter-bounds,eq:unique-verification-parameter-bound} give
\[
M-2N_{\mathrm{add}}
>
20(n+m+1)^5-2(n+m+1)^3
>
2(n+m+1)^2
>
P(m-1).
\]
Hence $g>P(m-1)$.
Moving all $P$ members of $P_e$ to $\gamma_j$ decreases their contribution with the block candidates by at least $Pg$.
Their contribution with the $P(m-1)$ candidates in the other edge-candidate blocks can increase by at most $P^2(m-1)$.
Since $Pg>P^2(m-1)$, the total number of backward majority arcs decreases strictly.
\end{proof}

Define $\tau_S$ as follows.
Deleting the edge-candidate blocks from $\tau_S$ leaves the block order $\Gamma_S$.
For every edge $e=\{v_i,v_j\}$ with $i<j$, place $P_e$ in $\gamma_j$.
Among the edge-candidate blocks placed in one gap $\gamma_j$, order $P_{\{v_i,v_j\}}$ by increasing $i$, and order the candidates inside every block in their forward order.
The first two input rankings agree with this relative order of the edge-candidate blocks placed in the same gap.

\begin{claim}
\label{claim:unique-verification-block-minimum}
If $S$ is a maximum independent set of $H$, then $\tau_S$ is the unique order of minimum Kemeny score among the orders satisfying the contiguity and internal-order conditions.
\end{claim}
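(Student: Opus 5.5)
Let $\sigma$ be any order that satisfies the contiguity and internal-order conditions and has minimum Kemeny score among such orders. The plan is to show $\sigma=\tau_S$ in three stages: the block order, then the gaps of the edge-candidate blocks, then their relative order within each gap. By \cref{eq:all-two-to-one-score} it suffices to compare backward-arc counts throughout.

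\textbf{Stage 1: the block order.} By the normalization argument preceding \cref{eq:unique-verification-score-bounds}, if deleting the edge-candidate blocks from $\sigma$ did not leave some $\Gamma_X$, that normalization would strictly decrease the count, contradicting minimality. So $\sigma$ leaves $\Gamma_X$ for some $X$. If $X\neq S$, then \cref{eq:recognition-objective-gap-yes} and \cref{eq:unique-verification-score-bounds} give
\[
b(\sigma)\ge\beta+f_H(S)+M-R>\beta+f_H(S)+R\ge b^\star(S)
\]
by \cref{eq:unique-verification-error-separation}. Since $b^\star(S)$ is attained by an order satisfying the conditions, this contradicts minimality. Hence deleting the edge-candidate blocks from $\sigma$ leaves exactly $\Gamma_S$.

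\textbf{Stage 2: the gaps.} Each edge-candidate block $P_e$ is contiguous in $\sigma$, so it occupies a single gap of $\Gamma_S$ (gaps inside a vertex or padding block are excluded by contiguity). If some $P_e$ with $e=\{v_i,v_j\}$, $i<j$, occupied a gap other than $\gamma_j$, \cref{claim:unique-verification-endpoints} says that moving it to $\gamma_j$ strictly decreases the total count. One must place it at a suitable point among the blocks already at $\gamma_j$, and the claim's bound $P^2(m-1)$ covers every pair with other edge blocks regardless of that placement. This contradicts minimality, so every $P_e$ lies in $\gamma_j$, as in $\tau_S$.

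\textbf{Stage 3: relative order within a gap.} It remains to fix the order of the blocks sharing a gap $\gamma_j$, namely the blocks $P_{\{v_i,v_j\}}$ for the various $i<j$. Within every block the order is forced to be forward. Two such blocks share their larger endpoint $j$, so the rankings $\pi_1^H$ and $\pi_2^H$ both order $P_{\{v_i,v_j\}}$ before $P_{\{v_{i'},v_j\}}$ whenever $i<i'$. In $\pi_1^H$, the blocks lie in the vertex groups for $v_i$ and $v_{i'}$; in $\pi_2^H$, they lie in $P_j^{\mathrm R}$ ordered by increasing smaller endpoint. Hence every majority arc between them points in increasing-$i$ order, which is the order used in $\tau_S$.

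Any other relative order reverses at least one pair of blocks and thus creates at least $P^2\ge1$ backward arcs. Its comparisons with the block candidates are unchanged, since all the blocks sit in the same gap. Its comparisons with edge blocks in other gaps are also unchanged, since those depend only on which side of the gap each candidate lies. So any other relative order strictly increases the count, and therefore $\sigma=\tau_S$.

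\textbf{Main obstacle.} I expect the main care to be needed in Stage 2. The move to $\gamma_j$ must be shown strictly improving even though the other edge blocks are not yet in their final positions. This works because \cref{claim:unique-verification-endpoints} bounds the edge--edge change by $P^2(m-1)$ uniformly in the other blocks' positions, so the blocks can be treated one at a time, and each non-$\gamma_j$ placement yields a strict contradiction on its own.
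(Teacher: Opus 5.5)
Your proof is correct and follows essentially the same route as the paper. Normalization together with the score bounds forces the block order $\Gamma_S$, then \cref{claim:unique-verification-endpoints} forces each $P_e$ into $\gamma_j$, and finally the transitive majority relation among the blocks sharing a gap forces the increasing-smaller-endpoint order. The only difference is in the last step: the paper removes an inversion by swapping an adjacent pair, whereas you compare directly with the inversion-free order. Both arguments are equally valid.
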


\begin{proof}
Let $\sigma$ minimize the Kemeny score among the orders satisfying the contiguity and internal-order conditions.
The strict normalization argument above shows that deleting the edge-candidate blocks from it must leave $\Gamma_X$ for some $X\subseteq V$.
If $X\neq S$, then \cref{eq:unique-verification-score-bounds,eq:recognition-objective-gap-yes,eq:unique-verification-error-separation} give
\[
b^\star(X)
\geq
\beta+f_H(S)+M-R
>
\beta+f_H(S)+R
\geq
b^\star(S).
\]
Hence $X=S$.

By \cref{claim:unique-verification-endpoints}, each edge-candidate block $P_e$ must occupy the gap specified in the definition of $\tau_S$.
For two such blocks in the same gap $\gamma_j$, the first two input rankings order them by increasing smaller endpoint, so every majority arc between the two blocks points in that direction.
If their relative order contains an inversion, it contains an adjacent inverted pair.
Swapping that pair changes no comparison with a block candidate or with any other edge-candidate block and strictly decreases the contribution between the two swapped blocks.
Thus the edge-candidate blocks in each gap have the unique optimal order specified above.
The transitive majority relation inside each block has a unique topological order.
Therefore $\sigma=\tau_S$.
\end{proof}

Suppose first that $S$ is a maximum independent set.
The preceding claim shows that $\tau_S$ is the unique Kemeny-optimal order among the orders satisfying the contiguity and internal-order conditions.
All blocks satisfy the hypotheses of \cref{lem:unique-block-normalization}: their members have identical external comparisons, and the majority relation inside each block is transitive.
That lemma therefore implies that $\tau_S$ is the unique unrestricted Kemeny-optimal aggregate.

Conversely, suppose that $S$ is not maximum and let $I$ be as in \cref{eq:recognition-objective-gap-no}.
Then
\[
b^\star(I)
\leq
\beta+f_H(I)+R
<
\beta+f_H(S)-R
\leq
b^\star(S)
\leq
b_{\Pi_H}(\tau_S),
\]
where the strict inequality follows from $M>2R$.
Thus $\tau_S$ is not Kemeny-optimal and in particular is not the unique Kemeny-optimal aggregate.

The construction has $12nM+N_{\mathrm{add}}+Pm$ candidates.
The three rankings and the supplied order $\tau_S$ can be generated in polynomial time.
We have proved that $\tau_S$ is the unique Kemeny-optimal aggregate of $\Pi_H$ if and only if $S$ is a maximum independent set of $H$.
This proves coNP-hardness and completes the proof.
\end{proof}

\section{Transformations of profiles and support classifications}\label{sec:support}

This section derives the pairwise-support classifications from the three-ranking hardness results.
When $q/2<s\leq2q/3$, replicating the profile and appending an order together with its reverse produce hard instances with exact support $s$; the four-ranking transformations in \cref{sec:four-ranking-transformations} handle the boundary case $s=q/2$.
Both operations apply a positive affine transformation to every Kemeny score and therefore preserve all Kemeny-optimal orders, the four winner and precedence conditions, and both recognition conditions.
The polynomial-time case $s>2q/3$ follows from transitivity of the majority tournament.

\subsection{Appending an order with its reverse and replicating a profile}

For an order $\rho$, let $\rho^{\mathrm{rev}}$ denote the reverse order.

\begin{lemma}\label{lem:add-two-rankings}
Let $\Pi$ be a profile on $\cC$, and let $\rho$ be any order on $\cC$.
Appending $\rho$ and $\rho^{\mathrm{rev}}$ increases the Kemeny objective of every aggregate by $\binom{|\cC|}{2}$ and leaves the set of Kemeny-optimal aggregate orders unchanged.
\end{lemma}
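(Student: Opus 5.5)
The plan is to compute the contribution of the two appended rankings pairwise, using the definition of the Kendall--tau distance as a sum over unordered candidate pairs. Fix any aggregate order $\sigma$ on $\cC$. The new objective is
\[
\Kemeny_{\Pi'}(\sigma)=\Kemeny_\Pi(\sigma)+\KTdist(\rho,\sigma)+\KTdist(\rho^{\mathrm{rev}},\sigma),
\]
where $\Pi'$ denotes the extended profile. So it suffices to show that
\[
\KTdist(\rho,\sigma)+\KTdist(\rho^{\mathrm{rev}},\sigma)=\binom{|\cC|}{2}
\]
for every $\sigma$.

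For this identity, I would fix an unordered pair $\{x,y\}$. The orders $\rho$ and $\rho^{\mathrm{rev}}$ place $x$ and $y$ oppositely. Hence $\sigma$ disagrees with exactly one of them on this pair, namely with $\rho$ if $\sigma$ and $\rho^{\mathrm{rev}}$ agree, and with $\rho^{\mathrm{rev}}$ otherwise. Each pair therefore contributes exactly $1$ to the sum of the two distances, and summing over all $\binom{|\cC|}{2}$ pairs gives the identity.

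Since the shift $\binom{|\cC|}{2}$ does not depend on $\sigma$, the functions $\Kemeny_{\Pi'}$ and $\Kemeny_\Pi$ differ by a constant. They therefore have the same set of minimizers, so $\KOpt(\Pi')=\KOpt(\Pi)$. The same conclusion follows from \cref{lem:tournament-identity}, since each pair gains one ranking in each direction: every margin is unchanged and $B$ grows by $\binom{|\cC|}{2}$.

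There is no real obstacle here. The only point needing care is that the claim is pairwise and holds for arbitrary $\rho$, including when $\rho$ coincides with some input ranking, because repetitions are allowed in profiles.
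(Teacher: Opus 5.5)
Your proof is correct and follows essentially the same argument as the paper. Both observe that, on every candidate pair, $\sigma$ disagrees with exactly one of $\rho$ and $\rho^{\mathrm{rev}}$, so the two appended distances always sum to $\binom{|\cC|}{2}$, and a $\sigma$-independent shift preserves the minimizers. Your extra cross-check via \cref{lem:tournament-identity} (margins unchanged, $B_\Pi$ increased by $\binom{|\cC|}{2}$) is also valid but not needed.
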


\begin{proof}
For every aggregate order $\sigma$ and every candidate pair, exactly one of $\rho$ and $\rho^{\mathrm{rev}}$ disagrees with $\sigma$.
Therefore
\[
\KTdist(\rho,\sigma)
+
\KTdist(\rho^{\mathrm{rev}},\sigma)
=
\binom{|\cC|}{2},
\]
independently of $\sigma$.
Thus appending $\rho$ and $\rho^{\mathrm{rev}}$ adds the same constant to the Kemeny objective of every aggregate order.
\end{proof}

\begin{lemma}\label{lem:replicating-profile}
Let $\Pi=(\pi_1,\dots,\pi_\ell)$ be a profile, and let $h$ be a positive integer.
Replacing every input ranking by $h$ identical copies multiplies the Kemeny objective of every aggregate by $h$ and leaves the set of Kemeny-optimal aggregate orders unchanged.
If a candidate pair is ordered in one direction by $a$ of the original rankings and in the other direction by $\ell-a$ rankings, then it is ordered in those directions by $ha$ and $h(\ell-a)$ rankings, respectively, in the replicated profile.
\end{lemma}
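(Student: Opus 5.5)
The plan is to compute the Kemeny objective of the replicated profile directly from its definition. Write $\Pi^{(h)}$ for the replicated profile; it consists of $h$ copies of each $\pi_r$, so it has $h\ell$ rankings. For any aggregate order $\sigma$ we get
\[
\Kemeny_{\Pi^{(h)}}(\sigma)=\sum_{r=1}^{\ell}h\,\KTdist(\pi_r,\sigma)=h\,\Kemeny_\Pi(\sigma).
\]
The identity holds because each copy of $\pi_r$ contributes the same distance $\KTdist(\pi_r,\sigma)$ to the sum.

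Since $h>0$, multiplication by $h$ is strictly increasing. Hence $\sigma$ minimizes $\Kemeny_{\Pi^{(h)}}$ exactly when it minimizes $\Kemeny_\Pi$. This gives $\KOpt(\Pi^{(h)})=\KOpt(\Pi)$, and it also gives $\Kemeny^\star_{\Pi^{(h)}}=h\Kemeny^\star_\Pi$.

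For the pairwise counts, fix a pair $x,y$ with $N_\Pi(x,y)=a$, so that $N_\Pi(y,x)=\ell-a$. Each of the $a$ rankings placing $x$ before $y$ contributes $h$ copies with the same relative order, and the same holds for the other $\ell-a$ rankings. Therefore $N_{\Pi^{(h)}}(x,y)=ha$ and $N_{\Pi^{(h)}}(y,x)=h(\ell-a)$.

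No step is a real obstacle; the only care needed is to keep the copies of each ranking distinct as list entries, since repetitions are permitted in a profile.
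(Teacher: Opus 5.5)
Your proposal is correct and follows essentially the same route as the paper: you write the replicated objective as $h$ times the original Kemeny objective, and you observe that each ranking supporting a direction appears $h$ times. You also spell out explicitly that multiplying by the positive constant $h$ preserves the set of minimizers, which the paper leaves implicit.
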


\begin{proof}
For every aggregate order $\sigma$, the Kemeny objective in the replicated profile is
\[
\sum_{r=1}^{\ell}
\sum_{j=1}^{h}
\KTdist(\pi_r,\sigma)
=
h\sum_{r=1}^{\ell}
\KTdist(\pi_r,\sigma)
=
h\Kemeny_\Pi(\sigma).
\]
Each of the $a$ rankings supporting one direction appears $h$ times, as does each of the $\ell-a$ rankings supporting the opposite direction.
The claimed pairwise counts follow.
\end{proof}

\begin{lemma}\label{lem:exact-support-transformation}
Fix integers $q\geq3$ and $s$ with
\[
\frac q2<s\leq\frac{2q}{3},
\]
and put
\[
\Delta=2s-q,
\qquad
r=2q-3s.
\]
Let $\Pi=(\pi_1,\pi_2,\pi_3)$ be an all\nobreakdash-$2$\nobreakdash-to\nobreakdash-$1$ profile on a candidate set $\cC$.
Replicate each input ranking $\Delta$ times and append $r$ pairs $(\rho,\rho^{\mathrm{rev}})$, each consisting of an arbitrary order $\rho$ on $\cC$ and its reverse.
The resulting profile $\Pi_{q,s}$ has exactly $q$ rankings, every candidate pair has support exactly $s$, and its majority tournament is the majority tournament of $\Pi$.
For every aggregate order $\sigma$,
\begin{equation}
\label{eq:exact-support-affine}
\Kemeny_{\Pi_{q,s}}(\sigma)
=
\Delta\Kemeny_\Pi(\sigma)
+
r\binom{|\cC|}{2}.
\end{equation}
Consequently, $\Pi$ and $\Pi_{q,s}$ have exactly the same Kemeny-optimal aggregate orders.
The transformation preserves whether a designated candidate is first in some or every Kemeny-optimal aggregate and whether a designated precedence holds in some or every Kemeny-optimal aggregate.
It also preserves whether a supplied order is Kemeny-optimal or uniquely Kemeny-optimal.
If the majority tournament of $\Pi$ has majority dimension exactly $3$, then the same is true after the transformation.
The transformation is computable in polynomial time.
\end{lemma}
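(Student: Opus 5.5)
The argument composes \cref{lem:replicating-profile} and \cref{lem:add-two-rankings} and then checks the arithmetic. First confirm that the parameters are admissible. Since $s>q/2$ and $s$ is an integer, $\Delta=2s-q\geq1$. Since $s\leq 2q/3$, $r=2q-3s\geq0$. The ranking count is $3\Delta+2r=6s-3q+4q-6s=q$.

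Next, the pairwise counts. Every pair in $\Pi$ is split $2$\nobreakdash-to\nobreakdash-$1$. By \cref{lem:replicating-profile}, after replication the majority direction of each pair is supported by $2\Delta$ rankings and the minority direction by $\Delta$. Each appended pair $(\rho,\rho^{\mathrm{rev}})$ adds exactly one ranking to each direction of every pair. The two directional counts are therefore $2\Delta+r$ and $\Delta+r$. Their difference is $\Delta>0$, so the majority direction of every pair is unchanged and the majority tournament equals that of $\Pi$. The support is $2\Delta+r=4s-2q+2q-3s=s$.

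For the score identity \cref{eq:exact-support-affine}, apply \cref{lem:replicating-profile} with $h=\Delta$ to get $\Delta\Kemeny_\Pi(\sigma)$. Then apply \cref{lem:add-two-rankings} $r$ times; each application adds $\binom{|\cC|}{2}$. The result is a positive affine map $x\mapsto\Delta x+r\binom{|\cC|}{2}$ that is the same for every aggregate $\sigma$. Such a map preserves the order of objective values, including strict inequalities and equalities, over all aggregates. Hence $\KOpt$ is identical for $\Pi$ and $\Pi_{q,s}$.

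The winner and precedence conditions, and whether a supplied order is optimal or uniquely optimal, are all properties of the set $\KOpt$ alone, so they are preserved. The majority dimension is a property of the tournament, which is unchanged; in particular it stays exactly $3$. Polynomial time holds because $q$ is fixed, so the output consists of $q$ explicitly listed orders on $\cC$. For $\rho$ we may take, say, $\pi_1$.

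The only step needing care is the support count. One must note that the appended pairs contribute equally to both directions, so the margin stays $\Delta$ and the maximum count is $s$ rather than something else. Everything else is routine.
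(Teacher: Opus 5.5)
Your proposal is correct and follows essentially the same route as the paper. You verify $3\Delta+2r=q$, compute the split $(2\Delta+r)$--$(\Delta+r)=s$--$(q-s)$, obtain the affine identity from \cref{lem:replicating-profile,lem:add-two-rankings}, and note that no majority direction changes, so the tournament and its majority dimension are preserved.
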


\begin{proof}
The assumptions give a positive integer $\Delta$ and a nonnegative integer $r$.
The final number of rankings is
\[
3\Delta+2r
=
3(2s-q)+2(2q-3s)
=
q.
\]
Every pair is initially split $2$\nobreakdash-to\nobreakdash-$1$.
After replication and the addition of these order--reverse pairs, it is split
\[
(2\Delta+r)\text{--}(\Delta+r)
=
s\text{--}(q-s),
\]
so its support is exactly $s$.
The objective identity and preservation of the complete set of Kemeny-optimal orders follow from \cref{lem:add-two-rankings,lem:replicating-profile}.
Neither operation changes any majority direction, so the majority tournament and its majority dimension are unchanged.
\end{proof}

\subsection{Four rankings in the boundary case \texorpdfstring{$s=q/2$}{s=q/2}}\label{sec:four-ranking-transformations}

The reductions for four rankings use the following construction.
For every arc copy $a=(x,y)$ of a directed multigraph, the output majority digraph contains two arcs $x\to z_a$ and $z_a\to y$.
Hence an order that places $y$ before $x$ reverses at least one of these two arcs.
Every Kemeny-optimal aggregate produced by the construction begins with a vertex of the directed multigraph.
The first application preserves the Kemeny-optimal orders on the original candidate set, together with the possible and necessary first-position and precedence conditions.
The second repeats each tournament arc and adds one arc from each candidate of a supplied order to its successor, converting optimality of the supplied order into uniqueness after the transformation.

\subsubsection{Realizing a directed multigraph with four rankings}\label{sec:four-ranking-multigraph-realization}

For a finite loopless directed multigraph $D$ and a linear order $\sigma$ of its vertices, let $b_D(\sigma)$ denote the number of backward arc copies, counted with multiplicity.

\begin{lemma}
\label{lem:four-ranking-multigraph-realization}
Let $D=(V,A)$ be a finite loopless directed multigraph on a nonempty vertex set; parallel and antiparallel arc copies are allowed.
Fix a labeling $V=\{v_1,\dots,v_n\}$ and an order of the outgoing arc copies of each vertex.
One can construct in polynomial time a profile $\widehat\Pi$ of exactly four unweighted full rankings and, for each $i$, a sequence $C_i$ beginning with $v_i$, such that every candidate pair has support $2$ or $3$.
Let $H$ be the directed graph whose arcs are the majority arcs of $\widehat\Pi$.
The following statements hold.
\begin{enumerate}[label=\textup{(\roman*)}]
\item\label{item:four-ranking-extension}
For an order $\sigma=v_{i_1}\dots v_{i_n}$ of $V$, define
\[
\widehat\sigma=C_{i_1}\dots C_{i_n}.
\]
Then
\[
b_H(\widehat\sigma)=b_D(\sigma).
\]
The first candidate of $\widehat\sigma$ is the first vertex of $\sigma$.

\item\label{item:four-ranking-restriction}
For every order $\omega$ of the candidates of $\widehat\Pi$,
\[
b_H(\omega)\geq b_D(\omega|_V).
\]
If $\omega$ minimizes $b_H$, then its first candidate belongs to $V$ and equals the first vertex of $\omega|_V$.
\end{enumerate}
Consequently, a vertex is first in an order minimizing $b_D$ exactly when it is first in a Kemeny-optimal aggregate of $\widehat\Pi$.
The construction uses $n+2|A|$ candidates, where arc copies are counted with multiplicity.
\end{lemma}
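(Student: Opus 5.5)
The plan is to build $\widehat\Pi$ so that its majority digraph $H$ has a very sparse, controlled arc set. All pairs are tied (support $2$) except a few pairs with support $3$, which form the gadget arcs. For each arc copy $a=(x,y)$ of $D$ I introduce two new candidates, $z_a$ and $u_a$; this accounts for the $n+2|A|$ candidates. I aim for the following majority arcs, and no others:
\begin{itemize}
\item the charging path $x\to z_a\to y$;
\item an auxiliary arc $x\to u_a$ that will never be charged to an arc of $D$;
\item forward arcs inside the sequences $C_i$.
\end{itemize}
Here $C_i$ is $v_i$ followed, in the fixed order of the outgoing arc copies of $v_i$, by $z_a u_a$ for each such copy $a$.

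Since the agreement set of an order and its reverse is empty, and that of two orders is transitive, the ties cannot come from a reverse pair alone. I would take
\[
\pi_1=C_1\cdots C_n,
\qquad
\pi_2=\text{the reverse of }C_1\cdots C_n\text{ with each }C_i\text{ kept internally forward}.
\]
Then I would take $\pi_3,\pi_4$ to be a reverse-like pair in which each $z_a$ is moved just after its head $y$ in one ranking and just before it in the other. The aim is that a pair is split $3$-to-$1$ exactly in the listed cases. Verifying this pair-by-pair table for $v_i$--$v_j$, $v$--$z$, $v$--$u$, $z$--$z$, $z$--$u$ and $u$--$u$ pairs is routine but must be done carefully. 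This realization step is the main obstacle, and if the first choice creates stray majority arcs I would adjust the positions of $z_a$ and $u_a$ in $\pi_3,\pi_4$ until the table closes.

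With $H$ in hand, part~\ref{item:four-ranking-extension} is a direct count in $\widehat\sigma=C_{i_1}\cdots C_{i_n}$:
\begin{itemize}
\item all arcs inside each $C_i$ are forward;
\item $x\to z_a$ and $x\to u_a$ are forward, since $z_a,u_a$ lie in the block of $x$;
\item $z_a\to y$ is backward exactly when $y$ precedes $x$ in $\sigma$.
\end{itemize}
Hence $b_H(\widehat\sigma)=b_D(\sigma)$, and the first candidate is $v_{i_1}$.

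For part~\ref{item:four-ranking-restriction}, take any $\omega$ and any arc copy $a=(x,y)$ with $y$ before $x$ in $\omega$. Wherever $z_a$ sits, at least one of $x\to z_a$ and $z_a\to y$ is backward. These witnesses are distinct for distinct arc copies, which gives $b_H(\omega)\ge b_D(\omega|_V)$.

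For the first-candidate claim, suppose a minimizing $\omega$ begins with an auxiliary candidate, say $z_a$ or $u_a$ with tail $x$. Then the arc from $x$ into it is backward. If the first candidate is $u_a$, this arc is not one of the charged witnesses, so $b_H(\omega)\ge b_D(\omega|_V)+1>b_H(\widehat{\omega|_V})$ by part~\ref{item:four-ranking-extension}, contradicting minimality. If it is $z_a$, then $z_a\to y$ is forward, so $x\to z_a$ is charged to $a$ only when $y$ precedes $x$; otherwise it is an extra backward arc and the same contradiction follows. The remaining subcase, in which $y$ precedes $x$, needs the auxiliary arc $x\to u_a$ to supply an uncharged backward arc. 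This is where I expect to fine-tune the gadget, possibly by adding $u_a\to z_a$.

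Once the first candidate is known to lie in $V$, it is automatically the first vertex of $\omega|_V$. The final equivalence then follows by combining parts~\ref{item:four-ranking-extension} and~\ref{item:four-ranking-restriction}: the minimum of $b_H$ equals the minimum of $b_D$, and first candidates correspond. By \cref{lem:tournament-identity}, minimizing $b_H$ is the same as minimizing the Kemeny objective, because tied pairs contribute a constant.
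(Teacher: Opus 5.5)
You have a genuine gap. Your charging argument for $b_H(\omega)\geq b_D(\omega|_V)$ and your count for part (i) are fine. However, the first-candidate claim in part (ii) is false for the gadget you specify, and the realization step is not carried out; the version you sketch would in fact fail.

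\textbf{The first-candidate claim fails for your gadget.} With $\pi_1=C_1\cdots C_n$ and $\pi_2=C_n\cdots C_1$, every pair inside a block $C_x$ is ordered forward by two of the four rankings. Such a pair is therefore either a forward majority arc (if $\pi_3,\pi_4$ split it) or tied; it can never have a backward majority. So with $C_x=x\,z_au_a\cdots$ you get $z_a\to u_a$, and $u_a\to z_a$ is impossible.

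Now let $a=(x,y)$ be the first outgoing copy of $x$, and suppose $a$ is backward in some $b_D$-optimal $\sigma$. For example, take $D$ to be the directed triangle $v_1\to v_2\to v_3\to v_1$, $\sigma=v_2v_3v_1$, and $a=(v_1,v_2)$. Move $z_a$ from its place in $\widehat\sigma$ to the very front. Only two comparisons change: $x\to z_a$ becomes backward and $z_a\to y$ becomes forward, since all other arcs at $z_a$ go to later members of $C_x$. The new order is still optimal and begins with $z_a\notin V$. The auxiliary arc $x\to u_a$ cannot rescue this, because nothing forces $u_a$ ahead of $x$.

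Your suggested fix of ``adding $u_a\to z_a$'' is the right idea. By the observation above, though, it cannot be achieved by adjusting $\pi_3,\pi_4$; you must reorder the block as $C_x=x\,u_az_a\cdots$. This is exactly the paper's $C_i=v_iu_{i,1}z_{i,1}\cdots u_{i,k_i}z_{i,k_i}$. Then:
\begin{itemize}
\item an order starting with $z_a$ reverses $u_a\to z_a$;
\item an order starting with $u_a$ reverses $x\to u_a$.
\end{itemize}
Either reversed comparison contains a $u$-candidate, so it is distinct from every charged witness $x\to z_b$ or $z_b\to y$. This gives $b_H(\omega)\geq b_D(\omega|_V)+1>\min b_H$.

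\textbf{The realization is missing, and your sketch would break it.} You need $\pi_3,\pi_4$ to agree on exactly the pairs $z_a\prec y$, with $y$ the head of $a$, and to disagree on every other pair. Because $D$ is loopless, this agreement set is disjoint from that of $(\pi_1,\pi_2)$. Then the intra-block pairs and the charging pairs get support $3$, and all other pairs are tied.

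Your tentative choice puts $z_a$ just after $y$ in one ranking and just before it in the other. Then $\pi_3,\pi_4$ disagree on $\{z_a,y\}$, and since $\pi_1,\pi_2$ also disagree there, the pair is tied and the charging arc $z_a\to y$ disappears.

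The paper uses $\pi_3=\mathcal U\,I(v_1)v_1\cdots I(v_n)v_n$ and $\pi_4=I(v_n)^{\mathrm{rev}}v_n\cdots I(v_1)^{\mathrm{rev}}v_1\,\mathcal U^{\mathrm{rev}}$. Here $I(v)$ lists the arc candidates with head $v$, and $\mathcal U$ lists all $u$-candidates. This construction has three effects:
\begin{itemize}
\item $z_a$ precedes its head in both rankings;
\item reversing each $I(v)$ splits pairs of parallel copies;
\item placing $\mathcal U$ at opposite ends splits every pair containing a $u$-candidate.
\end{itemize}
Your count in part (i) assumes $H$ has no arcs beyond the listed ones, so it also depends on this step. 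Finally, for the Kemeny equivalence you should note that every majority arc has margin exactly $2$. The objective is then a constant plus $2b_H$, not merely ``ties contribute a constant.''
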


\begin{proof}
For every arc copy $a=(x,y)\in A$, add a candidate $z_a$, called the arc candidate for $a$.
For each vertex $v_i$, list the arc candidates corresponding to its outgoing arc copies in the prescribed order as
\[
z_{i,1},\dots,z_{i,k_i}.
\]
Add candidates $u_{i,1},\dots,u_{i,k_i}$ and put
\[
C_i
=
v_iu_{i,1}z_{i,1}\dots u_{i,k_i}z_{i,k_i},
\]
with $C_i=v_i$ when $k_i=0$.
The candidates $u_{i,j}$ separate consecutive arc candidates in $C_i$; they are used below to show that an order minimizing $b_H$ cannot begin with a candidate outside $V$.
Let $\mathcal U$ be a fixed list of all candidates $u_{i,j}$.
For $v\in V$, let $I(v)$ be a fixed list of the arc candidates $z_a$ for which the head of $a$ is $v$.

Define $\widehat\pi_1$ by concatenating $C_i$ for $i=1,\dots,n$ in increasing order:
\[
\widehat\pi_1=C_1C_2\dots C_n.
\]
Define $\widehat\pi_2$ by concatenating the same sequences in decreasing order:
\[
\widehat\pi_2=C_nC_{n-1}\dots C_1.
\]
Define $\widehat\pi_3$ by concatenating $\mathcal U$ and then, for $i=1,\dots,n$, the sequence $I(v_i)v_i$:
\begin{equation}
\label{eq:four-ranking-realization-rankings}
\widehat\pi_3=\mathcal U\,I(v_1)v_1\,I(v_2)v_2\dots I(v_n)v_n.
\end{equation}
Define $\widehat\pi_4$ by reversing this order of the lists $I(v_i)v_i$, reversing each list $I(v_i)$, and appending $\mathcal U^{\mathrm{rev}}$:
\[
\widehat\pi_4
=
I(v_n)^{\mathrm{rev}}v_n\dots I(v_2)^{\mathrm{rev}}v_2\,I(v_1)^{\mathrm{rev}}v_1\,\mathcal U^{\mathrm{rev}}.
\]
Every original vertex, arc candidate, and candidate $u_{i,j}$ belongs to exactly one of $C_1,\dots,C_n$.
Each arc candidate belongs to exactly one of $I(v_1),\dots,I(v_n)$.
Thus every candidate occurs exactly once in each ranking.

The first two rankings agree exactly on pairs contained in one sequence $C_i$, and they order such pairs as in $C_i$.
The last two rankings disagree on every pair involving a candidate $u_{i,j}$ because $\mathcal U$ occurs first in $\widehat\pi_3$ and last in $\widehat\pi_4$, with its internal order reversed.
Among the remaining pairs, the last two rankings agree exactly on
\[
z_a\before y
\qquad\text{for each arc copy }a=(x,y)\in A.
\]
This follows because the sequences $I(v_i)v_i$ occur in opposite orders in the last two rankings, and every list $I(v_i)$ is reversed.
Because $D$ has no loops, $z_a$ belongs to the sequence beginning with $x$, whereas $y$ belongs to a different sequence.
The two families of pairs on which a pair of rankings agree are therefore disjoint.
Every pair in either family has support $3$, and every other pair is tied.

Let $H$ be the directed graph consisting of the majority arcs of $\widehat\Pi$.
Its arcs are precisely
\begin{equation}\label{eq:four-ranking-realization-arcs}
\bigcup_{i=1}^{n}
\{u\to v:u\before_{C_i}v\}
\quad\cup\quad
\{z_a\to y:a=(x,y)\in A\}.
\end{equation}
Every pair with a majority arc contributes $1$ to the Kemeny objective when the arc is forward and $3$ when it is backward, while every tied pair contributes $2$ independently of the aggregate order.
Hence there is a computable constant $B$ such that
\begin{equation}\label{eq:four-ranking-realization-score}
\Kemeny_{\widehat\Pi}(\omega)
=
B+2b_H(\omega)
\end{equation}
for every aggregate order $\omega$.
It therefore suffices to minimize $b_H$.

Let $\sigma=v_{i_1}\dots v_{i_n}$ be an order of $V$ and put $\widehat\sigma=C_{i_1}\dots C_{i_n}$.
Every majority arc between two candidates in the same sequence $C_i$ is forward in $\widehat\sigma$.
For an arc copy $a=(x,y)$, the comparison $z_a\to y$ is backward exactly when $y\before_\sigma x$.
Therefore
\begin{equation}\label{eq:four-ranking-extension-cost}
b_H(\widehat\sigma)=b_D(\sigma).
\end{equation}
The first candidate of $\widehat\sigma$ is the first vertex of $\sigma$, which proves \cref{item:four-ranking-extension}.

Now let $\omega$ be any order of the candidates of $\widehat\Pi$ and put $\sigma=\omega|_V$.
For every backward arc copy $a=(x,y)$ of $D$, at least one of the two comparisons
\[
x\to z_a
\qquad\text{and}\qquad
z_a\to y
\]
is backward in $\omega$.
Choose one such comparison for each backward arc copy.
The chosen comparisons are distinct because different arc copies have different arc candidates.
Consequently,
\begin{equation}\label{eq:four-ranking-restriction-cost}
b_H(\omega)\geq b_D(\sigma).
\end{equation}
Taking minima in \cref{eq:four-ranking-extension-cost,eq:four-ranking-restriction-cost} shows that the two minimum values are equal and that the restriction of every order minimizing $b_H$ minimizes $b_D$.

It remains to show that every order minimizing $b_H$ begins with a vertex of $D$.
Suppose that a candidate $r\notin V$ is first in an order $\omega$, and let $C_i$ be the sequence containing $r$.
Then $r\before_\omega v_i$.
If $r=u_{i,j}$, the comparison $v_i\to u_{i,j}$ is backward.
If $r=z_{i,j}$, consider the candidate $u_{i,j}$ immediately preceding it in $C_i$.
When $u_{i,j}\before_\omega v_i$, the comparison $v_i\to u_{i,j}$ is backward.
When $v_i\before_\omega u_{i,j}$, the comparison $u_{i,j}\to z_{i,j}$ is backward.
In each case the identified backward comparison contains a candidate $u_{i,j}$ and is distinct from all comparisons selected in the preceding paragraph.
Thus
\[
b_H(\omega)
\geq
b_D(\omega|_V)+1
\geq
\min_\sigma b_D(\sigma)+1.
\]
By \cref{eq:four-ranking-extension-cost}, if $v_{i_1}\dots v_{i_n}$ minimizes $b_D$, then $C_{i_1}\dots C_{i_n}$ has backward-arc count $\min_\sigma b_D(\sigma)$.
Hence $\omega$ does not minimize $b_H$.
Every order minimizing $b_H$ begins with a vertex in $V$, and its first candidate equals the first vertex of its restriction.
This proves \cref{item:four-ranking-restriction} and the statement about possible first vertices.

The construction uses one arc candidate and one candidate $u_{i,j}$ for every arc copy, so it has $n+2|A|$ candidates and is polynomial.
\end{proof}

\subsubsection{A four-ranking transformation preserving optimal orders}\label{sec:three-to-four}

\begin{lemma}\label{lem:three-to-four}
Given any profile of exactly three rankings on a candidate set $S$, one can construct in polynomial time a profile of exactly four rankings on a candidate set containing $S$ such that every candidate pair in the output profile has support $2$ or $3$.
After fixing a labeling $S=\{v_1,\dots,v_n\}$, the construction produces sequences $C_1,\dots,C_n$ beginning with the corresponding candidates.
Restricting a Kemeny-optimal aggregate of the output profile to $S$ gives a Kemeny-optimal aggregate of the input profile, and if $v_{i_1}\dots v_{i_n}$ is Kemeny-optimal for the input profile, then $C_{i_1}\dots C_{i_n}$ is Kemeny-optimal for the output profile.
The transformation preserves \KemenyWinner{}, \KemenyUniqueWinner{}, \KemenyPossiblePrecedence{}, and \KemenyNecessaryPrecedence{}.
For a supplied input order $v_{i_1}\dots v_{i_n}$, the output order $C_{i_1}\dots C_{i_n}$ is Kemeny-optimal exactly when the input order is Kemeny-optimal.
The construction also gives a polynomial-time many-one reduction from \KemenyScore{} with three rankings to \KemenyScore{} with four rankings.
\end{lemma}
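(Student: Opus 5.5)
The plan is to reduce to \cref{lem:four-ranking-multigraph-realization}. Given a three-ranking profile $\Pi$ on $S$, we build a directed multigraph $D$ on vertex set $S$. For every pair $\{x,y\}$ with $N_\Pi(x,y)>N_\Pi(y,x)$, $D$ gets $w_\Pi(x,y)\in\{1,3\}$ parallel arc copies $(x,y)$. Tied pairs cannot occur, since $q=3$ is odd. By \cref{lem:tournament-identity}, $\Kemeny_\Pi(\sigma)=B_\Pi+b_D(\sigma)$ for every order $\sigma$ of $S$. Hence the orders minimizing $b_D$ are exactly the Kemeny-optimal orders of $\Pi$. Fixing the labeling $S=\{v_1,\dots,v_n\}$ and any order of outgoing arc copies, we apply \cref{lem:four-ranking-multigraph-realization} and obtain $\widehat\Pi$, the sequences $C_i$, and supports in $\{2,3\}$. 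The construction has $n+2|A|\le n+6\binom n2$ candidates, so it is polynomial.

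Next, we transfer optimality. From the proof of that lemma we have $\Kemeny_{\widehat\Pi}(\omega)=B+2b_H(\omega)$. Item (i) gives $b_H(\widehat\sigma)=b_D(\sigma)$, and item (ii) gives $b_H(\omega)\ge b_D(\omega|_S)$. Together these yield $\min b_H=\min b_D$. Thus $\widehat\sigma$ is optimal for $\widehat\Pi$ exactly when $\sigma$ is optimal for $\Pi$, which gives the statement about supplied orders and the forward direction. Conversely, if $\omega$ is optimal, then $b_D(\omega|_S)\le b_H(\omega)=\min b_D$, so the restriction is optimal. For \KemenyScore{}, we map a threshold $\kappa$ to $B+2(\kappa-B_\Pi)$. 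If $\kappa<B_\Pi$, we output a fixed no-instance instead (for example threshold $0$ on any profile with a nonunanimous pair, or simply any trivially negative answer encoded as a fixed four-ranking no-instance). Equivalence then follows from $\min\Kemeny_{\widehat\Pi}=B+2(\Kemeny^\star_\Pi-B_\Pi)$.

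Finally, we handle the winner and precedence problems. For winners, the lemma's final sentence says that a vertex is first in some optimal order of $D$ iff it is first in some optimal aggregate of $\widehat\Pi$. For the unique version, every optimal $\omega$ begins with the first vertex of $\omega|_S$, and $\omega|_S$ is optimal for $\Pi$. So if $c$ is first in every optimal order of $\Pi$, it is first in every optimal $\omega$. Conversely, an optimal $\sigma$ of $\Pi$ not starting with $c$ lifts to $\widehat\sigma$, which does not start with $c$. For precedence between $c,d\in S$, possible precedence transfers by lifting $\sigma$ to $\widehat\sigma$, which preserves the relative order of $c$ and $d$ because each $C_i$ begins with $v_i$ and the blocks appear in $\sigma$'s order. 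In the other direction we restrict $\omega$ to $S$, which keeps the precedence and is optimal. Necessary precedence follows by the same two maps applied to counterexamples. Since designated candidates lie in $S$, these are valid many-one reductions on the same candidate names.

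The main obstacle is bookkeeping rather than ideas: we must check that the multigraph encoding with margin multiplicities is exactly what \cref{lem:tournament-identity} requires. The threshold translation is the other point needing care; the affine relation $B+2(\cdot)$ must be verified, which relies on the constant $B$ from \cref{eq:four-ranking-realization-score}, and negative translated thresholds must be handled.
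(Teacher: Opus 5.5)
Your proposal is correct and follows the paper's proof essentially step by step. Like the paper, you encode the margins as arc multiplicities in $D_\Pi$, apply the four-ranking multigraph realization lemma, and transfer optimality, first positions, precedences, and thresholds through $\Kemeny_{\widehat\Pi}=B_4+2b_H$ together with $\min b_H=\min b_{D_\Pi}$. The only point to tighten is the fixed no-instance used when $\kappa<B_\Pi$: choose one in which every pair has support $2$ or $3$, for example the paper's two-candidate profile with two rankings in each direction and threshold $1$, so that the reduction stays inside the support-restricted class used later.
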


\begin{proof}
Let $\Pi=(\pi_1,\pi_2,\pi_3)$ be a profile on a candidate set $S$ of size $n$.
For every majority arc $x\to y$, create $w_\Pi(x,y)$ labeled arc copies directed from $x$ to $y$, and let $D_\Pi$ be the resulting loopless directed multigraph.
By \cref{lem:tournament-identity},
\begin{equation}\label{eq:three-ranking-multigraph-score}
\Kemeny_\Pi(\sigma)
=
B_\Pi+b_{D_\Pi}(\sigma)
\end{equation}
for every order $\sigma$ of $S$.

Apply \cref{lem:four-ranking-multigraph-realization} to $D_\Pi$, using any labeling of $S$ and any order of the outgoing arc copies.
Let $\widehat\Pi$ be the resulting four-ranking profile and let $H$ be the directed graph consisting of its majority arcs.
Every candidate pair in $\widehat\Pi$ has support $2$ or $3$.
If $B_4$ denotes the constant in \cref{eq:four-ranking-realization-score}, then
\begin{equation}\label{eq:four-ranking-output-score}
\Kemeny_{\widehat\Pi}(\omega)
=
B_4+2b_H(\omega).
\end{equation}
The constant $B_4$ is computable from the output profile.

The extension and restriction statements in \cref{lem:four-ranking-multigraph-realization} give
\[
\min_\omega b_H(\omega)
=
\min_\sigma b_{D_\Pi}(\sigma).
\]
Together with \cref{eq:three-ranking-multigraph-score,eq:four-ranking-output-score}, they show that every Kemeny-optimal aggregate of $\widehat\Pi$ restricts to a Kemeny-optimal aggregate of $\Pi$ and that $C_{i_1}\dots C_{i_n}$ is Kemeny-optimal for $\widehat\Pi$ whenever $v_{i_1}\dots v_{i_n}$ is Kemeny-optimal for $\Pi$.
Conversely, if $C_{i_1}\dots C_{i_n}$ is Kemeny-optimal for $\widehat\Pi$, then \cref{eq:four-ranking-extension-cost} shows that $v_{i_1}\dots v_{i_n}$ is Kemeny-optimal for $\Pi$.
This proves the reduction from three-ranking to four-ranking \KemenyConsensusRecognition{}.

By \cref{item:four-ranking-restriction}, a candidate in $S$ is first in some Kemeny-optimal aggregate of $\widehat\Pi$ exactly when it is first in some Kemeny-optimal aggregate of $\Pi$, and no candidate outside $S$ is first in a Kemeny-optimal aggregate of $\widehat\Pi$.
This preserves \KemenyWinner{} and \KemenyUniqueWinner{}.
Restriction to $S$ and the orders $C_{i_1}\dots C_{i_n}$ preserve the relative order of every pair in $S$, which preserves \KemenyPossiblePrecedence{} and \KemenyNecessaryPrecedence{}.

To obtain the reduction for \KemenyScore{}, consider a threshold $\kappa$.
If $\kappa<B_\Pi$, output the fixed four-ranking no-instance on candidates $a,b$ with two rankings $a\before b$, two rankings $b\before a$, and threshold $1$; its minimum score is $2$.
Otherwise set
\[
\kappa'
=
B_4+2(\kappa-B_\Pi).
\]
The two score identities and the equality of the minimum backward-arc counts show that $\Kemeny_\Pi^\star\leq\kappa$ exactly when $\Kemeny_{\widehat\Pi}^\star\leq\kappa'$.
Finally,
\[
|A(D_\Pi)|
=
\sum_{x\to y}w_\Pi(x,y)
\leq
3\binom n2,
\]
so \cref{lem:four-ranking-multigraph-realization} gives a polynomial-size output.
\end{proof}

\subsubsection{A four-ranking transformation for unique consensus recognition}\label{sec:unique-consensus-four}

Replication and the addition of pairs consisting of an order and its reverse cannot transform a profile of three rankings into one of exactly four rankings.
The next construction therefore treats unique consensus recognition separately.
It repeats each majority arc sufficiently many times and adds one arc from each candidate of the supplied order to its successor.

\begin{lemma}
\label{lem:unique-consensus-three-to-four}
Let $\Pi=(\pi_1,\pi_2,\pi_3)$ be an all\nobreakdash-$2$\nobreakdash-to\nobreakdash-$1$ profile on a candidate set $S$, and let $\tau$ be an aggregate order.
One can construct in polynomial time a profile $\widehat\Pi$ of exactly four unweighted full rankings and an aggregate order $\widehat\tau$ such that $\widehat\tau$ is the unique Kemeny-optimal aggregate for $\widehat\Pi$ if and only if $\tau$ is Kemeny-optimal for $\Pi$.
Every unordered candidate pair in $\widehat\Pi$ has support $2$ or $3$.
\end{lemma}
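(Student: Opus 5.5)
We build a directed multigraph $D$ on $S$ and apply \cref{lem:four-ranking-multigraph-realization}. Write $\tau=t_1t_2\cdots t_n$ and let $T=T_\Pi$ be the majority tournament, so that $\Kemeny_\Pi(\sigma)=\binom{n}{2}+b_T(\sigma)$ by \cref{eq:all-two-to-one-score}. Choose a multiplicity $K=n$ (any $K\geq n$ will do). The multigraph $D$ contains $K$ parallel copies of every arc of $T$, together with one extra arc copy $t_k\to t_{k+1}$ for each $k<n$; these extra arcs may run antiparallel to tournament arcs, which the realization lemma allows. For every order $\sigma$ of $S$ we then have
\[
b_D(\sigma)=K\,b_T(\sigma)+p_\tau(\sigma),
\]
where $p_\tau(\sigma)$ counts the indices $k$ with $t_{k+1}\before_\sigma t_k$. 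In particular, $p_\tau(\tau)=0$, and $p_\tau(\sigma)\geq1$ for every $\sigma\neq\tau$, since an order that keeps every consecutive pair of $\tau$ forward must equal $\tau$. Also $p_\tau(\sigma)\leq n-1<K$.

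From this formula we get the two directions on $S$.
\begin{itemize}
\item If $\tau$ is Kemeny-optimal, take any $\sigma\neq\tau$. Either $b_T(\sigma)>b_T(\tau)$, in which case $b_D(\sigma)\geq Kb_T(\tau)+K>b_D(\tau)$; or $b_T(\sigma)=b_T(\tau)$, in which case $p_\tau(\sigma)\geq1$ makes $b_D(\sigma)>b_D(\tau)$. So $\tau$ is the unique minimizer of $b_D$.
\item If $\tau$ is not optimal, pick $\sigma$ with $b_T(\sigma)\leq b_T(\tau)-1$. Then $b_D(\sigma)\leq Kb_T(\tau)-K+n-1<b_D(\tau)$, so $\tau$ does not minimize $b_D$.
\end{itemize}

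Now apply \cref{lem:four-ranking-multigraph-realization} to $D$ with the labeling $v_k=t_k$, and set $\widehat\tau=C_1C_2\cdots C_n$. The support condition comes from that lemma, and by \cref{eq:four-ranking-realization-score} Kemeny optimality in $\widehat\Pi$ means minimizing $b_H$. By part (i), $b_H(\widehat\tau)=b_D(\tau)$, and part (ii) together with the equality of the two minima gives the following.
\begin{itemize}
\item If $\tau$ does not minimize $b_D$, then $\widehat\tau$ is not optimal.
\item If $\tau$ uniquely minimizes $b_D$, then $\widehat\tau$ is optimal, and every optimal $\omega$ restricts to $\tau$.
\end{itemize}
The instance size is polynomial, since $|A|\leq K\binom n2+n-1$.

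The main obstacle is the remaining uniqueness step in the second case: an optimal $\omega$ with $\omega|_S=\tau$ must equal $\widehat\tau$, and part (ii) does not say how the auxiliary candidates $u_{i,j}$ and $z_a$ are placed. We would argue directly from the selection of backward comparisons in the proof of part (ii). Optimality forces $b_H(\omega)=b_D(\tau)$, so every backward comparison of $\omega$ is exactly one of the selected comparisons $x\to z_a$ or $z_a\to y$, one per backward arc copy of $D$. In particular, all arcs inside each $C_i$ other than the $v_i\to z_a$-type ones, namely the chains through the $u_{i,j}$, must be forward, and each $z_a$ with $a=(x,y)$ backward has exactly one of its two arcs backward.

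This is the delicate part. Under $\widehat\tau$ the backward arc copies are $z_a\to y$ with $y$ earlier, and $z_a$ sits inside $C_x$ at a position where it still has the freedom to be moved. We suspect genuine non-uniqueness arises here, since $z_a$ could be placed before $x$ together with its $u$ predecessor at no extra cost. If so, we would modify the construction. Instead of using the realization lemma with arbitrary outgoing orders, we would order each vertex's outgoing arc copies so that, in $\widehat\tau$, forward-consistent chains pin down each $z_a$. Alternatively, we would add a separate tie-breaking arc family, for example doubling each arc copy through two $z$ candidates, so that a strict local exchange argument shows any deviation of an auxiliary candidate from its slot in $C_i$ costs at least one backward arc. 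We would check this local-exchange uniqueness first. The rest of the argument above is routine.
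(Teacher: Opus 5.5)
\textbf{Gap: the uniqueness step.} Your multigraph, your formula $b_D(\sigma)=K\,b_T(\sigma)+p_\tau(\sigma)$, and both directions on $S$ agree with the paper's argument. Using $K=n$ instead of $L=n+1$ is fine. The gap is the step you flag yourself: showing that an optimal $\omega$ with $\omega|_S=\tau$ must equal $\widehat\tau$. You leave this open, and it does need an extra idea.

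With an arbitrary order of the outgoing arc copies, uniqueness really can fail, but not for the reason you suggest. Moving $u_{i,j}z_a$ in front of $v_i$ is not free. It reverses $v_i\to u_{i,j}$, and that comparison contains a $u$-candidate, so it is never one of the selected comparisons.

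The actual failure is at block boundaries. Suppose the last candidate of $C_i$ is an arc candidate $z_a$ with $a=(x_i,y)$ and $y\neq x_{i+1}$. Then the pair $\{z_a,x_{i+1}\}$ is tied:
\begin{itemize}
\item the first two rankings disagree on it, because the two candidates lie in different sequences $C_i$ and $C_{i+1}$;
\item the last two rankings disagree on it, because $z_a\in I(y)$ and $x_{i+1}$ lie in different lists of the form $I(v)v$.
\end{itemize}
So swapping these two adjacent candidates in $\widehat\tau$ gives a second optimal order, even when $\tau$ is Kemeny-optimal.

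\textbf{The missing idea.} At each $x_i$ with $i<n$, order the outgoing arc copies so that $e_i=(x_i,x_{i+1})$ comes last. Then its arc candidate $d_i$ is the last candidate of $C_i$, and $d_i\to x_{i+1}$ is a majority arc by \cref{eq:four-ranking-realization-arcs}. Every pair of consecutive candidates inside a $C_i$ is also a majority arc. Hence every consecutive comparison of $\widehat\tau$ is a forward majority arc.

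Any $\omega\neq\widehat\tau$ must reverse at least one of these consecutive comparisons. That comparison either contains some $u_{i,j}$, or it is $d_i\to x_{i+1}$, whose arc $e_i$ is forward in $\tau$. In both cases it is distinct from the $K\,b_T(\tau)$ comparisons $x\to z_a$ or $z_a\to y$ selected for the backward arc copies of $D$ in $\omega|_S=\tau$. Therefore
\[
b_H(\omega)\geq K\,b_T(\tau)+1>K\,b_T(\tau)=b_H(\widehat\tau),
\]
which closes the uniqueness step. Your first proposed repair (reordering outgoing arcs) points in this direction but never identifies the ordering or this counting argument. The doubling or extra arc family you also propose is unnecessary.
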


\begin{proof}[Proof of \cref{lem:unique-consensus-three-to-four}]
Write $\tau=x_1x_2\dots x_n$ and let $T_\Pi$ be the majority tournament of $\Pi$.
Since $\Pi$ is all\nobreakdash-$2$\nobreakdash-to\nobreakdash-$1$, \cref{eq:all-two-to-one-score} shows that an order $\sigma$ of $S$ is Kemeny-optimal exactly when it minimizes $b_\Pi(\sigma)$.
Set $L=n+1$.

Construct a loopless directed multigraph $D_\tau$ on $S$ as follows.
For every majority arc $a=(x,y)$ of $T_\Pi$, add $L$ labeled copies $a_1,\dots,a_L$ directed from $x$ to $y$.
For every $i<n$, also add one arc $e_i=(x_i,x_{i+1})$.
At each $x_i$, order the outgoing copies of majority arcs arbitrarily and place $e_i$ last when it exists.
Apply \cref{lem:four-ranking-multigraph-realization} to $D_\tau$ with the labeling $x_1,\dots,x_n$.
Let $\widehat\Pi$ be the resulting four-ranking profile, let $H$ be the directed graph consisting of its majority arcs, and let $C_1,\dots,C_n$ be the sequences supplied by that lemma.
Write $z_{a,r}$ for the arc candidate corresponding to $a_r$ and $d_i$ for the arc candidate corresponding to $e_i$.
Every candidate pair in $\widehat\Pi$ has support $2$ or $3$.
By \cref{eq:four-ranking-realization-score}, the Kemeny-optimal aggregates of $\widehat\Pi$ are exactly the orders minimizing $b_H$.

Set
\[
\widehat\tau=C_1C_2\dots C_n.
\]
For an order $\sigma$ of $S$, let $r_\tau(\sigma)=|\{i\in\{1,\dots,n-1\}:x_{i+1}\before_\sigma x_i\}|$, the number of consecutive comparisons of $\tau$ that are reversed in $\sigma$.
By \cref{eq:four-ranking-extension-cost},
\begin{equation}
\label{eq:unique-four-supplied-cost}
b_H(\widehat\tau)
=
b_{D_\tau}(\tau)
=
L b_\Pi(\tau),
\end{equation}
because every arc $e_i$ is forward in $\tau$.
For every order $\omega$ of the candidates of $\widehat\Pi$ with restriction $\sigma=\omega|_S$, \cref{eq:four-ranking-restriction-cost} gives
\begin{equation}
\label{eq:unique-four-base-lower}
b_H(\omega)
\geq
b_{D_\tau}(\sigma)
=
L b_\Pi(\sigma)+r_\tau(\sigma).
\end{equation}

Suppose first that $\tau$ is Kemeny-optimal for $\Pi$.
If $\sigma\neq\tau$, then $b_\Pi(\sigma)\geq b_\Pi(\tau)$ and $r_\tau(\sigma)\geq1$, because an order satisfying $x_i\before x_{i+1}$ for every $i<n$ must equal $\tau$.
Thus \cref{eq:unique-four-base-lower} gives
\[
b_H(\omega)
\geq
L b_\Pi(\tau)+1
>
b_H(\widehat\tau).
\]

Now consider an order $\omega\neq\widehat\tau$ whose restriction to $S$ is $\tau$.
Every comparison between consecutive candidates in $\widehat\tau$ is a majority arc.
The consecutive comparisons inside each $C_i$ follow from \cref{eq:four-ranking-realization-arcs}.
Because $e_i$ is last among the outgoing arcs of $x_i$, its arc candidate $d_i$ is the last candidate in $C_i$, and $d_i\to x_{i+1}$ is a majority arc.
An order that follows all these consecutive comparisons must equal $\widehat\tau$.
Hence $\omega$ reverses at least one of them.

For each of the $L b_\Pi(\tau)$ copies of a majority arc that is backward in $\tau$, choose one backward comparison among $x\to z_{a,r}$ and $z_{a,r}\to y$, as in the proof of \cref{eq:four-ranking-restriction-cost}.
The selected comparisons are distinct.
A consecutive comparison inside some $C_i$ contains a candidate $u_{i,j}$ and therefore is not selected.
A consecutive comparison $d_i\to x_{i+1}$ corresponds to the arc $e_i$, which is forward in $\tau$, and is also not selected.
The reversed consecutive comparison consequently contributes in addition to the selected $L b_\Pi(\tau)$ comparisons, and
\[
b_H(\omega)
\geq
L b_\Pi(\tau)+1
>
b_H(\widehat\tau).
\]
Thus $\widehat\tau$ is the unique Kemeny-optimal aggregate of $\widehat\Pi$.

Conversely, suppose that $\tau$ is not Kemeny-optimal for $\Pi$.
Choose an order $\sigma$ of $S$ with $b_\Pi(\sigma)\leq b_\Pi(\tau)-1$, and let $\widehat\sigma$ be the extension defined in \cref{item:four-ranking-extension}.
Then
\begin{equation}
\label{eq:unique-four-extension-cost}
b_H(\widehat\sigma)
=
L b_\Pi(\sigma)+r_\tau(\sigma)
\leq
L(b_\Pi(\tau)-1)+(n-1)
<
L b_\Pi(\tau)
=
b_H(\widehat\tau),
\end{equation}
where the strict inequality uses $L=n+1$.
Thus $\widehat\tau$ is not Kemeny-optimal.

The multigraph $D_\tau$ has $L\binom n2+n-1$ arc copies.
By \cref{lem:four-ranking-multigraph-realization}, the output has $n+2\left(L\binom n2+n-1\right)=O(n^3)$ candidates and is computable in polynomial time.
The two implications prove the claimed equivalence.
\end{proof}

\subsection{Complete support classifications}

\begin{theorem}
\label{thm:support-dichotomy}
Fix integers $q\geq3$ and $s$ with
\[
\left\lceil\frac q2\right\rceil\leq s\leq q.
\]
Restrict to profiles $\Pi$ of exactly $q$ unweighted full rankings satisfying $\support_\Pi(x,y)\geq s$ for all distinct candidates $x,y$.
Then the following classifications hold.
\begin{enumerate}[label=\textup{(\roman*)}]
\item \KemenyScore{} is NP-complete when $3s\leq2q$ and polynomial-time solvable when $3s>2q$.
\item \KemenyWinner{}, \KemenyUniqueWinner{}, \KemenyPossiblePrecedence{}, and \KemenyNecessaryPrecedence{} are $\ThetaTwoP$-complete when $3s\leq2q$ and polynomial-time solvable when $3s>2q$.
\item \KemenyConsensusRecognition{} and \UniqueKemenyConsensusRecognition{} are coNP-complete when $3s\leq2q$ and polynomial-time solvable when $3s>2q$.
\end{enumerate}
If $q/2<s\leq2q/3$, the NP-hardness, $\ThetaTwoP$-hardness, and coNP-hardness conclusions persist when every pair has support exactly $s$ and the majority tournament has majority dimension exactly $3$.
In the boundary case $s=q/2$, which occurs only for even $q$, the same conclusions persist when every pair has support in $\{s,s+1\}$.
For every even $q\geq6$, they also hold with exact support $s+1$ and majority dimension exactly $3$.
When $3s>2q$, the majority relation is a transitive tournament, and its unique topological order is the unique Kemeny-optimal aggregate; the score, winner, precedence, and recognition problems can then be solved in $O(q|\cC|^2)$ time.
\end{theorem}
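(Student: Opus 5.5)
The plan is to split along the threshold $3s\le 2q$ versus $3s>2q$. Membership in the upper classes is argued as before, and hardness comes from the three-ranking results through the transformations of \cref{sec:support}.

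\textbf{Membership.} Membership in NP, $\ThetaTwoP$, and coNP does not depend on $q$ or on the support restriction.
\begin{itemize}[label=--]
\item An aggregate is a polynomial-size certificate for \KemenyScore{}.
\item The parallel-query argument in the proof of \cref{thm:three-ranking-optimality}(i) works verbatim for any profile. Its scores range over $0,\dots,q\binom N2$, which is polynomial.
\item The coNP certificates for the two recognition problems were described in \cref{sec:preliminaries}.
\item The support condition itself is checkable in $O(q|\cC|^2)$ time.
\end{itemize}

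\textbf{The polynomial case $3s>2q$.} Here every pair has a strict majority, since $s>q/2$, so the majority relation is a tournament. We show it has no directed triangle, which is Vidu's fact; a self-contained argument is short. Suppose $x\to y\to z\to x$. Each arc is contradicted by at most $q-s<q/3$ rankings. Hence fewer than $q$ rankings contradict at least one of the three arcs, and some ranking agrees with all three. That is impossible, because the three arcs form a cycle. A tournament without directed triangles is transitive. Its topological order then has no backward arcs, and every other order has at least one. By \cref{lem:tournament-identity}, with all margins positive, the topological order is the unique Kemeny-optimal aggregate. It is computed in $O(q|\cC|^2)$ time by counting pairwise majorities and sorting by out-degree. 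All seven problems then reduce to reading off or comparing this order; in particular $\Kemeny^\star=B_\Pi$.

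\textbf{The interior hard case $q/2<s\le 2q/3$.} Apply \cref{lem:exact-support-transformation} to the hard instances of \cref{thm:main,thm:three-ranking-optimality}. These are all\nobreakdash-$2$\nobreakdash-to\nobreakdash-$1$, and their tournaments have majority dimension $3$. The output has exactly $q$ rankings, exact support $s$, and the same tournament.
\begin{itemize}[label=--]
\item The affine identity \cref{eq:exact-support-affine} transports a score threshold $\kappa$ to $\Delta\kappa+r\binom{|\cC|}2$.
\item The preserved set of optimal orders transports the winner, precedence, and both recognition instances unchanged.
\end{itemize}
Since exact support $s$ implies support at least $s$, this gives hardness for every $s$ with $\lceil q/2\rceil\le s\le 2q/3$ and $s>q/2$, which includes every odd $q$.

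\textbf{The boundary case $s=q/2$, $q$ even.} Start from the three-ranking hard instances.
\begin{itemize}[label=--]
\item \cref{lem:three-to-four} gives four rankings with supports in $\{2,3\}$. It handles \KemenyScore{} (with the stated threshold map), the four winner and precedence problems, and \KemenyConsensusRecognition{}, using $\widehat\tau=C_{i_1}\cdots C_{i_n}$.
\item \cref{lem:unique-consensus-three-to-four} handles \UniqueKemenyConsensusRecognition{}. It reduces from three-ranking consensus recognition, which is coNP-hard by \cref{thm:three-ranking-optimality}(ii), to four-ranking unique recognition.
\item For general even $q\ge4$, append $(q-4)/2$ pairs $(\rho,\rho^{\mathrm{rev}})$. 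By \cref{lem:add-two-rankings} this shifts every score by a constant and preserves all optimal orders. Supports become $\{q/2,\,q/2+1\}=\{s,s+1\}$.
\item For even $q\ge6$, also note that $q/2<s+1\le 2q/3$. So \cref{lem:exact-support-transformation} applied with $s+1$ yields exact support $s+1\ge s$ and majority dimension exactly $3$.
\end{itemize}

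\textbf{Main obstacle.} The step needing most care is the boundary case. I must verify that each four-ranking lemma preserves exactly the property required by each of the seven problems. I also need to confirm that the reduction for \UniqueKemenyConsensusRecognition{} starts from a problem already proved coNP-hard on all\nobreakdash-$2$\nobreakdash-to\nobreakdash-$1$ profiles, namely \KemenyConsensusRecognition{}, rather than from unique recognition itself.
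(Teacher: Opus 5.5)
Your proposal is correct and follows the paper's proof essentially step for step. It uses the same membership arguments and the same direct transitivity argument for $3s>2q$; your union bound of at most $3(q-s)<q$ rankings contradicting some arc of a directed triangle is just the complement of the paper's count of at most $2q$ agreements. It likewise uses \cref{lem:exact-support-transformation} for $q/2<s\leq 2q/3$, again with support $s+1$ for even $q\geq 6$, and \cref{lem:three-to-four,lem:unique-consensus-three-to-four} together with $(q-4)/2$ appended order--reverse pairs in the boundary case $s=q/2$.
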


\begin{proof}
Membership in NP for \KemenyScore{} follows by evaluating a proposed aggregate.
For the winner and precedence problems, parallel NP threshold queries determine the unrestricted minimum Kemeny score and the minimum Kemeny score under the specified first-position or precedence constraint, which gives membership in $\ThetaTwoP$.
For \KemenyUniqueWinner{}, use the constraint that the designated candidate is not first.
A strictly better aggregate certifies a no-instance of \KemenyConsensusRecognition{}, while a distinct aggregate with no larger objective value certifies a no-instance of \UniqueKemenyConsensusRecognition{}; both recognition problems therefore belong to coNP\@.

Suppose first that $3s>2q$.
Then $s>q/2$, so every pair has a unique majority direction.
If the majority tournament contained a directed triangle $x\to y\to z\to x$, the three arcs would have total support at least $3s$.
A strict total order can agree with at most two arcs of a directed triangle, so the $q$ input rankings contribute at most $2q$ such agreements, a contradiction.
The majority tournament is therefore transitive.
Its unique topological order follows every majority direction, minimizes every pairwise contribution simultaneously, and is the unique Kemeny-optimal aggregate.
Computing the pairwise counts and this order takes $O(q|\cC|^2)$ time, after which the score, winner, precedence, and recognition problems are decided by direct inspection.

Now suppose that $q/2<s\leq2q/3$.
Apply \cref{lem:exact-support-transformation} to the corresponding three-ranking instances from \cref{thm:main,thm:three-ranking-optimality}.
The affine identity in \cref{eq:exact-support-affine} gives the threshold transformation for \KemenyScore{}, while preservation of the complete set of Kemeny-optimal orders gives the winner, precedence, and recognition reductions.
The output has exact support $s$ and the same majority tournament, so majority dimension remains exactly $3$.

It remains to treat the boundary case $s=q/2$.
For \KemenyScore{}, \KemenyWinner{}, \KemenyUniqueWinner{}, \KemenyPossiblePrecedence{}, \KemenyNecessaryPrecedence{}, and \KemenyConsensusRecognition{}, apply \cref{lem:three-to-four} to the corresponding three-ranking instance; in the recognition case, replace the supplied input order $v_{i_1}\dots v_{i_n}$ by $C_{i_1}\dots C_{i_n}$.
For \UniqueKemenyConsensusRecognition{}, apply \cref{lem:unique-consensus-three-to-four} to an instance of \KemenyConsensusRecognition{} from \cref{thm:three-ranking-optimality}.
These four-ranking constructions have supports in $\{2,3\}$ and preserve the threshold comparison, designated first-position or precedence condition, or supplied-order recognition condition used in the corresponding reduction.
If $q>4$, append $(q-4)/2$ pairs consisting of an order and its reverse; the supports become $s$ and $s+1$, and the Kemeny-optimal orders are unchanged.
For every even $q\geq6$, applying \cref{lem:exact-support-transformation} with support $s+1$ to the corresponding three-ranking instance from \cref{thm:main,thm:three-ranking-optimality} gives exact support $s+1$ and majority dimension exactly $3$.
All transformations are polynomial because $q$ and $s$ are fixed.
\end{proof}

Specializing \cref{thm:support-dichotomy} to exact support gives the following classification.

\begin{corollary}\label{cor:exact-support-trichotomy}
Fix $q\geq3$ and an integer $s$ with $\lceil q/2\rceil\leq s\leq q$, and restrict to profiles of exactly $q$ rankings in which every candidate pair has support exactly $s$.
If $q$ is even and $s=q/2$, every aggregate is Kemeny-optimal, and \KemenyScore{}, the four winner and precedence problems, and both recognition problems are polynomial-time solvable.
If $q/2<s\leq2q/3$, \KemenyScore{} is NP-complete, the four winner and precedence problems are $\ThetaTwoP$-complete, and both recognition problems are coNP-complete, even when the majority tournament has majority dimension exactly $3$.
If $s>2q/3$, the majority tournament is transitive, its unique topological order is the unique Kemeny-optimal aggregate, and \KemenyScore{}, the four winner and precedence problems, and both recognition problems are polynomial-time solvable.
\end{corollary}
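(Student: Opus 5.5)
The corollary follows by splitting into the three ranges of $s$ and, in each range, either proving polynomial solvability directly or invoking \cref{thm:support-dichotomy}. For a profile with exact support $s$, the condition ``support at least $s$'' holds. So hardness statements from \cref{thm:support-dichotomy} transfer only when its hard instances already have exact support $s$, while upper bounds transfer automatically to the smaller class. I will treat the three cases in order.

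\emph{Boundary case $q$ even, $s=q/2$.} Every pair has support exactly $q/2$, so $N_\Pi(x,y)=N_\Pi(y,x)=q/2$ for all pairs and every pair is tied. \Cref{lem:tournament-identity} then gives $\Kemeny_\Pi(\sigma)=B_\Pi=\frac q2\binom{|\cC|}{2}$ for every order $\sigma$, so every aggregate is Kemeny-optimal. The problems are then decided as follows.
\begin{enumerate}[label=\textup{(\alph*)}]
\item \KemenyScore{}: compare $\kappa$ with $\frac q2\binom{|\cC|}2$.
\item \KemenyWinner{}: always yes.
\item \KemenyUniqueWinner{}: yes iff $|\cC|=1$.
\item \KemenyPossiblePrecedence{}: always yes, since $c\neq d$.
\item \KemenyNecessaryPrecedence{}: always no.
\item \KemenyConsensusRecognition{}: always yes.
\item \UniqueKemenyConsensusRecognition{}: yes iff $|\cC|\le 1$.
\end{enumerate}
Checking exact support is part of reading the input and takes $O(q|\cC|^2)$ time.

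\emph{Middle case $q/2<s\le 2q/3$.} This range gives $3s\le 2q$. Membership in NP, $\ThetaTwoP$, and coNP is inherited from \cref{thm:support-dichotomy}, because exact-support profiles form a subclass. For hardness, I invoke the clause of \cref{thm:support-dichotomy} stating that for $q/2<s\le2q/3$ the hardness persists with support exactly $s$ and majority dimension exactly $3$. Concretely, this is \cref{lem:exact-support-transformation} applied to the all\nobreakdash-$2$\nobreakdash-to\nobreakdash-$1$ instances of \cref{thm:main,thm:three-ranking-optimality}: it replicates each ranking $2s-q$ times and appends $2q-3s$ order--reverse pairs. Every pair then has exact support $s$ and the majority tournament is unchanged.

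\emph{Case $s>2q/3$.} Here $3s>2q$, and the final clause of \cref{thm:support-dichotomy} gives transitivity, the unique optimal topological order, and $O(q|\cC|^2)$ algorithms. Note $s>q/2$ in this case, so no ties arise.

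The only subtle step is the middle case: hardness does not transfer from the weaker ``at least $s$'' hypothesis by itself. The argument must rely on the exact-support strengthening, which holds because the transformation in \cref{lem:exact-support-transformation} starts from all\nobreakdash-$2$\nobreakdash-to\nobreakdash-$1$ profiles. The remaining check is that the three cases exhaust $\lceil q/2\rceil\le s\le q$. When $q$ is odd, $s\ge (q+1)/2>q/2$, so the boundary case does not occur. When $q$ is even and $s=q/2$, that value satisfies $s\le 2q/3$ but falls under the tie case rather than the middle case.
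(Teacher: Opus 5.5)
Your proposal is correct and follows the paper's proof: the tied case $s=q/2$ makes the Kemeny objective constant, and the other two cases are read off from \cref{thm:support-dichotomy}, with hardness in the middle range coming from its exact-support clause via \cref{lem:exact-support-transformation}. Your explicit decision procedures for the tied case and your check that the three ranges exhaust $\lceil q/2\rceil\leq s\leq q$ only spell out what the paper leaves implicit.
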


\begin{proof}
At $s=q/2$, every pair is tied and contributes $q/2$ independently of the aggregate, so every aggregate is Kemeny-optimal.
The other two cases follow from \cref{thm:support-dichotomy}.
\end{proof}

\subsection{Fractional support thresholds, uniform margins, and tournament consequences}

\begin{corollary}\label{cor:two-thirds-support}
Fix a rational number $\alpha\in[1/2,1]$.
Consider profiles $\Pi$ of $q$ rankings satisfying $\support_\Pi(x,y)\geq\alpha q$ for all distinct candidates $x,y$, where $q$ is part of the input.
\KemenyScore{} is
\[
\begin{cases}
\text{NP-complete} &\text{ if }\alpha\leq2/3,\\[1mm]
\text{polynomial-time solvable} &\text{ if }\alpha>2/3,
\end{cases}
\]
\KemenyWinner{}, \KemenyUniqueWinner{}, \KemenyPossiblePrecedence{}, and \KemenyNecessaryPrecedence{} are
\[
\begin{cases}
\ThetaTwoP\text{-complete} &\text{ if }\alpha\leq2/3,\\[1mm]
\text{polynomial-time solvable} &\text{ if }\alpha>2/3,
\end{cases}
\]
and \KemenyConsensusRecognition{} and \UniqueKemenyConsensusRecognition{} are
\[
\begin{cases}
\text{coNP-complete} &\text{ if }\alpha\leq2/3,\\[1mm]
\text{polynomial-time solvable} &\text{ if }\alpha>2/3.
\end{cases}
\]
When $\alpha>2/3$, the unique Kemeny-optimal aggregate is the unique topological order of the transitive majority tournament and can be computed in $O(q|\cC|^2)$ time.
\end{corollary}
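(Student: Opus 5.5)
The plan is to derive the corollary from \cref{thm:support-dichotomy} and the transformations of \cref{sec:support}, treating the easy case $\alpha>2/3$ and the hard case $\alpha\leq2/3$ separately.

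\emph{Membership} does not depend on $q$ being fixed. The NP, $\ThetaTwoP$ and coNP membership arguments from the proof of \cref{thm:support-dichotomy} only evaluate Kemeny scores in $O(q|\cC|^2)$ time on explicitly given profiles. The threshold range for the parallel queries is $0,\dots,q\binom{|\cC|}2$, which is polynomial because $q$ is bounded by the input length.

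\emph{The case $\alpha>2/3$.} Every pair has support greater than $2q/3\geq q/2$, so every pair has a unique majority direction. The triangle argument then applies verbatim. The arcs of a directed triangle would collect more than $2q$ agreements, but each ranking agrees with at most two arcs of the triangle. Hence the majority tournament is transitive, and its topological order is the unique Kemeny-optimal aggregate, computable in $O(q|\cC|^2)$ time. All seven problems are then decided by inspection.

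\emph{The case $\alpha\leq 2/3$.} Write $\alpha=a/b$ in lowest terms. I would choose a fixed $q$, depending only on $\alpha$, and an integer $s$ with $s\geq\alpha q$ and $3s\leq 2q$. Hardness for these fixed $(q,s)$ comes from \cref{thm:support-dichotomy} and then transfers, since its instances have exactly $q$ rankings, all supports at least $s\geq\alpha q$, and $q$ is merely a value allowed in the input.

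\begin{itemize}
\item If $\alpha>1/2$, take $q=3b$ and $s=3a$. Then $s=\alpha q$, $q/2<s\leq 2q/3$, and \cref{lem:exact-support-transformation} gives exact support $s$.
\item If $\alpha=1/2$, take $q=4$ and $s=2$, which is the boundary case of \cref{thm:support-dichotomy} with supports in $\{2,3\}$. Alternatively, take $q=6$ with exact support $4$, using the even-$q$ clause.
\end{itemize}

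The main point to double-check is that fixing $q$ by $\alpha$ is legitimate: the reductions of \cref{lem:exact-support-transformation,lem:three-to-four,lem:unique-consensus-three-to-four} are polynomial for this constant $q$. With that, the same reductions used in \cref{thm:support-dichotomy} yield NP-, $\ThetaTwoP$- and coNP-hardness for the variable-$q$ problem.
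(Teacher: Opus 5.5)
Your proof is correct. The membership argument and the $\alpha>2/3$ case match the paper in substance; the difference lies in how the two cases are instantiated.

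For $\alpha\le 2/3$, the paper does not choose $q$ as a function of $\alpha$. It simply notes that the three-ranking all\nobreakdash-$2$\nobreakdash-to\nobreakdash-$1$ instances of \cref{thm:main,thm:three-ranking-optimality} have every support equal to $2=\tfrac23\cdot 3\ge 3\alpha$. So they already satisfy the promise for every $\alpha\le 2/3$, including $\alpha=1/2$, with no transformation at all. Your route through $q=3b$, $s=3a$ and \cref{lem:exact-support-transformation} (and $q=4$ or $q=6$ at $\alpha=1/2$) is valid but uses more machinery than needed. What it buys is hard instances whose supports equal $\alpha q$ exactly when $\alpha>1/2$, i.e.\ hardness right at the threshold rather than above it, which the corollary does not require.

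For $\alpha>2/3$, you run the directed-triangle argument directly for arbitrary $q$, which handles $q=1,2$ uniformly. The paper instead sets $s=\lceil\alpha q\rceil$, checks $3s>2q$, and invokes the polynomial case of \cref{thm:support-dichotomy} for $q\ge 3$, treating $q=1$ and $q=2$ by hand. Both are fine, and yours is slightly more self-contained.
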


\begin{proof}
If $\alpha\leq2/3$, the all\nobreakdash-$2$\nobreakdash-to\nobreakdash-$1$ instances from \cref{thm:main,thm:three-ranking-optimality} satisfy the promise because each majority direction has support exactly $2/3$.
These instances give NP-hardness for \KemenyScore{}, $\ThetaTwoP$-hardness for the winner and precedence problems, and coNP-hardness for the recognition problems.
A proposed aggregate certifies a yes-instance of \KemenyScore{}; parallel NP threshold queries give the $\ThetaTwoP$ upper bounds; a strictly better aggregate certifies that a supplied order is not Kemeny-optimal; and a distinct aggregate with no larger score certifies that a supplied order is not uniquely Kemeny-optimal.

Now suppose that $\alpha>2/3$ and let $q$ be the profile size of an input instance.
For $q\geq3$, put $s=\lceil\alpha q\rceil$.
Then $s>q/2$ and $3s\geq3\alpha q>2q$, so the polynomial-time cases of \cref{thm:support-dichotomy} apply.
For $q=1$, the input order is the unique Kemeny-optimal aggregate.
For $q=2$, the support promise forces the two rankings to agree on every pair, so their common order is the unique Kemeny-optimal aggregate.
\end{proof}

The classical two-thirds quota theorem states that if every pairwise majority is supported by more than two thirds of the rankings, then the majority tournament is transitive~\cite{Vidu2000}.
The three-ranking constructions in \cref{thm:main,thm:three-ranking-optimality} attain the boundary value $2/3$.
Lowering the minimum support only enlarges the allowed instance class, so the same constructions prove hardness for every threshold at most $2/3$.

\begin{corollary}\label{cor:uniform-margin}
Fix an integer $q\geq1$ and an integer $\Delta\in\{0,\dots,q\}$ with $\Delta\equiv q\pmod 2$.
\KemenyScore{} restricted to profiles of exactly $q$ unweighted full rankings with uniform pairwise majority margin $\Delta$ is
\[
\begin{cases}
\text{polynomial-time solvable} &\text{ if }\Delta=0\text{ or }3\Delta>q,\\[1mm]
\text{NP-complete} &\text{ if }0<3\Delta\leq q,
\end{cases}
\]
\KemenyWinner{}, \KemenyUniqueWinner{}, \KemenyPossiblePrecedence{}, and \KemenyNecessaryPrecedence{} on the same profile class are
\[
\begin{cases}
\text{polynomial-time solvable} &\text{ if }\Delta=0\text{ or }3\Delta>q,\\[1mm]
\ThetaTwoP\text{-complete} &\text{ if }0<3\Delta\leq q,
\end{cases}
\]
and \KemenyConsensusRecognition{} and \UniqueKemenyConsensusRecognition{} are
\[
\begin{cases}
\text{polynomial-time solvable} &\text{ if }\Delta=0\text{ or }3\Delta>q,\\[1mm]
\text{coNP-complete} &\text{ if }0<3\Delta\leq q.
\end{cases}
\]
\end{corollary}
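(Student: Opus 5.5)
The plan is to translate uniform margin $\Delta$ into pairwise support and reuse \cref{thm:support-dichotomy}, with the small cases handled separately. With uniform margin $\Delta$, every pair is split $\frac{q+\Delta}{2}$ to $\frac{q-\Delta}{2}$. So the profile class is exactly the class of $q$-ranking profiles with exact support $s=(q+\Delta)/2$.

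The condition $3\Delta\le q$ is equivalent to $3s=\tfrac32(q+\Delta)\le 2q$. Likewise $3\Delta>q$ is equivalent to $3s>2q$, and $\Delta=0$ is equivalent to $s=q/2$.

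\textbf{Polynomial cases.} If $\Delta=0$, every pair is tied and contributes $q/2$ regardless of the aggregate. Every aggregate is then optimal, as in \cref{cor:exact-support-trichotomy}. The score is the constant $\frac q2\binom N2$, every candidate is a winner, every precedence is possible and none is necessary, and unique optimality holds only when $N\le1$. All of these are trivial checks.

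If $3\Delta>q$, then $3s>2q$. For $q\ge3$, the transitivity argument in the proof of \cref{thm:support-dichotomy} gives a unique Kemeny-optimal aggregate, namely the topological order of the majority tournament. That argument does not actually need $q\ge3$: a directed triangle would need total support at least $3s>2q$, while the $q$ rankings supply at most $2q$ agreements. For $q=1$ and $q=2$ it also follows directly, since the rankings must all be identical.

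\textbf{Hard cases.} Suppose $0<3\Delta\le q$; this forces $q\ge3$. Then $s=(q+\Delta)/2$ satisfies $q/2<s\le 2q/3$. Apply \cref{lem:exact-support-transformation} to the all-2-to-1 hard instances of \cref{thm:main,thm:three-ranking-optimality}. That lemma yields profiles of exactly $q$ rankings with exact support $s$, hence uniform margin $2s-q=\Delta$. Its affine identity \cref{eq:exact-support-affine} transfers the \KemenyScore{} thresholds, and preservation of the full set of optimal orders transfers the winner, precedence, and recognition answers. The transformation is polynomial since $q$ is fixed.

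Membership in NP, $\ThetaTwoP$, and coNP is the same as in the proof of \cref{thm:support-dichotomy}; checking that the margin is uniform is polynomial.

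The only delicate point is confirming that the parity and range conditions make $s$ an integer in the range required by the lemma. This is immediate from $\Delta\equiv q\pmod2$. Beyond that I expect no real obstacle.
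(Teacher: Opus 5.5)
Your proposal is correct and follows the paper's own route. You convert uniform margin $\Delta$ into exact support $s=(q+\Delta)/2$ and reuse the support classification: its hard case is \cref{lem:exact-support-transformation} applied to the three-ranking instances, and its easy case is the two-thirds transitivity argument. The only difference is cosmetic: you unpack \cref{thm:support-dichotomy} and note that the transitivity argument also covers $q\le2$, whereas the paper handles $q\le2$ separately.
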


\begin{proof}
If $\Delta=0$, every pair is tied and every aggregate has the same Kemeny score, so the score, winner, precedence, and recognition problems are polynomial-time solvable.
The positive-margin cases with $q\leq2$ are also polynomial-time solvable.
Assume that $q\geq3$ and $\Delta>0$, and put $s=(q+\Delta)/2$.
A profile with uniform pairwise majority margin $\Delta$ has pairwise support exactly $s>q/2$, and $3s\leq2q$ if and only if $3\Delta\leq q$.
The result follows from \cref{thm:support-dichotomy}.
\end{proof}

The fixed-$q$ classifications in \cref{thm:support-dichotomy} are summarized in \cref{tab:support-dichotomy}.

\begin{table}[t]
\centering
\scriptsize
\caption{Complexity for fixed $q\geq3$ when every candidate pair has support at least $s$.}
\label{tab:support-dichotomy}

\begingroup
\renewcommand{\arraystretch}{1.15}
\begin{tabular}{@{}lccc@{}}
\toprule
Minimum pairwise support $s$ & Score & \shortstack{Winner and\\precedence problems} & \shortstack{Recognition\\problems}\\
\midrule
$s=q/2$ for even $q$ & NP-complete & $\ThetaTwoP$-complete & coNP-complete\\
$q/2<s\leq2q/3$ & NP-complete & $\ThetaTwoP$-complete & coNP-complete\\
$s>2q/3$ & polynomial time & polynomial time & polynomial time\\
\bottomrule
\end{tabular}
\endgroup

\medskip
\parbox{0.98\linewidth}{The winner and precedence column includes both winner problems and both precedence problems, and the recognition column includes ordinary and unique recognition. The first row concerns a minimum-support condition, not exact support. At exact support $q/2$, every pair is tied and every aggregate is Kemeny-optimal, so all three problem columns are polynomial-time solvable; hardness in the first row uses profiles whose supports belong to $\{q/2,q/2+1\}$.}
\end{table}

\begin{corollary}
\label{cor:tournament-fas}
Fix integers $q\geq3$ and $s$ with
\[
\frac q2<s\leq\frac{2q}{3}.
\]
Consider majority tournaments supplied with a $q$-ranking realization in which every candidate pair has support exactly $s$.
The following hardness results persist under the additional promise that the tournament has majority dimension exactly $3$:
\begin{enumerate}[label=\textup{(\roman*)}]
\item Minimum Feedback Arc Set in Tournaments is NP-hard;
\item \SlaterWinner{}, \SlaterUniqueWinner{}, \SlaterPossiblePrecedence{}, and \SlaterNecessaryPrecedence{} are $\ThetaTwoP$-hard;
\item \SlaterConsensusRecognition{} and \UniqueSlaterConsensusRecognition{} are coNP-hard.
\end{enumerate}
Without the promise on majority dimension, Minimum Feedback Arc Set in Tournaments belongs to NP, the four Slater winner and precedence problems belong to $\ThetaTwoP$, and the two Slater recognition problems belong to coNP.
In particular, all three conclusions hold for tournaments induced by three rankings in which every pair is split $2$\nobreakdash-to\nobreakdash-$1$.
\end{corollary}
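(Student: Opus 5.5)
The plan is to transfer every hardness result through the identity in \cref{lem:uniform-score}. Take the hard instances from \cref{thm:main,thm:three-ranking-optimality}. Each is an all\nobreakdash-$2$\nobreakdash-to\nobreakdash-$1$ three-ranking profile whose majority tournament has majority dimension exactly $3$, by the remark after \cref{thm:main} and \cref{lem:general-block-construction}(ii).

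Apply \cref{lem:exact-support-transformation} with the given $q,s$. The output profile $\Pi_{q,s}$ has exactly $q$ rankings, support exactly $s$, and the same majority tournament $T$. It is tie-free with uniform margin $\Delta=2s-q>0$. By \cref{lem:uniform-score},
\[
\Kemeny_{\Pi_{q,s}}(\sigma)=\tfrac{q-\Delta}{2}\tbinom{|\cC|}{2}+\Delta\,b_T(\sigma).
\]
Hence the Kemeny-optimal orders are exactly the Slater orders of $T$, and $\fas(T)$ is an affine function of $\Kemeny^\star$.

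The three parts then follow by composing reductions.

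For (i), map a \KemenyScore{} instance $(\Pi,\kappa)$ to $(T,\kappa-\binom N2)$ using \cref{eq:all-two-to-one-score}, supplying $\Pi_{q,s}$ as the realization. If $\kappa<\binom N2$, output a trivial no-instance instead.

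For (ii) and (iii), the designated candidates, the designated pair, and the supplied order are passed through unchanged. Since the optimal sets coincide, yes-instances map to yes-instances.

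The promise that the majority dimension is exactly $3$ holds because $T$ is unchanged by the transformation.

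The membership statements are standard. An order with at most $k$ backward arcs certifies a \TournamentFAS{} yes-instance. For the Slater winner and precedence problems, ask in parallel, for every threshold $t\le\binom N2$, whether some order has at most $t$ backward arcs, both with and without the first-position or precedence constraint. As in the proof of \cref{thm:three-ranking-optimality}(i), these answers decide membership in $\ThetaTwoP$. For the recognition problems, a strictly better order, or a distinct order that is no worse, is a coNP certificate.

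The final sentence is the special case $q=3$, $s=2$, where the instances are used directly without transformation.

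The only point needing care is that the realization and the promise are part of the instance. Since the reductions output $\Pi_{q,s}$ together with $T$, and the dimension is certified by the directed triangle together with the three-ranking realization, no recognition of the dimension is required. I do not expect any step to be a serious obstacle.
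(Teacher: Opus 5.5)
Your proposal is correct and follows essentially the paper's argument. The exact-support transformation preserves the majority tournament, and the uniform-margin identity makes Kemeny-optimal orders coincide with Slater orders, so the winner, precedence and recognition instances pass through unchanged. Membership uses the same certificates and parallel threshold queries. The only difference is bookkeeping: you shift the three-ranking threshold by $\binom N2$ directly, whereas the paper reduces from the exact-support $q$-ranking instance with threshold $\lfloor(\kappa-B_\Pi)/\Delta\rfloor$, which is equivalent.

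One detail to tighten: the fallback no-instance for $\kappa<\binom N2$ must itself satisfy the promise on majority dimension. Outputting $T$ with realization $\Pi_{q,s}$ and threshold $0$ works, since $T$ contains a directed triangle. For the instances produced in the proof of \cref{thm:main} this case never arises anyway, because $\theta>0$.
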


\begin{proof}
For exact support $s$, every candidate pair has the common positive majority margin $\Delta=2s-q$.
By \cref{lem:uniform-score},
\[
\Kemeny_\Pi(\sigma)
=
(q-s)\binom{|\cC|}{2}
+
\Delta b_\Pi(\sigma).
\]
Consequently, Kemeny-optimal orders are exactly Slater orders.
The exact-support clause of \cref{thm:support-dichotomy} gives $\ThetaTwoP$-hardness for the four Slater winner and precedence problems and coNP-hardness for the two Slater recognition problems.
The standard parallel-query argument for backward-arc counts gives membership in $\ThetaTwoP$.
A strictly better order certifies that a supplied order is not Slater-optimal, while a distinct order with no larger backward-arc count certifies that it is not uniquely Slater-optimal; neither upper bound uses the promise that the majority dimension is exactly $3$.

For the feedback-arc-set decision problem, membership in NP follows by checking the backward arcs of a proposed order.
By the exact-support clause of \cref{thm:support-dichotomy}, \KemenyScore{} remains NP-hard with exact support $s$ and majority dimension exactly $3$.
For a profile with exact support $s$, the quantity $B_\Pi$ defined in \cref{lem:tournament-identity} equals $(q-s)\binom{|\cC|}{2}$.
For a Kemeny threshold $\kappa\geq B_\Pi$, set
\[
k=
\left\lfloor
\frac{\kappa-B_\Pi}{\Delta}
\right\rfloor.
\]
Then
\[
\Kemeny_\Pi(\sigma)\leq\kappa
\quad\text{if and only if}\quad
b_\Pi(\sigma)\leq k.
\]
If $\kappa<B_\Pi$, output a fixed directed-triangle no-instance with backward-arc threshold $0$ and a realization having exactly $q$ rankings and support exactly $s$, obtained from the cyclic three-ranking profile by replication and by appending pairs consisting of an order and its reverse.
This gives a polynomial-time many-one reduction.
\end{proof}

\subsection{The support boundary for four rankings}

\begin{corollary}
\label{cor:four-margin-heterogeneity}
For exactly four rankings and pairwise supports restricted to $\{2,3\}$, \KemenyScore{} is NP-complete, the four winner and precedence problems are $\ThetaTwoP$-complete, and \KemenyConsensusRecognition{} and \UniqueKemenyConsensusRecognition{} are coNP-complete.
In contrast, \KemenyScore{}, the four winner and precedence problems, and the two recognition problems are polynomial-time solvable when every pair has support at least $3$.
Consequently, for every fixed uniform pairwise majority margin, all seven problems are polynomial-time solvable on profiles of four rankings.
\end{corollary}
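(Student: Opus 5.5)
The corollary is the specialization of \cref{thm:support-dichotomy} to $q=4$. The plan has three parts: hardness for supports in $\{2,3\}$, membership in the appropriate classes, and the polynomial case for support at least $3$. The margin statement then follows.

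\emph{Hardness.} For $q=4$ we have $\lceil q/2\rceil=2$. The boundary case $s=q/2=2$ satisfies $3s=6\le 8=2q$, so \cref{thm:support-dichotomy} applies. For \KemenyScore{}, the four winner and precedence problems, and \KemenyConsensusRecognition{}, the boundary-case argument uses \cref{lem:three-to-four}. This lemma produces four-ranking profiles with supports in $\{2,3\}$ from the three-ranking hard instances of \cref{thm:main,thm:three-ranking-optimality}. It also preserves the score threshold (via the affine identity with $B_4$), the first-position and precedence conditions, and optimality of the supplied order $C_{i_1}\dots C_{i_n}$. For \UniqueKemenyConsensusRecognition{}, we instead apply \cref{lem:unique-consensus-three-to-four} to the coNP-hard instances of \KemenyConsensusRecognition{} from \cref{thm:three-ranking-optimality}(ii). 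Those instances are all\nobreakdash-$2$\nobreakdash-to\nobreakdash-$1$, as the lemma requires, and the output again has supports in $\{2,3\}$. No padding pairs are needed because $q=4$ exactly.

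\emph{Membership.} Membership is exactly as in \cref{thm:support-dichotomy}: NP via certificate evaluation, $\ThetaTwoP$ via parallel threshold queries for the unconstrained and constrained minima, and coNP via strictly better or distinct no-worse aggregates.

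\emph{Polynomial case.} If every pair has support at least $3$, then $3s=9>8=2q$, so the transitive case of \cref{thm:support-dichotomy} applies directly. Alternatively, a directed triangle would require $9$ agreements among at most $8$ available ones. The unique topological order of the majority tournament is then the unique Kemeny-optimal aggregate, and all seven problems are decided in $O(|\cC|^2)$ time.

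\emph{Uniform margin.} With four rankings the possible uniform margins are $\Delta\in\{0,2,4\}$. Margin $0$ means every pair is tied, so every aggregate is optimal and all problems are trivial. Margins $2$ and $4$ give supports $3$ and $4$ respectively, both covered by the polynomial case; this also agrees with \cref{cor:uniform-margin}, since $3\Delta>4$.

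The only step needing care is checking that each invoked transformation supplies exactly the property each problem needs, especially the unique-recognition case, which must go through the separate lemma. This is bookkeeping rather than new mathematics.
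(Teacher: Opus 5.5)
Your proposal is correct and follows essentially the same route as the paper. Hardness comes from \cref{lem:three-to-four}, with \cref{lem:unique-consensus-three-to-four} applied to the all\nobreakdash-$2$\nobreakdash-to\nobreakdash-$1$ recognition instances for unique recognition; tractability is the case $q=4$, $s=3$ of \cref{thm:support-dichotomy}; and the margins $0,2,4$ are handled as in \cref{cor:uniform-margin}.
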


\begin{proof}
The transformation in \cref{lem:three-to-four} proves NP-hardness for \KemenyScore{}, $\ThetaTwoP$-hardness for the four winner and precedence problems, and coNP-hardness for \KemenyConsensusRecognition{}; it preserves the threshold comparison, the designated first-position and precedence conditions, and optimality of the supplied order while producing only support-$2$ and support-$3$ pairs.
\UniqueKemenyConsensusRecognition{} hardness follows from the uniqueness-preserving transformation in \cref{lem:unique-consensus-three-to-four} applied to the restricted \KemenyConsensusRecognition{} problem in \cref{thm:three-ranking-optimality}.
The tractability statement is the case $q=4$ and $s=3$ of \cref{thm:support-dichotomy}.
For four rankings, the possible uniform pairwise majority margins are $0,2,4$, each of which is tractable by \cref{cor:uniform-margin}.
\end{proof}

\section{Consequences and extensions}\label{sec:consequences}

The preceding results translate into permutation-median, tournament, crossing-minimization, and Mallows-model formulations.
A six-copy construction then transfers NP-hardness of \KemenyScore{} to pairwise-equidistant profiles and yields results for the center and other distance objectives.
The final subsection combines these results with the support classifications to give the fixed-profile-size boundary.

\subsection{Equivalent formulations}\label{sec:direct-consequences}

The permutation and tournament consequences below use the profile constructed in the proof of \cref{thm:main} without modification.

Identifying the candidate set with $\{1,\dots,N\}$ turns each ranking into a permutation, and Kendall--tau distance becomes inversion distance.
Biedl, Brandenburg, and Deng call the resulting optimization problem \emph{permutation crossing minimization} and denote its three-permutation version by $\textnormal{\textsc{PCM}}$-$3$~\cite{BiedlBrandenburgDeng2009}.

\begin{corollary}\label{cor:permutation-median}
For three input permutations, the threshold decision version of $\textnormal{\textsc{PCM}}$-$3$, equivalently \KemenyScore{}, is NP-complete, while computing a Kendall--tau median, equivalently an optimal $\textnormal{\textsc{PCM}}$-$3$ permutation, is NP-hard.
It is $\ThetaTwoP$-complete to decide whether a designated element is first in some or every median and whether one designated element precedes another in some or every median.
Verifying whether a supplied permutation is a median or the unique median is coNP-complete.
These statements hold even when every pair of elements appears in both relative orders among the three inputs and their majority tournament has majority dimension exactly $3$.
Equivalently, the same results hold for $1$-medians of three vertices of the permutahedron under its graph metric.
\end{corollary}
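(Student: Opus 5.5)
The plan is to read the corollary as a translation of \cref{thm:main} and \cref{thm:three-ranking-optimality} into permutation language. No new construction is needed.

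First, identify the candidate set $\cC$ of an instance with $\{1,\dots,N\}$ using any fixed bijection. Each ranking $\pi_r$ then becomes a permutation. Kendall--tau distance between orders is exactly the number of inversions of the relative permutation, that is, the inversion distance. Hence the Kemeny objective coincides with the $\textnormal{\textsc{PCM}}$-$3$ crossing objective, and Kendall--tau medians are exactly Kemeny-optimal aggregates. The relabeling is computable in polynomial time and is reversible, so it maps instances and answers bijectively. Applying it to the hard instances from \cref{thm:main} gives NP-completeness of the threshold version. Membership holds because a permutation is a certificate whose objective can be evaluated in $O(N^2)$ time.

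NP-hardness of computing a median follows by a Turing reduction. Given an optimal permutation, evaluate its objective and compare it with the threshold. This decides the NP-complete threshold problem.

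The winner, precedence, and recognition statements follow in the same way from \cref{thm:three-ranking-optimality}. Being first, preceding, or being a (unique) median are all invariant under relabeling. The $\ThetaTwoP$ and coNP upper bounds carry over verbatim from the membership arguments given earlier in the paper.

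For the restriction clause, the hard instances are all\nobreakdash-$2$\nobreakdash-to\nobreakdash-$1$ (\cref{lem:all-two-to-one-profile,lem:general-block-construction}). So every pair appears in both relative orders among the three inputs. The majority dimension is exactly $3$, as shown in the proof of \cref{thm:main} and in \cref{lem:general-block-construction}(ii) for the reductions in \cref{sec:other-base-reductions}.

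For the permutahedron, I would invoke the standard fact that its graph metric equals inversion distance. Vertices are permutations of $\{1,\dots,N\}$, and edges join permutations differing by an adjacent transposition. Each edge changes exactly one pair's relative order, and any two permutations are joined by a bubble-sort path of length equal to their inversion count. Thus a $1$-median of three vertices is exactly a Kendall--tau median.

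The only potential obstacle is this metric identity, which needs the two-sided argument: the inversion count is a lower bound on path length, and bubble sort attains it. Everything else is bookkeeping.
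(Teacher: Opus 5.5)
Your proposal is correct and follows the paper's proof: both read the corollary as a relabeling of \cref{thm:main} and \cref{thm:three-ranking-optimality}, using that Kendall--tau distance is inversion distance and equals the permutahedron graph metric, and both decide the threshold problem from a computed median. Your bubble-sort argument for the metric identity spells out a step the paper only asserts; one small citation point is that the winner/precedence instances add $c,d,z$ and are not literally $\Pi_F$, so their all\nobreakdash-$2$\nobreakdash-to\nobreakdash-$1$ and dimension-$3$ properties come from the checks at the end of that reduction (including \cref{claim:winner-all-two-to-one}), not from \cref{lem:general-block-construction}(ii) alone.
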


\begin{proof}
Identifying rankings with permutations preserves Kendall--tau distance, the Kemeny score, the winner and precedence conditions, and whether a supplied order is Kemeny-optimal or uniquely Kemeny-optimal.
Under this identification, the threshold decision version of $\textnormal{\textsc{PCM}}$-$3$ is exactly \KemenyScore{}, and an algorithm computing an optimal common permutation would decide the threshold problem.
The conditions that both orientations occur for every element pair and that the majority dimension is exactly $3$ are unchanged.
The graph distance in the $1$-skeleton of the permutahedron is Kendall--tau distance, so the claims follow from \cref{thm:main,thm:three-ranking-optimality}.
\end{proof}

For the instances produced by the reduction, the exact-value formula and solution-recovery procedure in \cref{prop:maxcut-recovery} apply unchanged to this permutation formulation.
The identity in \cref{eq:all-two-to-one-score} also expresses the Kemeny objective as an additive constant plus the number of backward arcs in the majority tournament.
The same profile therefore gives the following tournament formulation.

\begin{corollary}\label{cor:tournament-fas-three}
Let $\Pi$ be an all\nobreakdash-$2$\nobreakdash-to\nobreakdash-$1$ profile of three rankings on $N$ candidates, and let $T_\Pi$ be its majority tournament.
For every aggregate order $\sigma$,
\[
\Kemeny_\Pi(\sigma)=\binom N2+b_{T_\Pi}(\sigma),
\qquad
\Kemeny_\Pi^\star=\binom N2+\fas(T_\Pi).
\]
Hence the Kemeny-optimal aggregate orders of $\Pi$ are exactly the Slater orders of $T_\Pi$.
Moreover, \TournamentFAS{} is NP-complete when the input tournament is represented by an all\nobreakdash-$2$\nobreakdash-to\nobreakdash-$1$ profile of three rankings.
\end{corollary}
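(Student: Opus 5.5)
The identities follow from \cref{lem:uniform-score}, and the NP-completeness claim follows from the reduction behind \cref{thm:main}. An all\nobreakdash-$2$\nobreakdash-to\nobreakdash-$1$ profile of three rankings has uniform pairwise majority margin $\Delta=1$. Its majority relation is therefore the tournament $T_\Pi$, and \cref{eq:all-two-to-one-score} gives $\Kemeny_\Pi(\sigma)=\binom N2+b_{T_\Pi}(\sigma)$ for every order $\sigma$. Minimizing both sides over $\sigma$, with the definition $\fas(T_\Pi)=\min_\sigma b_{T_\Pi}(\sigma)$, yields $\Kemeny_\Pi^\star=\binom N2+\fas(T_\Pi)$. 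The additive constant does not depend on $\sigma$, so the Kemeny-optimal orders of $\Pi$ are exactly the orders minimizing the number of backward arcs of $T_\Pi$, which are by definition its Slater orders.

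For the complexity claim, the input is a three-ranking all\nobreakdash-$2$\nobreakdash-to\nobreakdash-$1$ profile, which explicitly determines $T_\Pi$, together with a backward-arc threshold $k$. Membership in NP is immediate: a proposed order is a certificate, and its backward arcs can be counted in $O(N^2)$ time after computing the majority relation from the three rankings. For hardness, I would take the instances $(\Pi_G,\kappa)$ from the proof of \cref{thm:main} and map each to $(\Pi_G,k)$ with $k=\kappa-\binom N2=\theta$. We showed $\theta>0$ before \cref{lem:correctness}, so $k$ is a valid nonnegative threshold. By the identity just established, $\Kemeny^\star_{\Pi_G}\le\kappa$ holds if and only if $\fas(T_{\Pi_G})\le k$. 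Since $\Pi_G$ is all\nobreakdash-$2$\nobreakdash-to\nobreakdash-$1$ by \cref{lem:all-two-to-one-profile}, this is a polynomial-time many-one reduction from the NP-complete restricted \KemenyScore{} problem, and it also composes directly with the \MaxCut{} reduction.

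There is no real obstacle. The only point needing care is that the realizing profile is part of the input, in keeping with the paper's convention for tournament formulations. Under that convention the reduction outputs the same profile, and no recognition of majority dimension is required.
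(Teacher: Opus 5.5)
Your proposal is correct and follows essentially the same route as the paper: the identity is \cref{eq:all-two-to-one-score}, minimizing it gives the Slater characterization, and hardness comes from reusing the instances of \cref{thm:main} with backward-arc threshold $k=\kappa-\binom N2=\theta\geq0$. The only imprecision is calling this a reduction from the whole restricted \KemenyScore{} problem. As written, it is a reduction from \MaxCut{} through those particular instances, which is exactly what the paper does. A reduction from arbitrary restricted \KemenyScore{} instances would also have to send any threshold $\kappa<\binom N2$ to a fixed no-instance.
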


\begin{proof}
The first identity is \cref{eq:all-two-to-one-score}.
Minimizing it gives the second identity and shows that the Kemeny-optimal orders are exactly the Slater orders.
For the complexity statement, the reduction proving \cref{thm:main} produces an all\nobreakdash-$2$\nobreakdash-to\nobreakdash-$1$ profile on $N$ candidates and a threshold of the form $\binom N2+k$, where $k\geq0$.
The first identity shows that this threshold is attained exactly when the represented tournament has an order with at most $k$ backward arcs.
Given the three rankings, one can check the all\nobreakdash-$2$\nobreakdash-to\nobreakdash-$1$ condition, construct the majority tournament, and count the backward arcs of a proposed order in polynomial time, so the problem is in NP.
\end{proof}

For a graph $G$, let $\Pi_G$ be the profile produced by the reduction, and let $N$ be its number of candidates.
Substituting $\Kemeny_{\Pi_G}^\star=\binom N2+\fas(T_{\Pi_G})$ into \cref{eq:maxcut-recovery} gives
\[
\maxcutvalue(G)=\left\lceil\frac{2}{M}\bigl(\beta-\fas(T_{\Pi_G})\bigr)\right\rceil.
\]
Moreover, every Slater order of $T_{\Pi_G}$ is Kemeny-optimal for $\Pi_G$, so \cref{prop:maxcut-recovery} transforms any such order into a maximum cut of $G$ in polynomial time.

For a perfect matching drawn between two ordered layers, the crossing count is the Kendall--tau distance between the two endpoint orders~\cite{BiedlBrandenburgDeng2009}.
For the direct reduction, this correspondence identifies the total crossing count with the Kemeny objective, so \cref{prop:maxcut-recovery} gives the minimum total crossing count and recovers a maximum cut from any optimal common free-layer order.
The stronger pairwise-equidistant sum and maximum formulations are stated in \cref{cor:crossing-formulations}.

\subsection{A six-copy construction for equidistant profiles}\label{sec:center}

We now turn from the Kemeny objective to the center objective.
Given any profile of three rankings, the construction below produces three pairwise-equidistant rankings whose optimal Kemeny score and center radius are explicit affine functions of the original optimal Kemeny score.
If the original profile is all\nobreakdash-$2$\nobreakdash-to\nobreakdash-$1$, then the output profile has the same property.

The following three-copy specialization of the construction of Biedl, Brandenburg, and Deng already gives pairwise equidistance and the correct center radius, but it orders every pair of candidates from different copies unanimously~\cite{BiedlBrandenburgDeng2009}.
For a profile $\Pi=(\pi_1,\pi_2,\pi_3)$, take three disjoint candidate copies $C_1,C_2,C_3$ and define
\[
\rho_1=\pi_1^{(1)}\pi_2^{(2)}\pi_3^{(3)},\qquad
\rho_2=\pi_2^{(1)}\pi_3^{(2)}\pi_1^{(3)},\qquad
\rho_3=\pi_3^{(1)}\pi_1^{(2)}\pi_2^{(3)},
\]
where $\pi_i^{(r)}$ is the relabeled copy of $\pi_i$ on $C_r$.
For any aggregate order, restrict it to each copy and relabel the restriction as an order of the original candidate set.
Within each copy, the three output restrictions are $\pi_1,\pi_2,\pi_3$ in some order, so the candidate pairs within that copy contribute at least $\Kemeny_\Pi^\star$ to the sum of the three output distances.
The three copies therefore contribute at least $3\Kemeny_\Pi^\star$, and the contributions from pairs in different copies are nonnegative.
Hence every aggregate has distance at least $\Kemeny_\Pi^\star$ from at least one output ranking.
If $\sigma^\star$ is Kemeny-optimal for $\Pi$, then $(\sigma^\star)^{(1)}(\sigma^\star)^{(2)}(\sigma^\star)^{(3)}$ has distance exactly $\Kemeny_\Pi^\star$ from each output ranking.
Thus the center radius of the three-copy profile is $\Kemeny_\Pi^\star$.

The three output rankings are also pairwise equidistant.
For each pair of output rankings, their restrictions to the three copies realize the three unordered pairs from $\{\pi_1,\pi_2,\pi_3\}$ once each, so their common distance is
\[
\KTdist(\pi_1,\pi_2)+\KTdist(\pi_1,\pi_3)+\KTdist(\pi_2,\pi_3).
\]
However, all three output rankings use the same copy order $C_1C_2C_3$.
Consequently, every pair of candidates from different copies is ordered unanimously.
Thus this three-copy construction gives pairwise equidistance and center radius $\Kemeny_\Pi^\star$, but it does not preserve the property that no candidate pair is ordered unanimously.

The six-copy construction uses three copy orders such that, for every pair of copies, two output rankings use one order and the third uses the reverse order.
Hence every candidate pair from different copies is split $2$\nobreakdash-to\nobreakdash-$1$.
In the theorem below, $C_1,\dots,C_6$ denote the six candidate copies, and $\sigma^{(r)}$ denotes the relabeled copy on $C_r$ of an order $\sigma$ on the original candidate set.

\begin{theorem}\label{thm:balanced-symmetrization}
Let $\Pi$ be a profile of three rankings on $n$ candidates.
In polynomial time, one can construct a profile $\widehat\Pi$ of three pairwise-equidistant rankings on $6n$ candidates such that
\[
\Kemeny_{\widehat\Pi}^{\star}=15n^2+6\Kemeny_\Pi^{\star},
\qquad
\mathcal R_{\widehat\Pi}^{\star}=5n^2+2\Kemeny_\Pi^{\star}
=\frac13\Kemeny_{\widehat\Pi}^{\star}.
\]
If $C_1,\dots,C_6$ are the six candidate copies, then
\[
\KOpt(\widehat\Pi)
=
\bigl\{\sigma_1^{(1)}\sigma_2^{(2)}\cdots\sigma_6^{(6)}:
\sigma_r\in\KOpt(\Pi)\text{ for every }r\bigr\}.
\]
Consequently, $|\KOpt(\widehat\Pi)|=|\KOpt(\Pi)|^6$.
Every optimal center of $\widehat\Pi$ is Kemeny-optimal and is at distance $\mathcal R_{\widehat\Pi}^{\star}$ from each output ranking.
For every $\sigma\in\KOpt(\Pi)$, the order $\sigma^{(1)}\cdots\sigma^{(6)}$ is an optimal center of $\widehat\Pi$.
The profile $\Pi$ has a unique Kemeny-optimal order if and only if $\widehat\Pi$ has a unique Kemeny-optimal order, and if and only if $\widehat\Pi$ has a unique Kendall--tau center.
If $\Pi$ is all\nobreakdash-$2$\nobreakdash-to\nobreakdash-$1$, then $\widehat\Pi$ is also all\nobreakdash-$2$\nobreakdash-to\nobreakdash-$1$.
\end{theorem}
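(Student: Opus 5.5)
The plan is to build the six-copy profile explicitly and reduce every claim to a pairwise decomposition of distances into within-copy and cross-copy pairs.

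\emph{Construction.} Take copies $C_1,\dots,C_6$ and assign to copy $C_r$ a distinct permutation $p_r$ of $\{1,2,3\}$, so that all six permutations of $S_3$ occur once. The restriction of output ranking $\rho_a$ to $C_r$ is $\pi_{p_r(a)}^{(r)}$. Between copies, choose three orders $\tau_1,\tau_2,\tau_3$ of the labels $\{1,\dots,6\}$, and let $\rho_a$ list the copies in the order $\tau_a$. These must satisfy two conditions.
\begin{enumerate}[label=\textup{(\alph*)}]
\item Every pair of labels $r<r'$ is inverted, relative to $123456$, by exactly one of the three orders.
\item Each $\tau_a$ has exactly $5$ inversions.
\end{enumerate}
Condition (a) says the inversion sets partition the $15$ pairs into three sets of size $5$. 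Each set must be closed and co-closed, so that it is the inversion set of an order.

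I expect producing this triple to be the main obstacle. I would first try orders of a symmetric shape, for instance rotations or conjugates of one $5$-inversion order under a relabeling of $\{1,\dots,6\}$. I would then verify the $15$ pairs by a small table. If no such pattern works, I would fall back on a direct search and record the triple found.

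Given (a) and (b), two cross-copy facts follow.
\begin{itemize}
\item Every cross-copy candidate pair is split $2$\nobreakdash-to\nobreakdash-$1$ with majority following $C_1\prec\dots\prec C_6$.
\item $\KTdist(\tau_a,\tau_b)=|I_a\triangle I_b|=10$ for $a\neq b$, where $I_a$ is the inversion set of $\tau_a$.
\end{itemize}
Within copies, each unordered pair of outputs sees each unordered pair $\{\pi_x,\pi_y\}$ in exactly two copies. Hence
\[
\KTdist(\rho_a,\rho_b)=2\bigl(\KTdist(\pi_1,\pi_2)+\KTdist(\pi_1,\pi_3)+\KTdist(\pi_2,\pi_3)\bigr)+10n^2,
\]
which gives pairwise equidistance. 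The all\nobreakdash-$2$\nobreakdash-to\nobreakdash-$1$ property transfers because within-copy pairs inherit the splits of $\Pi$ and cross pairs are split by (a).

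\emph{Kemeny part.} For any aggregate $\omega$, let $\sigma_r$ be its relabeled restriction to $C_r$. Each copy's restrictions of the three outputs are $\pi_1,\pi_2,\pi_3$ in some order, so
\[
\Kemeny_{\widehat\Pi}(\omega)=\sum_{r=1}^6\Kemeny_\Pi(\sigma_r)+(\text{cross contribution}).
\]
Each of the $15n^2$ cross pairs contributes at least $1$, with equality iff it follows the majority. Thus $\Kemeny_{\widehat\Pi}(\omega)\geq 6\Kemeny_\Pi^\star+15n^2$. Equality holds iff every $\sigma_r\in\KOpt(\Pi)$ and the copies appear contiguously in the order $C_1\cdots C_6$. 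This gives the formula for $\Kemeny^\star_{\widehat\Pi}$, the description of $\KOpt(\widehat\Pi)$, and $|\KOpt(\widehat\Pi)|=|\KOpt(\Pi)|^6$. In particular $|\KOpt(\widehat\Pi)|=1$ iff $|\KOpt(\Pi)|=1$.

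\emph{Center part.} For every $\omega$, $\max_a\KTdist(\rho_a,\omega)\geq\frac13\Kemeny_{\widehat\Pi}(\omega)\geq 5n^2+2\Kemeny_\Pi^\star$. Now take $\sigma\in\KOpt(\Pi)$ and $\omega=\sigma^{(1)}\cdots\sigma^{(6)}$. Output $\rho_a$ sees each $\pi_x$ in exactly two copies, giving within-copy distance $2\Kemeny_\Pi(\sigma)$. It also has exactly $5n^2$ cross disagreements by (b). So $\omega$ is at distance exactly $5n^2+2\Kemeny_\Pi^\star$ from every output, which proves the radius formula and that it equals $\frac13\Kemeny^\star_{\widehat\Pi}$.

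Conversely, if $\omega$ is an optimal center, then $\Kemeny_{\widehat\Pi}(\omega)\leq 3\mathcal R^\star_{\widehat\Pi}=\Kemeny^\star_{\widehat\Pi}$. So $\omega$ is Kemeny-optimal, and since the sum equals $3$ times the maximum, all three distances equal $\mathcal R^\star_{\widehat\Pi}$. Hence the centers form a subset of $\KOpt(\widehat\Pi)$ that contains every diagonal order $\sigma^{(1)}\cdots\sigma^{(6)}$. Therefore a unique center exists iff $\KOpt(\Pi)$ is a singleton: if $\KOpt(\Pi)=\{\sigma\}$, the diagonal order is the only Kemeny-optimal order, and if $\KOpt(\Pi)$ has two elements, their diagonal orders are two distinct centers.

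Polynomiality is immediate, since the output has $6n$ candidates and each ranking is written explicitly.
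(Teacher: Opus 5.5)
Your argument is the paper's argument. Both proofs split every distance into within-copy and cross-copy pairs and observe that each of the $15n^2$ cross pairs contributes at least $1$ to the sum, with equality exactly when $C_1\prec\cdots\prec C_6$ is respected. Both then average to get the radius lower bound and check that the diagonal order $\sigma^{(1)}\cdots\sigma^{(6)}$ is at distance $5n^2+2\Kemeny_\Pi^\star$ from each output. Your choice of all six permutations of $\{1,2,3\}$ inside the copies is equally valid; the paper instead uses the three cyclic assignments twice each. In both cases each output uses each $\pi_x$ twice, and each pair of outputs meets each unordered pair $\{\pi_x,\pi_y\}$ twice.

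The one real gap is that you never exhibit three copy orders satisfying (a) and (b). Everything downstream depends on them: the $2$-to-$1$ cross splits, the cross distance $10n^2$, and the $5n^2$ cross term for the diagonal order. As written, you have not shown the construction exists. The gap is easily closed. Take
\[
\tau_1=1\,2\,5\,6\,4\,3,\qquad \tau_2=1\,3\,4\,6\,5\,2,\qquad \tau_3=2\,3\,4\,5\,6\,1 .
\]
Their inversion sets are:
\begin{itemize}
\item $I_3=\{(1,2),\dots,(1,6)\}$, all pairs through $1$;
\item $I_2=\{(2,3),(2,4),(2,5),(2,6),(5,6)\}$, the remaining pairs through $2$ plus $(5,6)$;
\item $I_1=\{(3,4),(3,5),(3,6),(4,5),(4,6)\}$, the other five pairs inside $\{3,4,5,6\}$.
\end{itemize}
These are explicit orders, so closure and co-closure are automatic. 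They partition the $15$ pairs into three sets of size $5$, and they are exactly the copy orders the paper uses. With this triple inserted, your proof is complete.
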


\begin{proof}
Let $\Pi=(\pi_1,\pi_2,\pi_3)$ be a profile on a candidate set $C$ of size $n$.
Create six disjoint labeled copies $C_1,\dots,C_6$ of $C$.
For an order $\pi$ on $C$, write $\pi^{(r)}$ for its relabeled copy on $C_r$.
Set $\widehat\Pi=(\rho_1,\rho_2,\rho_3)$.
Each row in the following table gives the copy order of one output ranking and the input ranking used within each copy:
\[
\begin{array}{c|c|cccccc}
 & \text{copy order} & C_1 & C_2 & C_3 & C_4 & C_5 & C_6\\
\hline
\rho_1 & C_1C_2C_5C_6C_4C_3 & \pi_1 & \pi_2 & \pi_3 & \pi_1 & \pi_2 & \pi_3\\
\rho_2 & C_1C_3C_4C_6C_5C_2 & \pi_2 & \pi_3 & \pi_1 & \pi_2 & \pi_3 & \pi_1\\
\rho_3 & C_2C_3C_4C_5C_6C_1 & \pi_3 & \pi_1 & \pi_2 & \pi_3 & \pi_1 & \pi_2
\end{array}
\]

For $j\in\{1,2,3\}$, let $I_j$ be the inversion set of the copy order of $\rho_j$ relative to $C_1C_2\cdots C_6$.
Thus $I_j$ contains $(r,s)$ with $r<s$ exactly when $C_s$ precedes $C_r$ in $\rho_j$.
The three inversion sets are
\begin{equation}\label{eq:center-inversion-partition}
\begin{aligned}
I_1&=\{(3,4),(3,5),(3,6),(4,5),(4,6)\},\\
I_2&=\{(2,3),(2,4),(2,5),(2,6),(5,6)\},\\
I_3&=\{(1,2),(1,3),(1,4),(1,5),(1,6)\}.
\end{aligned}
\end{equation}
These sets partition the $\binom 62=15$ pairs of copy indices, and each has size five.
Hence the symmetric difference of any two inversion sets has size ten, so any two output rankings disagree on the relative order of ten pairs of copies.
Within the six copies, the restrictions of any two output rankings realize each unordered pair from $\{\pi_1,\pi_2,\pi_3\}$ exactly twice.
Writing
\[
S_\Pi=\KTdist(\pi_1,\pi_2)+\KTdist(\pi_1,\pi_3)+\KTdist(\pi_2,\pi_3),
\]
we obtain
\[
\KTdist(\rho_a,\rho_b)=10n^2+2S_\Pi
\]
for $1\leq a<b\leq3$.
Thus the three output rankings are pairwise equidistant.
If $\Pi$ is all\nobreakdash-$2$\nobreakdash-to\nobreakdash-$1$, then each candidate pair contributes two to $S_\Pi$, because one input ranking disagrees with the other two on that pair.
Hence $S_\Pi=2\binom n2=n(n-1)$, and the common output distance is $12n^2-2n$.

We next derive lower bounds for the Kemeny objective and the center radius.
Fix an aggregate order $\tau$ of the $6n$ output candidates, and let $\tau_r$ be its restriction to $C_r$, relabeled as an order of $C$.
For each copy, the restrictions of $\rho_1,\rho_2,\rho_3$ are $\pi_1,\pi_2,\pi_3$ in some order.
The contribution from pairs within the six copies is therefore at least
\[
\sum_{r=1}^6\sum_{i=1}^3\KTdist(\pi_i,\tau_r)
\geq 6\Kemeny_\Pi^\star.
\]
For a candidate pair with one candidate in each of two distinct copies, \cref{eq:center-inversion-partition} shows that two output rankings use one orientation and the third uses the opposite orientation.
Whichever orientation $\tau$ chooses, at least one output ranking disagrees with it.
There are $\binom 62n^2=15n^2$ such candidate pairs.
Adding the two contributions gives
\begin{equation}\label{eq:center-sum-lower}
\sum_{j=1}^3\KTdist(\rho_j,\tau)
\geq 15n^2+6\Kemeny_\Pi^\star.
\end{equation}
At least one of the three distances is at least one third of their sum, so
\begin{equation}\label{eq:center-radius-lower}
\max_{j\in\{1,2,3\}}\KTdist(\rho_j,\tau)
\geq 5n^2+2\Kemeny_\Pi^\star.
\end{equation}

Let $\sigma^\star$ be a Kemeny-optimal order of $\Pi$, and let $\widehat\sigma=(\sigma^\star)^{(1)}\cdots(\sigma^\star)^{(6)}$.
Each output copy order has five inversions relative to $C_1C_2\cdots C_6$, so pairs from different copies contribute $5n^2$ to the distance from $\widehat\sigma$.
Within the six copies, each output ranking uses each of $\pi_1,\pi_2,\pi_3$ exactly twice, so pairs within copies contribute $2\Kemeny_\Pi^\star$.
Therefore
\begin{equation}\label{eq:center-upper}
\KTdist(\rho_j,\widehat\sigma)=5n^2+2\Kemeny_\Pi^\star
\end{equation}
for $j=1,2,3$.
The order $\widehat\sigma$ attains both lower bounds, and hence
\begin{equation}\label{eq:balanced-values}
\Kemeny_{\widehat\Pi}^\star
=15n^2+6\Kemeny_\Pi^\star,
\qquad
\mathcal R_{\widehat\Pi}^\star
=5n^2+2\Kemeny_\Pi^\star.
\end{equation}

It remains to characterize equality in \cref{eq:center-sum-lower}.
An aggregate order is Kemeny-optimal for $\widehat\Pi$ exactly when every pair of candidates from different copies contributes one to the sum of the three distances and the restriction to every copy contributes $\Kemeny_\Pi^\star$.
Fix $r<s$ and a candidate pair with one candidate in $C_r$ and the other in $C_s$.
Two output rankings place the candidate from $C_r$ first, while the third places the candidate from $C_s$ first.
This pair contributes one if the aggregate places the candidate from $C_r$ first and two otherwise.
Thus equality forces every candidate of $C_r$ to precede every candidate of $C_s$ whenever $r<s$.
It also forces the restriction to each copy to be Kemeny-optimal for $\Pi$.
Conversely, every order $\sigma_1^{(1)}\cdots\sigma_6^{(6)}$ with $\sigma_r\in\KOpt(\Pi)$ has contribution $15n^2$ from pairs in different copies and contribution $6\Kemeny_\Pi^\star$ from pairs within copies.
This proves the stated formula for $\KOpt(\widehat\Pi)$ and the identity $|\KOpt(\widehat\Pi)|=|\KOpt(\Pi)|^6$.

Now let $\tau$ be an optimal center of $\widehat\Pi$.
Each of its three distances is at most $\mathcal R_{\widehat\Pi}^\star$, so their sum is at most $3\mathcal R_{\widehat\Pi}^\star=15n^2+6\Kemeny_\Pi^\star$.
By \cref{eq:center-sum-lower}, equality holds.
Thus $\tau$ is Kemeny-optimal, and all three distances equal $\mathcal R_{\widehat\Pi}^\star$.
For every $\sigma\in\KOpt(\Pi)$, the order $\sigma^{(1)}\cdots\sigma^{(6)}$ satisfies \cref{eq:center-upper} and is therefore an optimal center.
If $\Pi$ has a unique Kemeny-optimal order, the product formula gives a unique Kemeny-optimal output order and hence a unique center.
If $\Pi$ has two distinct Kemeny-optimal orders, using either order in all six copies gives two distinct optimal centers.
This proves the uniqueness equivalences.

Finally, suppose that $\Pi$ is all\nobreakdash-$2$\nobreakdash-to\nobreakdash-$1$.
For two candidates in the same copy, the three output restrictions are precisely the three input rankings, so their comparison is split $2$\nobreakdash-to\nobreakdash-$1$.
For candidates in different copies, the same conclusion follows from \cref{eq:center-inversion-partition}.
Hence $\widehat\Pi$ is all\nobreakdash-$2$\nobreakdash-to\nobreakdash-$1$.
\end{proof}

\subsection{Consequences for the center and other distance objectives}\label{sec:center-applications}

We apply \cref{thm:balanced-symmetrization} to the hard profiles from \cref{thm:main}.
The output profiles have three pairwise-equidistant rankings, remain all\nobreakdash-$2$\nobreakdash-to\nobreakdash-$1$, and have an optimal Kemeny score and center radius that are explicit affine functions of the original optimal Kemeny score.

\begin{proof}[Proof of \cref{thm:center}]
Apply \cref{thm:balanced-symmetrization} to a three-ranking \KemenyScore{} instance $(\Pi,\kappa)$ on $n$ candidates.
For the Kemeny objective, use threshold $15n^2+6\kappa$; for the center objective, use threshold $5n^2+2\kappa$.
By \cref{eq:balanced-values}, either output threshold is met exactly when $\Kemeny_\Pi^\star\leq\kappa$.
Pairwise equidistance and the all\nobreakdash-$2$\nobreakdash-to\nobreakdash-$1$ condition can be checked in polynomial time, and an aggregate order certifies either threshold bound.
Thus both restricted problems are in NP.

It remains to determine the common input distance.
Put $N=6n$ and $Q=\binom N2$.
For any three rankings, each candidate pair contributes either zero or two to the sum of their three pairwise distances, so this sum is at most $2Q$.
If the three distances have common value $D$, then $3D\leq2Q$.
In an all\nobreakdash-$2$\nobreakdash-to\nobreakdash-$1$ profile, every candidate pair contributes two, so equality holds and $D=2Q/3$.
The identity between the optimal Kemeny score and the center radius follows from \cref{thm:balanced-symmetrization}.
\end{proof}

After identifying the candidates with $\{1,\dots,N\}$, the two threshold problems in \cref{thm:center} are the threshold decision versions of $\textnormal{\textsc{PCM}}$-$3$ and $\textnormal{\textsc{PCM}}_{\max}$-$3$ in the terminology of Biedl, Brandenburg, and Deng~\cite{BiedlBrandenburgDeng2009}.

The exact-value and solution-recovery properties of the main reduction also transfer to the center objective.
Let $\Pi_G$ be the profile on $n_G$ candidates constructed from a graph $G$, and let $\widehat\Pi_G$ be its six-copy output.
Then
\[
\Kemeny_{\Pi_G}^{\star}
=\frac{\mathcal R_{\widehat\Pi_G}^{\star}-5n_G^2}{2}.
\]
Substituting this value into \cref{eq:maxcut-recovery} determines $\maxcutvalue(G)$.
Moreover, the restriction of any optimal center of $\widehat\Pi_G$ to any copy is Kemeny-optimal for $\Pi_G$, so \cref{prop:maxcut-recovery} recovers a maximum cut of $G$ in polynomial time.

The next lemma records the distance inequalities used for the remaining objectives.

\begin{lemma}\label{lem:center-majorization}
Let $\Pi$ be a profile of three rankings on $n$ candidates, let $\widehat\Pi=(\rho_1,\rho_2,\rho_3)$ be the output of \cref{thm:balanced-symmetrization}, and put
\[
r^\star=\mathcal R_{\widehat\Pi}^\star=5n^2+2\Kemeny_\Pi^\star.
\]
For an aggregate order $\sigma$, let
\[
d_1^\downarrow(\sigma)\geq d_2^\downarrow(\sigma)\geq d_3^\downarrow(\sigma)
\]
be its three distances from the output rankings in nonincreasing order.
Then, for $k=1,2,3$,
\[
\sum_{j=1}^k d_j^\downarrow(\sigma)\geq kr^\star.
\]
The order constructed in \cref{eq:center-upper} attains equality for all three values of $k$.
\end{lemma}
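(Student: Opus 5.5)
The argument rests on two facts already established in the proof of \cref{thm:balanced-symmetrization}: the sum lower bound \cref{eq:center-sum-lower}, which says that $d_1^\downarrow(\sigma)+d_2^\downarrow(\sigma)+d_3^\downarrow(\sigma)\geq 15n^2+6\Kemeny_\Pi^\star=3r^\star$ for every aggregate order $\sigma$, and the exact distances \cref{eq:center-upper} for the order $\widehat\sigma=(\sigma^\star)^{(1)}\cdots(\sigma^\star)^{(6)}$. From these, the three partial-sum inequalities follow by an averaging argument on sorted sequences.

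For $k=3$, the claimed inequality is exactly \cref{eq:center-sum-lower}, since the sum of the three sorted distances equals the sum of the three distances. For $k=1$ and $k=2$, I would use the standard fact that, for a nonincreasing sequence, the average of the largest $k$ terms is at least the average of all terms. Concretely, write $S=d_1^\downarrow+d_2^\downarrow+d_3^\downarrow$. Because the terms are sorted, $d_1^\downarrow\geq S/3$. For $k=2$, the inequality $d_1^\downarrow+d_2^\downarrow\geq 2S/3$ is equivalent to $d_1^\downarrow+d_2^\downarrow\geq 2d_3^\downarrow$, which holds because each of $d_1^\downarrow$ and $d_2^\downarrow$ is at least $d_3^\downarrow$. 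Combining either of these with $S\geq 3r^\star$ gives $\sum_{j\leq k}d_j^\downarrow\geq kr^\star$. (The case $k=1$ is just \cref{eq:center-radius-lower}.)

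For equality, \cref{eq:center-upper} shows that $\widehat\sigma$ has all three distances equal to $5n^2+2\Kemeny_\Pi^\star=r^\star$. Hence every partial sum of its sorted distances equals $kr^\star$.

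There is no real obstacle here. The only point requiring care is the $k=2$ step, and that is immediate once it is rewritten as $d_1^\downarrow+d_2^\downarrow\geq 2d_3^\downarrow$.
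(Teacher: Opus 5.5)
Your proposal is correct and follows essentially the same route as the paper: the sum bound \cref{eq:center-sum-lower}, the observation that for a nonincreasing triple the average of the first $k$ entries is at least the overall average, and the distance vector $(r^\star,r^\star,r^\star)$ of the order in \cref{eq:center-upper} for equality. Your explicit check of the case $k=2$ through $d_1^\downarrow+d_2^\downarrow\geq 2d_3^\downarrow$ is a fine way to spell out that averaging step.
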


\begin{proof}
By \cref{eq:center-sum-lower}, the sum of the three distances is at least $3r^\star$.
For a nonincreasing triple, the average of the first $k$ entries is at least the average of all three entries.
Hence the sum of the first $k$ entries is at least $kr^\star$ for $k=1,2,3$.
The order in \cref{eq:center-upper} has distance vector $(r^\star,r^\star,r^\star)$.
\end{proof}

The next result concerns the sum-of-powers objective.

\begin{corollary}\label{cor:power-sum}
Fix an integer $p\geq1$.
Given three permutations $(\rho_1,\rho_2,\rho_3)$ and a nonnegative integer $L$, it is NP-complete to decide whether some aggregate order $\sigma$ satisfies
\[
\sum_{j=1}^3\KTdist(\rho_j,\sigma)^p\leq L.
\]
Hardness persists when the three input permutations satisfy the restrictions in \cref{thm:center}.
For the six-copy output of a profile $\Pi$ on $n$ candidates, the minimum value of this sum-of-powers objective is
\begin{equation}\label{eq:power-sum-value}
3\bigl(5n^2+2\Kemeny_\Pi^\star\bigr)^p.
\end{equation}
\end{corollary}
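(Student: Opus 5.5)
The plan is to reduce from \KemenyScore{} on three rankings through the six-copy construction of \cref{thm:balanced-symmetrization}. Given an instance $(\Pi,\kappa)$, where we may take $\Pi$ all\nobreakdash-$2$\nobreakdash-to\nobreakdash-$1$ by \cref{thm:main}, we output the profile $\widehat\Pi=(\rho_1,\rho_2,\rho_3)$, viewed as permutations, together with the threshold
\[
L=3\bigl(5n^2+2\kappa\bigr)^p.
\]
Since $p$ is fixed and $\kappa\leq3\binom n2$, the number $L$ has polynomial bit length and the reduction is polynomial. Membership in NP is immediate, because an aggregate order is a certificate whose power sum can be evaluated exactly in polynomial time.

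The central step is the value formula \cref{eq:power-sum-value}. For the lower bound I use \cref{lem:center-majorization}: for every $\sigma$, the distance vector $d^\downarrow(\sigma)$ weakly majorizes from above the constant vector $(r^\star,r^\star,r^\star)$. It is simpler to argue directly. By \cref{eq:center-sum-lower}, the sum of the three distances is at least $3r^\star$. Since $x\mapsto x^p$ is convex and nondecreasing on $[0,\infty)$, the power-mean inequality gives
\[
\sum_j d_j^p\;\geq\;3\Bigl(\tfrac13\sum_j d_j\Bigr)^p\;\geq\;3(r^\star)^p.
\]
For the upper bound, the order $\sigma^{(1)}\cdots\sigma^{(6)}$ from \cref{eq:center-upper} has all three distances equal to $r^\star=5n^2+2\Kemeny_\Pi^\star$, so it attains the bound.

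It remains to check correctness. If $\Kemeny_\Pi^\star\leq\kappa$, the minimum of the power sum is $3(r^\star)^p\leq L$, by monotonicity of $x\mapsto x^p$. Conversely, if some $\sigma$ has power sum at most $L$, then $3(r^\star)^p\leq3(5n^2+2\kappa)^p$. Strict monotonicity of $x\mapsto x^p$ on nonnegative reals then gives $r^\star\leq5n^2+2\kappa$, hence $\Kemeny_\Pi^\star\leq\kappa$.

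The restrictions of \cref{thm:center} (pairwise equidistance, the all\nobreakdash-$2$\nobreakdash-to\nobreakdash-$1$ property, and common distance $\frac23\binom{6n}2$) are inherited directly from \cref{thm:balanced-symmetrization} and the proof of \cref{thm:center}.

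I expect no step to be hard. The only point needing care is the lower bound: it must use the sum bound \cref{eq:center-sum-lower} together with convexity, not just the center radius bound, because $p=1$ is the Kemeny case and the max bound alone does not control the sum.
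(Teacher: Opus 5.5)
Your proposal is correct and follows the paper's proof essentially verbatim: the threshold $L=3(5n^2+2\kappa)^p$, the lower bound from \cref{eq:center-sum-lower} combined with convexity of $x\mapsto x^p$, attainment by the order in \cref{eq:center-upper}, and strict monotonicity of $x\mapsto x^p$ for the converse. One small remark: the bound $\kappa\leq3\binom n2$ is neither given nor needed, since for fixed $p$ the bit length of $L$ is already polynomial in the binary encoding of $\kappa$.
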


\begin{proof}
Let $d_j=\KTdist(\rho_j,\sigma)$ and retain $r^\star$ from \cref{lem:center-majorization}.
Convexity gives
\[
\sum_{j=1}^3d_j^p
\geq3\left(\frac{d_1+d_2+d_3}{3}\right)^p
\geq3(r^\star)^p.
\]
The order in \cref{eq:center-upper} has all three distances equal to $r^\star$, so equality is attained and \cref{eq:power-sum-value} follows.
For an input threshold $\kappa$, use $L=3(5n^2+2\kappa)^p$.
Since $x\mapsto x^p$ is strictly increasing on the nonnegative reals, the minimum output value is at most $L$ exactly when $\Kemeny_\Pi^\star\leq\kappa$.
Because $p$ is fixed, the threshold has polynomial encoding length.
Membership in NP follows by evaluating a proposed aggregate order.
\end{proof}

The case $p=1$ is the Kemeny objective.
For $p=2$, the objective is the Squared Kemeny rule, also called Kemeny's mean rule, and its minimizers are the empirical Fr\'echet means for Kendall--tau distance~\cite{LedererPetersWas2024}.

We next consider an ordered weighted sum of the distances.

\begin{corollary}\label{cor:weighted-sorted-distance}
Fix rational weights $w_1\geq w_2\geq w_3\geq0$ that are not all zero.
Given three permutations and a nonnegative rational threshold $L_w$, it is NP-complete to decide whether some aggregate order $\sigma$ satisfies
\[
\sum_{j=1}^3w_jd_j^\downarrow(\sigma)\leq L_w,
\]
where $d_1^\downarrow(\sigma)\geq d_2^\downarrow(\sigma)\geq d_3^\downarrow(\sigma)$ are the three input distances in nonincreasing order.
Hardness persists under the restrictions in \cref{thm:center}.
Writing $W=w_1+w_2+w_3$, the minimum ordered weighted sum of the distances on the six-copy output of a profile $\Pi$ on $n$ candidates is
\[
W\bigl(5n^2+2\Kemeny_\Pi^\star\bigr).
\]
This family includes the maximum distance, the sum of the two largest distances, and the total distance.
\end{corollary}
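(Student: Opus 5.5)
The plan mirrors \cref{cor:power-sum}, with \cref{lem:center-majorization} playing the role that convexity played there. Membership in NP is immediate: the weights and threshold are fixed rationals, and a proposed aggregate order yields its three distances in polynomial time. The sorted weighted sum can then be computed and compared with $L_w$ exactly. For hardness I will reduce from the restricted \KemenyScore{} instances of \cref{thm:main}. Given $(\Pi,\kappa)$ on $n$ candidates, I apply \cref{thm:balanced-symmetrization} to obtain the six-copy output $\widehat\Pi$. By \cref{thm:center}, this output already satisfies the restrictions of being pairwise-equidistant and all\nobreakdash-$2$\nobreakdash-to\nobreakdash-$1$. I will set the threshold to $L_w=W(5n^2+2\kappa)$, which is a nonnegative rational of polynomial encoding length.

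The key step is the exact minimum value. Put $r^\star=5n^2+2\Kemeny_\Pi^\star$ and write the sorted objective via Abel summation:
\[
\sum_{j=1}^3w_jd_j^\downarrow=(w_1-w_2)d_1^\downarrow+(w_2-w_3)(d_1^\downarrow+d_2^\downarrow)+w_3(d_1^\downarrow+d_2^\downarrow+d_3^\downarrow).
\]
All three coefficients $w_1-w_2$, $w_2-w_3$, $w_3$ are nonnegative by the monotonicity of the weights. \Cref{lem:center-majorization} bounds the three prefix sums below by $r^\star$, $2r^\star$ and $3r^\star$. Hence the objective is at least $(w_1-w_2)r^\star+2(w_2-w_3)r^\star+3w_3r^\star=Wr^\star$. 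The order from \cref{eq:center-upper} has distance vector $(r^\star,r^\star,r^\star)$, so it attains this bound, and the minimum equals $W(5n^2+2\Kemeny_\Pi^\star)$.

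Since the weights are not all zero, $W>0$, so the map $x\mapsto W(5n^2+2x)$ is strictly increasing. Therefore the minimum is at most $L_w$ exactly when $\Kemeny_\Pi^\star\le\kappa$, which completes the reduction. Finally, the three named special cases are the weight vectors $(1,0,0)$, $(1,1,0)$ and $(1,1,1)$, and each satisfies the hypotheses. The only delicate point is the Abel-summation step; it needs exactly the monotonicity $w_1\ge w_2\ge w_3\ge0$, and it is why the prefix-sum form of \cref{lem:center-majorization} is used rather than only the total-sum bound.
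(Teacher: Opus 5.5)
Your proposal is correct and matches the paper's proof: the same Abel-summation rewriting with the prefix-sum bounds of \cref{lem:center-majorization}, equality at the order from \cref{eq:center-upper}, and the threshold $L_w=W(5n^2+2\kappa)$ with $W>0$. One small wording slip: $L_w$ is part of the input, not fixed, but as a rational of polynomial encoding length it can still be compared exactly (the paper clears denominators), so nothing changes.
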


\begin{proof}
Retain $r^\star$ and the sorted distances from \cref{lem:center-majorization}.
Then
\[
w_1d_1^\downarrow+w_2d_2^\downarrow+w_3d_3^\downarrow
=(w_1-w_2)d_1^\downarrow+(w_2-w_3)(d_1^\downarrow+d_2^\downarrow)+w_3(d_1^\downarrow+d_2^\downarrow+d_3^\downarrow)
\geq(w_1+w_2+w_3)r^\star.
\]
The order in \cref{eq:center-upper} attains equality.
For an input threshold $\kappa$, use $L_w=W(5n^2+2\kappa)$.
Since $W>0$, the minimum output value is at most $L_w$ exactly when $\Kemeny_\Pi^\star\leq\kappa$.
Multiplying the fixed weights and the threshold by a common denominator gives an equivalent integer comparison of polynomial encoding length.
Membership in NP follows by evaluating a proposed aggregate order.
\end{proof}

The crossing-distance correspondence identifies the sum and maximum of the crossing counts in three drawings with the Kemeny and center objectives, respectively.

\begin{corollary}\label{cor:crossing-formulations}
Consider three one-sided two-layer drawings whose edge sets are perfect matchings, whose fixed-layer orders are specified by three permutations of $N$ elements, and whose free layer must use the same order in all three drawings.
It is NP-complete to decide whether the sum of the three crossing counts is at most a given nonnegative integer, and it is NP-complete to decide whether their maximum is at most a given nonnegative integer.
Hardness persists when the three fixed-layer permutations are pairwise equidistant with common distance $\frac23\binom N2$ and every element pair appears in both relative orders among them.
Equivalently, the common and maximum monochromatic one-sided crossing problems are NP-complete for three edge-coloured perfect matchings when crossings are counted only between edges of the same colour.
\end{corollary}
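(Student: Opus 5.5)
The plan is to reduce everything to \cref{thm:center} via the standard crossing--distance correspondence. Fix the free-layer order $\sigma$ on the $N$ free-layer vertices. In a one-sided two-layer drawing whose edges form a perfect matching, identify each free vertex with its matched fixed-layer vertex. Two edges then cross exactly when their endpoints appear in opposite relative orders on the two layers. So the crossing count of drawing $j$ equals $\KTdist(\rho_j,\sigma)$, where $\rho_j$ is the fixed-layer permutation pulled back to the free vertices through the matching.

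Given this, the sum of the three crossing counts is $\Kemeny_{(\rho_1,\rho_2,\rho_3)}(\sigma)$, and their maximum is $\max_j\KTdist(\rho_j,\sigma)$. The two threshold problems are therefore exactly \KemenyScore{} and \KendallCenter{} on the profile $(\rho_1,\rho_2,\rho_3)$. The reduction from the instances of \cref{thm:center} is as follows:
\begin{itemize}
\item take the free layer to be the candidate set;
\item let every drawing use the identity matching to an $N$-element fixed layer;
\item let the fixed-layer order of drawing $j$ be $\rho_j$;
\item keep the threshold unchanged.
\end{itemize}
This is polynomial, and \cref{thm:center} gives NP-hardness for both problems. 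Pairwise equidistance with common distance $\frac23\binom N2$ and the all\nobreakdash-$2$\nobreakdash-to\nobreakdash-$1$ property are statements about $\rho_1,\rho_2,\rho_3$ alone, so they carry over verbatim. Membership in NP holds because a common free-layer order is a certificate, and crossing counts are computable in $O(N^2)$ time.

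For the monochromatic reformulation, place the three matchings on the same two layers with colours $1,2,3$, so the free layer has $N$ vertices each of degree $3$. Count only crossings between edges of the same colour. Then the colour-$j$ crossings are exactly the crossings of drawing $j$, and the common and maximum monochromatic problems coincide with the sum and maximum problems above.

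The main obstacle is bookkeeping rather than mathematics. The identification of matchings with permutations has to be made so that the three drawings share a single free-layer order, which requires relabeling through the matchings consistently. One also has to confirm that the restrictions in \cref{thm:center} survive this identification. Using the identity matching in every drawing handles both points.
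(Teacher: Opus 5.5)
Your proposal is correct and follows the paper's proof: the paper's argument consists of the single observation that each drawing's crossing count is the Kendall--tau distance between its fixed-layer permutation and the common free-layer order. The sum and maximum problems are therefore exactly \KemenyScore{} and \KendallCenter{}, and \cref{thm:center} applies directly; your extra bookkeeping (identity matchings, unchanged thresholds, NP membership, and merging the three drawings position-by-position onto one fixed layer for the monochromatic version, where the permutations are then carried by the colour-$j$ matchings) only makes explicit what the paper leaves implicit.
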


\begin{proof}
For each perfect matching, the crossing count against the common free-layer order is its Kendall--tau distance from that order.
The two claims therefore follow from \cref{thm:center}.
\end{proof}

In the Mallows model with dispersion parameter $\phi\in(0,1)$, the probability of an observed ranking at Kendall--tau distance $d$ from the central ranking is proportional to $\phi^d$~\cite{Mallows1957}.

\begin{corollary}\label{cor:mallows}
Fix $\phi\in(0,1)$ in the Mallows model under Kendall--tau distance.
For three observed rankings, computing a maximum-likelihood central ranking is NP-hard.
It is $\ThetaTwoP$-complete to decide whether a designated candidate is first in some or every maximum-likelihood central ranking and whether one designated candidate precedes another in some or every such ranking.
Verifying whether a supplied ranking is a maximum-likelihood central ranking or the unique such ranking is coNP-complete.
These statements hold when both orientations of every candidate pair occur among the observations and their majority tournament has majority dimension exactly $3$.
The NP-hardness of computing a central ranking also persists when the observations are pairwise equidistant with common distance $\frac23\binom N2$.
\end{corollary}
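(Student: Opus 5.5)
The plan is to reduce everything to the Kemeny objective by showing that, for a fixed $\phi\in(0,1)$, maximum-likelihood central rankings are exactly Kemeny-optimal aggregates. The normalizing constant of the Mallows model under Kendall--tau distance is $Z(\phi)=\sum_\sigma\phi^{\KTdist(\sigma,\pi)}=\prod_{k=1}^{N}\frac{1-\phi^k}{1-\phi}$. Because of right-invariance, this constant does not depend on the central ranking $\pi$. The likelihood of observations $\pi_1,\pi_2,\pi_3$ with central ranking $\sigma$ is therefore $Z(\phi)^{-3}\phi^{\Kemeny_\Pi(\sigma)}$. Since $0<\phi<1$, this quantity is a strictly decreasing function of $\Kemeny_\Pi(\sigma)$. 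It follows that the set of maximum-likelihood central rankings is exactly $\KOpt(\Pi)$.

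With this identification, each decision question transfers verbatim. These are: first in some or every maximum-likelihood ranking, possible or necessary precedence, and recognition of a (unique) maximum-likelihood ranking. They coincide with \KemenyWinner{}, \KemenyUniqueWinner{}, \KemenyPossiblePrecedence{}, \KemenyNecessaryPrecedence{}, \KemenyConsensusRecognition{}, and \UniqueKemenyConsensusRecognition{} on the same profile.

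Hardness then follows from \cref{thm:three-ranking-optimality}, and membership from the same parallel-query and certificate arguments given there. Those arguments depend only on the set of optimal orders, so the parameter $\phi$ never needs to be represented or computed. The restrictions hold on the hard instances themselves: both orientations occur for every pair because the instances are all\nobreakdash-$2$\nobreakdash-to\nobreakdash-$1$, and the majority dimension is exactly $3$ because these are the instances of \cref{thm:three-ranking-optimality}.

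For NP-hardness of computing a central ranking, any algorithm returning a maximum-likelihood ranking returns a Kemeny-optimal aggregate. Its score can be evaluated and compared with $\kappa$, which decides \KemenyScore{}; \cref{thm:main} then gives hardness. For the pairwise-equidistant claim, I would apply the same argument to the six-copy instances of \cref{thm:center}, whose common distance is $\frac23\binom N2$.

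The only step needing care is the independence of $Z$ from the center. This follows from the bijection $\tau\mapsto\tau\circ\sigma^{-1}\circ\pi$, which preserves Kendall--tau distance under relabeling. Everything else is a routine transfer.
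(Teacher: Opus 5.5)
Your proposal is correct and matches the paper's proof. The likelihood equals $Z_N(\phi)^{-3}\phi^{\Kemeny_\Pi(\sigma)}$ with a normalizing constant that does not depend on the center, so for $0<\phi<1$ the maximum-likelihood central rankings are exactly $\KOpt(\Pi)$, and every claim transfers from \cref{thm:main,thm:three-ranking-optimality,thm:center}; your explicit product formula for the normalizer and your search-to-decision remark only spell out details the paper leaves implicit.
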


\begin{proof}
Let $Z_N(\phi)$ be the normalizing constant for one observation.
For a central ranking $\sigma$, the likelihood of independent observations $\pi_1,\pi_2,\pi_3$ is
\[
Z_N(\phi)^{-3}
\phi^{\KTdist(\pi_1,\sigma)+\KTdist(\pi_2,\sigma)+\KTdist(\pi_3,\sigma)}.
\]
The factor $Z_N(\phi)^{-3}$ is independent of $\sigma$.
Since $0<\phi<1$, maximizing the likelihood is equivalent to minimizing the Kemeny objective.
The winner, precedence, and recognition complexity statements follow from \cref{thm:three-ranking-optimality}, while the pairwise-equidistant computing hardness follows from \cref{thm:center}.
\end{proof}

\begin{remark}[Scope of the crossing and support consequences]\label{rem:center-limits}
The crossing corollary counts the three drawings separately, or counts only monochromatic crossings after they are merged.
It does not imply hardness for ordinary uncoloured one-sided crossing minimization on all degree-$3$ bipartite graphs.
The hard profiles in \cref{thm:center} have exact support $2/3$.
The transitivity argument above the two-thirds threshold applies to the Kemeny objective because that objective reduces to the majority tournament; it does not give the same reduction for the center objective, the sum-of-powers objective, or an ordered weighted sum of the distances.
No support-threshold classification for those objectives is claimed here.
\end{remark}

\subsection{Classification by the Number of Rankings}\label{sec:fixed-profile-sizes}

The preceding results complete the complexity classification for every fixed number of input rankings.

\begin{theorem}\label{thm:fixed-profile-size}
Fix an integer $q\geq1$.
\begin{enumerate}[label=\textup{(\roman*)}]
\item If $q\leq2$, \KemenyScore{}, \KendallCenter{}, the four winner and precedence problems, and both recognition problems are polynomial-time solvable.
\item If $q\geq3$, \KemenyScore{} and \KendallCenter{} are NP-complete, the four winner and precedence problems are $\ThetaTwoP$-complete, and \KemenyConsensusRecognition{} and \UniqueKemenyConsensusRecognition{} are coNP-complete.
\item If $q=3$ or $q\geq5$, \KemenyScore{} remains NP-complete, the four winner and precedence problems remain $\ThetaTwoP$-complete, and both recognition problems remain coNP-complete when every pair has support $\lfloor q/2\rfloor+1$; the hard instances may be chosen with majority dimension exactly $3$.
\item For $q=4$, \KemenyScore{}, the four winner and precedence problems, and both recognition problems are polynomial-time solvable when every pair has support at least $3$.
When every support belongs to $\{2,3\}$, \KemenyScore{} is NP-complete, the four winner and precedence problems are $\ThetaTwoP$-complete, and both recognition problems are coNP-complete.
\end{enumerate}
\end{theorem}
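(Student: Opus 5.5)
The plan is to assemble \cref{thm:fixed-profile-size} from the earlier results, handling each part by a small case analysis on $q$.

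\textbf{Part (i).} For $q=1$ the input ranking is the unique Kemeny-optimal aggregate and also the unique center at radius $0$. For $q=2$, every disagreeing pair contributes one to every aggregate, as noted in the introduction, and the Kemeny-optimal aggregates are exactly the linear extensions of the agreement partial order. The optimal score is therefore $\KTdist(\pi_1,\pi_2)$. For the center, I would use the triangle inequality: it gives radius at least $\lceil \KTdist(\pi_1,\pi_2)/2\rceil$. This radius is attained by walking along a shortest adjacent-transposition path from $\pi_1$ to $\pi_2$ and stopping at its midpoint. Winner, precedence, and recognition then reduce to questions about linear extensions of a partial order, all decidable in polynomial time:
\begin{itemize}
\item $c$ can be first iff $c$ is minimal;
\item $c$ is the unique winner iff it is the unique minimal element;
\item $c\before d$ is possible iff $d$ does not precede $c$ in the partial order, and necessary iff $c$ precedes $d$;
\item recognition is testing whether the supplied order is a linear extension;
\item unique recognition is testing whether the partial order is total, i.e., whether $\pi_1=\pi_2$.
\end{itemize}

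\textbf{Part (iii), done first since it feeds (ii).}
\begin{itemize}
\item For $q=3$, $s=2$ is exactly \cref{thm:main,thm:three-ranking-optimality}.
\item For odd $q\geq5$, set $s=(q+1)/2$. Then $q/2<s$, and $3s\leq 2q$ holds iff $q\geq3$, so the exact-support clause of \cref{thm:support-dichotomy} applies.
\item For even $q\geq6$, set $s=q/2+1$. Then $3s\leq2q$ holds iff $q\geq6$, and the clause of \cref{thm:support-dichotomy} stating exact support $s+1$ with majority dimension exactly $3$ applies.
\end{itemize}
In all cases the resulting tournaments have majority dimension exactly $3$.

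\textbf{Part (ii).} For $q\ne4$, all hardness claims except \KendallCenter{} follow from (iii). For $q=4$ they follow from \cref{cor:four-margin-heterogeneity}. Membership in NP, $\ThetaTwoP$, and coNP is given by the membership arguments in \cref{thm:support-dichotomy}, which do not depend on the support restriction.

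The main obstacle is \KendallCenter{} for every $q\geq4$. \cref{thm:center} only covers $q=3$, and appending order–reverse pairs does not behave additively for a max objective. I would pad the three hard pairwise-equidistant rankings from \cref{thm:center} with $q-3$ further rankings, choosing them so that none of them is ever the maximizing distance near an optimal center. The first attempt would add copies of one of the $\rho_j$: the duplicates do not change the maximum, so the radius stays $\mathcal R^\star$ and the $q=3$ reduction transfers verbatim for every $q\geq4$. Membership of \KendallCenter{} in NP was already noted in the preliminaries.

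\textbf{Part (iv).} This is \cref{cor:four-margin-heterogeneity} directly, equivalently the case $q=4$, $s=3$ of \cref{thm:support-dichotomy} together with the $\{2,3\}$ constructions of \cref{lem:three-to-four,lem:unique-consensus-three-to-four}.
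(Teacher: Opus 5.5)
Your proposal is correct. For parts (i), (iii) and (iv) it follows the paper's proof essentially line by line. For the Kemeny half of (ii), the paper argues in one stroke: every profile has support at least $\lceil q/2\rceil$, so the hard case of \cref{thm:support-dichotomy} applies. You instead split into $q\neq4$ via (iii) and $q=4$ via \cref{cor:four-margin-heterogeneity}, but this rests on the same results.

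The one genuinely different step is \KendallCenter{} for $q\geq4$. The paper cites Biedl, Brandenburg, and Deng; you pad the hard three-ranking instances of \cref{thm:center} with $q-3$ copies of $\rho_1$. That argument is complete, so you need not frame it as an obstacle or a ``first attempt.'' For every aggregate $\sigma$, the maximum of $\KTdist(\cdot,\sigma)$ over the padded list equals the maximum over $\{\rho_1,\rho_2,\rho_3\}$, so the radius and threshold carry over unchanged. Profiles are explicitly allowed to contain repetitions. This gives a self-contained proof of (ii) that uses only results of the paper. The cost is that your hard instances for $q\geq4$ contain repeated rankings and are no longer pairwise equidistant. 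The statement does not ask for either property.

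Two small slips are worth fixing:
\begin{itemize}
\item For $q=2$, unique recognition must check both that the partial order is total and that the supplied order equals $\pi_1=\pi_2$.
\item For even $q\geq6$, your $s=q/2+1$ already satisfies $q/2<s\leq2q/3$, so the main exact-support clause of \cref{thm:support-dichotomy} applies directly. The clause about ``exact support $s+1$'' is stated with $s=q/2$, not with your $s$.
\end{itemize}
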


\begin{proof}
For one ranking, its order is the unique Kemeny-optimal aggregate and the center radius is zero.
For two rankings, a pair on which they disagree contributes one to every Kemeny aggregate, while the agreed comparisons form a partial order whose linear extensions are exactly the Kemeny-optimal aggregates.
Possible and necessary first positions and precedence relations are therefore decided by this partial order.
A supplied order is Kemeny-optimal exactly when it is a linear extension, and it is uniquely Kemeny-optimal exactly when it is a linear extension and the partial order is total.
For the center objective, if the two rankings have distance $D$, the triangle inequality gives radius at least $\lceil D/2\rceil$, and an order halfway along a shortest adjacent-swap path attains this value.
This proves (i).

For the Kemeny problems in item (ii), every profile satisfies support at least $\lceil q/2\rceil$, and $3\lceil q/2\rceil\leq2q$ for every $q\geq3$.
The hard case of \cref{thm:support-dichotomy} therefore applies.
The center case $q=3$ is \cref{thm:center}, while Biedl, Brandenburg, and Deng proved NP-completeness for every fixed $q\geq4$~\cite{BiedlBrandenburgDeng2009}.

For item (iii), put $s=\lfloor q/2\rfloor+1$.
When $q$ is odd and at least $3$, or even and at least $6$, one has $3s\leq2q$.
The exact-support clause of \cref{thm:support-dichotomy} proves the hardness assertions in item~(iii) under the promise that the majority dimension is exactly $3$.
Finally, item (iv) is \cref{cor:four-margin-heterogeneity}.
\end{proof}

The fixed-$q$ threshold classifications for \KemenyScore{} and \KendallCenter{} translate to the threshold decision versions of $\textnormal{\textsc{PCM}}$-$q$ and $\textnormal{\textsc{PCM}}_{\max}$-$q$, respectively.
Under the matching representation in \cref{cor:crossing-formulations}, they also give the fixed-$q$ classifications for common one-sided crossing minimization on $q$ perfect matchings with the total or maximum crossing count and with repeated fixed-layer permutations allowed.

Restricting the Kemeny objective to tie-free profiles changes the classification only for $q=4$.

\begin{corollary}\label{cor:tie-free-fixed-q}
Fix an integer $q\geq1$.
On tie-free profiles of exactly $q$ rankings, \KemenyScore{} is polynomial-time solvable for $q\in\{1,2,4\}$ and NP-complete for $q=3$ and every $q\geq5$.
Thus $q=4$ is the unique fixed profile size at least three for which the tie-free restriction is polynomial-time solvable.
\end{corollary}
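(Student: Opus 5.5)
The plan is to split by $q$ and use the support theory developed above. A profile is tie-free exactly when every pair has support strictly greater than $q/2$.

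\emph{The case $q=1$.} Every profile is tie-free, and the input order is the unique Kemeny-optimal aggregate.

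\emph{The case $q=2$.} A tie-free profile has the two rankings agreeing on every pair. The common order is then the unique optimum, and its score is $0$.

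\emph{The case $q=4$.} Tie-freeness means no pair is split $2$\nobreakdash-to\nobreakdash-$2$, so every pair has support at least $3$. Since $3\cdot3>2\cdot4$, the polynomial-time case of \cref{thm:support-dichotomy} (equivalently \cref{thm:fixed-profile-size}(iv)) applies. The majority tournament is therefore transitive, its topological order is the unique Kemeny-optimal aggregate, and \KemenyScore{} is decided by computing that order's score in $O(q|\cC|^2)$ time.

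\emph{Hardness for $q=3$ and $q\geq5$.} Membership in NP is as in \cref{sec:preliminaries}. For $q=3$, every three-ranking profile is tie-free because $q$ is odd, and \cref{thm:main} gives NP-hardness directly; its all\nobreakdash-$2$\nobreakdash-to\nobreakdash-$1$ instances are in any case tie-free. For $q\geq5$, set $s=\lfloor q/2\rfloor+1$. We check that $3s\leq2q$:
\begin{itemize}
\item for odd $q\geq5$, $s=(q+1)/2$ and the inequality reduces to $q\geq3$;
\item for even $q\geq6$, $s=q/2+1$ and the inequality reduces to $q\geq6$.
\end{itemize}
In both cases $s>q/2$, so the exact-support clause of \cref{thm:support-dichotomy}, via \cref{lem:exact-support-transformation}, yields NP-hard instances with every pair of support exactly $s$. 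Such profiles have no tied pair, so hardness holds on the tie-free class. This is \cref{thm:fixed-profile-size}(iii).

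The final sentence of the statement then follows, since among $q\geq3$ only $q=4$ falls into the polynomial-time case.

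The only step needing care is $q=4$: hardness for four rankings used support-$2$ pairs, which are exactly the ties excluded here. The argument must therefore rely on the tractable side of \cref{cor:four-margin-heterogeneity} and not on its hard side.
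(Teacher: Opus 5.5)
Your proposal is correct and follows the paper's proof essentially step for step. The cases $q\leq2$ are trivial, and $q=4$ uses the support-at-least-$3$ tractable case of \cref{thm:support-dichotomy}. The case $q=3$ is \cref{thm:main}, and for $q\geq5$ you use the exact-support hard clause with $s=\lfloor q/2\rfloor+1$, which is exactly the paper's choice of $(q+1)/2$ or $q/2+1$.
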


\begin{proof}
The cases $q\leq2$ are immediate.
The case $q=3$ is \cref{thm:main}.
For $q=4$, every pair in a tie-free profile has support at least $3>2q/3$, so \cref{thm:support-dichotomy} applies.
For odd $q\geq5$, choose exact support $(q+1)/2$; for even $q\geq6$, choose exact support $q/2+1$.
In both cases the exact-support hard clause of \cref{thm:support-dichotomy} applies.
\end{proof}

\section{Conclusion}\label{sec:conclusion}

The three-ranking hardness results and the profile transformations establish the fixed-profile-size classification for the Kemeny score, the four winner and precedence problems, and the two recognition problems.
For every fixed $q\leq2$, these problems are polynomial-time solvable.
For every fixed $q\geq3$, \KemenyScore{} is NP-complete, the four winner and precedence problems are $\ThetaTwoP$-complete, and the two recognition problems are coNP-complete.
The center problem \KendallCenter{} is polynomial-time solvable for $q\leq2$ and NP-complete for every fixed $q\geq3$.
The hard instances used for the three-ranking Kemeny results have exactly three all\nobreakdash-$2$\nobreakdash-to\nobreakdash-$1$ rankings and induce tournaments of majority dimension exactly $3$.

The pairwise-support classifications give the same sharp boundary for the score, winner, precedence, and recognition problems.
For fixed $q$, minimum pairwise support $s$ permits NP-hardness for \KemenyScore{}, $\ThetaTwoP$-hardness for the four winner and precedence problems, and coNP-hardness for the two recognition problems exactly when $3s\leq2q$; support strictly above two thirds forces a transitive majority tournament and a unique Kemeny-optimal aggregate.
Exact-support constructions establish sharpness when $s>q/2$, while two adjacent support values suffice in the boundary case $s=q/2$.
For four rankings, support at least $3$ makes \KemenyScore{}, the four winner and precedence problems, and the two recognition problems polynomial-time solvable, whereas allowing supports in $\{2,3\}$ yields NP-completeness, $\ThetaTwoP$-completeness, and coNP-completeness, respectively.

The reduction proving \cref{thm:main} preserves more than the answer to the threshold decision problem: its optimal Kemeny score determines the exact maximum-cut value, and a Kemeny-optimal aggregate recovers a maximum cut.
The identity in \cref{eq:all-two-to-one-score} transfers the score, winner, precedence, and recognition results to Slater orders, permutation medians, and fixed-dispersion Mallows central rankings.
The six-copy construction gives pairwise-equidistant hard profiles, an exact affine relation between the optimal Kemeny score and the center radius, and a description of all Kemeny-optimal orders in terms of their restrictions to the six copies.
It also shows that the original profile has a unique Kemeny-optimal order exactly when the output profile has a unique Kemeny-optimal order and a unique center.

\section*{Acknowledgement}

This research has been implemented with the support provided by the Ministry of Innovation and Technology NRDI Office within the framework of the Artificial Intelligence National Laboratory Program; by the Ministry of Innovation and Technology of Hungary from the National Research, Development and Innovation Fund, financed under the ELTE TKP 2021-NKTA-62 funding scheme; and by the Lend\"{u}let Programme of the Hungarian Academy of Sciences --- grant number LP2021-1/2021.

\medskip\noindent
ChatGPT 5.5 and 5.6 Sol were used for literature-search assistance, language editing, presentation refinement, and informal proof checking.
The author is responsible for all claims and proofs.

\begingroup
\bibliographystyle{plain}
\bibliography{bibliography}
\endgroup

\end{document}